\newif\ifstoc
\newtheorem{lemma}{Lemma}
\newtheorem{theorem}{Theorem}
\newtheorem{corollary}[theorem]{Corollary}
\newtheorem{definition}{Definition}
\newtheorem{fact}{Fact}
\newtheorem{proposition}[theorem]{Proposition}
\newtheorem{remark}{Remark}
\title{A New Coreset Framework for Clustering}
\author{Vincent Cohen-Addad\thanks{Google Research, Zurich.} \and 
  David Saulpic\thanks{Sorbonne Universit\'e, Paris}
  \and
  Chris Schwiegelshohn \thanks{Aarhus University} \and
}
\date{}
\colorlet{darkgreen}{green!45!black}
\newcommand{\R}{\mathbb{R}}
\newcommand{\dist}{\text{dist}}
\newcommand{\eps}{\varepsilon}
\newcommand{\opt}{\text{OPT}}
\newcommand{\cost}{\text{cost}}
\newcommand{\calS}{\mathcal{S}}
\newcommand{\calP}{\mathcal{P}}
\newcommand{\calL}{\mathcal{L}}
\newcommand{\calT}{\mathcal{T}}
\newcommand{\E}{\mathbb{E}}
\newcommand{\calE}{\mathcal{E}}
\newcommand{\cand}{\mathbb{C}}
\newcommand{\greedy}{\mathcal{A}}
\newcommand{\A}{\mathcal{A}}
\newcommand{\poly}{\text{poly}}
\newcommand{\alg}{\greedy}
\newcommand{\seeded}{{\color{red}\mathcal{G}}}
\newcommand{\coreset}{\Omega}
\newcommand{\weight}{f}
\newcommand{\inner}{R_I}
\newcommand{\out}{R_O}
\newcommand{\main}{R_M}
\newcommand{\size}{\Gamma}
\newcommand{\polylog}{\text{polylog}}
\newcommand{\outergroup}[1]{G^O}
\newcommand{\structured}{P_1}
\newcommand{\discarded}{\mathcal{D}}
\newcommand{\valuedelta}{\frac{\log^2 1/\eps}{2^{O(z\log z)}\min(\eps^2, \eps^z)}\left(k \log |\cand| + \log \log (1/\eps) + \log(1/\pi)\right)}
\newcommand{\argmin}{\text{argmin}}
\newcounter{sideremark}
\begin{document}
\maketitle
\ifstoc
\else
\begin{abstract}
Given a metric space, the $(k,z)$-clustering problem consists of finding $k$ centers such that the sum of the of distances raised to the power $z$ of every point to its closest
center is minimized. This encapsulates the famous $k$-median ($z=1$) and $k$-means ($z=2$) clustering problems.
Designing small-space sketches of the data that approximately preserves the cost of the solutions, also known as \emph{coresets},
has been an important research direction over the last 15 years. 

In this paper, we present a new, simple coreset framework that simultaneously improves upon the best known bounds for
a large variety of settings, ranging from Euclidean space, doubling metric, minor-free metric, and the general metric cases: with $\size = \min(\eps^{-2} + \eps^{-z}, k\eps^{-2})\polylog (\eps^{-1})$, this framework constructs coreset with size
  \begin{itemize}
  \item  $O\left(\size \cdot k (d + \log k) \right)$ in doubling metrics, improving upon the recent breakthrough of
    [Huang, Jiang, Li, Wu, FOCS' 18], who presented a coreset with size $O(k^3 d /\eps^2)$.
  \item  $O(\size \cdot k \cdot \min(d,\eps^{-2}\log k))$ in $d$-dimensional Euclidean space, improving on the recent results of 
    [Huang, Vishnoi, STOC' 20], who presented a coreset  of size \newline $O (k \log k \cdot \eps^{-2z}\cdot \min(d,\eps^{-2}\log k))$.
  \item  $O(\size \cdot k (t + \log k))$ for graphs with treewidth $t$,
    improving on [Baker, Braverman, Huang, Jiang, Krauthgamer, Wu, ICML'20], who presented a coreset of size $O(k^2 t/\eps^2)$ 
    for $z = 1$.
  \item $O\left(\size \cdot k\left(\log^2 k  + \frac{\log k}{\eps^4}\right)\right)$ for shortest paths metrics of graphs
    excluding a fixed minor.
    This improves on [Braverman, Jiang, Krauthgamer, Wu, SODA'21], who presented a coreset of size  $O(k^2 / \eps^4)$.
  \item Size $ O(\size \cdot k \log n )$ in general discrete metric spaces, improving on the results of
    [Feldman, Lamberg, STOC'11], who presented a coreset  of size $O(k\eps^{-2z}\log n\log k )$.
  \end{itemize}
  \medskip

A lower bound of $\Omega(\frac{k \log n}{\eps})$ for $k$-Median in general metric spaces [Baker, Braverman, Huang, Jiang, Krauthgamer, Wu, ICML'20] implies that in general metrics as well as metrics with doubling dimension $d$, our bounds are optimal up to a $\poly\log(1/\eps) / \eps$ factor. For graphs with treewidth $t$, the lower bound of $\Omega\left(\frac{k t}{\eps}\right)$ of [Baker, Braverman, Huang, Jiang, Krauthgamer, Wu, ICML'20] shows that our bounds are optimal up to the same factor.
\end{abstract}

\fi

\section{Introduction}
\label{sec:intro}

Center-based clustering problems are classic objectives for
the problem of computing a ``good'' partition of a 
set of points into $k$ parts, so that points that are ``close'' 
are in the same part. Finding a good clustering of a dataset helps extracting
important information from a dataset and center based clustering problems
have become the cornerstones of various data analysis approaches and machine 
learning techniques (see formal definition in Section~\ref{sec:prelims}).

Datasets used in practice are often huge, containing hundred of millions of points,
distributed, or evolving over time. Hence, in these settings classical heuristics (such as Lloyd or $k$-means++)
are lapsed; The size of the dataset forbids multiple passes over the input data and
finding a ``compact representation'' of the input data is of primary importance.
The method of choice for this is to compute a \emph{coreset},
i.e. a weighted set of points
of small size that can be used in place of the full input for algorithmic purposes.
More formally, for any $\eps > 0$, an $\eps$-coreset (referred to simply as coreset)
is a set $Q$ of points of the metric space such that any $\alpha$-approximation
to a clustering problem on $Q$, is a 
$\alpha(1+\eps)$-approximation to the clustering problem for the original
point set. Hence, a small coreset is a good compression of the full input set: one can
simply keep in memory a coreset and apply any given algorithm on the coreset rather than
on the input to speed up performances and reduce memory consumption.
Coreset constructions had been widely studied over the
last 15 years.

In this paper, we specifically focus on the $(k,z)$-clustering problem, which encapsulates $k$-median ($z=1$) and
$k$-means ($z=2$). Given two positive integers $k$ and $z$ and a metric space $(X, \dist)$, the $(k,z)$-clustering problem asks
for a set $\calS$ of $k$ points, called \emph{centers}, that minimizes
\[\cost(X, \calS) := \sum_{x \in X} \min_{s \in \calS} \dist(x, s)^z\]

%

The method of choice for designing coreset is \textit{importance sampling}, initiated by the seminal work of Chen~\cite{Chen09}. The basic approach is to devise a non-uniform sampling distribution which picks points proportionally to their cost contribution in an arbitrary constant factor approximation. In a nutshell, the current best-known analysis shows that, for a given set $\calS$ of $k$ centers, it happens with high probability that the sampled instance $\coreset$ with appropriate weights has roughly the same cost as the original instance, i.e. $\cost(\coreset, \calS) \in (1\pm\eps) \cost(X, \calS)$. Then, to show that the set $\coreset$ is an $\eps$-coreset, it is necessary to take a union-bound over these events for all possible set of $k$ centers.
Bounding the size of the union-bound is the main hurdle faced by this approach: indeed, there may be infinitely many possible set of centers.

The state-of-the-art analysis relies on VC-dimension to address this issue. Informally, the VC dimension is a complexity measure of a range space, denoting the cardinality of the largest set such that all subsets are included in the range space. 
The application to clustering considers weighted range spaces, where each point is weighted by its relative contribution to the cost of a given clustering\footnote{For more on these notions, we refer to \cite{FeldmanSS20}.}.
In metric spaces where the weighted range space induced by distances to $k$ centers has VC-dimension $D$, it can be shown that taking $O_{\eps, z}(k\cdot D\log k)$ samples yields a coreset~\cite{FeldmanSS20}, although tighter bounds are achievable in certain cases. For instance, in $d$ dimensional Euclidean spaces $D$ is in $O(kd\log k)$~\cite{BachemLH017}, which would yield coresets of size $O_{\eps, z}(k^2\cdot d\log^2 k)$, but Huang and Vishnoi~\cite{huang2020coresets} showed the existence of a coreset with $O(k\cdot \log^2 k \cdot \eps^{-2z-2})$ points.

This analysis was proven powerful in various metric spaces, such as doubling spaces by Huang, Jiang, Li and Wu~\cite{HuangJLW18}, graphs of bounded treewidth by Baker, Braverman, Huang, Jiang, Krauthgamer, Wu~\cite{baker2020coresets} or the shortest-path metric of a graph excluding a fixed minor by Braverman, Jiang, Krauthgamer and Wu~\cite{BravermanJKW21}. 
However, range spaces of even heavily constrained metrics do not necessarily have small VC-dimension (e.g. bounded doubling dimension does not imply bounded VC-dimension or vice versa~\cite{HuangJLW18,LiL06a}), and applying previous techniques requires heavy additional machinery to adapt the VC-dimension approach to them. Moreover, the bounds provided are far from  the bound obtained for Euclidean spaces:
their dependency in $k$ is at least $\Omega(k^2)$, leaving a significant gap to the best lower bounds of $\Omega(k)$.
We thus ask: 

\textbf{Question.}\emph{  Is it possible to design coresets whose size are near-linear in $k$ for doubling metrics, minor-free metrics, bounded-treewidth metrics?
  Are the current roadblocks specific to the analysis through VC-dimension, or inherent to the problem?}



To answer positively these questions, we present a new framework to analyse importance sampling. Its analysis stems from first principles, and it can be applied in a black-box fashion to any metric space that admits an \emph{approximate centroid set} (see \cref{def:centroid-set}) of bounded size. We show that all previously mentioned spaces satisfy this condition, and our construction improves on the best-known coreset size. 
More precisely, we recover (and improve) all previous results for $(k, z)$-clustering such as Euclidean spaces, $\ell_p$ spaces for $p \in [1,2)$, finite $n$-point metrics, while also giving the first coresets with size near-linear in $k$ and $\eps^{-z}$ for a number of other metrics such as doubling spaces, minor free metrics, and graphs with bounded treewidth.

\subsection{Our Results} 
Our framework requires the existence of a particular discretization of the set of possible centers, as described in the following definition.
We show in the latter sections that this is indeed the case for all the metric spaces mentioned so far.

\begin{definition}\label{def:centroid-set}
Let $(X, \dist)$ be a metric space, $P \subseteq X$ a set of clients and two positive integers $k$ and $z$. Let $\eps > 0$ be a precision parameter. Given a set of centers $\greedy$, a set $\cand$ is an \emph{$\greedy$-approximate centroid set} for $(k, z)$-clustering on $P$ if it satisfies the following property.

For every set of $k$ centers $\calS \in X^k$, there exists $\tilde \calS \in \cand^k$ such that for all points $p \in P$ that satisfies either $\cost(p, \calS) \leq \left(\frac{8z}{\eps}\right)^z \cost(p, \greedy)$ or $\cost(p, \tilde \calS) \leq \left(\frac{8z}{\eps}\right)^z \cost(p, \greedy)$, it holds
\[|\cost(p, \calS) - \cost(p, \tilde{\calS})| \leq \frac{\eps}{z\log(z/\eps)}\left(\cost(p, \calS) + \cost(p, \greedy)\right),\] 
\end{definition}

This definition is slightly different from Matousek's one~\cite{Mat00}, in that we seek to preserve distances only for interesting points, and allow an error $\eps \cost(p, \greedy)$. This is crucial in some of our applications.

\begin{theorem}\label{thm:main}
Let $(X, \dist)$ be a metric space, $P \subseteq X$ a set of clients with $n$ distinct points and two positive integers $k$ and $z$. Let $\eps > 0$ be a precision parameter. Let also $\greedy$ be a constant-factor approximation for $(k,z)$-clustering on $P$.

 Suppose there exists an $\greedy$-approximate centroid set $\cand$ for $(k, z)$-clustering on $P$.
  Then, there exists an algorithm running
  in time $ O(n)$ that constructs with probability at least $1-\pi$
  a
  coreset of size 
\[O\left(\frac{2^{O(z\log z)}\cdot\log^4 1/\eps}{\min(\eps^2, \eps^z)}\left(k \log |\cand| + \log \log (1/\eps) + \log(1/\pi)\right)\right)\]  
   with positive weights for the $(k,z)$-clustering problem.
\end{theorem}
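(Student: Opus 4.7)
The approach is sensitivity-based importance sampling combined with a careful two-level partition of the points and a union bound over the finite set $\cand^k$. First, I would use the constant-factor approximation $\greedy$ to geometrically decompose the input. Let $C_1, \dots, C_k$ be the clusters induced by $\greedy$, and for each cluster $C_i$ with average cost $\bar{D}_i := \cost(C_i, \greedy)/|C_i|$, define rings $R_{i,j} := \{p \in C_i : \cost(p,\greedy) \in [2^j \bar{D}_i, 2^{j+1}\bar{D}_i)\}$. Further partition each ring into groups according to cardinality, so that within a group all points contribute comparably to $\cost(P,\greedy)$. For each point $p$, assign an importance score $\sigma(p) \propto \cost(p,\greedy)/\cost(P,\greedy) + 1/|C_i|$ and sample $\size$ points (as in the theorem statement) independently with probability proportional to $\sigma(p)$, weighting each sample by the inverse sampling probability.

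Next, fix an arbitrary candidate solution $\calS \in X^k$. The $\greedy$-approximate centroid-set hypothesis lets me replace $\calS$ by some $\tilde\calS \in \cand^k$ whose per-point error $|\cost(p,\calS)-\cost(p,\tilde\calS)|$ is at most $\frac{\eps}{z\log(z/\eps)}(\cost(p,\calS)+\cost(p,\greedy))$ on all ``interesting'' points, i.e.\ those whose cost under $\greedy$, $\calS$, or $\tilde\calS$ is controlled. Points that are too far from both $\greedy$ and $\tilde\calS$ are argued to contribute negligibly and are handled deterministically by the bounded-ratio threshold $(8z/\eps)^z$ in the centroid-set definition. Within each ring and group, apply Bernstein's inequality to the random variables $\cost(p,\tilde\calS)-\cost(p,\greedy)$ re-scaled by the importance weight: the point of the two-level partition is precisely that within a group the per-sample values are bounded by a small multiple of the group's mean, so the variance is controlled and concentration yields additive error $O(\eps)\cdot\cost(P,\greedy)$.

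Third, I would take a union bound over the at most $|\cand|^k$ choices of $\tilde\calS$, the $O(\log(1/\eps))$ rings that survive truncation, and the $O(\log k)$ groups per ring. The $|\cand|^k$ factor contributes the $k\log|\cand|$ term, the bookkeeping over rings/groups yields the $\polylog(1/\eps)$ factor, and Bernstein's bound gives the $\min(\eps^2,\eps^z)^{-1}$ dependence, the two regimes corresponding to $z\leq 2$ (bounded summands of order $1$) versus $z>2$ (summands of order $\eps^{-z}$ after the centroid-set truncation). Adding the $\log(1/\pi)$ slack gives overall failure probability $\pi$, matching the claimed bound.

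The main obstacle will be twofold. First, controlling the error for points ``far'' from both $\greedy$ and $\tilde\calS$, since the centroid-set property gives no per-point guarantee there; one must show their aggregate contribution to $\cost(P,\calS)$ is tiny relative to $\opt$, exploiting that $\greedy$ is a constant-factor approximation and that the ring truncation discards only an $\eps$-fraction of the total mass. Second, matching the uniform bound $\min(\eps^2,\eps^z)$ across both regimes of $z$ while avoiding an extra $k$ factor from naive union-bounding across clusters: this is where the cluster-level additive term $1/|C_i|$ in $\sigma(p)$ becomes essential, as it ensures every cluster is well represented in the sample and lets the chaining argument charge errors to $\cost(p,\greedy)$ rather than to $\cost(p,\calS)$.
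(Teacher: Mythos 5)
There are genuine gaps here, and the most important one is the sampling distribution itself. You sample with score $\sigma(p) \propto \cost(p,\greedy)/\cost(P,\greedy) + 1/|C_i|$, which is exactly the Feldman--Schmidt--Sohler-style sensitivity distribution that the paper explicitly identifies as a dead end for this theorem: its total sensitivity is $\Theta(k)$, so combined with the union bound over $\cand^k$ (which costs $k\log|\cand|$ in the exponent of Bernstein) it yields coresets of size $\Omega(k^2\log|\cand|)$, not the near-linear-in-$k$ bound claimed. The paper's way around this is structural, not distributional: after discarding inner rings and cheap groups (whose points are snapped to their $\greedy$-centers with controlled error), each surviving group $G$ has the property that \emph{every} cluster contributes at least $\cost(G,\greedy)/(2k)$ and all points of a cluster inside $G$ have the same $\greedy$-cost up to a factor $2$. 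Sampling is then done independently per group, proportionally to $\cost(p,\greedy)$ alone (no additive term), and the $1/(2k)$ lower bound on cluster mass is what guarantees, via a Chernoff bound (event $\calE$), that every cluster is represented --- replacing the role your $1/|C_i|$ term was meant to play without inflating the variance by $k$. Your plan keeps the global sensitivity distribution and only uses the ring/group partition for bookkeeping, so it cannot reach the stated size.

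The second gap is the $\min(\eps^2,\eps^z)^{-1}$ dependence. Applying Bernstein to $\cost(p,\tilde\calS)-\cost(p,\greedy)$ does not help: for interesting points $\cost(p,\tilde\calS)$ can be as large as $(8z/\eps)^z\cost(p,\greedy)$, so both the variance and the almost-sure bound are of order $\eps^{-z}$ and Bernstein gives $\eps^{-2-z}$ samples (this is exactly the paper's ``simple but suboptimal'' analysis). The improvement to $\eps^{-\max(2,z)}$ requires subtracting, for each cluster $\tilde C_i$, the quantity $\cost(q_{i,\calS},\calS)$ where $q_{i,\calS}$ is the point of $\tilde C_i$ closest to $\calS$; the residual $\cost(p,\calS)-\cost(q_{i,\calS},\calS)$ is bounded by $2^{\ell(1-1/z)}2^{O(z\log z)}\cost(q_{i,\calS},\greedy)$ rather than $2^{\ell}\cost(p,\greedy)$, and the subtracted part is controlled deterministically by event $\calE$. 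Finally, your treatment of points far from $\calS$ conflates two distinct cases: clusters with a point at distance $\gg (z/\eps)\dist(p,\greedy)$ from $\calS$ are \emph{not} negligible (they can dominate the cost); they are handled by observing that all points of such a structured cluster are essentially equidistant from $\calS$, so preserving the cluster's mass preserves its cost, with no union bound over solutions. The genuinely negligible points are those in the outer rings that are far from $\calS$, and showing their contribution is small requires charging $\cost(c,\calS)$ to the many close points of the \emph{original} cluster via Markov's inequality --- an argument relative to $\cost(\calS)$, not to $\opt$, and one your sketch does not supply.
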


When applying this theorem to particular metric spaces, the running time
 is dominated by the construction of the constant-factor approximation $\greedy$, which can be done for instance in $\tilde O (k|P|)$ given oracle access to the distances using~\cite{MettuP04}\footnote{Although initially stated for $z = 1$ only, this algorithm works for general $z$ as stressed in \cite{huang2020coresets}}.

If one wishes to trade a factor $\eps^{-z}$ for a factor $k$, we also present coresets of size \\  $O(k^2\cdot 2^{O(z)} \frac{\log^3(1/\eps)}{\eps^2}\left(\log k + \log |\cand| + \log(1/\pi)\right)$, as explained in \cref{sec:ksquare}. 

We apply this theorem to several metric spaces, achieving the following (simplified) size bounds (we ignore $\poly \log(1/\eps)$ and $2^{O(z\log z)}$ factors): let $\size = \min (\eps^{-2} + \eps^{-z}, k\eps^{-2})$, see also \cref{table:core}.
\begin{itemize}
\item $O\left(\size \cdot k\left(d + \log k \right)\right)$ for metric spaces with doubling dimension $d$. This improves over the $O(k^3 d \eps^{-2})$ from Huang et al.~\cite{HuangJLW18}. See \cref{cor:coreset-doubling}.
\item Since general discrete metric spaces have doubling dimension $O(\log n)$, this yields coreset of size $O\left(\size\cdot k \log n\right)$. This improves on the bound from Feldman and Langberg~\cite{FeldmanL11} $O\left(\eps^{-2z}k \log k \log n\right)$.
\item $O\left(\size \cdot k\eps^{-2} \cdot \log k\right)$ for Euclidean spaces, see \cref{cor:coreset-euclidean}. This improves on the recent result from Huand and Vishnoi~\cite{huang2020coresets}, who achieve $O\left(\eps^{-2z - 2}k \log^2 k\right)$.
\item $O\left(\size \cdot k \left( \log^2 k  + \frac{\log k}{\eps^4}\right)\right)$ for a family of graphs excluding a fixed minor, see \cref{cor:coreset-minor}. This improves on Braverman et al.~\cite{BravermanJKW21}, whose coreset has size $\widetilde{O}(k^2 / \eps^4)$.
\item $O\left(\size \cdot k \left(\log^2 k + \frac{\log k}{\eps^3}\right)\right)$ for Planar Graphs, which is a particular family excluding a fixed minor for which we can save a $1/\eps$ factor and present a simpler, instructive proof.
\item $O\left(\size \cdot k \left(t + \log k \right)\right)$ in graphs with treewidth $t$, see \cref{cor:coreset-treewidth}. 
This improves upon the work of Baker et al.~\cite{baker2020coresets}, that construct coreset with size $\widetilde{O}(k^2 t/ \eps^2) $.
  \item  $O(k \eps^{-2z}\cdot \min(d,\eps^{-2}\log k))$ in $\mathbb{R}^d$ with $\ell_p$ distance, for $p \in [1, 2)$, see \cref{cor:coreset-lp}. This improves on  
    Huang and Vishnoi~\cite{huang2020coresets}, who presented a coreset  of size $O (k \log k \cdot \eps^{-4z}\cdot \min(d,\eps^{-2}\log k))$. 
\end{itemize}

We note the lower bound $\Omega(\frac{k \log n}{\eps})$ for $k$-Median in general metric spaces from \cite{baker2020coresets}. This means that in the case of metrics with doubling dimension $d$, our bounds are optimal up to a $\poly\log(1/\eps) / \eps$ factor. For graphs with treewidth $t$, another lower bound of $\Omega\left(\frac{k t}{\eps}\right)$ from \cite{baker2020coresets} shows that our bounds are optimal up to the same factor.

\begin{table*}[ht!]
\begin{center}
\begin{tabular}{r|c}
Reference & Size (Number of Points) 
\\
\hline
\hline
\multicolumn{2}{l}{\textbf{Euclidean space}} \\\hline\hline
Har-Peled, Mazumdar (STOC'04)~\cite{HaM04} & $O(k\cdot \varepsilon^{-d}\cdot \log n)$ 
\\
\hline
Har-Peled, Kushal (DCG'07)~\cite{HaK07} & $O(k^3 \cdot \varepsilon^{-(d+1)})$  
\\\hline
Chen (Sicomp'09)~\cite{Chen09} & $O(k^2\cdot d\cdot \varepsilon^{-2}\log n)$ 
\\\hline
Langberg, Schulman (SODA'10)~\cite{LS10} &$O(k^3\cdot d^2\cdot \varepsilon^{-2})$ 
\\\hline
Feldman, Langberg (STOC'11)~\cite{FeldmanL11} & $O(k\cdot d \cdot \log k \cdot \varepsilon^{-2z})$ 
\\\hline
Feldman, Schmidt, Sohler (Sicomp'20)~\cite{FeldmanSS20} & $O(k^3\cdot \log k \cdot \varepsilon^{-4})$ 
\\\hline
Sohler and Woodruff (FOCS'18)~\cite{SohlerW18} & $O(k^2\cdot \log k\cdot \varepsilon^{-O(z)})$  
\\\hline
Becchetti, Bury, Cohen-Addad, & \multirow{2}{*}{$O(k\cdot \log^2 k \cdot \varepsilon^{-8})$} \\
Grandoni, Schwiegelshohn~(STOC'19)~\cite{BecchettiBC0S19} &  
\\
\hline
Huang, Vishnoi (STOC'20)~\cite{huang2020coresets} & $O(k\cdot \log^2 k \cdot \varepsilon^{-2-2z})$ 
\\
\hline
Braverman, Jiang, Krauthgamer, Wu (SODA'21)~\cite{BravermanJKW21}& $\tilde O (k^2 \cdot \eps^{-4})$ 
\\
\hline
\large \textbf{This paper} & $O(k \cdot \log k \cdot \eps^{-2-\max(2,z)})$ 
\\
\hline\hline
\multicolumn{2}{l}{\textbf{General $n$-point metrics, $ddim$ denotes the doubling dimension}} \\\hline\hline
Chen (Sicomp'09)~\cite{Chen09} & $O(k^2\cdot\varepsilon^{-2}\cdot\log^2 n)$ 
\\\hline
Feldman, Langberg (STOC'11)~\cite{FeldmanL11} & $O(k\cdot \log k\cdot  \log n \cdot \varepsilon^{-2z})$  
\\\hline
Huang, Jiang, Li, Wu~(FOCS'18)~\cite{HuangJLW18} & $O(k^3\cdot ddim\cdot \varepsilon^{-2})$ 
\\\hline
\large \textbf{This paper} & $O(k\cdot(ddim+\log k)\cdot \varepsilon^{-\max(2,z)})$ 
\\
\hline
\large \textbf{This paper} & $O(k\cdot\log n \cdot \varepsilon^{-\max(2,z)})$ 
\\
\hline \hline
\multicolumn{2}{l}{\textbf{Graph with $n$ vertices, $t$ denotes the treewidth}} \\
\hline\hline
Baker, Braverman, Huang, Jiang, & \multirow{2}{*}{$\tilde O (k^2\cdot t/\eps^2) $} \\
Krauthgamer, Wu (ICML'20)~\cite{baker2020coresets} &  
\\
\hline
\large \textbf{This paper} & $O(k\cdot(t + \log k)\cdot\varepsilon^{-\max(2,z)})$ \\
\hline \hline
\multicolumn{2}{l}{\textbf{Graph with $n$ vertices, excluding a fixed minor}} \\
\hline\hline
Bravermann Jian, Krauthgamer, Wu (SODA'21)~\cite{BravermanJKW21} & $\tilde O (k^2 \cdot \eps^{-4})$ 
\\
\hline
\large \textbf{This paper} & $O\left(k\cdot(\log^2 k + \frac{\log k}{\eps^4})\cdot \varepsilon^{-\max(2,z)}\right)$ \\
\end{tabular}
\end{center}
\caption{Comparison of coreset sizes for $(k,z)$-Clustering in various metrics. 
Dependencies on $2^{O(z)}$ and $\text{polylog}\varepsilon^{-1}$ are omitted from all references. Additionally, we may trade a factor $\varepsilon^{-z+2}$ for a factor $k$ in any construction with $z>2$.
\cite{HaK07,HaM04} only applies to $k$-means and $k$-median,~\cite{BecchettiBC0S19,FeldmanSS20} only applies to $k$-means. 
\cite{SohlerW18} runs in exponential time, which has been addressed by Feng et al.~\cite{FKW19}.
Aside from~\cite{HaK07,HaM04}, the algorithms are randomized and succeed with constant probability. 
Although the results are claimed only for $k$-Median in \cite{baker2020coresets}, it seems that they can be generalized to any power. The main difference is in the computation of a constant factor approximation.}
\label{table:core}
\end{table*}

\subsection{Overview of Our Techniques}

Our proof is arguably from first principles. We now give a quick overview of its ingredients.
The approach consists in first reducing to a well structured instance, that consists of a set of centers $\greedy$ inducing $k$ clusters, all having roughly the same costs, and where every point is at the same distance of $\greedy$, up to a factor 2. Then we show it is enough to perform importance sampling on all these clusters.

\paragraph{Reducing to a structured instance.}

Like most coreset constructions, we initially compute a constant factor approximation $\greedy$ to the problem. We then deviate from previous importance sampling algorithms
by partitioning points into groups such that the following conditions are satisfied, for a given group $G$:
\begin{itemize}
\item For all clusters, the cost of the intersection of the cluster with the group is at least half the average; i.e. $\forall C_i,~ \cost(C_i \cap G, \greedy) \geq \frac{\cost(G, \greedy)}{2k}$.
\item In every cluster $C_i$, there exists $r_{G, i}$ such that the points in the intersection of the cluster with the group cost $r_{G, i}$ (up to constant factors), i.e. $\forall p \in C_i \cap G, \cost(p, \greedy) = \Theta(r_{G_i})$.
\end{itemize}

We then compute coresets for each group and output the union. In some sense,
this preprocessing step identifies canonical instances for coresets;
any algorithm that produces improved coresets
for instances satisfying the aforementioned regularity condition can be combined with our preprocessing steps to
produce improved coreset in general.

%

\paragraph{Importance Sampling in Groups.}
The first technical challenge is to analyse the importance sampling procedure for structured instances.

The arguably simplest way to attempt to analyse importance sampling is by first showing that for any fixed solution $\calS$ we need a set $\coreset$ of $\delta$ samples to show that with good enough probability
\begin{equation}
\label{eq:estimatorIntro}
 \sum_{p\in \coreset} \cost(p,\calS)\frac{\cost(G,\greedy)}{\cost(p,\greedy)\cdot \delta} = (1\pm \varepsilon) \cdot \cost(G, \calS),
\end{equation}
and then applying a union bound over the validity of \cref{eq:estimatorIntro} for all solutions $S$. This union bound is typically achieved via the VC-dimension.

Using this simple estimator, most analyses of importance sampling procedures require a sample size of at least $k$ points to approximate the cost of a single given solution. To illustrate this, consider an instance where a single cluster $C$ is isolated from all the others. Clearly, if we do not place a center close to $C$, the cost will be extremely large, requiring some point of $C$ to be contained in the sample. One way to remedy this is by picking a point $p'$ proportionate to $\frac{\cost(p',\greedy)}{\cost(\greedy)} + \frac{1}{|C_i|}$ rather than $\frac{\cost(p',\greedy)}{\cost(\greedy)}$, where $C_i$ is the cluster to which $p'$ is assigned, see for instance~\cite{FeldmanSS20}. This analysis always leads to coreset of size quadratic in $k$ at best\footnote{A linear dependency on $k$ can be achieved using a different analysis, see~\cite{FeldmanL11,huang2020coresets} for examples. This approach does not seem to generalize to arbitrary metrics.}. 
Our analysis of importance sampling for structured instances will allow us to bypass both the quadratic dependencies on $k$, and the need of a bound on the VC-dimension of the range space.


Our high level idea is to use two union bounds. The first one will deal with clusters that are very expensive compared to their cost in $\greedy$. The second one will focus on solutions in which clusters have roughly the same cost as they do in $\greedy$.
For the former case, we observe that if a cluster $C_i$ is served by a center in solution $\calS$ that is very far away, then we can easily bound its cost in $\calS$ as long as our sample approximates the size of every cluster.
Specifically, assume that there exists a point $p$ in $C_i$ with distance to $\calS$ at least $ \Omega(1)\cdot \varepsilon^{-1}\cdot \dist(p, c_i)$. Then, since we are working with structured instances, all points of $C_i$ are roughly at the same distance of $c_i$ and that this distance is negligible compared to $\dist(p, \calS)$, all points of $C_i$ are nearly at the same distance of $\calS$.
Conditioned on the event $\calE$ that the sample $\coreset$ 
preserves the size of all clusters, the cost of $C_i$ in solution $\calS$ is preserved as well. Note that this event $\calE$ is independent of the solution $\calS$ and thus we require no enumeration of solutions to preserve the cost of expensive clusters. Proving that $\calE$ holds is a straightforward application of concentration bounds.
   

The second observation is that points with $\dist(p, \calS) \leq \nicefrac{\eps}{z} \cdot \dist(p, \greedy)$ are so cheap that their cost is preserved by the sampling with an error at most $\eps\cdot \cost(\greedy)$. Indeed, their cost in $\calS$ cannot be more than $\eps \cdot \cost(\greedy)$: it is easy to show that the same bound holds for the coreset. 

The intermediate cases, i.e. solutions in which $\calS$ serves clusters at distances further than $\varepsilon/z\cdot\dist(p,\greedy)$, but not so far as to simply use event $\calE$ to bound the cost, is the hardest part of the analysis.
Using a geometric series, we can split the cost ranges into into $\log \frac{z}{\varepsilon^2} \in O(z\log \varepsilon^{-1})$ groups by powers of two.
Due to working with a structured instance, the points within such a group have equal distances, up to a constant factors. This also implies that the cost in such a group is equal, up to a factor of $2^{O(z)}$.
The overall variance of the cost estimator is then of the order $\max_p \left(\varepsilon^{-1}\cdot \dist(p,\greedy)\right)^z\cdot \frac{\cost(\greedy)}{\cost(p,\greedy)} \in O(\varepsilon^{-z})$.
Thus, standard concentration bounds give an additive error of $\varepsilon\cdot (\cost(\greedy)+\cost(\calS))$ with at most $O(\varepsilon^{-2-z})$ many samples for every group. 

To improve this to $O(\varepsilon^{-z})$, we use a different estimator defined as follows. For every cluster $C_i$, let $q_i$ be the
point of $C_i$ that is the closest to $\calS$. We then consider 
\begin{eqnarray}
\label{eq:estimator2}
& & \sum_{p\in C_i \cap \coreset} \left(\cost(p,\calS)-\cost(q_i,\calS)\right)\cdot \frac{\cost(G,\greedy)}{\cost(p,\greedy)\cdot \delta} \\
 \label{eq:estimator3}
& + & \sum_{p\in C_i \cap \coreset} \cost(q_i,\calS)\cdot\frac{\cost(G,\greedy)}{\cost(p,\greedy)\cdot \delta}
\end{eqnarray}

Conditioned on event $\calE$, the estimator in Equation~\ref{eq:estimator3} is always concentrated around its expectation, as $\cost(q_i,\calS)$ is fixed for $\calS$.
The first estimator in Equation~\ref{eq:estimator2} now has a reduced variance. Specifically, at the border cases of points at distance $\Theta(1/\varepsilon) \dist(p, \greedy)$ of $\calS$, the Estimator~\ref{eq:estimator2} has variance at most $O(1)\cdot\max(\varepsilon^{-2},\varepsilon^{-z})\cdot \cost(\greedy)\cdot \cost(\calS)$, which ultimately allows us to show that $O(\varepsilon^{-2}+\varepsilon^{-z})$ samples are enough to achieve an additive error of $\varepsilon\cdot \left(\cost(\calS)+\cost(\greedy)\right)$.
This technique is somewhat related to (and inspired by) chaining arguments (see e.g. Talagrand~\cite{talagrand1996majorizing} for more on chaining). The key difference is while chaining is generally applied to improve over basic union bounds, our estimator is designed to reduce the variance.

\paragraph{Preserving the Cost of Points not in Well-Structured Groups}

Unfortunately, it is not possible to decompose the entire point set into groups. Given an initial solution $\greedy$ and a cluster $C\in \greedy$, this is possible for all the points at distance at most $\varepsilon^{-O(z)}\cdot \frac{\cost(C,c)}{|C|}$. 
The remaining points are now both far from their respective center in $\greedy$ and, due to Markov's inequality, only a small fraction of the point set. In the following, let $P_{far}$ denote these points.

For any given subset of these far away points and a candidate solution $\calS$, now use that either the points pay at most what they do in $\greedy$, or an increase in their cost significantly increases the overall cost. In the former case, standard sensitivity sampling preserves the cost with a very small sample size. In the latter case, a significant cost by a point $p$ in $P_{far}$ also implies that all points close to the center $c$ serving $p$ in $\greedy$ have to significantly increase the cost.

\paragraph{A Union-bound to Preserve all Solutions } 
As pictured in the previous paragraphs, the cost of points with either very small or very large distance to $\calS$ is preserved for any solution $\calS$ with high probability. 

The guarantee we have for interesting points is weaker: their cost is preserved by the coreset with high probability for any \textit{fixed} solution $\calS$. Hence, for this to hold for any solution, we need to take a union-bound over the probability of failure for all possible solution $\calS$. However, the union-bound is necessary only for these interesting points 
: this explains the introduction of the approximate centroid set in \cref{def:centroid-set}. Assuming the existence of a set $\cand$ such as in \cref{def:centroid-set}, one can take a union-bound over the failure of the construction for all set of $k$ centers in $\cand^k$ to ensure that the cost of interesting points is preserved for all these solutions. To extend this result to \textit{any} solution $\calS$, one can take the set of $k$ points $\tilde \calS$ in $\cand^k$ that approximates best $\calS$, and relate the cost of interesting points in $\calS$ to their cost in $\tilde \calS$ with a tiny error. Since the cost of interesting points in $\tilde \calS$ is preserved in the coreset, the cost of these points in $\calS$ is preserved as well.

We briefly picture now how to get approximate centroid sets for specific metrics.
We are looking for a set $\cand$ with the following property: for every solution $\calS$, there exists a $k$-tuple $\tilde \calS \in \cand^k$ such that for every point $p$ with $\dist(p, \calS) \leq \eps^{-1} \dist(p, \greedy)$ in a given cluster $C$ of $\greedy$, $|\cost(p, \calS) - \cost(p, \tilde{\calS})| \leq \eps \left(\cost(p, \greedy) + \cost(p, \calS)\right)$. We call such points \textit{interesting}.

\subparagraph{Metrics with doubling dimension $d$:}
 $\cand$ is simply constructed taking \emph{nets} around each input point. A $\gamma$-net of a metric space is a set of points that are at least at distance $\gamma$ from each other, and such that each point of the metric is at distance at most $\gamma$ from the net. The existence of $\gamma$-nets of small size is one of the key properties of doubling metrics (see \cref{prop:doub:net}). 
 For every point $p$, $\cand$ contains an $\eps \cost(p, \greedy)$-net of the points at distance at most $\frac{8z}{\eps}\cdot \cost(p, \greedy)$ from $p$. If $p$ is an interesting point, there is therefore a center of $\cand$ close to its center in $\calS$. 
 
 However, this only shows that centers from the solution $\tilde \calS \in \cand^k$ are closer than those of $\calS$. Showing that none gets too close is a different ballgame. We will see two ways of achieving it. The first one, that we apply for the doubling, treewidth and planar case, is based on the following observation: if a center $s \in \calS$ is replaced by a center $\tilde s$, that is way closer to a point $p$ than $s$, then $s$ can be discarded in the first place and be replaced by the center serving $p$. This is formalized in \cref{lem:noproblem}. The other way of ensuring that no center from $\tilde \calS$ gets to close to a point in $p$ is based on guessing distances from points in $\calS$ to input points. It can be applied more broadly than \cref{lem:noproblem}, but yields larger centroid sets. We will use it only for minor-excluded graphs, for which \cref{lem:noproblem} cannot be applied.


\subparagraph{Graphs with treewidth $t$:} The construction of $\cand$ is not as easy in graph metrics: we use the existence of small-size \textit{separators}, building on ideas from Baker et al.~\cite{baker2020coresets}. Fix a solution $\calS$, and suppose that all interesting points are in a region $R$ of the graph, such that the boundary $B$ of $R$ is made of a constant number of vertices. Fix a center $c \in \calS$, and suppose $c$ is not in $R$. Then, to preserve the cost of interesting points, it is enough to have a center $c'$ at the same distance to all points in the boundary $B$ as $c$.

$\cand$ is therefore constructed as follows: for a point $p$, its distance tuple to $B = \{b_1, ..., b_{|B|}\}$ is the tuple $(d_1, ..., d_{|B|})$, where $d_i = \dist(p, b_i)$ is the distance to $b_i$. For every distance tuple to $B$, $\cand$ contains one point having approximately that distance tuple to $B$. 

Let $\tilde c$ be the point of $\cand$ having approximately the same distance tuple to $B$ as $c$: this ensures that $\forall p, ~\cost(p, c) \approx \cost(p, \tilde c)$.

It is however necessary to limit the size of $\cand$. For that, we approximate the distances to $B$. This can be done for interesting points $p$ as follows: since we have $\dist(p, c) \leq \eps^{-1}\dist(p, \greedy)$, rouding the distances to their closest multiple of $\eps \dist(p, \greedy)$ ensures that there are only $O(1/\eps^2)$ possibilities, and adds an error $\eps \cost(p, \greedy)$. We show in \cref{sec:centroid-treewidth} how to make this argument formal, 
and how to remove the assumption that all interesting points are in the same region.

\subparagraph{In minor-excluded graphs} this class of graphs, that includes planar graphs, admits as well small-size shortest-path separators. A construction similar in spirit to the one for treewidth is therefore possible, as presented in \cref{sec:centroid-minor}. This builds on the work of Braverman et al.~\cite{BravermanJKW21}.

However, due to the nature of the separator -- which are small sets of paths, and not simply small sets of vertices -- one cannot apply the idea of  \cref{lem:noproblem} to show that no center gets too close. Instead, we will guess the distance from input points to any point in $\calS$, allowing to construct $\tilde \calS$ with the same distances. Of course, this mere idea requires way too many guesses to have a small set $\cand$: we see in \cref{sec:centroid-minor} how to make it work properly.

We start the section by showing two preprocessing lemmas: the first one is \cref{lem:noproblem}, as described above. The second one allows to apply \cref{thm:main} in the case the input set is weighted, so that we can assume the input has only $\poly(k, \eps^{-1})$ many distinct points, by first computing a non-optimal coreset.

\subsection{Roadmap}
The paper is organized as follow: after defining the concepts used in the paper, we present formally the algorithm in \cref{sec:algo}. We then describe the construction of a coreset for a structured instance in \cref{sec:sample}, and the reduction to such an instance in \cref{sec:preprocess}. Finally, we show the existence of approximate centroid set in various metric spaces in \cref{sec:centroid}. We furthermore explain the dimension reduction technique leading to our result for Euclidean spaces in \cref{sec:dimension}, and the $O(k^2 \eps^{-2})$ construction in \cref{sec:ksquare}. A deeper description of related work is made in \cref{sec:related}.

\section{Related Work}
\label{sec:related}

We already surveyed most of the relevant bounds for coresets for $k$-means and $k$-median.
A complete overview over all of these bounds is given in Table~\ref{table:core}, further pointers to coreset literature can be found in surveys~\cite{MunteanuS18}.
For the remainder of the section, we highlight differences to previous techniques.

The early coreset results mainly considered input data embedded in constant dimensional Euclidean spaces~\cite{FrahlS2005,HaK07,HaM01}. 
These coresets relied on low-dimensional geometric decompositions inducing coresets of sizes typically of order at least $k\cdot \varepsilon^{-d}$. These techniques were replaced by \emph{importance sampling} schemes, initiated by the seminal work of Chen~\cite{Chen09}. The basic approach is to devise a non-uniform sampling distribution which picks points proportionately to their impact in a given constant factor approximation.  A significant advantage of importance sampling over other techniques is that it generalizes to non-Euclidean metrics. While the early coreset papers~\cite{HaK07,HaM04} were indeed heavily reliant on the structure of Euclidean spaces, Chen gave the first coreset of size $O(k^2\varepsilon^{-2} \log^2 n)$ for general $n$-point metrics.

\paragraph{Coresets via Bounded VC-Dimension}
The state of the art importance sampling techniques in Euclidean spaces are based on reducing the problem of constructing a coreset to constructing an $\varepsilon$-net in a range space of bounded VC-dimension\footnote{Strictly speaking, one has to use a generalization of VC-dimension known as the pseudo dimension. The interested reader is refereed to Pollard's book~\cite{pollard2012} for details.}. Li, Long and Srinivasan~\cite{LiLS01} showed that if the VC-dimension is bounded by $D$, an $\varepsilon$-approximation of size $O(\frac{D}{\varepsilon^{2}})$ exists. The remarkable aspect of these bounds is that they are independent of the number of input points.
To apply the reduction, we need a bound on the VC-dimension for the range space induced by the intersection of metric balls centered around $k$ points in a 
$d$-dimensional Euclidean space. For Euclidean $k$-means and $k$-median, an upper bound of $D\in O(kd\log k)$ is implicit in the work of~\cite{BlumerEHW89} and Eisenstat and Angluin~\cite{EisenstatA07}. This bound was recently shown to be tight by Csikos, Mustafa and Kupavskii~\cite{CsikosMK19}. The dependency on $d$ may be replaced with a dependency on $\log k$, as explained in more detail in Section~\ref{sec:dimension}. Thus $O(k\log^2 k)$ is a natural barrier for known techniques in Euclidean spaces.


\paragraph{VC-Dimension and Doubling Dimension}
A further complication arises when attempting to extend sampling techniques for bounded VC-dimension in range spaces of bounded doubling dimension $d$. While the two notions share certain similarities and are asymptotically identical for the range space induced by the intersection of balls in in Euclidean spaces, the two quantities are incomparable in general. For instance, Li and Long proved the existence of a range space with constant VC dimension and unbounded doubling dimension~\cite{LiL06a}. Conversely, \cite{HuangJLW18} also showed that a bound on the doubling dimension does not imply a bound on the VC-dimension. Nevertheless, by carefully distorting the metric they were able to prove that a related quantity known as the shattering dimension can be bounded, yielding the first coresets for bounded doubling dimension independent of $n$. Even so, their bound $\tilde{O}(k^3 d \varepsilon^{-2})$ is still far from what is currently achievable in Euclidean spaces.

Similarly, the construction from \cite{baker2020coresets} for graphs with bounded treewidth uses that a graph of treewidth $t$ has shattering dimension $O(t)$. They use this fact to get coreset for $k$-Median, of size $\tilde O(k^3t/\eps^2) $. For excluded-minor graphs, \cite{BravermanJKW21} proceeds similarly, but need an additional iterative procedure: they first show that in an excluded-minor graph, a subset $X$ of the vertices has coreset of size $O_{k, \eps}(\log |X|)$, using the shattering-dimension techniques.
They show then how to iterate this construction (using that "a coreset of a coreset is a coreset") to remove dependency in $|X|$. This iterative procedure is of independent interest, and we use it as well for bounded treewidth and excluded-minor settings. 

\paragraph{Further Related Work}
So far we only described works that aim at giving better coreset construction for unconstrained $k$-median and $k$-means in some metric space. Nevertheless, there is a rich literature on further related questions. As a tool for data compression, coresets feature heavily in streaming literature. Some papers consider a slightly weaker guarantee of summarizing the data set such that a $(1+\varepsilon)$ approximation can be maintained and extracted. Such notions are often referred to as \emph{weak coresets} or streaming coresets, see~\cite{FeldmanL11,FMS07}. Further papers focus on maintaining coresets with little overhead in various streaming and distributed models, see~\cite{BalcanEL13,BravermanFLR19,BravermanFLSY17,FrahlS2005,
FGSSS13}. 
Other related work considers generalizations of $k$-median and $k$-means by either adding capacity constraints~\cite{Cohen-AddadL19,HuangJV19,SSS19}, or considering more general objective functions~\cite{BachemLL18,BravermanJKW19}. Coresets have also been studied for many other problems: we cite non-comprehensively Determinant Maximization \cite{IndykMGR20}, Diversity Maximization \cite{ceccarello2018fast, IndykMMM14} logistic regression~\cite{huggins2016coresets,MunteanuSSW18}, dependency networks~\cite{MK18}, or low‐rank approximation~\cite{maalouf2019fast}.

\section{Preliminaries}
\label{sec:prelims}

\subsection{Problem Definitions}
\label{sec:problemdef}
Given an ambient metric space $(X, \dist)$, a set of points $P \subseteq X$ called \emph{clients}, and positive integers $k$ and $z$, the goal of the \emph{$(k,z)$-clustering problem} 
is to output a set $\calS$ of $k$ \emph{centers} (or \emph{facilities}) chosen in $X$ 
that minimizes 
\[\sum_{p \in P} \min_{c \in \calS}(\dist(p,c))^z\]


\begin{definition}
An $\eps$-coreset for the $(k,z)$-clustering problem in a metric space $(X, \dist)$ is a weighted subset $\coreset$ of $X$ with weights $w : \coreset \rightarrow \R_+$ such that, for any set $\calS \subset X$, $|\calS| = k$, 
\[|\sum_{p \in X} \cost(p, \calS) - \sum_{p \in \coreset} w(p)\cost(p, \calS)| \leq \eps \cdot \sum_{p \in X} \cost(p, \calS).\]
\end{definition}

Given a set of point $P$ with weights $w : P \rightarrow \mathbb{R}^+$ on a metric space $I = (X, \dist)$ and a solution $\calS$, we define $\cost(P, \calS) := \sum_{p \in P} w(p)\cost(p, \calS)$ and, in the case where $P$ contains all the points of the metric space, we define $\cost(\calS) := \cost(P, \calS)$. 

We will also make use of the following lemma, to have a weaker version of the triangle inequality for $k$-Means and more general distances. Proofs of this lemma (and variants thereof) can be found in~\cite{BecchettiBC0S19,Cohen-AddadS17,FeldmanSS20,MakarychevMR19,SohlerW18}. For completeness, we provide a proof in the appendix.

\begin{restatable}[Triangle Inequality for Powers]{lemma}{weaktri}
\label{lem:weaktri}
Let $a,b,c$ be an arbitrary set of points in a metric space with distance function $d$ and let $z$ be a positive integer. Then for any $\varepsilon>0$
\begin{align*}
d(a,b)^z &\leq (1+\varepsilon)^{z-1} d(a,c)^z + \left(\frac{1+\varepsilon}{\varepsilon}\right)^{z-1} d(b,c)^z\\
\left\vert d(a,S)^z - d(b, S)^z\right\vert &\leq \varepsilon \cdot d(a,S)^z + \left(\frac{2z+\varepsilon}{\varepsilon}\right)^{z-1} d(a, b)^z.
\end{align*}
\end{restatable}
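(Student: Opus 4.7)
The plan is to prove the two inequalities in sequence, deriving the second from the first.

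For the first inequality, I would start with the standard triangle inequality $d(a,b) \leq d(a,c) + d(b,c)$ and raise both sides to the power $z$. To split the right-hand side into the two desired terms with the correct coefficients, I would apply Jensen's inequality to the convex function $t \mapsto t^z$ (valid since $z \geq 1$). Specifically, choosing weights $\alpha = \frac{1}{1+\varepsilon}$ and $\beta = \frac{\varepsilon}{1+\varepsilon}$ with $\alpha + \beta = 1$, I would write $d(a,c) + d(b,c) = \alpha \cdot \frac{d(a,c)}{\alpha} + \beta \cdot \frac{d(b,c)}{\beta}$ and apply convexity. This yields $(d(a,c)+d(b,c))^z \leq \alpha^{1-z} d(a,c)^z + \beta^{1-z} d(b,c)^z$, and the choice of weights makes the coefficients exactly $(1+\varepsilon)^{z-1}$ and $\left(\frac{1+\varepsilon}{\varepsilon}\right)^{z-1}$ respectively.

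For the second inequality, I would first reduce to the case $d(a,S)^z \geq d(b,S)^z$ by symmetry (since $|d(a,S)^z - d(b,S)^z|$ is symmetric in $a,b$, and the only asymmetric term on the right, $\varepsilon \cdot d(a,S)^z$, can only become larger under relabeling). Let $s^\star$ be the point of $S$ closest to $b$, so that $d(b,S) = d(b,s^\star)$ and $d(a,S) \leq d(a,s^\star)$. Now I would apply the first inequality (in its symmetric form, swapping the two "non-$c$" points to put the small coefficient on $d(b,s^\star)$) to the triple $(a, s^\star, b)$ with a finer precision parameter $\varepsilon' > 0$:
\[
d(a,s^\star)^z \leq (1+\varepsilon')^{z-1} d(b,s^\star)^z + \left(\frac{1+\varepsilon'}{\varepsilon'}\right)^{z-1} d(a,b)^z.
\]
Subtracting $d(b,S)^z = d(b,s^\star)^z$ from both sides and using $d(a,S)^z \leq d(a,s^\star)^z$, I would obtain
\[
d(a,S)^z - d(b,S)^z \leq \bigl((1+\varepsilon')^{z-1} - 1\bigr) d(b,S)^z + \left(\frac{1+\varepsilon'}{\varepsilon'}\right)^{z-1} d(a,b)^z,
\]
and then use $d(b,S)^z \leq d(a,S)^z$ to replace the first occurrence of $d(b,S)^z$ by $d(a,S)^z$.

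To match the stated constants, I would set $\varepsilon' = \varepsilon/(2z)$, which gives $\frac{1+\varepsilon'}{\varepsilon'} = \frac{2z+\varepsilon}{\varepsilon}$ exactly. The main (minor) obstacle is then verifying that $(1+\varepsilon/(2z))^{z-1} - 1 \leq \varepsilon$, which is needed to close the first coefficient at $\varepsilon$. I would handle this via the elementary estimates $(1+x)^{z-1} \leq e^{(z-1)x}$ together with $e^y \leq 1 + 2y$ for $y \leq 1$: plugging in $x = \varepsilon/(2z)$ yields $(z-1)x \leq \varepsilon/2$, whence $(1+\varepsilon/(2z))^{z-1} \leq 1 + \varepsilon$, at least in the regime $\varepsilon \leq 2$ that is relevant for coreset applications. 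For $\varepsilon$ larger, the second inequality degenerates into a trivial bound that follows directly from the first inequality applied with constant precision, so no substantial additional work is needed.
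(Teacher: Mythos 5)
Your proof of the first inequality (triangle inequality plus convexity of $t\mapsto t^z$ with weights $\frac{1}{1+\varepsilon}$ and $\frac{\varepsilon}{1+\varepsilon}$) is correct; the paper simply cites this part to prior work. Your argument for the case $d(a,S)\geq d(b,S)$ of the second inequality, including the choice $\varepsilon'=\varepsilon/(2z)$, matches the paper's. The genuine gap is the symmetry reduction. The right-hand side $\varepsilon\, d(a,S)^z + \left(\frac{2z+\varepsilon}{\varepsilon}\right)^{z-1} d(a,b)^z$ is \emph{not} symmetric in $a$ and $b$, and your monotonicity remark runs in the wrong direction: if $d(b,S)>d(a,S)$ and you relabel so that the larger distance plays the role of $a$, you prove the inequality with error term $\varepsilon\, d(b,S)^z$, which is a \emph{weaker} statement than the one required, whose error term $\varepsilon\, d(a,S)^z$ involves the \emph{smaller} distance; unwinding it only yields the claim with constants inflated by $\frac{1}{1-\varepsilon}$. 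The case $d(b,S)>d(a,S)$ needs its own (short, analogous) argument: anchor at $S(a)$, the center of $S$ closest to $a$, write $d(b,S)^z\leq d(b,S(a))^z\leq \left(1+\frac{\varepsilon}{2z}\right)^{z-1} d(a,S)^z+\left(\frac{2z+\varepsilon}{\varepsilon}\right)^{z-1} d(a,b)^z$ and subtract $d(a,S)^z$. This is exactly how the paper handles it, and it places the multiplicative error on $d(a,S)^z$ as needed.

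A secondary remark: you are right that $(1+\varepsilon/(2z))^{z-1}\leq 1+\varepsilon$ requires $\varepsilon$ to be bounded (the paper glosses over this), but your fallback for large $\varepsilon$ does not work, because the inequality as stated can actually fail there. For $z=3$, $\varepsilon=100$ and collinear points with $d(a,S)=1$, $d(a,b)=10$, $d(b,S)=11$, the left side equals $1330$ while the right side is roughly $1224$. This is a defect of the lemma's ``for any $\varepsilon>0$'' phrasing rather than of your approach --- every application in the paper uses small $\varepsilon$ --- but the large-$\varepsilon$ regime is not ``trivial''; it is false, so it cannot be dispatched the way you propose.
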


\subsection{From Weighted to Unweighted Inputs}
We start by showing a simple reduction from weighted to unweighted inputs. Essentially, we convert a point with weight $w$ to $w$ copies of the point. 

\begin{corollary}\label{cor:weighted}
  Let $\eps, \pi > 0$.
  Let $(X, \dist)$ be a metric space, $P$ a set of clients with weights $w : P \rightarrow \R^+$
  and two positive integers $k$ and $z$. 
  Let also $\greedy$ be a constant-factor approximation for $(k,z)$-clustering on $P$ with weights.
  
  Suppose there exists a $\greedy$-approximate centroid set, denoted $\cand$.
  Then, there exists an algorithm running
  in time $O(|P|)$ that constructs with probability at least $1-\pi$
  a positively-weighted
  coreset of size 
\[O\left(\frac{2^{O(z\log z)}\cdot\log^4 1/\eps}{\min(\eps^3, \eps^z)}
  \left(k \log |\cand| + \log \log (1/\eps) + \log(1/\pi)\right)\right)\]  
   for the $(k,z)$-clustering problem
  on $P$ with weights.
\end{corollary}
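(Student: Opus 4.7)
The plan is to reduce the weighted instance $(P,w)$ to an unweighted multiset and then invoke \cref{thm:main} directly. A naive unweighting (replacing each point by $w(p)$ copies) is not efficient, since $\sum_p w(p)$ can be unbounded, so I proceed in three stages.

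\textbf{Stage 1 (preliminary coreset).} First, run a standard sensitivity-sampling coreset construction for weighted inputs in the style of Feldman and Langberg~\cite{FeldmanL11}. Using the provided constant-factor approximation $\greedy$, this produces in time $O(|P|)$ a weighted $(\eps/3)$-coreset $(P',w')$ of $(P,w)$ of size $|P'|=\poly(k,\eps^{-1},\log(1/\pi))$. A standard property of sensitivity sampling is that the output weights have bounded dynamic range $\max w'/\min w' = \poly(k,\eps^{-1})$. Since the approximate centroid set $\cand$ is a property of the metric and of the client set, it remains valid for $P' \subseteq P$.

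\textbf{Stage 2 (unweighting).} Set $\delta = (\eps/3)\min_{p\in P'} w'(p)$ and round every weight up to $\tilde w(p) = \delta\lceil w'(p)/\delta\rceil$. Then $\tilde w(p) - w'(p) \leq (\eps/3)w'(p)$, so the multiplicative rounding error is at most $\eps/3$, and the multiset $\tilde P$ obtained by placing $\tilde w(p)/\delta$ copies of each $p$ is an $(\eps/3)$-coreset of $(P',w')$. By the bounded weight ratio from Stage 1, $|\tilde P| = \poly(k,\eps^{-1},\log(1/\pi))$.

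\textbf{Stage 3 (invocation).} Applying \cref{thm:main} to the unweighted multiset $\tilde P$ with precision $\eps/3$ yields a weighted $(\eps/3)$-coreset of exactly the stated size (using that $|\cand|$ is unchanged). Composing the three $\eps/3$-approximations additively gives an $\eps$-coreset for $(P,w)$, with the claimed failure probability. Crucially, the algorithm of \cref{thm:main} is sensitivity sampling, so the samples can be drawn from $P'$ with probabilities proportional to $w'(p)\cdot\sigma(p)$ without materializing $\tilde P$, keeping the runtime at $O(|P|)$.

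The main obstacle is ensuring that Stage 2 produces a polynomial-sized multiset, which is why Stage 1 is run first: without it, weights could be so skewed that the ratio $\max w'/\min w'$ (and hence $|\tilde P|$) would be uncontrolled. The extra factor $1/\eps$ in the size bound (i.e.\ $\min(\eps^3,\eps^z)$ instead of $\min(\eps^2,\eps^z)$) absorbs the loss incurred by the repeated coreset composition and the weight discretization; it is visible only when $z\leq 2$, consistent with the fact that for $z\geq 3$ the term $\eps^z$ already dominates.
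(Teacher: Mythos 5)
Your Stages 2--3 are essentially the paper's entire proof: the paper rounds each weight to an integer multiple of $\tfrac{\eps}{2} w_{\min}$ (so that the rounding error is at most $\tfrac{\eps}{2} w(p)$ per point), observes that integer weights can be read as multiplicities, notes that $\cand$ remains an $\greedy$-approximate centroid set and $\greedy$ a constant-factor approximation for the rounded instance, and invokes \cref{thm:main}. So the core reduction is right.

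Stage 1, however, is both unnecessary and incorrectly justified. The cardinality of the multiset $\tilde P$ is a red herring: the coreset size guaranteed by \cref{thm:main} is completely independent of the number of input points, and its running time is $O(n)$ in the number of \emph{distinct} points because the sampling procedure explicitly treats $m$ copies of a point as a single point sampled with probability scaled by $m$ (this is stated in the description of \texttt{GroupSample}). Hence one can round the weights of $P$ directly, with no control whatsoever on $\max w/\min w$, and apply \cref{thm:main} -- which is exactly what the paper does. Moreover, the claim you use to justify Stage 1 -- that sensitivity sampling outputs weights with dynamic range $\poly(k,\eps^{-1})$ -- is false. A sampled point $p$ receives weight proportional to $w(p)/\sigma(p)$, and $\sigma(p)/w(p)$ can be as small as the reciprocal of the total input weight (e.g.\ a single client of weight $M$ sitting at a center of $\greedy$ must retain weight $\Theta(M)$ in any coreset), so the output dynamic range is governed by the input weights and is unbounded. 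If Stage 2 genuinely required bounded dynamic range, Stage 1 would not supply it and the argument would collapse; it survives only because the requirement is vacuous. Finally, your explanation of the $\min(\eps^3,\eps^z)$ denominator is off: a constant rescaling of $\eps$ and composing $O(1)$ coresets cannot produce an extra $1/\eps$ factor, and none is needed -- the bound of \cref{thm:main} with $\min(\eps^2,\eps^z)$ is stronger than the stated one and implies it directly.
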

\begin{proof}
  
  We start by making all weights integers: let $w_{min} = \min_{p \in P} w(p)$, and $\tilde w(p) = \left\lfloor 2 \frac{w(p)}{\eps w_{min}}\right\rfloor$.
 This definition ensures that
\[\forall p,~|w(p) - \frac{\eps w_{min} }{2}\cdot \tilde w(p)| \leq \frac{\eps}{2} w_{min} \leq \frac{\eps}{2} w(p).\]
We denote $\tilde P$ the set of points $P$ with weight $\tilde w$. First, we note that for any solution $\calS$,
\begin{align*}
\left| \cost(P, \calS) - \eps w_{min} \cost(\tilde P, \calS)\right| \leq \frac{\eps}{2} \cost(P, \calS).
\end{align*}
Hence, it is enough to find an $\eps/2$-coreset for $\tilde P$, and then scale the coreset weights of the coreset points by $\eps w_{min} / 2$.
We have that the weights in $\tilde P$ are integers: a weighted point can therefore be considered as multiple copies of the same points.

By the previous equation, $\greedy$ is a constant-factor approximation for $\tilde P$ as well. The definition of a centroid set does not depend on weights, so $\cand$ is a $\greedy$-centroid set for $\tilde P$ as well. Hence, we can apply \cref{thm:main} on $\tilde P$ and scale the resulting coreset by $\eps w_{min}/2$ to conclude the proof.
\end{proof}

\subsection{Partitioning an Instance into Groups: Definitions}\label{sec:def-groups}
As sketched, the algorithm partitions the input points into structured groups. We give here the useful definitions.

Fix a metric space $I = (X, \dist)$, positive integers $k, z$ and a set of clients $P$. For a solution $\calS$ of $(k,z)$-clustering on $P$ and a center $c \in \calS$, $c$'s cluster consists of all points closer to $c$ than to any other center of $\calS$. 

Fix as well some $\eps > 0$, and let $\greedy$ be any solution for $(k, z)$-clustering on $P$ with $k$ centers. 
Let $C_1, ..., C_{k}$ be the clusters induced by the centers of $\greedy$.

\begin{itemize}
\item the average cost of a cluster $C_i$ is $\Delta_{C_i} = \frac{\cost(C_i, \greedy)}{|C_i|}$
\item For all $i, j$, the \textit{ring} $R_{i,j}$ is the set of points $p \in C_i$ such that 
\[2^j \Delta_{C_i} \leq \cost(p, \greedy) \leq 2^{j+1} \Delta_{C_i}.\] 
\item The \textit{inner ring} $\inner(C_i) := \cup_{j \leq 2z\log(\eps/z)} R_{i,j}$ (resp. \textit{outer ring} $\out(C_i) := \cup_{j > 2z\log(z/\eps)}R_{i,j}$) of a cluster $C_i$ consists of the points of $C_i$ with cost at most $\left(\nicefrac \eps z\right)^{2z} \Delta_{C_i}$ (resp. at least $\left(\nicefrac z \eps \right)^{2z} \Delta_{C_i}$). The \textit{main ring} $\main(C_i)$ consists of all the other points of $C_i$. For a solution $\calS$, we let $\inner^\calS$ and $\out^\calS$ be the union of inner and outer rings of the clusters induced by $\calS$.
\item for each $j$, $R_j$ is defined to be $\cup_{i=1}^{k} R_{i,j}$.
\item For each $j$, the rings $R_{i,j}$ are gathered into \textit{groups} $G_{j,b}$ defined as follows:
\begin{eqnarray*}
G_{j, b} := \left\lbrace p \mid \exists i,~p\in R_{i,j} \text{ and } \left(\frac{\varepsilon}{4z}\right)^{z}\cdot    \frac{\cost(R_j,\greedy)}{k} \cdot 2^b  \leq \cost(R_{i,j},\greedy) \leq \left(\frac{\varepsilon}{4z}\right)^{z}  \cdot 2^{b+1}\cdot \frac{\cost(R_j,\greedy)}{k}\right\rbrace.
\end{eqnarray*}
\item  For any $j$,  let $G_{j, min} := \cup_{b \leq 0} G_{j, b}$ be the union of the cheapest groups, and $G_{j, max} := \cup_{b \geq z\log{\frac{4z}{\eps}}} G_{j, b}$ be the union of the most expensive ones. The set of interesting groups is made of $G_{j, min}, G_{j, max}$, and $G_{j,b}$ for all $0 < b < z\log{\frac{4z}{\eps}}$.
\item The set of outer rings is also partitioned into \emph{outer groups}:
\begin{align*}
\outergroup{}_{b} = \Big\lbrace p \mid \exists i,~ p\in C_i \text{ and }
\left(\frac{\varepsilon}{4z}\right)^{z}\cdot    &\frac{\cost(\out^\greedy,\greedy)}{k} \cdot 2^b  \leq \cost(\out(C_i),\greedy) \\
&\leq \left(\frac{\varepsilon}{4z}\right)^{z}  \cdot 2^{b+1}\cdot \frac{\cost(\out^\greedy,\greedy)}{k}\Big\rbrace.
\end{align*}
\item We let as well $\outergroup{}_{min} = \cup_{b \leq 0} \outergroup{}_b$ and 
$\outergroup{}_{max} = \cup_{b \geq z\log{\frac{4z}{\eps}}} \outergroup{}_b$.  The interesting outer groups are $\outergroup{}_{min}, \outergroup{}_{max}$ and all $\outergroup{}_{b}$ with $0 < b < z\log{\frac{4z}{\eps}}$.
\end{itemize}

Intuitively, grouping points by groups is helpful, as all points in the same ring can pay the same additive error. Since there are very few groups, it turns out possible to construct a coreset for each group, and then take the union of the group's coreset. This is essentially the algorithm we propose.

We note few facts about the partitioning:
\begin{fact}
\label{fact:krings}
There exist at most $O(z\log(z/\varepsilon))$ many non-empty $R_j$ that are not in some inner or outer ring, i.e., not in $\inner^\greedy$ nor in $\out^\greedy$.
\end{fact}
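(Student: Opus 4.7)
The proof of \cref{fact:krings} reduces to counting integer values of $j$ in a bounded range, so the plan is to identify exactly which indices $j$ can correspond to an $R_j$ that avoids both $\inner^\greedy$ and $\out^\greedy$, and then bound the number of such indices.

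The first step is to unpack the definitions and observe that whether $R_{i,j}$ belongs to $\inner(C_i)$ or $\out(C_i)$ is determined purely by the index $j$, not by the cluster index $i$: indeed, $\inner(C_i) := \cup_{j \leq 2z\log(\eps/z)} R_{i,j}$ and $\out(C_i) := \cup_{j > 2z\log(z/\eps)} R_{i,j}$ use the same cutoffs $2z\log(\eps/z)$ and $2z\log(z/\eps)$ for every cluster. Taking the union over $i$, this gives
\[ \inner^\greedy \;=\; \bigcup_{j \leq 2z\log(\eps/z)} R_{j}, \qquad \out^\greedy \;=\; \bigcup_{j > 2z\log(z/\eps)} R_{j}. \]
Since the rings $R_{i,j}$ (and therefore the sets $R_j$) at different levels $j$ are disjoint by construction, a non-empty $R_j$ is contained in $\inner^\greedy$ precisely when $j \leq 2z\log(\eps/z)$, and contained in $\out^\greedy$ precisely when $j > 2z\log(z/\eps)$.

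The second step is to bound the number of remaining indices. A non-empty $R_j$ that lies in neither $\inner^\greedy$ nor $\out^\greedy$ must therefore satisfy
\[ 2z\log(\eps/z) \;<\; j \;\leq\; 2z\log(z/\eps). \]
The number of integers in this interval is at most
\[ 2z\log(z/\eps) - 2z\log(\eps/z) + 1 \;=\; 4z\log(z/\eps) + 1 \;=\; O(z\log(z/\eps)), \]
which yields the claimed bound. I do not expect a real obstacle here: the entire argument is a direct consequence of the cutoffs used in the definitions of inner/outer/main rings and standard arithmetic on $\log(z/\eps)$, with the only subtlety being the cosmetic one of checking that the cluster index $i$ plays no role in determining whether a given $R_j$ falls in $\inner^\greedy$ or $\out^\greedy$.
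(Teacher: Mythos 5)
Your proof is correct and matches the paper's (implicit, one-line) justification: the index $j$ of any ring that is neither inner nor outer must lie in the interval $(2z\log(\eps/z),\, 2z\log(z/\eps)]$, which contains $4z\log(z/\eps)+O(1)$ integers. The observation that the cutoffs are independent of the cluster index $i$ is exactly what the definitions in the paper rely on, so there is nothing to add.
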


Hence, the number of different non-empty groups is bounded as well:
\begin{fact}
\label{fact:kgroups}
There exists at most $O(z^2\log^2(z/\varepsilon))$ many interesting $G_{j,b}$.
\end{fact}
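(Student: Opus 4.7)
The plan is to prove Fact 2 by combining Fact 1 with a straightforward count of the number of possible values of $b$ that give rise to an interesting group $G_{j,b}$ for a fixed $j$. Since the groups $G_{j,b}$ are indexed by pairs $(j,b)$, I would bound the number of relevant $j$'s and, for each such $j$, the number of relevant $b$'s, and then multiply.

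First, by Fact 1 (\cref{fact:krings}), there are at most $O(z\log(z/\varepsilon))$ many values of $j$ for which $R_j$ is non-empty and $R_j$ lies in the main ring region (i.e. is neither contained in $\inner^\greedy$ nor in $\out^\greedy$). These are the only $j$'s for which the associated groups $G_{j,b}$ can be non-trivially populated from the main rings. Thus the index $j$ contributes a factor of $O(z\log(z/\varepsilon))$.

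Next, fix such a $j$. By the definition of interesting groups in \cref{sec:def-groups}, the interesting $G_{j,b}$ consist of $G_{j,\min}$, $G_{j,\max}$, and the groups $G_{j,b}$ for $0 < b < z\log(4z/\varepsilon)$. The number of such $b$ values is clearly $O(z\log(z/\varepsilon))$, and the two lumped groups $G_{j,\min}, G_{j,\max}$ only add $O(1)$ more. So the $b$ index contributes another factor of $O(z\log(z/\varepsilon))$.

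Multiplying the two bounds gives $O(z\log(z/\varepsilon)) \cdot O(z\log(z/\varepsilon)) = O(z^2 \log^2(z/\varepsilon))$ interesting groups, as claimed. There is no real obstacle here; the only subtle point is verifying that the ranges of $b$ declared ``interesting'' in \cref{sec:def-groups} really have length $O(z\log(z/\varepsilon))$, which follows directly from the upper endpoint $z\log(4z/\varepsilon)$ in the definition. The proof is essentially a one-line product of two counting bounds.
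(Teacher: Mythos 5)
Your proposal is correct and matches the paper's own (one-line) justification: the paper likewise bounds the number of relevant $j$'s by the range $2z\log(\varepsilon/z) \le j \le 2z\log(z/\varepsilon)$ (which is exactly the content of \cref{fact:krings}) and the number of interesting $b$'s by the range $0 \le b \le z\log(4z/\varepsilon)$, then multiplies. No gaps.
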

This is simply due to the fact that $j$ can take only interesting values between $2z \log(\eps/z)$ and $2z \log(z/\eps)$, and interesting $b$ between $0$ and $z \log (4z/\eps)$.

By the definition of the outer groups, we have also that
\begin{fact}
\label{fact:outergroups}
There exists at most $O(z\log (z/\eps))$ many interesting outer groups.
\end{fact}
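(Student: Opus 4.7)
The plan is to prove this by direct counting from the definition of interesting outer groups. By construction, the set of interesting outer groups consists of exactly three types of objects: the aggregated cheap group $\outergroup{}_{min} = \cup_{b \leq 0} \outergroup{}_b$, the aggregated expensive group $\outergroup{}_{max} = \cup_{b \geq z\log(4z/\eps)} \outergroup{}_b$, and the individual groups $\outergroup{}_b$ for $0 < b < z\log(4z/\eps)$.

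The first two contribute $2$ groups regardless of parameters, so the entire count is dominated by the middle range. The number of integers $b$ strictly between $0$ and $z\log(4z/\eps)$ is at most $z\log(4z/\eps) - 1$, which is $O(z\log(z/\eps))$. Summing these contributions yields the claimed bound of $O(z\log(z/\eps))$ interesting outer groups.

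The main (and only) substantive observation needed is that the definition of $\outergroup{}_b$ groups clusters $C_i$ by $\cost(\out(C_i), \greedy)$ into geometrically spaced buckets of ratio $2$, relative to $\left(\frac{\eps}{4z}\right)^z \cdot \frac{\cost(\out^\greedy, \greedy)}{k}$. Since all individually-named buckets lie in the range $b \in (0, z\log(4z/\eps))$ and everything outside this range is collapsed into $\outergroup{}_{min}$ or $\outergroup{}_{max}$, the total count is immediate. There is no real obstacle here — the fact is a bookkeeping consequence of the partition defined in \cref{sec:def-groups}, parallel to \cref{fact:kgroups} but with only the single index $b$ (no $j$ coordinate), which is why the bound loses one $\log$ factor compared to the $O(z^2 \log^2(z/\eps))$ bound for interesting $G_{j,b}$.
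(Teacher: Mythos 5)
Your proof is correct and matches the paper's (implicit) argument exactly: the interesting outer groups are by definition $\outergroup{}_{min}$, $\outergroup{}_{max}$, and the $\outergroup{}_b$ with $0 < b < z\log(4z/\eps)$, so counting the admissible values of $b$ immediately gives the $O(z\log(z/\eps))$ bound. Nothing is missing.
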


For simplicity, we will drop mention of "interesting" : when considering any group, it will implicitly be an interesting group.

\section{The Coreset Construction Algorithm, and Proof of Theorem~\ref{thm:main}}\label{sec:algo}

\subsection{The algorithm}
For an initial metric space $(X, \dist)$, set of clients $P$ and $\eps > 0$, our algorithm essentially consists of the following steps: given a solution $\greedy$, it processes the input in order to reduce the number of different
groups. Then, the algorithm computes a coreset of the points inside each group using the following \texttt{GroupSample} procedure. The final coreset is made of the union of the coresets for all groups.

The \texttt{GroupSample} procedure takes as input a group of points $G$ as defined in \cref{sec:def-groups}, a set of centers $\greedy$ inducing clusters $\tilde C_1, \tilde C_2, ..., \tilde C_k$ on $G$ and an integer $\delta$. Note importantly that the definition of clusters $\tilde C_i$ says that they are only made of points from the group $G$. The output of \texttt{GroupSample} is a set of weighted points, computed as follows: 
a point $p \in \tilde C_i$ is sampled with probability $\frac{\delta\cdot \cost(\tilde C_i,\greedy)}{|\tilde C_i|\cdot \cost(G,\greedy)}$, and the weight of any sampled point is rescaled by a factor $\frac{|\tilde C_i|\cdot \cost(G,\greedy)}{\delta \cost(\tilde C_i,\greedy)}$.\footnote{Note that this is essentially importance sampling, as each point in a cluster $\tilde C_i$ have cost roughly equal to the average. We chose this different distribution for simplicity in our proofs.} 

The properties of the \texttt{GroupSample} procedure are captured by the following lemma.
\begin{restatable}[]{lemma}{coresetreduc}
\label{lem:coreset-reduc}
Let $(X, \dist)$ be a metric space, $k, z$ be two positive integers and $G$ be a group of clients and $\greedy$ be a solution to $(k, z)$-clustering on $G$ with $k$ centers such that:
\begin{itemize} 
\item for every cluster $\tilde C$ induced by $\greedy$ on $G$, all points of $\tilde C$ have the same cost in $\greedy$, up to a factor $2$: $\forall p, q \in \tilde C,~ \cost(p, \greedy) \leq 2\cost(q, \greedy)$.
\item for all clusters $\tilde C$ induced by $\greedy$ on $G$, it holds that $\frac{\cost(G, \greedy)}{2k} \leq \cost(\tilde C, \greedy)$.
\end{itemize}

Let $\cand$ be a $\greedy$-approximate centroid set for $(k,z)$-clustering on $G$.

Then, there exists an algorithm \texttt{GroupSample}, running in time $O(|G|)$ that constructs a set $\coreset$ of size $\delta$ such that,
with probability $1-\exp\left(k\log |\cand| - 2^{O(z\log z)}\cdot \frac{\min(\eps^2,\eps^{z})}{\log^2 1/\eps}\cdot \delta\right)$ it holds that for all
set $\calS$ of $k$ centers:
\[|\cost(G, \calS) - \cost(\coreset,\calS)| = O(\eps)\left(\cost(G, \calS) + \cost(G, \greedy)\right).\qedhere\]
\end{restatable}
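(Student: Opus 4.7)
The plan is to fix a single solution $\calS$, show that with probability $1 - \exp\bigl(-\Omega(\delta\cdot \min(\eps^2,\eps^z)/\log^2(1/\eps))\bigr)$ the sampled set $\coreset$ satisfies $|\cost(G,\calS) - \cost(\coreset,\calS)| = O(\eps)(\cost(G,\calS)+\cost(G,\greedy))$, and then union-bound over the $|\cand|^k$ relevant solutions, transferring the guarantee to arbitrary $\calS$ via \cref{def:centroid-set}. Observe first that for every $p\in \tilde C_i$ the reweighted contribution $\frac{|\tilde C_i|\cdot \cost(G,\greedy)}{\delta\cdot \cost(\tilde C_i,\greedy)}\cost(p,\calS)$ has expectation exactly $\cost(p,\calS)$, so $\coreset$'s cost is an unbiased estimator. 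The structural hypothesis that $\cost(\tilde C_i,\greedy)\geq \cost(G,\greedy)/(2k)$ bounds each sampling probability from below by $\Theta(\delta/(k|\tilde C_i|))$, which is what makes cluster sizes concentrate.

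Next, I would split the points of $G$ according to the distance from $p$ to $\calS$ relative to $\cost(p,\greedy)$. Cheap points with $\cost(p,\calS)\leq (\eps/z)^z\cost(p,\greedy)$ contribute a total cost of at most $O(\eps)\cost(G,\greedy)$ both in $G$ and in the coreset (using the reweighting and a Chernoff bound on $|\coreset|$), so they can be absorbed in the error. Expensive points, those in clusters $\tilde C_i$ containing some $p$ with $\cost(p,\calS)\geq (z/\eps)^z\cost(p,\greedy)$, satisfy, by the factor-$2$ homogeneity assumption together with \cref{lem:weaktri}, $\cost(q,\calS) = (1\pm O(\eps))\cost(p,\calS)$ for every $q\in\tilde C_i$; hence the cost of such a cluster is $|\tilde C_i|\cdot (1\pm O(\eps))\cost(p,\calS)$ up to a tiny additive term, and conditioning on the $\calS$-independent event $\calE$ that $\sum_{q\in\tilde C_i\cap\coreset} \frac{|\tilde C_i|\cdot\cost(G,\greedy)}{\delta\cdot\cost(\tilde C_i,\greedy)}$ is $(1\pm\eps)|\tilde C_i|$ for all $i$ (which needs only $\delta = \tilde\Omega(k)$ by Chernoff, absorbed in the $k\log|\cand|$ term), the cost on these clusters is preserved.

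The heart of the argument concerns the remaining intermediate points, which we cover by $O(z\log(z/\eps))$ geometric cost-buckets. Within one bucket and one cluster $\tilde C_i$, let $q_i$ be the point closest to $\calS$ and rewrite the coreset estimator as in \eqref{eq:estimator2}--\eqref{eq:estimator3}: the ``offset'' part \eqref{eq:estimator3} is just $\cost(q_i,\calS)$ times the random mass on $\tilde C_i$, which is already handled by event $\calE$; the ``residual'' part \eqref{eq:estimator2} has per-sample magnitude bounded via \cref{lem:weaktri} by $2^{O(z)}\eps^{z-1}\bigl(\cost(p,\greedy)+\cost(q_i,\greedy)\bigr)^{(z-1)/z}\cdot \dist(p,q_i)$, which after squaring gives a variance of the order $2^{O(z\log z)}\cdot \max(\eps^{-2},\eps^{-z})\cdot \cost(G,\greedy)\cdot (\cost(G,\calS)+\cost(G,\greedy))$. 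Applying Bernstein's inequality to the residual and Chernoff to the offset, and summing the additive errors of $\eps\cdot(\cost(G,\calS)+\cost(G,\greedy))/\log(1/\eps)$ over the $O(z\log(z/\eps))$ buckets, yields the claimed additive error provided $\delta\geq 2^{O(z\log z)}\cdot \log^2(1/\eps)\cdot \min(\eps^2,\eps^z)^{-1}\cdot (k\log|\cand|+t)$ for an appropriate tail parameter $t$.

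Finally, I would take a union bound over all $\calS\in\cand^k$, absorbing the factor $|\cand|^k$ into the $k\log|\cand|$ term of the failure exponent. For a general $\calS\in X^k$, apply \cref{def:centroid-set} to obtain $\tilde\calS\in\cand^k$; for every point $p\in G$ either both $\cost(p,\calS)$ and $\cost(p,\tilde\calS)$ are at most $(8z/\eps)^z\cost(p,\greedy)$ (so the $\eps/(z\log(z/\eps))$ per-point error from \cref{def:centroid-set} sums to $O(\eps)\cdot(\cost(G,\calS)+\cost(G,\greedy))$ across the $O(z\log(z/\eps))$ buckets both in $G$ and in $\coreset$), or $p$ is an expensive point for one of the two solutions and is covered by the event $\calE$ argument above using \cref{lem:weaktri}. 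The main obstacle I expect is keeping the variance-reduction calculation for the residual estimator tight, in particular verifying that the offset-free contribution is indeed $2^{O(z)}$ smaller than the naive per-point bound uniformly across all buckets, and simultaneously showing that the leakage between the cheap/intermediate/expensive regimes only costs an $O(\eps)$ additive term rather than multiplying by $\log(1/\eps)$.
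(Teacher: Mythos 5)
Your plan follows the paper's proof essentially step for step: the same tiny/interesting/huge trichotomy, the same solution-independent event $\calE$ preserving cluster masses (used both for the huge clusters and for the offset part $F_{\ell,\calS}$ of the estimator), the same variance-reduced residual estimator $\sum_p \weight(p)\bigl(\cost(p,\calS)-\cost(q_i,\calS)\bigr)$ analyzed via Bernstein, and the same union bound over $\cand^k$ followed by the transfer to arbitrary $\calS$ through \cref{def:centroid-set}. The one detail worth flagging when you flesh this out is that the set of non-huge clusters is determined by the original $\calS$ rather than by its representative $\tilde\calS$, so the union bound must additionally range over the $2^k$ possible subsets of interesting clusters — a harmless extra factor absorbed into $k\log|\cand|$, but one your sketch currently omits.
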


We further require the \texttt{SensitivitySample} procedure, which we will apply to some of the points not consider by the calls to \texttt{GroupSample}. From a group $G$, this procedure merely picks $\delta$ points $p$ with probability $\frac{\cost(p,\A)}{\cost(G, \A)}$. Each of the $\delta$ sampled points has a weight $\frac{\cost(G, \A)}{\delta\cdot \cost(p,\A)}$.

The key property of \texttt{SensitivitySample} is given in the following lemma.

\begin{restatable}[]{lemma}{coresetouter}
\mbox{}\label{lem:coreset-outer}
Let $(X, \dist)$ be a metric space, $k, z$ be two positive integers, $P$ be a set of clients
and $\greedy$ be a $c_{\A}$-approximate solution solution to $(k, z)$-clustering on $P$.

Let $G$ be either a group $\outergroup{}_{b}$ or $\outergroup{}_{\max}$. 
Suppose moreover that there is a $\greedy$-approximate centroid set $\cand$ for $(k,z)$-clustering on $G$ .

Then, there exists an algorithm \texttt{SensitivitySample} running in time $O(|G|)$ that constructs a set $\coreset$ of size $\delta$ such that it holds with probability 
$1-\exp\left(k\log |\cand| - 2^{O(z\log z)}\cdot \frac{\eps^2}{\log^2 1/\eps}\cdot \delta\right)$ that, for all sets $\calS$ of $k$ centers:
  $$|\cost(G, \calS) - \cost(\coreset,\calS)| =\frac{\eps}{z\log z/\varepsilon}\cdot \left(\cost(\calS) + \cost(\greedy)\right).\qedhere$$
\end{restatable}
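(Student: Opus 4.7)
The plan is to follow the standard sensitivity-sampling template: first verify the estimator $\cost(\coreset,\calS)$ is unbiased for $\cost(G,\calS)$; then for each fixed candidate solution $\tilde\calS\in\cand^k$ apply a Bernstein-type tail bound; take a union bound over the $|\cand|^k$ elements of $\cand^k$ to handle all centroid-set solutions simultaneously; and finally transfer the guarantee from $\tilde\calS$ to an arbitrary $\calS$ via \cref{def:centroid-set}.

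For the first two steps I would set $q(p):=\cost(p,\A)/\cost(G,\A)$ and $w(p):=\cost(G,\A)/(\delta\cost(p,\A))$, and then directly compute $\E[\cost(\coreset,\calS)]=\cost(G,\calS)$ for every fixed $\calS$. Writing $\cost(\coreset,\tilde\calS)=\frac{1}{\delta}\sum_{i=1}^{\delta}Y_i$ with $Y_i=\cost(G,\A)\cost(p_i,\tilde\calS)/\cost(p_i,\A)$, I would partition $G=P_1\cup P_2$ according to whether $\cost(p,\tilde\calS)\leq(8z/\eps)^z\cost(p,\A)$. On $P_1$ the sensitivity ratio is tame, so the per-sample variance is $O(\cost(G,\A)\cost(G,\tilde\calS)/\delta)$ and the per-sample maximum is $O((8z/\eps)^z\cost(G,\A)/\delta)$. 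Combined with $\cost(G,\A)\leq\cost(\greedy)$, $\cost(G,\tilde\calS)\leq\cost(\tilde\calS)$, and AM-GM on $(\cost(\greedy)+\cost(\tilde\calS))^2$, Bernstein's inequality then yields deviation at most $\frac{\eps}{z\log(z/\eps)}(\cost(\greedy)+\cost(\tilde\calS))$ once $\delta$ meets the claimed lower bound.

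The main obstacle is the $P_2$ part, where a single sample could a priori contribute arbitrarily much. The structural feature I would exploit is specific to outer rings: every $p\in\out(C_i)$ is at distance $\dist(p,c_i)\geq(z/\eps)^2\Delta_{C_i}^{1/z}$ from its $\A$-center. If $\cost(p,\tilde\calS)>(8z/\eps)^z\cost(p,\A)$, then by \cref{lem:weaktri} the nearest center of $\tilde\calS$ must lie far beyond $c_i$, so every $q\in C_i$ with $\dist(q,c_i)\leq\dist(p,c_i)/2$ satisfies $\cost(q,\tilde\calS)=\Omega(\cost(p,\tilde\calS)/2^z)$. Since $p$ is in the outer ring, a constant fraction of $C_i$ lies within that distance of $c_i$ by a Markov-type estimate on cluster costs, giving $\cost(\tilde\calS)=\Omega(|C_i|\cost(p,\tilde\calS)/2^z)$. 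Combined with the outer-ring lower bound $\cost(p,\A)\geq(z/\eps)^{2z}\cost(C_i,\A)/|C_i|$, this provides an effective sensitivity bound $\cost(p,\tilde\calS)/\cost(p,\A)\leq 2^{O(z)}\cost(\tilde\calS)/\cost(C_i,\A)$ up to $\eps$-polynomial factors, removing the $\eps^{-z}$ blow-up and letting the Bernstein argument go through with the $\eps^{-2}$ dependence claimed.

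A union bound over the $|\cand|^k$ candidate solutions yields the $\exp(k\log|\cand|)$ contribution to the failure probability. Finally, to transfer to an arbitrary $\calS$, I would choose $\tilde\calS\in\cand^k$ as guaranteed by \cref{def:centroid-set} and use the pointwise estimate $|\cost(p,\calS)-\cost(p,\tilde\calS)|\leq\frac{\eps}{z\log(z/\eps)}(\cost(p,\calS)+\cost(p,\greedy))$ valid for every point ``relevant'' to either $\calS$ or $\tilde\calS$; summing over $p\in G$ and over the weighted $p\in\coreset$ adds an error of $\frac{\eps}{z\log(z/\eps)}(\cost(\calS)+\cost(\greedy))$. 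Points irrelevant to both $\calS$ and $\tilde\calS$ are handled by the same outer-ring argument as in the $P_2$ case, since they have $\cost(p,\tilde\calS),\cost(p,\calS)\gg\cost(p,\greedy)$ and the ``effective sensitivity'' argument applies verbatim. Combining these estimates yields the claimed deviation bound for every $\calS$ with the stated probability.
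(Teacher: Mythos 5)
Your split of $G$ at the threshold $\cost(p,\tilde\calS)\le(8z/\eps)^z\cost(p,\A)$ is the wrong one, and it breaks the key feature of this lemma, namely that the failure probability depends only on $\eps^{-2}$ and not on $\eps^{-z}$ (this is remarked explicitly right after the statement). On your $P_1$ the second moment is $\E[X_i^2]=\frac{\cost(G,\A)}{\delta^2}\sum_p\frac{\cost(p,\tilde\calS)^2}{\cost(p,\A)}\le\left(\frac{8z}{\eps}\right)^z\frac{\cost(G,\A)\cost(G,\tilde\calS)}{\delta^2}$, and the almost-sure bound is likewise $M\le(8z/\eps)^z\cost(G,\A)/\delta$; you cannot drop the $(8z/\eps)^z$ factor from the variance as you do. Feeding these into Bernstein with target deviation $\frac{\eps}{z\log(z/\eps)}(\cost(\greedy)+\cost(\tilde\calS))$ gives a tail of order $\exp(-\eps^{z+2}\delta/\cdots)$, i.e.\ exactly the bound of \cref{lem:coreset-reduc}, not the claimed $\exp(-\Theta(\eps^2)\delta/\cdots)$. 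The paper instead defines the ``close'' set with a \emph{constant} threshold, $\cost(p,\tilde\calS)\le 5^z\cost(p,\greedy)$, so that both the variance and the max carry only a $2^{O(z)}$ factor and Bernstein delivers the $\eps^2$ rate (\cref{lem:outerclose}).

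The price of the constant threshold is that the ``far'' set is much larger, and here your plan also diverges in a way that would fail. You propose to control $P_2$ by an ``effective sensitivity'' bound $\cost(p,\tilde\calS)/\cost(p,\A)\lesssim 2^{O(z)}\cost(\tilde\calS)/\cost(C_i,\A)$ and then run Bernstein again; but $\cost(C_i,\A)$ can be as small as $\Theta(\cost(G,\greedy)/k)$ for a group, so this reintroduces a factor $k$ into the variance and max and would force $\delta=\Omega(k^2)$ after the union bound. The paper avoids concentration for the far points entirely: conditioned on the solution-independent event $\calE_{far}$ (cluster masses preserved in the sample, \cref{lem:eventEFar}), it shows that $\cost(G_{far,\calS},\calS)$ and $\cost(\coreset\cap G_{far,\calS},\calS)$ are \emph{each} at most $\frac{\eps}{z\log(z/\eps)}\cost(\calS)$ — negligible against the \emph{global} cost of $\calS$, which is why the lemma's error term is $\cost(\calS)+\cost(\greedy)$ rather than $\cost(G,\calS)+\cost(G,\greedy)$ — so their difference is bounded with no union bound over $\cand^k$ for this part (\cref{lem:outerfar}). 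Your structural observations (Markov on the outer ring, the bulk of the cluster paying $\approx\cost(c,\calS)$) are the right ingredients for that argument, but they must be used to show both sides are small, not to tame a sensitivity for another Bernstein application. Finally, your claim that points far from both $\calS$ and $\tilde\calS$ are handled ``verbatim'' glosses over the needed containment $G\setminus G_{close,\tilde\calS}\subseteq G_{far,\calS}$, which is what lets the far-point bound (stated for $\calS$, not $\tilde\calS$) cover exactly the points the centroid set gives no guarantee for.
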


An interesting feature of \cref{lem:coreset-outer} is that the probability does not depend on $\eps^{-z}$, as it does in \cref{lem:coreset-reduc}.

Using the two algorithms \texttt{GroupSample} and  \texttt{SensitivitySample}, we can formally present the whole algorithm:

\textbf{Input:} A metric space $(X, \dist)$, a set $P \subseteq X$, $k, z > 0$, a solution $\greedy$ to $(k, z)$-clustering on $P$,
and $\eps$ such that $0 < \eps < 1/3$.\\
\textbf{Output:} A coreset. Namely, a set of points $\coreset \subseteq P \cup \greedy$ and a weight function $w: \coreset \mapsto \R_+$
such that for any set of $k$ centers $\calS$, $\cost(P, \calS) = (1\pm\eps) \cost(\coreset, \calS)$. 
\begin{enumerate} 
\item Set the weights of all the centers of $\greedy$ to 0.
\item\label{step:ig} \textbf{Partition the remaining instance into groups}:
\begin{enumerate}
\item For each cluster $C$ of $\greedy$ with center $c$, remove $\inner(C)$ and increase the weight of $c$ by $|\inner(C)|$.
\item For each cluster $C$ with center $c$ in solution $\greedy$ , the algorithm discards also
  all of $C \cap \cup_{j} G_{j,min}$ and $\out(C) \cap \outergroup{}_{min}$,
  and increases the weight of $c$ by
  the number of points discarded in cluster $c$. 
\item Let $\discarded$ be the set of points discarded at those steps, and $\structured$ be the weighted set of centers that have positive weights.
 \end{enumerate}
\item \textbf{Sampling from well structured groups:} For every $j$ such that $z \log(\eps/z) \leq j \leq 2z \log(z/\eps)$ and every group $G_{j,b} \notin G_{j, min}$, %
  compute a coreset $\coreset_{j,b}$ of size 
  $$\delta = O\left(\valuedelta\right)$$ using the \texttt{GroupSample} procedure. 
  
\item \textbf{Sampling from the outer rings:} From each group $\outergroup{}_1, ..., \outergroup{}_{max}$,
  compute a coreset $\coreset^O_b$ of size
  $$\delta = O\left(\frac{2^{O(z \log z)} \cdot \log^2(1/\eps)}{\eps^2}(k \log |\cand| + \log \log(1/\eps) + \log(1/\pi))\right)$$ using the \texttt{SensitivitySample} procedure. 
  
\item \textbf{Output:}
  \begin{itemize}
  \item A coreset consisting of $\greedy \cup_{j,b} \coreset_{j,b} \cup_i \coreset^O_i$.
  \item Weights: weights for $\greedy$ defined throughout the algorithm,
    weights for $\coreset_{j,b}$ defined by the \texttt{GroupSample} procedure,
    weights for $\coreset_O$ defined by the\\ \texttt{SensitivitySample} procedure.
  \end{itemize}
\end{enumerate}

\begin{remark}
Instead of using the \texttt{GroupSample} procedure, one could use any coreset construction tailored for the well structured group. Improving on that step would improve the final coreset bound: if the size of the coreset produced for a group is $T$, then the total coreset has size 
\[\widetilde O\left(T + \frac{2^{O(z \log z)}}{\eps^2} \cdot k \log |\cand|\right)\qedhere\]
\end{remark}

\subsection{Proof of \cref{thm:main}}
As we prove in \cref{sec:preprocess}, the outcome of the partitioning step, $\discarded$ and $\structured$, satisfies the following lemma, that deals  with the inner ring, and the groups $G_{j, min}$ and $\outergroup{}_{min}$:
\begin{restatable}[]{lemma}{preprocess}
\label{lem:preprocess}
Let $(X, \dist)$ be a metric space with a set of clients $P$, $k, z$ be two positive integers, and $\eps \in \mathbb{R}^*_+$. 
For every solution $\calS$, it holds that 
$$|\cost(\discarded, \calS) - \cost(\structured, \calS)| = O(\eps) \cost(\calS),$$
where $\discarded$ and $\structured$ are defined in Step \ref{step:ig} of the algorithm.
\end{restatable}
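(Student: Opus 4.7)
The plan is to observe that $\structured$ is nothing more than $\discarded$ in which every removed point $p$ has been replaced by a unit of mass at the center $c_p \in \greedy$ serving $p$: this is exactly how the algorithm builds the weights of $\structured$ during Step~\ref{step:ig}. Under this pairing one has the pointwise identity
\begin{align*}
\cost(\discarded,\calS) - \cost(\structured,\calS) \;=\; \sum_{p \in \discarded} \bigl(\cost(p,\calS) - \cost(c_p,\calS)\bigr),
\end{align*}
so the whole task reduces to bounding the absolute value of the right-hand side.

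The natural tool is the second inequality of Lemma~\ref{lem:weaktri} applied with $a = p$, $b = c_p$, and $S = \calS$, at parameter $\eps$: this gives $|\cost(p,\calS) - \cost(c_p,\calS)| \leq \eps\,\cost(p,\calS) + (O(z)/\eps)^{z-1} \cost(p,\greedy)$. Summing over $p \in \discarded$ yields an overall bound of $\eps\,\cost(\discarded,\calS) + (O(z)/\eps)^{z-1}\,\cost(\discarded,\greedy)$, whose first term is already at most $\eps\,\cost(\calS)$. Since $\greedy$ is a constant-factor approximation and $\cost(\text{OPT}) \leq \cost(\calS)$, one has $\cost(\greedy) = O(\cost(\calS))$, so the proof is complete provided I can show
\begin{align*}
\cost(\discarded,\greedy) \;=\; O\bigl((\eps/z)^z\bigr)\,\cost(\greedy),
\end{align*}
because the blow-up factor $(O(z)/\eps)^{z-1}$ will then leave an $O(\eps/z) = O(\eps)$ coefficient on $\cost(\greedy)$, and hence on $\cost(\calS)$.

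To establish this cost bound I would simply split $\discarded$ into the three pieces removed in Step~\ref{step:ig} and exploit the definitions of Section~\ref{sec:def-groups}. For an inner-ring point $p \in \inner(C)$ the definition gives $\cost(p,\greedy) \leq 2(\eps/z)^{2z}\Delta_C$, so $\cost(\inner(C),\greedy) \leq 2(\eps/z)^{2z}\cost(C,\greedy)$ and summing over clusters costs only $O((\eps/z)^{2z})\cost(\greedy)$. For a $G_{j,\min}$ point ($b \leq 0$) the group definition forces $\cost(R_{i,j},\greedy) \leq 2(\eps/(4z))^z\cost(R_j,\greedy)/k$; summing over the $k$ values of $i$ and then over $j$, and using that the sets $R_j$ are disjoint subsets of $P$ whose total cost is at most $\cost(\greedy)$, yields a contribution of $O((\eps/z)^z)\cost(\greedy)$. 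The outer-group term $\outergroup{}_{\min}$ is bounded identically by summing its defining inequality over the $k$ clusters, giving $O((\eps/z)^z)\cost(\out^{\greedy},\greedy) \leq O((\eps/z)^z)\cost(\greedy)$. Combining the three contributions establishes the cost bound and finishes the proof.

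The only delicate point is the calibration: one must verify that the exponents chosen in the definitions of $\inner(C)$ (namely $2z$) and of the cheap groups $G_{j,\min}$ and $\outergroup{}_{\min}$ (namely $z$, with the $(\eps/(4z))^z$ prefactor) are exactly what is needed to absorb the $(z/\eps)^{z-1}$ blow-up from Lemma~\ref{lem:weaktri}. The parameters in Section~\ref{sec:def-groups} are clearly tuned for this balance, so I expect no real obstacle; the main care is in correctly tracking the $2^{O(z)}$ and $z$-dependent constants that the statement silently hides inside $O(\eps)$.
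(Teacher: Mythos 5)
Your proposal is correct and follows essentially the same route as the paper: the paper also reduces to the pointwise differences $\cost(p,\calS)-\cost(c_p,\calS)$, applies the second inequality of Lemma~\ref{lem:weaktri} to each discarded point, and then absorbs the $(O(z)/\eps)^{z-1}$ blow-up using exactly the bounds $\cost(\inner(C),\greedy)\leq(\eps/z)^{2z}\cost(C,\greedy)$ and $\cost(G_{j,\min},\greedy)\leq 2(\eps/4z)^{z}\cost(R_j,\greedy)/1$ (and its analogue for $\outergroup{}_{\min}$), summed over disjoint rings. The only difference is organizational — the paper packages the three pieces into Lemmas~\ref{lem:preprocess-inner} and~\ref{lem:ksmallgroups} with per-point bounds rather than your single global bound on $\cost(\discarded,\greedy)$ — and your calibration check ($(3z/\eps)^{z-1}\cdot(\eps/4z)^{z}=O(\eps/z)$) is exactly the computation the paper performs.
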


Combining properties of the partitioning, \cref{lem:coreset-reduc}, \cref{lem:coreset-outer} and \cref{lem:preprocess} allows to prove \cref{thm:main}:


%
%
\begin{proof}[Proof of \cref{thm:main}]
Let $\coreset$ be the output of the algorithm described above, and $\delta = O\left(\valuedelta\right)$ as defined in step 3 of the algorithm.
Due to \cref{fact:kgroups} and \cref{fact:outergroups},
$\coreset$ has size $O(z^2 \log^2(z/\eps) \cdot \delta + |\greedy|)$, and non-negative weights by construction. {\setlength{\emergencystretch}{4.5em}\par}


We now turn to analysing the quality of the coreset.
Any group $G_{j,b}$ for $b > 0$ satisfies \cref{lem:coreset-reduc}:
the cost of any point $p \in G_{j, b} \cap C_i$ satisfies $2^j \Delta_{C_i} \leq \cost(p, \greedy) \leq 2^{j+1} \Delta_{C_i}$, and
\begin{itemize}
\item for $b \in \left\{0, ..., z\log \frac{4z}{\eps}\right\}$, the cost of all clusters induced by $\greedy$ on $G_{j, b}$ are equal up to a factor 2, hence for all $i$ $\frac{\cost(G_{j,b}, \greedy)}{2k} \leq \cost(C_i \cap G_{j,b}, \greedy)$
\item for $b=max$, it holds that $\frac{\cost(G_{j,max}, \greedy)}{2k} \leq \frac{\cost(R_j, \greedy)}{2k} \leq \cost(C_i \cap G_{j,max}, \greedy)$.
\end{itemize}


Hence, \cref{lem:coreset-reduc} ensures that, with probability $1-\exp\left(k\log |\cand| - 2^{O(z\log z)}\cdot \frac{\min(\eps^2,\eps^{z})}{\log^2 1/\eps}\cdot \delta\right)$, the coreset $\coreset_{j,b}$ constructed for $G_{j,b}$ satisfies for any solution $\calS$
$$|\cost(G_{j,b}, \calS) - \cost(\coreset_{j,b},\calS)| = O(\eps)\left(\cost(G_{j,b}, \calS) + \cost(G_{j,b}, \greedy) \right).$$ 

Similarly, \cref{lem:coreset-outer} ensures that, 
with probability $1-\exp\left(\log |\cand| - 2^{O(z\log z)} \cdot \frac{\eps^2}{\log^2 1/\eps}\cdot \delta\right)$, 
the coreset $\coreset^O_{b}$ constructed for $\outergroup{}_b$ satisfies for any solution $\calS$ 
$$|\cost(\outergroup{}_b, \calS) - \cost(\coreset^O_{b},\calS)| = \frac{\eps}{z\log(z/\eps)}\left(\cost(\calS) + \cost(\greedy) \right).$$

Taking a union-bound over the failure probability of  \cref{lem:coreset-outer} and of \cref{lem:coreset-reduc} applied to
all groups $G_{j, b}$ with $z\log(\eps/z) \leq j \leq 2z\log(z/\eps)$ and all $\outergroup{}_i$ implies that, 
with probability 
\begin{align*}
1-z^2\log^2(z/\eps)\exp\left(k\log |\cand| - 2^{O(z\log z)}\cdot \frac{\min(\eps^2,\eps^{z})}{\log^2 1/\eps}\cdot \delta\right) \\
- z\log(z/\eps)\exp\left(\log |\cand| - 2^{O(z\log z)} \frac{\eps^2}{\log^2 1/\eps}\cdot \delta\right)
\end{align*}
for any solution $\calS$,
\begin{align*}
&|\cost(\calS) - \cost(\coreset,~\calS)| \\
\leq ~&
|\cost(\discarded, \calS) - \cost(\structured, \calS)|  + \sum_{j
, b} |\cost(G_{j,b}, \calS) - \cost(G_{j,b}\cap \coreset, \calS)|\\
& \qquad \qquad+\sum_{i} |\cost(\outergroup{}_{b}, \calS) - \cost(\outergroup{}_b \cap \coreset, \calS)|\\
\leq ~ &O(\eps) \cost(\calS) + O(\eps)\cost(\greedy) \leq O(\eps)\cost(\calS)
\end{align*}
where the penultimate inequality uses \cref{lem:preprocess}, and the last one that $\greedy$ is a constant-factor approximation.

For $\delta = \valuedelta$, this probability can be simplified to
\begin{align*}
&1-\exp\Big(  2(\log z + \log\log(z/\eps)) + k\log |\cand| -2^{O(z\log z)}\cdot \frac{\min(\eps^2,\eps^{z})}{\log^2 1/\eps}\cdot \delta\Big) = 1 - \pi.
\end{align*}

The complexity of this algorithm is:
\begin{itemize}
\item $O(n)$ to compute the groups: given all distances from a client to its center, computing the average cost of all clusters costs $O(n)$, hence partitioning into $R_j$ cost $O(n)$ as well, and then decomposing $R_j$ into groups is also done in $O(n)$ time;
\item plus the cost to compute the coreset in the groups, which is $\sum_{j, b} O(|G_{j, b}|) + \sum_{i} O(|\outergroup{}_b|) = O(n)$
\end{itemize}
Hence, the total complexity is $O(n)$.
\end{proof}

\section{Sampling inside Groups: Proof of Lemma \ref{lem:coreset-reduc}}\label{sec:sample}

The goal of this section is to prove \cref{lem:coreset-reduc}:
\coresetreduc*

\subsection{Description of the GroupSample Algorithm}

The \texttt{GroupSample} merely consists of importance sampling in rounds, i.e. there are $\delta$ rounds in which one point of $G$ is sampled. Let $\tilde C_1, \tilde C_2, ...$ be the clusters induced by $\greedy$ on $G$:
the probability of sampling point $p\in \tilde C_i$ 
is $\frac{ \cost(\tilde C_i,\greedy)}{|\tilde C_i|\cdot \cost(G,\greedy)}$ -- recall that all clusters $\tilde C_i$ contain only points from the group $G$.
The weight of any sampled point is rescaled by a factor $\frac{|\tilde C_i|\cdot \cost(G,\greedy)}{\delta \cost(\tilde C_i,\greedy)}$.
If there are $m$ copies of a point, it is sampled in a round with probability $\frac{m \cdot \cost(\tilde C_i, \greedy)}{|\tilde C_i| \cdot \cost(G, \greedy)}$
(which is equivalent to sampling each copy with probability $\frac{\cost(\tilde C_i, \greedy)}{|\tilde C_i| \cdot \cost(G, \greedy)}$).
In what follows, each copies will be considered independently.

\begin{definition}
We denote $\weight(p) := \frac{|\tilde C_i|\cdot \cost(G,\greedy)}{\delta \cost(\tilde C_i,\greedy)}$ the scaling factor of the weight of a point $p \in \tilde C_i$.
\end{definition}

\subsection{Organization of the Proof}
To analyze the sampling procedure of \texttt{GroupSample}, we consider different cost ranges $I_{\ell,\calS}$ induced by a solution $\calS$ as follows. 
A point $p$ of $G$ is in $I_{\ell,\calS}$ if $2^{\ell}\cdot \cost(p,\greedy)\leq \cost(p, \calS) \leq 2^{\ell+1}\cdot \cost(p,\greedy)$.
We distinguish between the following cases.
\begin{itemize}
\item $\ell\leq \log \varepsilon/2$. We call all $I_{\ell,\calS}$ in this range \emph{tiny}. The union of all tiny $I_{\ell,\calS}$ is denoted by $I_{tiny, \calS}$.
\item $\log \varepsilon/2 \leq \ell \leq z\log(8z/\varepsilon)$. We call all $I_{\ell,\calS}$ in this range \emph{interesting}.
\item $\ell \geq z\log(4z/\varepsilon)$. We call all $I_{\ell,\calS}$ in this range \emph{huge}.
\end{itemize}

Note that interesting and huge ranges intersect. This is to give us some slack in the proof: for a solution $\calS$, we will deal with huge ranges before relating $\calS$ to its representative $\tilde \calS$ from $\cand^k$. Due to the approximation, some non-huge range for $\calS$ can become huge for $\tilde \calS$:  however, due to our definition, they stay in the interesting ranges.

A simple observation leads to the next fact.

\begin{fact}
\label{fact:interesting}
Given a solution $\calS$ , there are at most $O(z\log z/\varepsilon)$ interesting $I_{\ell,\calS}$.
\end{fact}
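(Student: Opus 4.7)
The plan is simply to count the number of integer values of $\ell$ falling in the interesting range. By the definition given right before the fact, an $I_{\ell, \calS}$ is declared interesting precisely when $\log(\varepsilon/2) \leq \ell \leq z\log(8z/\varepsilon)$, so the number of such integers $\ell$ is at most
\[
z\log(8z/\varepsilon) - \log(\varepsilon/2) + 1 \;=\; z\log(8z/\varepsilon) + \log(2/\varepsilon) + 1.
\]
The first step is to observe that for $z\geq 1$ and $\varepsilon\in(0,1/3)$, the term $\log(2/\varepsilon)$ is dominated by $z\log(z/\varepsilon)$ (since $\log(2/\varepsilon) = O(\log(z/\varepsilon))$ and $z\geq 1$). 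The second step is to absorb the constant $\log 8$ inside the big-$O$, giving the claimed bound $O(z \log(z/\varepsilon))$.

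The only subtlety is that $\ell$ ranges over integers, which is implicit in the definition of $I_{\ell,\calS}$ (it is defined by dyadic intervals $[2^\ell, 2^{\ell+1}]$ of the ratio $\cost(p,\calS)/\cost(p,\greedy)$), so each nonempty $I_{\ell,\calS}$ corresponds to a distinct integer $\ell$; hence the count of nonempty interesting $I_{\ell,\calS}$ is at most the length of this integer interval plus one. There is no real obstacle here: the fact is purely a cardinality/arithmetic observation about the defining interval of the interesting ranges, and the proof is a two-line calculation.
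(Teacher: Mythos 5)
Your proof is correct and is exactly the "simple observation" the paper intends (the paper gives no explicit proof, only the remark that the fact follows from a simple observation): one counts the integers $\ell$ in the interesting range $\log(\varepsilon/2) \leq \ell \leq z\log(8z/\varepsilon)$, which number at most $z\log(8z/\varepsilon) + \log(2/\varepsilon) + 1 = O(z\log(z/\varepsilon))$ since $z \geq 1$. No issues.
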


Bounding the difference in cost of $G\cap I_{\ell,\calS}$ requires different arguments depending on the type of $I_{\ell,\calS}$. 
The two easy cases are tiny and huge, so we will first proceed to prove those. Proving the interesting case is arguably both the main challenge and our main technical contribution.

For the proof, we will rely on Bernstein's concentration inequality:
\begin{theorem}[Bernstein's Inequality]
\label{thm:bernstein}
Let $X_1,\ldots X_{\delta}$ be non-negative independent random variables. Let $S=\sum_{i=1}^{\delta} X_i$.
If there exists an almost-sure upper bound $M\geq X_i$, then 
\begin{align*}
\mathbb{P}\left[\left\vert S - \mathbb{E}[S] \right\vert \geq t\right] \leq \exp\left(- \frac{t^2}{2\sum_{i=1}^{\delta} \left( \mathbb{E}[X_i^2] - \sum\mathbb{E}[X_i]^2\right) + \frac{2}{3}\cdot M\cdot t}\right).
\end{align*}
\end{theorem}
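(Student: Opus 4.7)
The plan is to prove Bernstein's inequality via the classical Chernoff--Cram\'er moment-generating-function method. First I would apply the exponential Markov inequality to the upper tail: for any $\lambda > 0$,
$$\pr[S - \E[S] \geq t] \leq e^{-\lambda t}\, \E[e^{\lambda(S - \E[S])}] = e^{-\lambda t} \prod_{i=1}^{\delta} \E[e^{\lambda Y_i}],$$
where $Y_i := X_i - \E[X_i]$ satisfies $\E[Y_i]=0$ and $|Y_i|\leq M$ (the latter using that $X_i$ is nonnegative and bounded by $M$). The symmetric lower tail is handled by the same argument applied to $-Y_i$, and the two estimates combine to bound $|S - \E[S]|$ at the cost of at most a factor~$2$, which is harmlessly absorbed into constants.

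The core technical step, which I expect to be the only delicate part, is an MGF bound: for every $0 < \lambda < 3/M$,
$$\E[e^{\lambda Y_i}] \leq \exp\!\left(\frac{\lambda^2\, \mathrm{Var}(X_i)/2}{1-\lambda M/3}\right).$$
To establish it, I would expand the exponential as a power series, use $\E[Y_i]=0$ to kill the linear term, and bound $|\E[Y_i^k]|\leq M^{k-2}\mathrm{Var}(X_i)$ for $k\geq 2$. Combining these with the elementary inequality $k! \geq 2\cdot 3^{k-2}$ (valid for $k\geq 2$) produces a geometric series that sums to the right-hand side, after which $1+x\leq e^x$ converts the bound to exponential form. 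The factor $3$ appearing in the denominator of the final tail bound originates exactly here.

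After multiplying across $i=1,\ldots,\delta$ and writing $V:=\sum_i \mathrm{Var}(X_i)$, the problem reduces to optimizing in $\lambda$ the expression $-\lambda t + \lambda^2 V/\bigl(2(1-\lambda M/3)\bigr)$. The standard choice $\lambda^{\star} = t/(V + Mt/3)$, which is easily checked to lie in $(0,3/M)$, yields the advertised tail $\exp\!\bigl(-t^2/(2V + \tfrac{2}{3}Mt)\bigr)$. I would also flag that the variance term in the statement's denominator, as written, contains a typographical slip: the intended quantity is $\sum_i (\E[X_i^2]-\E[X_i]^2) = \sum_i \mathrm{Var}(X_i) = V$, which is precisely what the proof produces. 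The only genuine obstacle is the MGF estimate above; once in hand, every remaining step is a one-line calculation.
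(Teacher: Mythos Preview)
Your proof is the standard Chernoff--Cram\'er argument for Bernstein's inequality and is correct, including your observation that the denominator in the stated bound should read $\sum_i(\E[X_i^2]-\E[X_i]^2)$. However, the paper does not prove this theorem at all: it is quoted as a classical concentration inequality and used as a black box throughout Sections~\ref{sec:sample}, \ref{sec:sampleout}, and \ref{sec:ksquare}. So there is no ``paper's own proof'' to compare against; your write-up simply supplies the textbook argument the authors take for granted.
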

In this paper we will simply drop the $E[X_i]^2$ terms from the denominator, as the second moment will dominate in all important cases. 

In what follows, we fix $k$, $z$, $G$ and $\greedy$, as in the assumptions of \cref{lem:coreset-reduc}. Let $\tilde C_1, ..., \tilde C_{k}$ be the clusters induced by $\greedy$ on $G$. The assumptions imply the following fact:
\begin{fact}\label{fact:cost_rings}
For any $p \in \tilde C_i$, $\frac{\cost(\tilde C_i, \greedy)}{2|\tilde C_i|} \leq \cost(p, \greedy) \leq \frac{2\cost(\tilde C_i, \greedy)}{|\tilde C_i|}$.
\end{fact}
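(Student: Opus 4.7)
The plan is to observe that this fact is essentially an immediate consequence of the first hypothesis of Lemma~\ref{lem:coreset-reduc}, which states that all points in a cluster $\tilde C$ induced by $\greedy$ on $G$ have costs differing by at most a factor $2$. So there is no real proof to carry out beyond bridging the cluster-uniformity hypothesis with the per-point-versus-average formulation.

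First, I would fix a point $p \in \tilde C_i$ and denote the average cost in the cluster by $\bar c := \cost(\tilde C_i, \greedy) / |\tilde C_i|$. The hypothesis of Lemma~\ref{lem:coreset-reduc} applied with $p$ and any $q \in \tilde C_i$ yields both $\cost(q, \greedy) \leq 2\cost(p, \greedy)$ and $\cost(p, \greedy) \leq 2\cost(q, \greedy)$.

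Summing the first inequality over all $q \in \tilde C_i$ and dividing by $|\tilde C_i|$ gives $\bar c \leq 2 \cost(p, \greedy)$, which rearranges to $\cost(\tilde C_i, \greedy)/(2|\tilde C_i|) \leq \cost(p, \greedy)$, the lower bound in the fact. Summing the second inequality over all $q$ and dividing by $|\tilde C_i|$ similarly gives $\cost(p, \greedy) \leq 2 \bar c = 2 \cost(\tilde C_i, \greedy)/|\tilde C_i|$, the upper bound. There is no obstacle: the only subtlety is that one must apply the hypothesis in both directions to get the two-sided sandwich around the average, rather than a one-sided bound. Combining the two yields the claim.
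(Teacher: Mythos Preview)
Your proposal is correct and is exactly the intended argument: the paper states this fact without proof, merely noting that ``the assumptions imply the following fact,'' and your averaging of the two-sided pointwise bound from the first hypothesis of Lemma~\ref{lem:coreset-reduc} is precisely how one fills in that omission.
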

We will start with the tiny type, as it is mostly divorced from the others.

\begin{figure}
\centering
\includegraphics[scale=0.6]{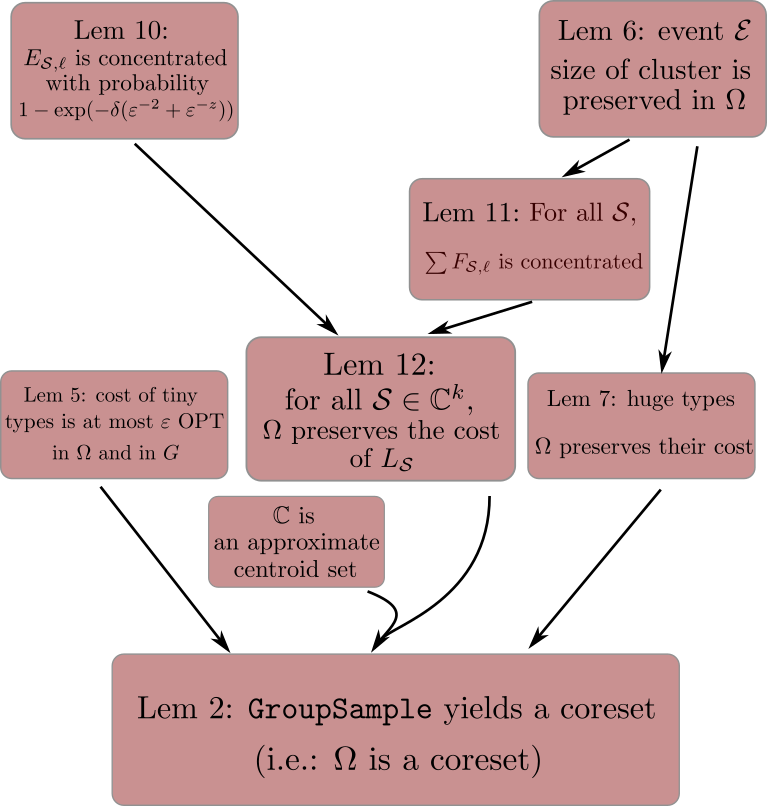}
\caption{Arrangement of Lemmas of \cref{sec:sample} to prove \cref{lem:coreset-reduc}. }\label{fig:dag-proof}
\end{figure}

\subsection{Dealing with Tiny Type}
\begin{lemma}
\label{lem:kepstiny}
It holds that, for any solution $\calS$,
    \[\max\left(\sum_{p \in I_{tiny,\calS}} \cost(p, \calS), ~
       \sum_{p \in  I_{tiny,\calS} \cap \coreset} 
             \weight(p) \cost(p,\calS)\right)     \le \varepsilon \cdot \cost(G,\alg).\qedhere\]   
\end{lemma}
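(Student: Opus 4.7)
The key observation is that the tiny case is purely deterministic; no concentration inequality is needed. By unpacking the definition of the cost range $I_{\ell,\calS}$ and the tiny threshold $\ell \le \log\varepsilon/2$, any $p \in I_{tiny,\calS}$ satisfies
\[
\cost(p,\calS) \;\le\; 2^{\ell+1}\,\cost(p,\greedy) \;\le\; \varepsilon\,\cost(p,\greedy),
\]
so tiny points are negligible relative to their contribution in $\greedy$. The plan is to exploit this single inequality twice: once for the ``true'' sum, once for the weighted sample sum.

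First I would bound the true cost. Summing the inequality above over $p \in I_{tiny,\calS}$ gives directly
\[
\sum_{p \in I_{tiny,\calS}} \cost(p,\calS) \;\le\; \varepsilon \sum_{p \in I_{tiny,\calS}} \cost(p,\greedy) \;\le\; \varepsilon\,\cost(G,\greedy),
\]
which settles the first quantity in the max.

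Next I would handle the weighted coreset sum via a worst-case per-sample bound. For any $p \in \tilde C_i$, \cref{fact:cost_rings} yields $\cost(p,\greedy) \le 2\,\cost(\tilde C_i,\greedy)/|\tilde C_i|$, so together with the tiny inequality,
\[
\weight(p)\,\cost(p,\calS) \;\le\; \frac{|\tilde C_i|\,\cost(G,\greedy)}{\delta\,\cost(\tilde C_i,\greedy)}\cdot \varepsilon \cdot \frac{2\,\cost(\tilde C_i,\greedy)}{|\tilde C_i|} \;=\; \frac{2\varepsilon\,\cost(G,\greedy)}{\delta}.
\]
Since \texttt{GroupSample} draws at most $\delta$ samples in total, summing this deterministic per-sample bound over $I_{tiny,\calS}\cap \coreset$ yields at most $2\varepsilon\,\cost(G,\greedy)$, which, after absorbing the factor of $2$ into the definition of $\varepsilon$ (or by using the threshold $\log\varepsilon/2$ rather than $\log\varepsilon$), gives the claimed bound.

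Since both arguments are deterministic and use only \cref{fact:cost_rings} together with the definition of ``tiny'', there is really no hard step here; the only point to be careful about is making sure the per-sample bound is indeed $O(\varepsilon\cdot\cost(G,\greedy)/\delta)$ rather than depending on the random identity of the sample, which is what allows the trivial union over the (at most $\delta$) sampled points to go through without invoking Bernstein or any union bound over solutions.
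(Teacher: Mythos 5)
Your proposal is correct and follows essentially the same route as the paper's proof: a deterministic per-point bound $\cost(p,\calS)\le O(\varepsilon)\cost(p,\greedy)$ from the definition of tiny, combined with the weight bound $\weight(p)\cost(p,\greedy)\le 2\cost(G,\greedy)/\delta$ from \cref{fact:cost_rings} and the fact that $|\coreset|=\delta$. The factor-of-$2$ bookkeeping you flag is handled the same way (absorbed into the constant) in the paper.
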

\begin{proof}
 By definition of $I_{tiny,\calS}$, $\sum_{p \in I_{tiny,\calS}} \cost(p, \calS) \leq \sum\limits_{p \in I_{tiny,\calS}} \frac{\varepsilon}{2}\cdot  \cost(p, \greedy) \leq \frac{\varepsilon}{2} \cdot \cost(G, \greedy)$.
 Similarly, we have for the other term 
  \begin{align*}
   \sum_{p \in  I_{tiny,\calS} \cap \coreset} \weight(p) \cdot \cost(p,\calS) 
   & \leq  \sum_{p \in I_{tiny,\calS} \cap \coreset}  \weight(p) \frac{\varepsilon}{2}\cdot  \cost(p, \greedy) \\
   & \leq  \frac{\varepsilon}{2} \sum_{i=1}^{k} \sum_{p \in  \tilde C_i \cap I_{tiny,\calS} \cap \coreset}  \frac{|\tilde C_i| \cdot \cost(G,\greedy)}{\delta \cost(\tilde C_i,\greedy)} \cdot  \frac{2\cdot \cost(\tilde C_i,\greedy)}{|\tilde C_i|} \\
  & \leq  \varepsilon\cdot  \frac{|I_{tiny, \calS} \cap \coreset|}{\delta}\cost(G,\greedy)  \\  & \leq  \varepsilon\cdot  \cost(G,\greedy)  .
  \end{align*}
where the last inequality uses that $\coreset$ contains $\delta$ points.
\end{proof}

\subsection{Preserving the Weight of Clusters, and the Huge Type}
We now consider the huge ranges. For this, we first show that, given we sampled enough points, $|\tilde C_i|$ is well approximated for every cluster $\tilde C_i$. This lemma will also be used later for the interesting points.
We define event $\mathcal{E}$ to be:
For all cluster $\tilde C_i$ induced by $\greedy$ on $G$, 
\begin{equation*}
\sum_{p\in \tilde C_i \cap \coreset} \frac{|\tilde C_i|\cdot \cost(G,\greedy)}{\cost(\tilde C_i,\greedy)\cdot \delta} = (1\pm \varepsilon) \cdot |\tilde C_i|
\end{equation*}

\begin{lemma}
\label{lem:ksize}
We have that with probability at least $1 - k\cdot z^2\log^2(z/\varepsilon) \exp\left(-O(1)\frac{\varepsilon^2}{k}\delta\right)$, event $\calE$ happens.
\end{lemma}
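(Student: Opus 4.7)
My plan is to view, for each cluster $\tilde C_i$, the quantity $\sum_{p\in \tilde C_i \cap \coreset} \weight(p)$ as a sum of $\delta$ independent, identically distributed random variables, apply Bernstein's inequality (\cref{thm:bernstein}) to obtain per-cluster concentration, and then take a union bound over the at most $k$ clusters.

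Fix a cluster $\tilde C_i$. For $j=1,\dots,\delta$, let $p_j$ denote the point picked in round $j$ by \texttt{GroupSample}, and set
\[X_j \;:=\; \weight(p_j)\cdot \mathbf{1}[p_j\in \tilde C_i] \;=\; \frac{|\tilde C_i|\cdot \cost(G,\greedy)}{\cost(\tilde C_i,\greedy)\cdot \delta}\cdot \mathbf{1}[p_j\in \tilde C_i].\]
The total probability of landing in $\tilde C_i$ in one round is $\cost(\tilde C_i,\greedy)/\cost(G,\greedy)$, so $\E[X_j]=|\tilde C_i|/\delta$ and $\E\!\left[\sum_j X_j\right]=|\tilde C_i|$, which is exactly the target of event $\calE$. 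The key step is now to bound the almost-sure upper bound and the variance using the balancedness hypothesis $\cost(\tilde C_i,\greedy)\geq \cost(G,\greedy)/(2k)$ in the statement of \cref{lem:coreset-reduc}. This hypothesis yields $M := \weight(p_j) \leq \frac{2k\,|\tilde C_i|}{\delta}$. Since $X_j$ is supported on $\{0, M\}$ with $\E[X_j]=|\tilde C_i|/\delta$, we also get $\E[X_j^2] = M\cdot \E[X_j] \leq 2k\,|\tilde C_i|^2/\delta^2$, so $\sum_{j=1}^\delta \E[X_j^2] \leq 2k\,|\tilde C_i|^2/\delta$.

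Plugging these bounds into Bernstein with deviation $t=\varepsilon|\tilde C_i|$, both the variance term and the $Mt/3$ term in the denominator are of order $k\,|\tilde C_i|^2/\delta$, giving
\[\pr\!\left[\,\Big|\textstyle\sum_j X_j - |\tilde C_i|\,\Big|\geq \varepsilon\,|\tilde C_i|\right] \;\leq\; \exp\!\left(-\Omega\!\left(\tfrac{\varepsilon^2\delta}{k}\right)\right).\]
A union bound over the (at most $k$) clusters $\tilde C_i$ induced by $\greedy$ on $G$ then gives a failure probability of $k\cdot\exp(-\Omega(\varepsilon^2\delta/k))$, which is comfortably within the $k\cdot z^2\log^2(z/\varepsilon)\cdot\exp(-\Omega(\varepsilon^2\delta/k))$ slack claimed by the lemma; the extra $z^2\log^2(z/\varepsilon)$ factor costs nothing here and will be convenient for absorbing later union bounds across the $O(z^2\log^2(z/\varepsilon))$ groups considered by the global algorithm.

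I do not anticipate a substantive technical obstacle; the whole argument is a textbook Bernstein application. The one point where care is needed is checking that $M$ has the right order of magnitude: without the hypothesis $\cost(\tilde C_i,\greedy)\geq \cost(G,\greedy)/(2k)$ (supplied by the preprocessing into well structured groups in \cref{sec:def-groups}), the summand $\weight(p_j)$ could be arbitrarily large, the $M\,t$ term would dominate Bernstein's denominator, and the concentration would collapse. This is exactly the place where the structured-group reduction pays off.
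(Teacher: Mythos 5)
Your proposal is correct and is essentially the paper's own argument: the paper applies a Chernoff bound to the number of samples landing in $\tilde C_i$ (whose expectation $\mu_i = \delta\cost(\tilde C_i,\greedy)/\cost(G,\greedy) \geq \delta/(2k)$ by the same balancedness hypothesis you invoke) and then rescales by the constant weight, whereas you apply Bernstein directly to the weight-scaled indicators — the two are the same computation and yield the same $\exp(-\Omega(\eps^2\delta/k))$ per-cluster bound. Your observation that the $z^2\log^2(z/\eps)$ factor is slack for absorbing the union bound over groups also matches the paper's intent.
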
  
\begin{proof}
Consider any cluster $\tilde C_i$ induced by $\greedy$ on $G$. The expected number of points sampled from $\tilde C_i$ is then at least
\begin{align*}
\mu_{i}:= \sum_{p\in \tilde C_i} \frac{\delta\cost(\tilde C_i,\greedy)}{|\tilde C_i|\cdot \cost(G,\greedy)} &= \frac{\delta\cost(\tilde C_i,\greedy)}{\cost(G,\greedy)} \geq \frac{\delta}{2k},
\end{align*}
where the inequality holds by assumption on $G$.
Define the indicator variable of point $p$ from the sample being drawn from $\tilde C_i$ as $\mathcal{P}_i(p)$. Using Chernoff bounds, we therefore have
\begin{eqnarray}
\mathbb{P}\left[\left\vert \sum_{p\in G\cap \coreset} P_{i}(p) - \mu_{i} \right\vert \geq \varepsilon\cdot \mu_{i} \right] 
\label{eq:lemhuge1}
&\leq & \exp\left(-\frac{\varepsilon^2\cdot \mu_{i}}{3}\right) \leq \exp\left(-\frac{\eps^2\delta}{6k}\right).
\end{eqnarray}
Now, rescaling $P_{i}(p)$ by a factor $\frac{|\tilde C_i|\cdot \cost(G,\greedy)}{\delta\cost(\tilde C_i,\greedy)}$ implies that approximating $\mu_{i}$ up to a $(1\pm \varepsilon)$ factor also approximates $|\tilde C_i|$ up to a $(1\pm \varepsilon)$ factor.

The final result follows by applying a union bound for all clusters in all groups.
\end{proof}

We now show that for any cluster $\tilde C_i$ with a non-empty huge range, Lemma~\ref{lem:ksize} implies that the cost is well approximated -- without the need of going through the approximate solution $\tilde \calS	$.

\begin{lemma}
\label{lem:khuge}
Condition on event $\calE$. Then,
for any solution $\calS$, and any $i$
such that there exists $\ell \geq z\log(4z/\eps)$ and a point $p \in \tilde C_i$ with $\cost(p, \calS) \geq 2^\ell \cost(p, \greedy)$, we have:
\[\left\vert\cost(\tilde C_i,\calS) - \sum_{p\in \Omega \cap C_i} \weight(p) \cdot \cost(p,\calS) \right\vert \leq  7\eps\cdot \cost(\tilde C_i ,\calS).\qedhere\]
\end{lemma}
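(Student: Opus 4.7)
The plan is to argue that, once the witness point $p \in \tilde C_i$ is so far from $\calS$ compared to $\greedy$, the center of $\calS$ serving $p$ is effectively infinitely far from the entire cluster $\tilde C_i$: hence every point $q \in \tilde C_i$ has essentially the same cost in $\calS$ as $p$ does. Combined with event $\calE$, which says that $\sum_{q \in \tilde C_i \cap \coreset} \weight(q)$ is a $(1\pm\eps)$-approximation of $|\tilde C_i|$, this immediately gives that the coreset estimator matches $\cost(\tilde C_i,\calS)$ up to an $O(\eps)$ factor.

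To make this precise, I would first fix a witness $p \in \tilde C_i$ with $\cost(p,\calS) \geq 2^\ell \cost(p,\greedy) \geq (4z/\eps)^z \cost(p,\greedy)$. Letting $c_i$ be the center of $\greedy$ assigned to $\tilde C_i$, this rewrites as $d(p,c_i) \leq \tfrac{\eps}{4z}\cdot d(p,\calS)$. By Fact~\ref{fact:cost_rings}, for every other $q \in \tilde C_i$ we have $d(q,c_i) \leq 2^{1/z} d(p,c_i) \leq 2d(p,c_i)$, so the triangle inequality gives
\[
d(p,q) \;\leq\; d(p,c_i) + d(q,c_i) \;\leq\; 3\, d(p,c_i) \;\leq\; \tfrac{3\eps}{4z}\cdot d(p,\calS).
\]

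Next, applying the triangle inequality once more on $(p,q,\calS)$, one gets $d(q,\calS) \in \bigl[1 \pm \tfrac{3\eps}{4z}\bigr]\cdot d(p,\calS)$. Raising to the $z$-th power and using Bernoulli / the standard estimate $(1+x/z)^z \leq e^{x}$ for the chosen threshold gives $\cost(q,\calS) \in [1 \pm O(\eps)]\cdot \cost(p,\calS)$ uniformly in $q \in \tilde C_i$. In particular, writing $\alpha = \cost(p,\calS)$,
\[
\cost(\tilde C_i,\calS) \;=\; (1\pm O(\eps))\cdot |\tilde C_i|\cdot \alpha,
\qquad
\sum_{q\in \tilde C_i \cap \coreset}\weight(q)\cost(q,\calS) \;=\; (1\pm O(\eps))\cdot \alpha\cdot \sum_{q\in \tilde C_i \cap \coreset}\weight(q).
\]

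Finally, conditioning on event $\calE$ gives $\sum_{q \in \tilde C_i \cap \coreset}\weight(q) = (1\pm\eps)|\tilde C_i|$, so both quantities above equal $(1\pm O(\eps))\,|\tilde C_i|\,\alpha$ and hence differ by at most $O(\eps)\cost(\tilde C_i,\calS)$. Tracking constants through the two $(1\pm O(\eps))$ factors (one from the pointwise approximation of $\cost(q,\calS)$ by $\alpha$, one from $\calE$) yields the stated $7\eps$ bound. There is no genuine obstacle: the only subtle point is that the $z$-th power in $\cost(\cdot,\cdot) = d(\cdot,\cdot)^z$ could in principle blow up the pointwise error, but this is precisely neutralised by choosing the threshold $\ell \geq z\log(4z/\eps)$, which keeps $d(p,q)/d(p,\calS) = O(\eps/z)$ and hence $\cost(q,\calS)/\cost(p,\calS) = 1\pm O(\eps)$.
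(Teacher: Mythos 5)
Your proposal is correct and follows essentially the same route as the paper: identify the far-away witness $p$, show every $q\in\tilde C_i$ satisfies $\cost(q,\calS)=(1\pm O(\eps))\cost(p,\calS)$ because $d(p,q)=O(\eps/z)\cdot d(p,\calS)$, and then invoke event $\calE$ to transfer the cluster mass to the coreset. The only cosmetic difference is that you derive the pointwise comparison by the plain triangle inequality followed by raising to the $z$-th power, whereas the paper invokes its Lemma~\ref{lem:weaktri}; both give the same $(1\pm O(\eps))$ factor under the threshold $\ell\geq z\log(4z/\eps)$.
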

\begin{proof}
Let $p \in \tilde C_i$ as given in the statement. Using the structure of clusters in a group, this implies for any $q \in  \tilde C_i$: 
$\cost(p,q) \leq \left(\dist(p, \greedy) + \dist(q, \greedy)\right)^z \leq 3^z\cdot \cost(p,\greedy)\leq 3^z \cdot 2^{(\ell - z\log (4z/\eps))} \cost(p,\greedy) \leq (3\eps/4z)^{z}\cdot \cost(p,\calS)$. 
By Lemma~\ref{lem:weaktri}, we therefore have for any point $q\in \tilde C_i$
\begin{eqnarray*}
\cost(p,\calS) &\leq & \left(1+\eps/2z \right)^{z-1} \cost(q,\calS) + \left(1+2z/\eps\right)^{z-1}\cost(p,q) \\
&\leq & \left(1+\eps \right) \cost(q,\calS) + \varepsilon\cdot \cost(p,\calS) \\
\Rightarrow \cost(q,\calS) &\geq & \frac{1-\varepsilon}{1+\eps}\cost(p,S) \geq  (1-2\eps) \cost(p, \calS)
\end{eqnarray*}

By a similar calculation, we can also derive an upper bound of $\cost(q,\calS)\leq \cost(p,\calS)\cdot (1+2\eps)$. Hence, we have
\begin{eqnarray*}
\sum_{q\in \Omega \cap \tilde C_i} \frac{|\tilde C_i|\cdot \cost(G,\greedy)}{\cost(\tilde C_i,\greedy)\cdot \delta} \cdot \cost(q,\calS) 
&=& (1\pm 2\varepsilon)\cdot \cost(p,\calS)\cdot \sum_{q\in \Omega \cap \tilde C_i} \frac{|\tilde C_i|\cdot \cost(G,\greedy)}{\cost(\tilde C_i,\greedy)\cdot \delta} \\
&=& (1\pm 2\varepsilon)\cdot \cost(p,\calS)\cdot (1\pm \varepsilon)\cdot |\tilde C_i| ~~~(\text{Event }\calE) \\
&=& (1\pm 2\varepsilon)\cdot  (1\pm \varepsilon)\cdot (1\pm 2\varepsilon)\cdot \cost(\tilde C_i,\calS) \\
&=& (1\pm 7\varepsilon) \cdot \cost(\tilde C_i,\calS).
\end{eqnarray*}
\end{proof}

\subsection{Bounding Interesting $I_{\ell, \calS}$: a Simple but Suboptimal Analysis.}
\label{sec:interesting-subopt}
Now we move onto the most involved case, presenting first a suboptimal analysis of \texttt{GroupSample} for the interesting types.
As explained in the introduction, our main goal is to design a good estimator and apply Bernstein's inequality to it. 

Since the clusters intersecting a huge $I_{\ell, \calS}$ are dealt with by \cref{lem:khuge}, we only need to focus on the \textit{interesting clusters}, namely clusters $\tilde C$ that satisfy
\begin{equation}
\label{eq:interesting}
\nexists p\in \tilde C ~|~ \cost(p,\calS)\geq \left(\frac{8z}{\eps}\right)^{z}\cdot \cost(p,\greedy).
\end{equation}
In other words, a clustering is interesting only if it does not have any point in a huge $I_{\ell, \calS}$. This restriction will be crucial to our analysis. 
Let $L_\calS$ be a set of interesting clusters (possibly not all of them).\footnote{We define $L_\calS$ to contain only huge clusters but not all of them in order to relate the cost of solutions from the approximate centroid set $\cand$ to the cost of any solution, as it will become clear in \cref{sec:combining}.}  For simplicity, we will assimilate $L_\calS$ and the points contained in the clusters of $L_\calS$.

We present here a first attempt to show that the cost of interesting points is preserved. Although suboptimal, it serves as a good warm-up for our improved bound.

In this first attempt, we will use the simple estimator $E(L_\calS) := \sum_{p\in L_\calS \cap \Omega}\weight(p) \cost(p,\calS)$ 
  as an estimator of the cost for points in $L_\calS$. Note that by choice of the weights $\weight(p)$, this estimator is unbiased: 
$\E[E(L_\calS)] = \sum_{p \in L_\calS} \cost(p, \calS)$, precisely the quantity we seek to estimate.

To show concentration, we rely on Bernstein's inequality from \cref{thm:bernstein}. Hence, the key part of our proof is to bound the variance of the estimator.

\begin{lemma}
\mbox{}\label{lem:interesting-easy}
Let $G$ be a group of points, and $\greedy$ be a solution. 
Let $\cand$ be an $\greedy$-approximate centroid set, as in \cref{def:centroid-set}.
It holds with probability 
$$ 1-\exp\left(k\log |\cand| - \frac{\eps^{2+z}}{2^{O(z\log z)}\log^2 1/\eps}\cdot \delta\right)$$
 that, for all solution $\tilde \calS \in \cand^k$ and any set of interesting clusters $L_{\tilde \calS}$ induced by $\greedy$ on $G$:
%
\[\left|E(L_\calS) - \E\left[E(L_\calS)\right]\right| \leq \frac{\eps}{z \log z/\eps}\cdot\cost(G, \greedy)\]
\end{lemma}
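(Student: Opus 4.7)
I would apply Bernstein's inequality to the estimator $E(L_{\tilde \calS})$, then take a union bound over all pairs $(\tilde \calS, L_{\tilde \calS})$, where $\tilde \calS$ ranges over $\cand^k$ and $L_{\tilde \calS}$ over subsets of interesting clusters induced by $\greedy$ on $G$. For a fixed such pair, I write $E(L_{\tilde \calS}) = \sum_{r=1}^\delta X_r$, where $X_r$ is the contribution of the $r$-th sampling round (and is zero if the sampled point $p_r$ lies outside $L_{\tilde \calS}$). Using that $p_r$ lies in cluster $\tilde C_i$ with probability $\frac{\cost(\tilde C_i,\greedy)}{|\tilde C_i|\cost(G,\greedy)}$ and that $\weight(p) = \frac{|\tilde C_i|\cost(G,\greedy)}{\delta\cost(\tilde C_i,\greedy)}$, one readily checks $\E[X_r] = \cost(L_{\tilde \calS},\tilde \calS)/\delta$, so the estimator is unbiased.

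\textbf{Variance and max bounds.} For the second moment, the structured-group property of \cref{fact:cost_rings} gives $\E[X_r^2] \leq \frac{2\cost(G,\greedy)}{\delta^2}\sum_{p\in L_{\tilde \calS}}\frac{\cost(p,\tilde \calS)^2}{\cost(p,\greedy)}$. Since each $p \in L_{\tilde \calS}$ lies in an interesting cluster, $\cost(p,\tilde \calS) \leq (8z/\eps)^z\cost(p,\greedy)$ by definition, hence $\E[X_r^2] \leq \frac{2(8z/\eps)^z\cost(G,\greedy)\cost(L_{\tilde \calS},\tilde \calS)}{\delta^2}$ and $V := \delta\E[X_r^2] \leq \frac{2(8z/\eps)^z\cost(G,\greedy)\cost(L_{\tilde \calS},\tilde \calS)}{\delta}$. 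The same two ingredients bound the maximum summand magnitude by $M \leq 2(8z/\eps)^z\cost(G,\greedy)/\delta$.

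\textbf{Bernstein and union bound.} Applying \cref{thm:bernstein} with deviation $t = \frac{\eps\cost(G,\greedy)}{z\log(z/\eps)}$, the exponent $\frac{t^2}{2V+\frac{2}{3}Mt}$ simplifies to $\Omega\!\left(\frac{\eps^{2+z}\delta}{2^{O(z\log z)}\log^2(1/\eps)}\right)$, provided $\cost(L_{\tilde \calS},\tilde \calS)$ is controlled (see below). A union bound over the $|\cand|^k$ choices of $\tilde \calS$ and the $2^k$ possible $L_{\tilde \calS}$ then contributes $k\log|\cand|+O(k)$ to the exponent, matching the stated failure probability once the $O(k)$ term is absorbed into $k\log|\cand|$.

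\textbf{Main obstacle.} The chief technical point is handling the $\cost(L_{\tilde \calS},\tilde \calS)$ factor in $V$: although each point in an interesting cluster has cost ratio at most $(8z/\eps)^z$, the aggregate $\cost(L_{\tilde \calS},\tilde \calS)$ may exceed $\cost(G,\greedy)$ by a super-constant factor. To recover the $\eps^{2+z}$ scaling in the exponent, I would dyadically partition the possible values of $\cost(L_{\tilde \calS},\tilde \calS)/\cost(G,\greedy)$, rescaling $t$ on each scale and paying only a logarithmic union-bound cost for the $O(\log(1/\eps))$ relevant scales. The resulting factor of $(8z/\eps)^z$ in the variance is precisely the inefficiency removed later by the shifted estimator of Equations~\eqref{eq:estimator2}--\eqref{eq:estimator3}, which is why this analysis is labelled suboptimal.
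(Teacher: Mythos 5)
Your proposal follows the same route as the paper: write $E(L_{\tilde\calS})$ as a sum of i.i.d.\ per-round variables, bound the second moment by spending the interesting-cluster ratio $\cost(p,\tilde\calS)\le (8z/\eps)^z\cost(p,\greedy)$ on only one of the two cost factors, apply Bernstein, and union-bound over the $|\cand|^k\cdot 2^k$ pairs $(\tilde\calS,L_{\tilde\calS})$. Your variance and max bounds ($V\lesssim (z/\eps)^z\cost(G,\greedy)\cost(L_{\tilde\calS},\tilde\calS)/\delta$ and $M\lesssim(z/\eps)^z\cost(G,\greedy)/\delta$) match the paper's up to constants, and you correctly identified the only delicate point, namely the $\cost(L_{\tilde\calS},\tilde\calS)$ factor in $V$.

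The difference is in how that obstacle is resolved. The paper simply takes the deviation threshold to be $t=\frac{\eps}{z\log(z/\eps)}\bigl(\cost(G,\greedy)+\cost(G,\calS)\bigr)$, so that $t^2$ dominates $2V+\tfrac{2}{3}Mt$ term by term and the exponent $\eps^{2+z}\delta/(2^{O(z\log z)}\log^2(1/\eps))$ falls out directly; this relaxed right-hand side is also what is actually used downstream (compare the conclusion of \cref{lem:union-bound}). Your dyadic rescaling of $t$ according to $\cost(L_{\tilde\calS},\tilde\calS)/\cost(G,\greedy)$ achieves the same thing — on scale $2^m$ you get $t_m\asymp\eps\, 2^{m/2}\cost(G,\greedy)\le\eps\bigl(\cost(G,\greedy)+\cost(L_{\tilde\calS},\tilde\calS)\bigr)$ by AM--GM — but the extra union bound over scales is unnecessary, since for each fixed pair $(\tilde\calS,L_{\tilde\calS})$ the quantity $\cost(L_{\tilde\calS},\tilde\calS)$ is deterministic and you may choose $t$ per pair. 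Be aware that neither argument establishes the right-hand side $\frac{\eps}{z\log(z/\eps)}\cost(G,\greedy)$ exactly as printed in the lemma statement, and for $z>2$ that literal bound is in fact unattainable with $\delta=\Theta(\eps^{-2-z})$ (a single interesting cluster with all points at cost ratio $\Theta((z/\eps)^z)$ already forces a typical fluctuation of order $\eps^{1-z/2}\cost(G,\greedy)/k$); the statement's right-hand side should carry the additional $+\cost(G,\calS)$ (or $+\cost(L_{\tilde\calS},\tilde\calS)$) term, as the paper's own proof and its later application do. With that understood, your argument is sound.
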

\begin{proof}
First, we fix some solution $\calS$ and some set of interesting clusters $L_\calS$, verifying \cref{eq:interesting}.
We express $E(L_\calS)$ as a sum of i.i.d variables : $E(L_\calS) = \sum_{j=1}^\delta X_j$, where $X_j = f(\coreset_j) \cost(\coreset_j, \calS)$ when the $j$-th sampled point is $\coreset_j \in L_\calS$, $X_j = 0$ otherwise.
Recall that,  due to \cref{fact:cost_rings}, the probability that the $j$-th sampled point is $p$ from some cluster $\tilde C$ satisfies $\mathbb{P}[\coreset_j = p] = \frac{\cost(\tilde C,\greedy)}{|\tilde C|\cdot \cost(G,\greedy)} \leq \frac{2\cost(p,\greedy)}{\cost(G,\greedy)}$. From the same fact, $\weight(p) \leq \frac{2\cost(G,\greedy)}{\delta\cost(p,\greedy)}$.

We will rely on Bernstein's inequality (\cref{thm:bernstein}). To do this, we need an upper bound on the variance of $E(L_\calS)$, as well as an almost sure upper bound $M$ on every sample. 
We first bound $\E[X_i^2]$:
\begin{eqnarray*}
\nonumber
 \E[X_i^2] &=&  
 	\E\left[ \left(\weight(\coreset_i)\cost(\coreset_i,\calS) \right)^2\right] \\
&=& \sum_{p \in  L_\calS} 
		\left(\weight(p) \cost(p,\calS) \right)^2 \Pr\left[\coreset_i = p\right] \\ 
&\leq & \sum_{p \in L_\calS} 
		\cost(p, \calS)\cdot \left(\frac{4z}{\eps}\right)^z \cdot \cost(p,\greedy)\cdot 
			\left(\frac{2\cost(G,\greedy)}{\delta \cost(p,\greedy)}\right)^2 \frac{2\cost(p,\greedy)}{\cost(G,\greedy)}\\
&\leq &	\left(\frac{4z}{\eps}\right)^{z} \cdot 
			\frac{\cost(G,\greedy)}{\delta^2} \sum_{p \in L_\calS} \cost(p, \calS)\\
&\leq &	\left(\frac{4z}{\eps}\right)^{z} \cdot 
			\frac{\cost(G,\greedy) \cost(G, \calS)}{\delta^2} \\
&\leq &	\left(\frac{4z}{\eps}\right)^{z} \cdot 
			\frac{(\cost(G,\greedy) + \cost(G, \calS))^2}{\delta^2} 
\end{eqnarray*}
Where, in the third line, we upper bounded only one of the $\cost(p, \calS)$ by $(4z/\eps)^z \cost(p, \greedy)$. Hence, it holds that $\sum_{i=1}^\delta \E[X_i^2] \leq \left(\frac{4z}{\eps}\right)^{z} \cdot \frac{(\cost(G,\greedy) + \cost(G, \calS))^2}{\delta}$. 

To apply Bernstein's inequality, we also need an upper-bound on the value of $X_i$: using $\cost(p, \calS) \leq \left(\frac{4z}{\eps}\right)^z \cost(p, \greedy)$ and $\weight(p) \leq \frac{2\cost(G, \greedy)}{\delta \cost(p, \greedy)}$ we get
\begin{eqnarray}
\notag
X_i ~~\leq ~~M & := & 2^{O(z\log z)}\cdot \eps^{-z}\frac{(\cost(G,\greedy) + \cost(G, \calS))}{\delta}
\end{eqnarray}

Applying Bernstein's inequality with those bounds on the variance and the value of the $X_i$, we then have:
\begin{eqnarray}
\nonumber
& &\mathbb{P}\left[\vert E(L_\calS) - \mathbb{E}[E(L_\calS)]\vert >  
		\frac{\eps}{z\log z/\eps}\cdot \left(\cost(G,\greedy) + \cost(G, \calS)\right)\right] \\
\nonumber
&\leq & \exp\left(- \frac{\frac{\eps^2}{z^2\log^2 z/\eps}
					\cdot \left(\cost(G,\greedy) + \cost(G, \calS)\right)^2}
	{2\sum_{i=1}^\delta \text{Var}[X_i] + \frac{1}{3}M\cdot 
	  \frac{\eps}{z\log z/\eps}\cdot(\cost(G,\greedy) + \cost(G, \calS))}\right)\\
\nonumber
&\leq & \exp\left(- \frac{\eps^{2+z}}{2^{O(z\log z)}\log^2 1/\eps}\cdot \delta\right)
\end{eqnarray}

Hence, for a fixed solution $\calS$ and a fixed set of interesting clusters $L_\calS$, it holds with probability $1- \exp\left(- \frac{\eps^{2+z}}{2^{O(z\log z)}\log^2 1/\eps}\cdot \delta\right)$ that $\vert E(L_\calS) - \mathbb{E}[E(L_\calS)]\vert >  
		\frac{\eps}{z\log z/\eps}\cdot \left(\cost(G,\greedy) + \cost(G, \calS)\right)$.
		
Doing a union-bound over the $\cand^k$ many solutions $\calS$ and the $2^k$ many sets of interesting clusters concludes the lemma:
it holds with probability  $1- \exp\left(k\log \cand- \frac{\eps^{2+z}}{2^{O(z\log z)}\log^2 1/\eps}\cdot \delta\right)$ that, for any solution $\calS \in \cand^k$ and any set of interesting clusters $L_\calS$, $\vert E(L_\calS) - \mathbb{E}[E(L_\calS)]\vert >  
		\frac{\eps}{z\log z/\eps}\cdot \left(\cost(G,\greedy) + \cost(G, \calS)\right)$.
\end{proof}

In order to apply \cref{lem:interesting-easy}, note that the quantity $\left\vert E(L_\calS) - \mathbb{E}[E(L_\calS)]\right\vert$ is equal to $\left|\cost(L_\calS \cap \coreset, \calS) - \cost(L_\calS, \calS)\right|$, namely the difference between the cost in the full input and the cost in the coreset of points in $L_\calS$.{\setlength{\emergencystretch}{4.5em}\par}

This lemma is enough to conclude that the outcome of \texttt{GroupSample} is a coreset, once combined with Lemmas \ref{lem:kepstiny} and \ref{lem:khuge}. To see the end of the proof, one can jump directly to the proof of \cref{lem:coreset-reduc} (in \cref{sec:combining}) and use use \cref{lem:interesting-easy} instead of \cref{lem:union-bound}. This would give a coreset of size $\tilde O\left(k \eps^{-2-z}\right)$, instead of $\tilde O\left(k\eps^{-\max(2,z)}\right)$.

\subsection{Bounding Interesting $I_{\ell, \calS}$: Improved Analysis}
\label{sec:interesting}
The shortcoming of the previous estimator is its huge variance, with dependency in $\eps^{-z}$. We present an alternate estimator with small variance, allowing  in turn to increase the success probability of the algorithm.

As for the previous estimator, we only need to focus on some interesting clusters $L_{\calS}$, namely clusters that do not have any point in a huge $I_{\ell, \calS}$ and satisfy \cref{eq:interesting}, important enough to be recalled here: all clusters in $L_\calS$ verify
\begin{equation}
\nexists p\in \tilde C ~|~ \cost(p,\calS)\geq \left(\frac{8z}{\eps}\right)^{z}\cdot \cost(p,\greedy).
\end{equation}

\subsubsection{Designing a Good Estimator: Reducing the Variance}

Our first observation is that we can estimate the cost of points in $I_{\ell, \calS} \cap L_\calS$, for each $\ell$ independently, instead of estimating directly the cost of $L_\calS$ as in previous section. 
For them, we will use the following estimator:
  
  \begin{definition}
Let $G$ be a group of points, and $\tilde C_i$ be the clusters induced by a solution $\greedy$ on $G$.  For a given set of interesting clusters $L_\calS$, we let
\begin{equation}
\label{eq:estimator_def}
E_{\ell,\calS}(L_\calS) := \sum_{\tilde C_i \in L_\calS} \sum_{p \in \tilde C_i \cap I_{\ell,\calS}\cap \Omega} \weight(p) (\cost(p,\calS) - \cost(q_{i, \calS},\calS)),
\end{equation}
where $q_{i, \calS}=\underset{p\in \tilde C_i}{\text{ argmin }}\cost(p,\calS)$.
\end{definition}
 $E_{\ell, \calS}(L_\calS)$ can be expressed differently: 

\begin{align}
E_{\ell,\calS}(L_\calS) &= \sum_{\tilde C_i \in L_\calS} \sum_{p \in \tilde C_i \cap I_{\ell,\calS}\cap \Omega} \weight(p)(\cost(p,\calS) - \cost(q_{i, \calS},\calS)) \notag\\
&= \sum_{p \in I_{\ell,\calS}\cap L_\calS \cap \Omega} \weight(p) \cost(p,\calS)  - F_{\ell,\calS}(L_\calS), \\
\notag
\text{with } F_{\ell,\calS}(L_\calS) &:= \sum_{\tilde C_i \in L_\calS} \sum_{p \in \tilde C_i \cap I_{\ell,\calS}\cap \Omega}  \weight(p) \cost(q_{i, \calS},\calS)
\label{eq:estimator}
\end{align}
$F_{\ell, \calS}(L_\calS)$ is a
random variable whose value depends on the randomly sampled points $\Omega$ (we will discuss $F_{\ell,\calS}(L_\calS)$ in more detail later).

Note that the expectation of $E_{\ell,\calS}(L_\calS)$ is 
\begin{align*}
  \mathbb{E}\left[E_{\ell,\calS}(L_\calS)\right] &=  \sum_{p \in I_{\ell,\calS} \cap L_\calS} \frac{\delta \cost(\tilde C_i, \mathcal{G})}{|\tilde C_i|\cost(G, \mathcal{G})}\cdot \weight(p) \cost(p,S)  - \mathbb{E}[F_{\ell,\calS}(L_\calS)]\\
  &=  \sum_{p \in I_{\ell,\calS}\cap L_\calS} \frac{\delta \cost(\tilde C_i, \mathcal{G})}{|\tilde C_i|\cost(G, \mathcal{G})} \cdot \frac{|\tilde C_i|\cost(G, \mathcal{G})}{\delta \cost(\tilde C_i, \mathcal{G})} \cdot \cost(p,S)  - \mathbb{E}[F_{\ell,\calS}(L_\calS)]\\
  &=  \cost(I_{\ell,\calS} \cap L_\calS ,S) - \mathbb{E}[F_{\ell,\calS}(L_\calS)],
\end{align*}

Now instead of attempting to show directly concentration of all $\cost(I_{\ell,\calS} \cap L_\calS \cap \coreset, \calS)$, we will instead show that:
\begin{enumerate}
\item $E_{\ell,\calS}(L_\calS)$ is concentrated for all $\calS$, and
\item $\sum_\ell F_{\ell,\calS}(L_\calS)$ is concentrated around its expectation. 
\end{enumerate}

The reason for decoupling the two arguments is that $E_{\ell,\calS}(L_\calS)$ has a very small variance, for which few samples are sufficient: each term of the sum has magnitude $\cost(p, \calS) - \cost(q_{i, \calS}, \calS)$ instead of simply $\cost(p, \calS)$. This difference is crucial to our analysis.
Furthermore,  event $\calE$ from Lemma~\ref{lem:ksize} easily leads to a concentration bound on 
$F_{\calS}(L_\calS)=\sum_{\ell} F_{\ell,\calS}(L_\calS)$.

To establish the gain in variance obtained by subtracting $\cost(q_{i,\calS}, \calS)$, we have the following lemma.
\begin{lemma}
\label{lem:kbinom}
Let $G$ be a group of points, and $\calS$ be an arbitrary solution and $\tilde C_i$ be a cluster induced $\greedy$ on $G$
where all points have same cost, up to a factor 2.
Denote by $q_{i,\calS}=\underset{p\in \tilde C_i}{\text{ argmin }}\cost(p,\calS)$. Then for every interesting range with $\ell \geq \log \eps/2$ and every point $p\in \tilde C_i\cap I_{\ell, \calS}$, 
\[w_p := \frac{\cost(p,\calS) - \cost(q_{i, \calS},\calS)}{cost(q_{i, \calS},\greedy)} \in \left[0,2^{\ell(1-1/z)}\cdot 2^{O(z\log z)}\right]\qedhere\]
\end{lemma}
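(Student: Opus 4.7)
\textbf{Proof plan for Lemma~\ref{lem:kbinom}.} The non-negativity $w_p \geq 0$ is immediate since $q_{i,\calS}$ is defined as the point of $\tilde C_i$ minimizing $\cost(\cdot,\calS)$. All the work goes into the upper bound, and the natural strategy is to apply the second inequality of \cref{lem:weaktri} to $\cost(p,\calS)$ and $\cost(q_{i,\calS},\calS)$ with a carefully tuned slack parameter $\eps'$, then balance the two resulting terms.

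The plan is to first collect the geometric facts we need about $p$ and $q_{i,\calS}$. Since both lie in $\tilde C_i$ and the hypothesis of the lemma says all points of $\tilde C_i$ have the same cost in $\greedy$ up to a factor $2$, we get $\dist(p,\greedy) \leq 2^{1/z}\dist(q_{i,\calS},\greedy)$, hence by the ordinary triangle inequality
\[
\dist(p,q_{i,\calS}) \leq (1+2^{1/z})\dist(q_{i,\calS},\greedy) \leq 3\,\dist(q_{i,\calS},\greedy),
\]
so $\cost(p,q_{i,\calS}) \leq 3^z\,\cost(q_{i,\calS},\greedy)$. Moreover, $p\in I_{\ell,\calS}$ gives $\cost(p,\calS)\leq 2^{\ell+1}\cost(p,\greedy)\leq 2^{\ell+2}\cost(q_{i,\calS},\greedy)$.

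Next, I apply \cref{lem:weaktri} with $a=p$, $b=q_{i,\calS}$, and slack parameter $\eps' := 2^{-\ell/z}$. This yields
\[
\cost(p,\calS)-\cost(q_{i,\calS},\calS)
\;\leq\; \eps'\,\cost(p,\calS) + \bigl((2z+\eps')/\eps'\bigr)^{z-1} \cost(p,q_{i,\calS}).
\]
Substituting the bounds above, the first term becomes $2^{-\ell/z}\cdot 2^{\ell+2}\cost(q_{i,\calS},\greedy) = 4\cdot 2^{\ell(1-1/z)}\cost(q_{i,\calS},\greedy)$. For the second term, using $2z + 2^{-\ell/z} \leq 4z\cdot 2^{\ell/z}$ (valid since $z\geq 1$ and $2^{\ell/z}\geq 1$ for $\ell\geq 0$; the boundary regime $\log\eps/2\leq \ell\leq 0$ only contributes a further $(\eps/2)^{-(z-1)/z}\leq 2^{O(z\log(1/\eps))}$ factor absorbed into the geometric decomposition by working with cruder constants), we bound it by
\[
(4z)^{z-1}\cdot 2^{\ell(z-1)/z}\cdot 3^z\,\cost(q_{i,\calS},\greedy) \;=\; 2^{O(z\log z)}\cdot 2^{\ell(1-1/z)}\,\cost(q_{i,\calS},\greedy).
\]
Summing the two contributions and dividing by $\cost(q_{i,\calS},\greedy)$ yields $w_p \leq 2^{O(z\log z)}\cdot 2^{\ell(1-1/z)}$, as desired.

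The only mildly tricky point is the choice $\eps'=2^{-\ell/z}$, which is precisely what equalizes the exponent of $2^\ell$ in the two terms (the first term contributes $2^{\ell}\cdot \eps'$ and the second contributes $(\eps')^{-(z-1)}$). Any other choice would give a worse dependence than $2^{\ell(1-1/z)}$, so the bound in the lemma statement is essentially tight for this estimator, which is the whole point of subtracting $\cost(q_{i,\calS},\calS)$: it trades the $2^\ell$ growth of the raw cost for the improved $2^{\ell(1-1/z)}$ growth, exactly what is needed to save a factor $\eps^{-z}$ in the variance computation that follows.
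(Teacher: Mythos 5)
Your proof is correct and takes essentially the same route as the paper's: the same two preliminary facts ($\cost(p,\calS)\le 2^{\ell+2}\cost(q_{i,\calS},\greedy)$ and $\cost(p,q_{i,\calS})\le 2^{O(z)}\cost(q_{i,\calS},\greedy)$), followed by the triangle inequality for powers with slack parameter balanced at $2^{-\ell/z}$, which is exactly the paper's choice of $\alpha$. One caveat: your parenthetical handling of the boundary regime $\log(\eps/2)\le\ell\le 0$ concedes an extra $2^{O(z\log(1/\eps))}$ factor, which is \emph{not} of the form $2^{O(z\log z)}$ and, if genuinely needed, would weaken the lemma enough to damage the variance bound in the subsequent concentration argument; fortunately no such concession is required, since for $\ell\le 0$ the trivial bound $w_p\le \cost(p,\calS)/\cost(q_{i,\calS},\greedy)\le 2^{\ell+2}\le 4\cdot 2^{\ell(1-1/z)}$ already gives the claim, and the balancing argument is only needed for $\ell\ge 0$.
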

\begin{proof}
Let $w_p = \frac{\cost(p,\calS) - \cost(q_{i, \calS},\calS)}{\cost(q_{i, \calS},\greedy)}$. By choice of $q_{i, \calS}$, $w_p \geq 0$, so we consider the upper bound.

We first show useful inequalities, relating the different solutions.
Since $p \in I_{\ell, \calS}$, we have: 
\begin{align*}
\cost(q_{i,\calS}, \calS) &\leq \cost(p,  \calS) \leq 2^{\ell+1} \cost(p, \greedy) \\
&\leq 2^{\ell+2} \cost(q_{i, \calS}, \greedy),
\end{align*}
where the last inequality holds since $p$ and $q_{i, \calS}$ are in the same cluster and have up to a factor 2 the same cost.
 We also have that $\cost(p, q_{i, \calS}) \leq 2^{z-1}(\cost(p, \greedy) + \cost(q_{i, \calS}, \greedy)) \leq 3\cdot 2^{z-1}\cost(q_{i, \calS}, \greedy)$.

Now, using Lemma~\ref{lem:weaktri}, for any $\alpha \leq 1$, 
\begin{eqnarray*}
& &\cost(p,\calS) \leq  (1+\alpha/z)^{z-1}\cost(q_{i,\calS},\calS) + \left(1+\frac{z}{\alpha}\right)^{z-1} \cost(p,q_{i,\calS}) 
\end{eqnarray*}
which after rearranging implies
\begin{eqnarray*}
\cost(p,\calS) - \cost(q_{i,\calS},\calS) 
&\leq & 2\alpha\cdot  \cost(q_{i,\calS},\calS)  + 
		\left(\frac{2z}{\alpha}\right)^{z-1} \cost(p,q_{i,\calS})\\
 &\leq & \alpha\cdot 2^{\ell+3}\cdot \cost(q_{i, \calS},\greedy) + 
 		\left(\frac{2z}{\alpha}\right)^{z-1} \cdot 3 \cdot 2^{z-1}\cost(q_{i,\calS}, \greedy) \\
 &\leq & 2^{z+1}\cdot \left(\alpha\cdot 2^{\ell+3} + \left(\frac{2z}{\alpha}\right)^{z-1}\right)\cdot \cost(q_{i, \calS},\greedy).
\end{eqnarray*}
We optimize the final term with respect to
$\alpha$, which leads to $\alpha = 2^{-\frac{\ell}{z}}$ (ignoring constants that depend on $z$) and hence an upper bound of
$$\cost(p,\calS) - \cost(q_{i, \calS},\calS) 
  \leq 2^{O(z\log z)} 2^{\ell(1-1/z)} \cdot\cost(q_{i, \calS},\greedy).$$
\end{proof}

\subsubsection{Concentration of the Estimator $E_{\ell, \calS}(L_\calS)$}
First, we show that every estimator $E_{\ell, \calS(L_\calS)}$ is tightly concentrated. This follows the lines of the proof of \cref{lem:interesting-easy}, incorporating carefully the result of \cref{lem:kbinom}.

\begin{lemma}
\label{lem:kepsmagic}
Let $G$ be a group of points, and $\greedy$ be a solution. Consider an arbitrary solution $\calS$. Then for any set of interesting clusters $L_\calS$ induced by $\greedy$ on $G$, and any estimator $E_{\ell,\calS}(L_\calS)$ with $\ell\leq z\log 4z/\eps$,
it holds that: 
$$\vert E_{\ell,\calS}(L_\calS) - \mathbb{E}[E_{\ell,\calS}(L_\calS)]\vert \leq  \frac{\eps}{z\log z/\eps}\cdot \left(\cost(G,\greedy) + \cost(I_{\ell,\calS},\calS)\right) ,$$
with probability at least 
$$ 1-\exp\left(-2^{O(z\log z)}\cdot \frac{\min(\eps^2,\eps^{z})}{\log^2 1/\eps}\cdot \delta\right).\qedhere$$ 
\end{lemma}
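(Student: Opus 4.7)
The plan is to view $E_{\ell,\calS}(L_\calS)$ as a sum of $\delta$ i.i.d.\ random variables, one per sampling round, and apply Bernstein's inequality (Theorem~\ref{thm:bernstein}). Specifically, for each $j \in \{1,\ldots,\delta\}$, let $X_j := \weight(\coreset_j)\cdot(\cost(\coreset_j,\calS) - \cost(q_{i_j,\calS},\calS))$ when the $j$-th sampled point $\coreset_j$ lies in some cluster $\tilde C_{i_j} \in L_\calS$ with $\coreset_j \in I_{\ell,\calS}$, and $X_j := 0$ otherwise. By the choice of the sampling weights $\weight(p)$, $E_{\ell,\calS}(L_\calS) = \sum_{j=1}^{\delta} X_j$, with the $X_j$ independent and identically distributed.

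The crux is bounding the two Bernstein parameters $M$ (an almost-sure upper bound on $X_j$) and $\sigma^2 := \sum_j \E[X_j^2]$. For $M$, I would combine Lemma~\ref{lem:kbinom}, which gives $V_p := \cost(p,\calS)-\cost(q_{i,\calS},\calS) \leq 2^{O(z\log z)}\cdot 2^{\ell(1-1/z)}\cdot\cost(q_{i,\calS},\greedy)$, with Fact~\ref{fact:cost_rings} (which yields $\cost(q_{i,\calS},\greedy) \leq 2\cost(\tilde C_i,\greedy)/|\tilde C_i|$) and the definition of $\weight$ to conclude
\[
M \leq 2^{O(z\log z)}\cdot 2^{\ell(1-1/z)}\cdot\frac{\cost(G,\greedy)}{\delta}.
\]
For $\sigma^2$, since $X_j \leq M$ almost surely, $\E[X_j^2] \leq M\cdot \E[X_j]$; and $\sum_j \E[X_j] = \sum_{p\in L_\calS \cap I_{\ell,\calS}} V_p \leq \cost(I_{\ell,\calS},\calS)$ because $q_{i,\calS}$ minimizes $\cost(\cdot,\calS)$ over $\tilde C_i$. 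Hence $\sigma^2 \leq M\cdot \cost(I_{\ell,\calS},\calS)$.

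Plugging these into Bernstein with $t := \frac{\eps}{z\log(z/\eps)}(\cost(G,\greedy) + \cost(I_{\ell,\calS},\calS))$ and using $(a+b)^2 \geq 4ab$ inside $t^2$, the deviation probability is bounded by $\exp(-\min(t^2/(4\sigma^2),\, 3t/(4M)))$. Substituting $2^{\ell(1-1/z)} \leq 2^{O(z\log z)}\eps^{-(z-1)}$ (valid since $\ell \leq z\log(4z/\eps)$), the range term $3t/(4M)$ contributes $\Omega\!\left(\eps^z\delta / (2^{O(z\log z)}\log(z/\eps))\right)$ in the worst case when $\ell$ is large, while for smaller $\ell$ the variance term $t^2/(4\sigma^2)$ is the binding constraint and contributes $\Omega\!\left(\eps^2\delta / (2^{O(z\log z)}\log^2(z/\eps))\right)$. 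Combining the two regimes yields the claimed exponent $2^{-O(z\log z)}\min(\eps^2,\eps^z)\delta / \log^2(1/\eps)$.

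The main obstacle is balancing the two Bernstein terms across the whole range of $\ell$, which is precisely where the design of the estimator pays off: subtracting $\cost(q_{i,\calS},\calS)$ saves a factor $2^{-\ell/z}$ in $V_p$ via Lemma~\ref{lem:kbinom} over the naive estimator used in Section~\ref{sec:interesting-subopt}. Without this saving, the large-$\ell$ regime would deliver only $\eps^{z+1}\delta$ in the exponent---one power of $\eps$ weaker than desired. The $1/\log(z/\eps)$ slack built into $t$ also leaves room for the later union bound over the $O(z\log(z/\eps))$ interesting values of $\ell$ that is needed to pass from Lemma~\ref{lem:kepsmagic} to the full concentration statement on the coreset.
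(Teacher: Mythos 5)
Your setup is the paper's: the same i.i.d.\ decomposition of $E_{\ell,\calS}(L_\calS)$ into per-round variables, the same almost-sure bound $M \leq 2^{O(z\log z)}\cdot 2^{\ell(1-1/z)}\cdot\cost(G,\greedy)/\delta$ obtained from Lemma~\ref{lem:kbinom} and Fact~\ref{fact:cost_rings}, and Bernstein's inequality. The gap is in your second-moment bound. Writing $\sigma^2 \leq M\cdot\sum_j\E[X_j]$ and then bounding $\sum_j\E[X_j] = \sum_p V_p \leq \cost(I_{\ell,\calS},\calS)$ gives
$\sigma^2 \lesssim 2^{\ell(1-1/z)}\cdot\frac{\cost(G,\greedy)}{\delta}\cdot\cost(I_{\ell,\calS},\calS)$,
whereas the paper's direct computation of $\E[X_j^2]$ yields the stronger
$\sigma^2 \lesssim 2^{\ell(1-2/z)}\cdot\frac{\cost(G,\greedy)}{\delta}\cdot\cost(I_{\ell,\calS},\calS)$.
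The difference is a factor $2^{\ell/z}$, which at the top of the range $\ell \approx z\log(4z/\eps)$ is $\Theta(1/\eps)$. With your bound the variance term of Bernstein at large $\ell$ is only $t^2/\sigma^2 \gtrsim \eps^2\cdot 2^{-\ell(1-1/z)}\delta \gtrsim \eps^{z+1}\delta$ (already $\eps^3\delta$ at $z=2$, where the target is $\eps^2\delta$). Your "two regimes" argument does not rescue this: Bernstein's exponent is (up to constants) the \emph{minimum} of $t^2/\sigma^2$ and $t/M$ evaluated at the same $\ell$, so at large $\ell$ the binding constraint is the variance term at $\eps^{z+1}\delta$, not the range term at $\eps^z\delta$. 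As written, the proposal therefore only establishes failure probability $\exp\bigl(-2^{-O(z\log z)}\eps^{z+1}\delta/\log^2(1/\eps)\bigr)$ for $z\geq 2$ --- precisely the "one power of $\eps$ weaker" outcome your last paragraph attributes to the naive estimator.

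The repair is short and you have all the ingredients. For $p\in I_{\ell,\calS}$ one has $\cost(p,\greedy)\leq 2^{-\ell}\cost(p,\calS)$, so Lemma~\ref{lem:kbinom} gives $V_p \leq 2^{O(z\log z)}2^{\ell(1-1/z)}\cost(q_{i,\calS},\greedy) \leq 2^{O(z\log z)}2^{-\ell/z}\cost(p,\calS)$, hence $\sum_p V_p \leq 2^{O(z\log z)}2^{-\ell/z}\cost(I_{\ell,\calS},\calS)$; plugging this into $\sigma^2\leq M\sum_j\E[X_j]$ recovers the paper's $2^{\ell(1-2/z)}$ bound (with the separate $z=1$ case handled by $\sum_p\cost(p,\greedy)\leq\cost(G,\greedy)$ as in the paper), after which your Bernstein computation goes through and yields the claimed $\min(\eps^2,\eps^z)$ exponent.
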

\begin{proof}
In order to simplify the notations, we drop mention of $L_\calS$ and define $E_{\ell, \calS} = E_{\ell, \calS}(L_\calS)$.

Lemma~\ref{lem:kbinom} allows to write slightly differently $E_{\ell, \calS}$:
\[E_{\ell, \calS} = \sum_{\tilde C_i \in L_\calS} \sum_{p \in \tilde C_i \cap I_{\ell,\calS}\cap \Omega} \weight(p) \cdot w_p \cost(q_{i, \calS},\calS),\]
with all the weights $w_{p}$ are in $[0,2^{\ell(1-1/z)}\cdot 2^{O(z\log z)}]$.

We can also write $E_{\ell,\calS}$ as a sum of independent random variables: $E_{\ell,\calS} = \sum\limits_{j = 1}^\delta X_j$, where $X_j = \weight(\coreset_j)\cdot w_{\coreset_j}\cost(q_{i, \calS},\greedy)$ 
when the $j$-th sampled point of $G$ is $\coreset_j \in \tilde C_i \cap I_{\ell,\calS} \cap L_\calS$ and $X_j = 0$ when $\coreset_i \notin I_{\ell,\calS} \cap L_\calS$. 
Recall that,  due to \cref{fact:cost_rings}, the probability that the $j$-th sampled point is $p$, where $p \in \tilde C_i$ satisfies $\mathbb{P}[\coreset_j = p] = \frac{\cost(\tilde C_i,\greedy)}{|\tilde C_i|\cdot \cost(G,\greedy)} \leq \frac{2\cost(p,\greedy)}{\cost(G,\greedy)}$. From the same fact, $\weight(p) \leq \frac{2\cost(G,\greedy)}{\delta \cost(p,\greedy)}$.

We will rely on Bernstein's inequality (\cref{thm:bernstein}). To do this, we need an upper bound on the variance of $E_{\ell,\calS}$, as well as an almost sure upper bound $M$ on every sample. 
We first bound $\E[X_j^2]$: in the second line, we use that $\Omega_j$ consists of a single point to move the square inside the sum.
\begin{eqnarray*}
\nonumber
 \E[X_j^2] &=&  
 	\E\left[\left(
 		\weight(\coreset_j)\cost(\coreset_j,\greedy)\cdot w_{\coreset_j,\calS} \right)^2\right] \\
 		& = & \sum_{p \in I_{\ell,\calS} \cap L_\calS} 
		\left(\weight(p) \cost(p,\greedy)\cdot w_{p,\calS} \right)^2 \cdot \Pr\left[\coreset_i = p\right] \\ 
&\leq & \sum_{p \in I_{\ell,\calS}} 
		\left(\frac{2\cost(G,\greedy)}{\delta \cost(p,\greedy)}\cdot \cost(p,\greedy)\cdot w_{p,\calS} 
			\right)^2\cdot \Pr\left[\coreset_i = p\right] \\
&\leq &  \sum_{p \in I_{\ell,\calS}}   
		2^{2\ell(1-1/z)}\cdot 2^{O(z\log z)}\cdot  \frac{\cost^2(G,\greedy)}{\delta^2} \cdot \frac{\cost(p,\greedy)}{\cost(G,\greedy)}\\
&\leq &  \sum_{p\in I_{\ell,\calS}}   
		2^{2\ell(1-1/z)}\cdot 2^{O(z\log z)} \cdot \frac{\cost(G,\greedy)}{\delta^2}\cdot \cost(p,\greedy),
\end{eqnarray*}
where the fourth line follows from  using \cref{lem:kbinom} to replace the value of $w_{p,\calS}$.

To bound $ \sum_{p\in I_{\ell,\calS}} \cost(p,\greedy)$, we need to deal with the cases $z=1$ (i.e. $k$-median) and $z\geq 2$ ($k$-means and higher powers) separately.
For the former, we have $2^{2\ell(1-1/1)}=1$, so we can use $\sum_{p\in I_{\ell,\calS}} \cost(p,\greedy)\leq\cost(G,\greedy)$ as an upper bound. For the latter, we use $\sum_{p\in I_{\ell,\calS}} 2^{\ell}\cdot \cost(p,\greedy)\leq \cost(I_{\ell,\calS} ,\calS)$ as an upper bound. Combining this with $\text{Var}[X_i]  \leq \text{E}[X_i^2]$, we obtain for $z=1$:
\begin{eqnarray}
\text{Var}[X_i]  &\leq & 
 \frac{\cost(G,\greedy)}{\delta^2} \cdot 2^{O(z\log z)} \cdot \cost(G,\greedy),
\label{eq:kepsmagic1}
\end{eqnarray}
and for $z > 1$:
\begin{eqnarray}
\text{Var}[X_i]  &\leq & 
 \frac{\cost(G,\greedy)}{\delta^2}\cdot 2^{O(z\log z)} 2^{\ell(1-2/z)} \cost(I_{\ell,\calS},\calS).
\label{eq:kepsmagic1b}
\end{eqnarray}

The almost sure upper bound (for which no case distinction is required) can be derived similarly , using $X_i \leq \sup \frac{2\cost(G,\greedy)}{\delta \cost(p,\greedy)} \cdot \cost(p, \greedy) \cdot w_{p,\calS}$:
\begin{eqnarray}
\notag
X_i ~~\leq ~~M & := & 2^{\ell(1-1/z)}\cdot 2^{O(z\log z)}\cdot \frac{\cost(G,\greedy)}{\delta}\\
\label{eq:kepsmagic2}
& \leq & \frac{z}{\eps}\cdot 2^{\ell(1-2/z)}\cdot 2^{O(z\log z)}\cdot \frac{\cost(G,\greedy)}{\delta}, 
\end{eqnarray}
where the inequality holds due to $\ell \leq z\log(4z/\eps)$.
Applying Bernstein's inequality with Equations~\ref{eq:kepsmagic1},~\ref{eq:kepsmagic1b}, and~\ref{eq:kepsmagic2}, we then have
\begin{eqnarray}
\nonumber
& &\mathbb{P}\left[\vert E_{\ell,\calS} - \mathbb{E}[E_{\ell,\calS}]\vert \leq  
		\frac{\eps}{z\log z/\eps}\cdot
		\left(\cost(G,\greedy) + \cost(I_{\ell,\calS},\calS)\right) \right] \\
\nonumber
&\leq & \exp\left(- \frac{\frac{\eps^2}{z^2\log^2 z/\eps}
					\cdot \left(\cost(G,\greedy) + \cost(I_{\ell,\calS},\calS)\right)^2}
	{2\sum_{i=1}^\delta \text{Var}[X_i] + \frac{1}{3}M\cdot 
	  \frac{\eps}{z\log z/\eps}\cdot \left(\cost(G,\greedy) + \cost(I_{\ell,\calS},\calS)\right) }\right)\\
\nonumber
&\leq & \exp\left(- \frac{\frac{\eps^2}{z^2\log^2 z/\eps}\cdot \delta}
				{2^{O(z\log z)} \cdot 
				\begin{cases}
				    1 & \text{if } z=1 \\ 
					2^{\ell(1-2/z)} &\text{if }z\geq 2
				\end{cases}}\right)
\end{eqnarray}

For $z=1$ this becomes $\exp\left(-\frac{\eps^2\cdot \delta}{2^{O(z\log z)} \log^2 1/\eps}\right)$. For $z=2$, we have $2^{\ell(1-2/z)}=1$, so the same bound as for $z=1$. For $z>2$, we use $\ell \leq z\log 4z/\eps$, which implies $\eps^2 \cdot 2^{-\ell(1-2/z)}\geq \eps^{2+z-z2/z}\cdot 2^{-O(z\log z)} = \eps^z\cdot  2^{-O(z\log z)}$. This yields our final desired bound of
$$\exp\left(- \frac{\min(\eps^2,\eps^{z})}{2^{O(z\log z)}\log^2 1/\eps}\cdot \delta\right).$$
\end{proof}

\subsubsection{Concentration of $F_{\ell, \calS}(L_\calS)$}

We now turn our attention to bounding the random variable $F_{\ell,\calS}(L_\calS)$. It turns out that bounding 
$$F_{\ell, \calS}(L_\calS)=\sum_{\tilde C_i\in L_\calS}\sum_{p\in \tilde C_i\cap \Omega \cap I_{\ell,\calS}} \cost(q_{i, \calS},\calS)\cdot \frac{|\tilde C_i|\cdot \cost(G,\greedy)}{\delta\cost(\tilde C_i,\greedy)}$$ 
is rather hard, and in fact no easier than bounding $\cost(I_{\ell,\calS}\cap \coreset,\calS)$. Fortunately, this is not necessary, as it turns out that we can merely bound the sum of $F_{\ell, \calS}(L_\calS)$. 
We consider the random variable defined as follows:
\begin{align*}
F_{\calS}(L_\calS) &= \sum_{\ell  \leq z \log(4z/\eps)} F_{\ell,\calS} (L_\calS)
\end{align*}
with expectation
\begin{align*}
\mathbb{E}[F_{\calS}(L_\calS)] = \sum_{\tilde C_i\in L_{\calS}}\sum_{p\in \tilde C_i\cap \Omega} \cost(q_{i, \calS},\calS)\cdot \frac{|\tilde C_i|\cdot \cost(G,\greedy)}{\delta\cost(\tilde C_i,\greedy)}
\end{align*}
Showing that $F_{\calS}(L_\calS)$ is concentrated is now an almost direct consequence of event $\calE$ from Lemma~\ref{lem:ksize}, which says that $\sum_{p \in \tilde C_i \cap \coreset} \frac{|\tilde C_i|\cdot \cost(G,\greedy)}{\delta\cost(\tilde C_i,\greedy)} = (1\pm \eps) |\tilde C_i|$.

\begin{lemma}
\mbox{}\label{lem:kepsF}
Let $G$ be a group of points, and $\greedy$ be a solution.
Conditioned on event $\calE$, we have for all solutions $\calS$ and all sets of interesting clusters $L_\calS$ induced by $\greedy$ on $G$:
$$|F_{\calS}(L_\calS) - \mathbb{E}[F_{\calS}(L_\calS)]| \leq \eps \cdot \cost(G,\calS).\qedhere$$
\end{lemma}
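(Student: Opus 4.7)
The plan is to exploit event $\calE$ to bound $F_\calS(L_\calS)$ \emph{deterministically}, without invoking any further concentration argument. The key structural observation is that $F_\calS(L_\calS)$ factorizes into a solution-dependent piece and a piece that event $\calE$ already controls uniformly over all clusters induced by $\greedy$ on $G$. This is precisely the reason for decomposing the estimator into $E_{\ell,\calS}$ and $F_{\ell,\calS}$ in the first place: all solution-specific randomness has been pushed into the $E_{\ell,\calS}$ terms (already handled by \cref{lem:kepsmagic}), and $F_\calS(L_\calS)$ only couples $\Omega$ to $\calS$ through the fixed quantities $\cost(q_{i,\calS},\calS)$.

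Writing $T_\calS := \bigcup_{\ell \leq z\log(4z/\eps)} I_{\ell,\calS}$, I would first regroup the double sum defining $F_\calS(L_\calS)$ by cluster rather than by range index, yielding
\begin{align*}
F_\calS(L_\calS) = \sum_{\tilde C_i \in L_\calS} \cost(q_{i,\calS}, \calS) \cdot \frac{|\tilde C_i| \cdot \cost(G, \greedy)}{\delta\cdot \cost(\tilde C_i, \greedy)} \cdot |\tilde C_i \cap \Omega \cap T_\calS|,
\end{align*}
and likewise $\mathbb{E}[F_\calS(L_\calS)] = \sum_{\tilde C_i \in L_\calS} |\tilde C_i \cap T_\calS| \cdot \cost(q_{i,\calS}, \calS)$. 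Since every cluster $\tilde C_i \in L_\calS$ is interesting---and, as the text notes, clusters that intersect a huge $I_{\ell,\calS}$ are separately handled by \cref{lem:khuge}---we have $\tilde C_i \subseteq T_\calS$. Thus the intersection with $T_\calS$ drops out, leaving $|\tilde C_i \cap \Omega \cap T_\calS| = |\tilde C_i \cap \Omega|$ and $|\tilde C_i \cap T_\calS| = |\tilde C_i|$.

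Now I would invoke event $\calE$ from \cref{lem:ksize}, which deterministically asserts $\frac{|\tilde C_i|\cdot \cost(G,\greedy)}{\delta\cdot \cost(\tilde C_i,\greedy)} \cdot |\tilde C_i \cap \Omega| = (1\pm \eps)|\tilde C_i|$ simultaneously for every cluster $\tilde C_i$. Substituting this into the expression for $F_\calS(L_\calS)$ gives $F_\calS(L_\calS) = (1\pm \eps)\, \mathbb{E}[F_\calS(L_\calS)]$, hence $|F_\calS(L_\calS) - \mathbb{E}[F_\calS(L_\calS)]| \leq \eps \cdot \mathbb{E}[F_\calS(L_\calS)]$. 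To finish, I would bound the expectation by noting that $q_{i,\calS}$ is the minimum-cost point of $\tilde C_i$ under $\calS$, so $|\tilde C_i|\cdot \cost(q_{i,\calS},\calS) \leq \sum_{p \in \tilde C_i} \cost(p,\calS)$; summing over the disjoint clusters of $L_\calS$ yields $\mathbb{E}[F_\calS(L_\calS)] \leq \cost(G, \calS)$, from which the claim follows.

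Two features of the argument merit emphasis. First, event $\calE$ depends only on $\Omega$ and not on $\calS$ or on the choice of $L_\calS$, so the resulting inequality holds for \emph{all} solutions $\calS$ and \emph{all} sets of interesting clusters $L_\calS$ simultaneously---no union bound over solutions or over subsets of clusters is required in this lemma. Second, the only mild obstacle I anticipate is verifying the inclusion $\tilde C_i \subseteq T_\calS$ for $\tilde C_i \in L_\calS$, i.e., checking that the truncation at $\ell = z\log(4z/\eps)$ in the definition of $F_\calS$ is compatible with the ``interesting'' condition used to populate $L_\calS$ (which uses the looser bound $(8z/\eps)^z$); once this consistency is confirmed, no distance manipulations, weak triangle inequalities, or Bernstein-style bounds are needed.
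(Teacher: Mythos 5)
Your proposal is correct and follows essentially the same route as the paper: rewrite $F_\calS(L_\calS)$ cluster by cluster, apply event $\calE$ to get $F_\calS(L_\calS) = (1\pm\eps)\,\mathbb{E}[F_\calS(L_\calS)]$, and bound the expectation by $\cost(G,\calS)$ via the minimality of $q_{i,\calS}$. The inclusion $\tilde C_i \subseteq T_\calS$ that you flag is also silently assumed in the paper's own proof, so no substantive difference remains.
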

\begin{proof}
Given a solution $\calS$ and any set of interesting clusters $L_\calS$ induced by $\greedy$ on $G$, we have
\begin{align*}
  \mathbb{E}[F_{\calS}(L_\calS) ] &=  \sum_{\tilde C_i\in L_{\calS}}\sum_{p\in \tilde C_i} \cost(q_{i, \calS},\calS)\cdot \frac{|\tilde C_i|\cdot \cost(G,\greedy)}{\delta\cost(\tilde C_i,\greedy)} \Pr[p \in \Omega] = \sum_{\tilde C_i\in L_{\calS}} |\tilde C_i| \cdot \cost(q_{i, \calS},\calS).
\end{align*}

  Event $\calE$ ensures that the mass of each cluster is preserved in the coreset, i.e., that $\sum_{p \in \tilde C_i\cap \coreset} \frac{|\tilde C_i|\cdot \cost(G,\greedy)}{\delta\cost(\tilde C_i,\greedy)}  = (1 \pm\eps)\cdot |\tilde C_i|$, for every cluster $\tilde C_i\in L_{\calS}$. Hence 
\[F_{\calS}(L_\calS)=\sum_{\tilde C_i\in L_{\calS}}\sum_{p \in \tilde C_i\cap \Omega}  \cost(q_{i, 	\calS},\calS) \cdot \frac{|\tilde C_i|\cdot \cost(G,\greedy)}{\delta \cost(\tilde C_i,\greedy)}  = (1\pm \eps) \cdot \mathbb{E}[F_\calS(L_\calS)].\]
Now finally observe that since $q_{i, \calS}$ was always 
  the point of $\tilde C_i$ whose cost in $\calS$ is the smallest, 
   we have $\mathbb{E}[F_S(L_\calS)] \leq \cost(L_{\calS},\calS) \leq \cost(G,\calS)$.
\end{proof}

\subsection{Combining Them All} \label{sec:combining}

We can now show that the sample $\coreset$ indeed verifies \cref{lem:coreset-reduc}. To do that, we naturally follow the structure of previous lemmas, and decompose
$$\left\vert \cost(G,\calS) - \sum_{p\in\coreset}\weight(p) \cdot \cost(p,\calS)\right\vert$$
into terms for which we can apply Lemmas~\ref{lem:kepstiny},~\ref{lem:khuge},~\ref{lem:kepsmagic}, and~\ref{lem:kepsF}.

First, we note that the probability of success of \cref{lem:kepsmagic} is too small to take a union-bound over its success for all $\calS$. To cope with that issue, we use the approximate centroid set, in order to relate $E_{\ell, \calS}(L_\calS)$ to $E_{\ell, \tilde \calS}(L_\calS)$, where $\tilde \calS$ comes from a small set on which union-bounding is possible.

\begin{lemma}
\mbox{}\label{lem:union-bound}
Let $G$ be a group of points, and $\greedy$ be a solution. Let $\cand$ be an $\greedy$-approximate centroid set, as in \cref{def:centroid-set}.
It holds with probability 
$$ 1-\exp\left(k\log |\cand| - 2^{O(z\log z)}\cdot \frac{\min(\eps^2,\eps^{z})}{\log^2 1/\eps}\cdot \delta\right)$$ 

 that, for all solution $\tilde \calS \in \cand^k$ and any set of interesting clusters $L_{\tilde \calS}$ induced by $\greedy$ on $G$:
\begin{align*}
\left| \cost(L_{\tilde \calS}, \tilde \calS) - \cost(\coreset \cap L_{\tilde \calS}, \tilde \calS)\right| \leq \eps\left(\cost(G, \greedy) + \cost(L_{\tilde \calS}, \tilde \calS)\right)\qedhere.
\end{align*} 
\end{lemma}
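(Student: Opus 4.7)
The aim is to combine the per-range estimator bound of Lemma~\ref{lem:kepsmagic}, the aggregate estimator bound of Lemma~\ref{lem:kepsF}, and the trivial bound of Lemma~\ref{lem:kepstiny}, then handle all $\tilde{\calS}\in\cand^k$ and all choices of $L_{\tilde\calS}$ by a union bound. The skeleton is an additive decomposition of the error across cost ranges $I_{\ell,\tilde\calS}$ induced by $\tilde\calS$.

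\textbf{Step 1: Decomposition.} Fix any $\tilde\calS\in\cand^k$ and any set of interesting clusters $L_{\tilde\calS}$. By definition of interesting clusters, every point in $L_{\tilde\calS}$ lies in some $I_{\ell,\tilde\calS}$ with $\ell\le z\log(8z/\eps)$, so there are no huge ranges to consider. Split the error as
\begin{align*}
\cost(L_{\tilde\calS},\tilde\calS)-\cost(\coreset\cap L_{\tilde\calS},\tilde\calS)
&= \bigl[\cost(L_{\tilde\calS}\cap I_{tiny,\tilde\calS},\tilde\calS)-\cost(\coreset\cap L_{\tilde\calS}\cap I_{tiny,\tilde\calS},\tilde\calS)\bigr] \\
&\quad+\sum_{\ell\text{ interesting}}\bigl[\cost(L_{\tilde\calS}\cap I_{\ell,\tilde\calS},\tilde\calS)-\cost(\coreset\cap L_{\tilde\calS}\cap I_{\ell,\tilde\calS},\tilde\calS)\bigr].
\end{align*}
The tiny part contributes $O(\eps)\cost(G,\greedy)$ by Lemma~\ref{lem:kepstiny} (applied to both the full-input and the coreset term via the triangle inequality).

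\textbf{Step 2: Rewriting each interesting term.} For every interesting $\ell$, using the identity $E_{\ell,\tilde\calS}(L_{\tilde\calS})+F_{\ell,\tilde\calS}(L_{\tilde\calS})=\cost(\coreset\cap L_{\tilde\calS}\cap I_{\ell,\tilde\calS},\tilde\calS)$ and $\mathbb{E}[E_{\ell,\tilde\calS}+F_{\ell,\tilde\calS}]=\cost(L_{\tilde\calS}\cap I_{\ell,\tilde\calS},\tilde\calS)$, the interesting contribution equals
\[
\sum_{\ell\text{ interesting}}\bigl(\mathbb{E}[E_{\ell,\tilde\calS}(L_{\tilde\calS})]-E_{\ell,\tilde\calS}(L_{\tilde\calS})\bigr)\;+\;\bigl(\mathbb{E}[F_{\tilde\calS}(L_{\tilde\calS})]-F_{\tilde\calS}(L_{\tilde\calS})\bigr),
\]
where the $F$ terms have been aggregated since only $F_{\tilde\calS}=\sum_\ell F_{\ell,\tilde\calS}$ is well-behaved (as already discussed before Lemma~\ref{lem:kepsF}).

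\textbf{Step 3: Applying the concentration lemmas and summing.} Lemma~\ref{lem:kepsmagic} gives $|E_{\ell,\tilde\calS}-\mathbb{E}[E_{\ell,\tilde\calS}]|\le \frac{\eps}{z\log(z/\eps)}(\cost(G,\greedy)+\cost(I_{\ell,\tilde\calS},\tilde\calS))$ for each interesting $\ell$; summing over the $O(z\log(z/\eps))$ interesting ranges (Fact~\ref{fact:interesting}) and using $\sum_\ell \cost(I_{\ell,\tilde\calS},\tilde\calS)\le \cost(L_{\tilde\calS},\tilde\calS)$ yields a total of $O(\eps)(\cost(G,\greedy)+\cost(L_{\tilde\calS},\tilde\calS))$. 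Lemma~\ref{lem:kepsF}, conditioned on event $\calE$, gives $|F_{\tilde\calS}-\mathbb{E}[F_{\tilde\calS}]|\le \eps\cdot \mathbb{E}[F_{\tilde\calS}(L_{\tilde\calS})]\le \eps\cdot \cost(L_{\tilde\calS},\tilde\calS)$. Adding the tiny bound from Step~1 gives the desired $\eps(\cost(G,\greedy)+\cost(L_{\tilde\calS},\tilde\calS))$ after rescaling $\eps$ by a constant.

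\textbf{Step 4: Union bound.} For each fixed $\tilde\calS\in\cand^k$ and fixed $L_{\tilde\calS}\subseteq\{\tilde C_1,\ldots,\tilde C_k\}$, Lemma~\ref{lem:kepsmagic} fails for some interesting $\ell$ with probability at most $O(z\log(z/\eps))\exp\!\bigl(-2^{O(z\log z)}\cdot \min(\eps^2,\eps^z)/\log^2(1/\eps)\cdot\delta\bigr)$. There are $|\cand|^k$ choices of $\tilde\calS$ and at most $2^k$ choices of $L_{\tilde\calS}$, so a union bound over these together with the failure probability of $\calE$ from Lemma~\ref{lem:ksize} yields a total failure probability of
\[
\exp\!\Bigl(k\log|\cand|+k+\log\log(1/\eps)-2^{O(z\log z)}\cdot \tfrac{\min(\eps^2,\eps^z)}{\log^2(1/\eps)}\cdot\delta\Bigr),
\]
which is absorbed into the form claimed in the lemma. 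The only mildly delicate point I expect is verifying that the additive $k$ from the $2^k$ subsets, and the $k\cdot z^2\log^2(z/\eps)$ prefactor from Lemma~\ref{lem:ksize}, both disappear inside the $k\log|\cand|$ and $2^{O(z\log z)}/\log^2(1/\eps)$ terms; this is a routine computation since $|\cand|\ge 2$.
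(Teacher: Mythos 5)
Your proposal is correct and follows essentially the same route as the paper's proof: decompose over the cost ranges $I_{\ell,\tilde\calS}$, bound the interesting ranges via Lemma~\ref{lem:kepsmagic}, the aggregated $F$-term via Lemma~\ref{lem:kepsF} conditioned on $\calE$, the tiny ranges via Lemma~\ref{lem:kepstiny}, and union-bound over $|\cand|^k\cdot 2^k$ choices. The one bookkeeping point to tidy up is that the paper's $F_{\tilde\calS}$ sums $F_{\ell,\tilde\calS}$ over \emph{all} $\ell\le z\log(4z/\eps)$ including the tiny ranges, whereas your Step~2 aggregates only the interesting $\ell$; the mismatch consists of the tiny $F_{\ell,\tilde\calS}$ terms, which are bounded by $F_{\ell,\tilde\calS}\le\sum_{p\in I_{\ell,\tilde\calS}\cap\coreset}\weight(p)\cost(p,\tilde\calS)$ and hence absorbed into the $O(\eps)\cost(G,\greedy)$ tiny bound exactly as in the paper.
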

\begin{proof}
Taking a union-bound over the success of \cref{lem:kepsmagic} for all possible $\tilde \calS \in \cand^k$, all choice of interesting clusters $L_{\tilde \calS}$ and all $\ell$ such that $\log(\eps/2) \leq \ell \leq z\log(4z/\eps)$, it holds with probability  $1-\exp(k \log |\cand|)\exp\left(-2^{O(z\log z)}\cdot \frac{\min(\eps^2,\eps^{z})}{\log^2 1/\eps}\cdot \delta\right)$ that, for every $\tilde \calS \in \cand^k, L_{\tilde \calS}$ and $\ell$,

\begin{equation}
\label{eq:unionbound-tilde}
\vert E_{\ell,\tilde \calS}(L_{\tilde \calS}) - \mathbb{E}[E_{\ell,\tilde \calS}(L_{\tilde \calS})]\vert \leq
  \frac{\eps}{z\log z/\eps}\cdot \left(\cost(G,\greedy) + \cost(I_{\ell,\tilde \calS},\tilde \calS)\right)
\end{equation}

For simplicity, we drop again the mention of $L_{\tilde \calS}$ and write $ E_{\ell,\tilde \calS} =  E_{\ell,\tilde \calS}(L_{\tilde \calS})$, $F_{\tilde \calS} =  F_{\tilde \calS}(L_{\tilde \calS})$.
We now condition on that event, together with event $\calE$. We write:

\begin{eqnarray}
\notag
& &\left\vert \sum_{p \in  L_{\tilde \calS}} \cost(p, \tilde \calS) \right.  -  \left.
\sum_{p \in L_{\tilde \calS} \cap \coreset} \weight(p)\cdot \cost(p, \tilde \calS)\right\vert \\
\notag
&=& \left\vert \sum_{p \in L_{\tilde \calS}} \cost(p, \tilde \calS) - 
\mathbb{E}[F_{\tilde \calS}] + \mathbb{E}[F_{\tilde \calS}] -F_{\tilde \calS}  + F_{\tilde \calS} 
- \sum_{p \in L_{\tilde \calS} \cap \coreset} \weight(p) \cdot \cost(p, \tilde \calS)\right\vert \\
\notag
&\leq & \left\vert \sum_{p \in L_{\tilde \calS}} \cost(p, \tilde \calS) - 
	\mathbb{E}[F_{\tilde \calS}]  + F_{\tilde \calS} - 
	\sum_{p \in L_{\tilde \calS} \cap \coreset} \weight(p) \cdot \cost(p, \tilde \calS)\right\vert 	+ |\mathbb{E}[F_{\tilde \calS}] -F_{\tilde \calS}| \\
\label{eq:tiny}
&\leq &
\sum_{\ell <\log \eps/2}
\left\vert \sum_{p \in I_{\ell,\tilde \calS} \cap L_{\tilde \calS}} \cost(p,\tilde  \calS) - \mathbb{E}[F_{\ell,\tilde \calS}]  + 
		F_{\ell,\tilde \calS} - \sum_{\mathclap{p \in I_{\ell, \tilde \calS} \cap L_{\tilde \calS} \cap \coreset}} \weight(p)\cdot \cost(p, \tilde \calS)\right\vert \\ 
\label{eq:interest}
& & +\sum_{\ell=\log \eps/2}^{z\log z/4\eps} 
\left\vert \sum_{p \in I_{\ell,\tilde \calS}\cap L_{\tilde \calS}} \cost(p,  \tilde \calS) - \mathbb{E}[F_{\ell,\tilde \calS}]  + 
		F_{\ell,\tilde \calS} - \sum_{\mathclap{p \in I_{\ell, \tilde \calS} \cap L_{\tilde \calS} \cap \coreset}} \weight(p)\cdot \cost(p, \tilde \calS)\right\vert \\
\notag
& & + |\mathbb{E}[F_{\tilde \calS}] -F_{\tilde \calS}|
\end{eqnarray}

We note that Equation~\ref{eq:interest} is $\sum_{\ell=\log \eps/2}^{z\log z/4\eps} |E_{\ell, \tilde \calS} - \E[E_{\ell, \tilde \calS}]|$ and can be directly bounded using Equation~\ref{eq:unionbound-tilde}. To bound tiny points of Equation~\ref{eq:tiny}, we combine \cref{lem:kepstiny} with the observation that 
$F_{\ell, \tilde \calS} \leq \sum_{p \in I_{\ell, \tilde \calS} \cap \coreset} \weight(p)\cost(p, \tilde \calS)$. This gives: 
\begin{eqnarray*}
& & \sum_{\ell <\log \eps/2}
\left\vert \sum_{p \in I_{\ell,\tilde \calS} \cap L_{\tilde \calS}} \cost(p,  \tilde \calS) - \mathbb{E}[F_{\ell,\tilde \calS}]  + 
		F_{\ell,\tilde \calS} - \sum_{p \in I_{\ell, \tilde \calS} \cap \coreset} \weight(p)\cdot \cost(p, \tilde \calS)\right\vert\\
&\leq & \sum_{\ell <\log \eps/2} \left(
\sum_{p \in I_{\ell,\tilde \calS}} \cost(p, \tilde \calS) + \mathbb{E}[F_{\ell,\tilde \calS}]  + 
		F_{\ell,\tilde \calS} + \sum_{p \in I_{\ell, \tilde \calS} \cap \coreset} \weight(p)\cdot \cost(p, \tilde \calS)\right)\\
&\leq & 2 \sum_{\ell <\log \eps/2} \left(
\sum_{p \in I_{\ell,\tilde \calS}} \cost(p, \tilde \calS) + \sum_{p \in I_{\ell, \tilde \calS} \cap \coreset} \weight(p)\cdot \cost(p, \tilde \calS)\right)\\
&\leq & 4\eps \cost(G, \greedy),
\end{eqnarray*}
where the last equation uses \cref{lem:kepstiny}. Plugging this result into the previous inequality, we have:{\setlength{\emergencystretch}{2.5em}\par}

\begin{eqnarray*}
& & \left\vert  \sum_{p \in L_{\tilde \calS}} \cost(p, \tilde \calS)  - 
\sum_{p \in L_{\tilde \calS} \cap \coreset} \weight(p)\cdot \cost(p, \tilde \calS)\right\vert\\
& \leq &  4\eps \cost(G,\greedy) + \sum_{\ell=\log \eps/2}^{z\log z/4\eps} 
\left\vert \mathbb{E}[E_{\ell, \tilde \calS}] - E_{\ell, \tilde \calS}\right\vert  
+ |\mathbb{E}[F_{\tilde \calS}] -F_{\tilde \calS}| \\
& \leq &  4\eps \cost(G, \greedy) + \sum_{\ell=\log \eps/2}^{z\log z/4\eps} 
\frac{\eps}{z\log z/\eps} \cdot \left(\cost(G, \greedy) + \cost(I_{\ell,\tilde \calS}, \tilde \calS)\right)
+ |\mathbb{E}[F_{\tilde \calS}] -F_{\tilde \calS}| \\
&\leq &  4\eps \cost(G, \greedy) +   \left(z\log (z/4\eps) - \log \eps /2 \right)\cdot \frac{\eps}{z \log z/\eps} \cdot \left(\cost(G, \greedy) + \cost(L_{\tilde \calS}, \tilde \calS) \right)\\
& & + \eps\cdot \cost(G, \tilde \calS) \\
&\leq & O(\eps) \cdot (\cost(G, \greedy) + \cost(L_{\tilde \calS}, \tilde \calS)),
\end{eqnarray*}

where the second to last inequality used Lemma~\ref{lem:kepsF}.

\end{proof}

\paragraph{From the approximate centroid set to any solution.}
We can now finally turn to the proof of \cref{lem:coreset-reduc}: it combines the result we show previously for the huge type, and the use of approximate centroid set with the \cref{lem:union-bound} for the interesting and tiny types.

\begin{proof}[Proof of \cref{lem:coreset-reduc}]
Let $X, k, z, G$ and $\greedy$ as in the lemma statement. 
We condition on event $\calE$ happening. Let $\calS$ be a set of $k$ points, and $\tilde \calS \in \cand^k$ that approximates best $\calS$, as given by the definition of $\cand$ (see \cref{def:centroid-set}). This ensures that 
for all points $p$ with $\dist(p, \calS) \leq \frac{8z}{\eps}\cdot \dist(p, \greedy)$ or $\dist(p, \tilde \calS) \leq \frac{8z}{\eps}\cdot \dist(p, \greedy)$ , we have $|\cost(p, \calS) - \cost(p, \tilde \calS)|\leq \frac{\eps}{z \log (z/\eps)}(\cost(p, \calS) + \cost(p, \greedy))$.

Our first step is to deal with points that have $\dist(p, \calS) > \frac{4z}{\eps}\cdot \dist(p, \greedy)$, using \cref{lem:khuge}.
None of the remaining points is huge with respect to $\tilde \calS$: hence, they all are in interesting clusters with respect to $\tilde \calS$. Let $L_{\tilde \calS}$ be this set of cluster: it can be handled with \cref{lem:union-bound}. The remaining of the proof formalizes the argument.

Let $H_{\calS}$ be the set of all clusters that are intersecting with some $I_{\ell,\calS}$ with $\ell > z \log(4z/\eps)$. We also denote $H_\calS$ the points contained in those clusters.
We decompose the cost difference as follows:

\begin{align}
\label{eq:kepsdecompose3}
\left\vert \cost(G,\calS) - \sum_{\mathclap{p\in\Omega\cap G}}\weight(p)\cdot \cost(p,\calS)\right\vert 
\leq & \left\vert \sum_{p \in G \setminus H_\calS} \cost(p, \calS) - \sum_{\mathclap{p \in (G \setminus H_\calS) \cap \coreset}} \weight(p)\cdot \cost(p,\calS)\right\vert \\
\label{eq:kepsdecompose4}
& + \left\vert \sum_{p \in H_{\calS}} \cost(p, \calS) - \sum_{\mathclap{p \in H_{\calS} \cap \coreset}} \weight(p)\cdot \cost(p,\calS)\right\vert 
\end{align}
Since we condition on event $\calE$, the  term~\ref{eq:kepsdecompose4} is $O(\eps)\cdot (\cost(G,\greedy)+\cost(G,\calS))$, using Lemma~\ref{lem:khuge}. Now we take a closer look at term~\ref{eq:kepsdecompose3}. By definition of $\tilde \calS$, it holds for all points $p \in G \setminus H_\calS$ that $|\cost(p, \calS) - \cost(p, \tilde \calS)|\leq \eps(\cost(p, \calS) + \cost(p, \greedy))$. Therefore:
\begin{eqnarray}
\notag
\left\vert \sum_{p \in G \setminus H_\calS} \cost(p, \calS) \right.  & - &  \left.
\sum_{p \in (G \setminus H_\calS) \cap \coreset} \weight(p)\cdot \cost(p, \calS)\right\vert \\
\notag
& \leq & \left\vert \sum_{p \in G \setminus H_\calS} \cost(p, \tilde \calS) \right.  -  \left.
\sum_{p \in (G \setminus H_\calS) \cap \coreset} \weight(p)\cdot \cost(p, \tilde \calS)\right\vert \\
\notag
& + & \eps \left(\cost(G, \calS) + \cost(G, \greedy) + \cost(\coreset, \calS) + \cost(\coreset, \greedy)\right).
\end{eqnarray}

This allows us to focus on bounding the cost difference to solution $\tilde \calS$ instead of $\calS$. 

For the remaining points in $G \setminus H_\calS$, we aim at using \cref{lem:union-bound}: for that, we show that $L_{\tilde \calS} := G \setminus H_\calS$ contains only interesting clusters with respect to $\tilde \calS$. 
Indeed, for any $p \in L_{\tilde \calS}$, we have $|\cost(p, \calS) - \cost(p, \tilde \calS)|\leq \frac{\varepsilon}{z\log z/\varepsilon}(\cost(p, \calS) + \cost(p, \greedy))$ by definition of $\tilde \calS$. Hence,
\begin{align*}
\cost(p, \tilde \calS) &\leq \cost(p, \calS) + \frac{\varepsilon}{z\log z/\varepsilon}(\cost(p, \calS) + \cost(p, \greedy)) \\
&\leq \left((1+\varepsilon)\left(\frac{4\eps}{z}\right)^z +\eps\right)\cost(p, \greedy) \\
&\leq \left(\frac{8\eps}{z}\right)^z\cost(p, \greedy),
\end{align*}
and $p$ is indeed not huge with respect to $\tilde \calS$. Therefore, we can apply \cref{lem:union-bound} to get:
\begin{align*}
\left\vert \sum_{p \in G \setminus H_\calS}  \cost(p, \tilde \calS)  -
\sum_{\mathclap{p \in (G \setminus H_\calS) \cap \coreset}} \weight(p)\cdot \cost(p, \tilde \calS)\right\vert 
 &= \left\vert \sum_{p \in L_{\tilde \calS}} \cost(p, \tilde \calS)  - 
\sum_{p \in L_{\tilde \calS} \cap \coreset} \weight(p)\cdot \cost(p, \tilde \calS)\right\vert \\
&\leq \eps(\cost(G, \greedy) + \cost(L_{\tilde \calS}, \tilde \calS))\\
&= O(\eps)(\cost(L_{\tilde \calS}, \calS) + \cost(G, \greedy))
\end{align*}

Combining all the equations yields
\begin{eqnarray*}
\left\vert \cost(G,\calS) - \cost(\coreset, \calS)\right\vert \leq   O(\eps)\cdot \left(\cost(G,\greedy) + \cost(G,\calS) + \cost(\coreset,\greedy) + \cost(\coreset,\calS)\right).
\end{eqnarray*}

To conclude the proof, it only remains to remove the term $\cost(\coreset, \greedy) + \cost(\coreset, \calS)$ from the right-hand-side. Applying this inequality for $\calS = \greedy$ and using $\cost(\coreset, \greedy) \leq \cost(G, \greedy) + \left\vert \cost(G,\greedy) - \cost(\coreset, \greedy)\right\vert$ yields first 
\begin{eqnarray*}
\cost(\coreset, \greedy) =  O(1)\cdot \cost(G,\greedy).
\end{eqnarray*}
Similarly, we can use $\cost(\coreset, \calS) \leq \cost(G, \calS) + \left\vert \cost(G,\calS) - \cost(\coreset, \calS)\right\vert$ to get
\begin{eqnarray*}
\cost(\coreset, \calS) =  O(1)\cdot \big(\cost(G,\calS) + \cost(G, \greedy)\big).
\end{eqnarray*}

Hence, we finally conclude:
\begin{eqnarray*}
\left\vert \cost(G,\calS) - \cost(\coreset, \calS)\right\vert \leq  O(\eps)\cdot \left(\cost(G,\greedy) + \cost(G,\calS)\right).
\end{eqnarray*}

\bigskip 
The probability now follows from taking a union-bound over the failure probability of \cref{lem:ksize} and \cref{lem:union-bound}. Specifically
\begin{align*}
&1-\exp\left(k\log |\cand| - 2^{O(z\log z)}\cdot \frac{\min(\eps^2,\eps^{z})}{\log^2 1/\eps}\cdot \delta\right) -  k\cdot z^2\log^2(z/\varepsilon) \exp\left(-O(1)\frac{\varepsilon^2}{k}\delta\right) \\
\end{align*}

In a given cluster $\tilde C_i$ induced by $\greedy$ on $G$, the complexity of the algorithm is $O(|\tilde C_i|)$: it is both the cost of computing the scaling factor $\weight(p)$ for all $p \in \tilde C_i$, and the cost of sampling $\delta$ points using reservoir sampling~\cite{Vitter85}. Hence, the cost of this algorithm for all clusters is $O(|G|)$.
\end{proof}

\section{Sampling from Outer Rings}\label{sec:sampleout}

In this section we prove \cref{lem:coreset-outer}:

\coresetouter*

Recall that the \texttt{SensitivitySample} procedure merely picks $\delta$ points $p$ with probability $\frac{\cost(p,\greedy)}{\cost(G, \greedy)}$. Each of the $\delta$ sampled points has a weight $\frac{\cost(G, \greedy)}{\delta\cdot \cost(p,\greedy)}$. The procedure runs in time $O(|G|)$.

The main steps of the proof are as follows. 
\begin{itemize}
\item First, we consider the cost of the points in $G$ such that $\cost(p,\calS)$ is at most $4^z\cdot\cost(p,\greedy)$.
For this case, we can (almost) directly apply Bernstein's inequality as in the previous section. 
\item Second, we consider the cost of the points in $G$ such that $\cost(p,\calS)> 4^z\cdot\cost(p,\greedy)$. Denote this set by $G_{far, \calS}$.	  
  For these points, we can afford to replace their cost in $\calS$ with the distance to the closest center $c \in \greedy$ plus the distance from $c$ to
  the closest center in $\calS$. The latter part can be charged to the remaining points of the cluster from the original dataset (i.e., not restricted to group $G$) which are in much larger number and already
  paying a similar value in $\calS$. 
\end{itemize}

We first analyse the points not in $G_{far,\calS}$. For that, we will go through the approximate centroid set $\cand$ to afford a union-bound: we show the following lemma. 
\begin{lemma}\label{lem:outerclose}
Let $\tilde \calS \in \cand^k$, and define $ G_{close, \tilde \calS}$ to be the set of points of $G$ such that $\cost(p, \tilde \calS)\leq 5^z\cdot\cost(p,\greedy)$. It holds with probability 
\[1-\exp\left( - 2^{-O(z)} \left(\frac{\varepsilon}{\log 1/\varepsilon}\right)^2 \delta\right)\]
that
\[|\cost(G_{close, \tilde \calS}, \tilde \calS) - \cost(\coreset \cap G_{close, \tilde \calS}, \tilde \calS)| \leq \frac{\varepsilon}{z\log z/\varepsilon} \left(\cost(G,\greedy) + \cost(G_{close, \tilde \calS}, \tilde \calS)\right)\]
\end{lemma}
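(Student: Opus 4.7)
The plan is a direct application of Bernstein's inequality (Theorem~\ref{thm:bernstein}) to the importance sampling estimator, exploiting the ``close'' restriction to control both the variance and the almost-sure upper bound. Fix $\tilde \calS \in \cand^k$. Writing $\coreset = \{p_1,\dots,p_{\delta}\}$ for the sampled points, I will define i.i.d.\ variables
\[X_i = \begin{cases} \weight(p_i) \cdot \cost(p_i, \tilde \calS) & \text{if } p_i \in G_{close, \tilde \calS} \\ 0 & \text{otherwise} \end{cases}\]
where $\weight(p) = \frac{\cost(G,\greedy)}{\delta \cdot \cost(p,\greedy)}$, and note that $\cost(\coreset \cap G_{close,\tilde \calS},\tilde \calS) = \sum_{i=1}^{\delta} X_i$. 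A direct calculation using the sampling probabilities $\Pr[p_i = p] = \cost(p,\greedy)/\cost(G,\greedy)$ yields $\E\!\left[\sum_i X_i\right] = \cost(G_{close,\tilde \calS}, \tilde \calS)$, so the estimator is unbiased.

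Next I will bound the two quantities required by Bernstein. The defining property of $G_{close,\tilde \calS}$, namely $\cost(p,\tilde \calS) \leq 5^z \cost(p,\greedy)$, gives the almost-sure upper bound
\[M := \max_i X_i \leq \frac{5^z \cost(G,\greedy)}{\delta},\]
and an analogous computation for the second moment gives
\[\E[X_i^2] = \sum_{p \in G_{close,\tilde \calS}} \frac{\cost(p,\greedy)}{\cost(G,\greedy)} \cdot \frac{\cost^2(G,\greedy) \cdot \cost^2(p,\tilde \calS)}{\delta^2 \cost^2(p,\greedy)} \leq \frac{5^z \cost(G,\greedy) \cdot \cost(G_{close,\tilde \calS}, \tilde \calS)}{\delta^2},\]
where in the last step I again use $\cost(p,\tilde \calS) \leq 5^z \cost(p,\greedy)$ to cancel one factor of $\cost(p,\tilde \calS)/\cost(p,\greedy)$.

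With these in hand, I will plug $t = \frac{\varepsilon}{z \log(z/\varepsilon)}\bigl(\cost(G,\greedy) + \cost(G_{close,\tilde \calS},\tilde \calS)\bigr)$ into Bernstein's inequality. Using the AM--GM style bound $t^2 \geq \frac{\varepsilon^2}{z^2 \log^2(z/\varepsilon)} \cost(G,\greedy) \cdot \cost(G_{close,\tilde \calS}, \tilde \calS)$ against the variance term, and $t \leq O(1)(\cost(G,\greedy) + \cost(G_{close,\tilde \calS},\tilde \calS))$ against the $Mt/3$ term, both ratios in the exponent become of order $\varepsilon^2 \delta / (2^{O(z)} \log^2(1/\varepsilon))$, which yields the claimed failure probability.

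The calculation is essentially routine; the only point that requires care is that the ``close'' hypothesis must be used twice, once in each of the two bounds above, to avoid paying a factor of $\varepsilon^{-z}$ as happens for the interesting-range estimator in Section~\ref{sec:interesting}. This is the reason the bound here is genuinely $\varepsilon^{-2}$ rather than $\varepsilon^{-\max(2,z)}$, and it is what makes \texttt{SensitivitySample} usable for the outer groups where we cannot afford the $\varepsilon^{-z}$ loss.
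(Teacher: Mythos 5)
Your proposal is correct and follows essentially the same route as the paper's proof: the same importance-sampling estimator, the same use of the hypothesis $\cost(p,\tilde\calS)\leq 5^z\cost(p,\greedy)$ to bound both the second moment by $\frac{5^z}{\delta^2}\cost(G,\greedy)\cost(G_{close,\tilde\calS},\tilde\calS)$ and the almost-sure bound by $\frac{5^z}{\delta}\cost(G,\greedy)$, followed by Bernstein with the same choice of $t$. Your closing remark correctly identifies why the bound avoids the $\varepsilon^{-z}$ loss of the interesting-range analysis.
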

\begin{proof}
We aim to use Bernstein's Inequality. Let $E_{close, \tilde \calS}= \sum_{i=1}^{\delta} X_i$, where $X_i = \frac{\cost(G,\greedy)}{\delta\cdot \cost(p,\greedy)} \cdot \cost(p, \tilde \calS)$ if the $i$-th sampled point is $p\in G_{close, \tilde \calS}$ and $X_i=0$ the $i$-th sampled point is $p\notin G_{close, \tilde \calS}$. Recall that the probability that $p$ is the $i$-th sampled point is $\frac{\cost(p,\greedy)}{\cost(G,\greedy)}$.
We consider the second moment $\text{E}[X_i^2]$:

\begin{eqnarray*}
E[X_i^2]  &=& \sum_{p\in G_{close,\tilde \calS}} \left(\frac{\cost(G,\greedy)}{\delta\cdot \cost(p,\greedy)} \cdot \cost(p,\tilde \calS)\right)^2 \cdot \mathbb{P}[p \in \coreset] \\
&=& \cost(G,\greedy)\cdot\sum_{p\in G_{close,\tilde \calS}} \frac{\cost(p, \tilde \calS)}{\delta^2\cdot \cost(p,\greedy)} \cdot \cost(p,\tilde \calS) \\
&\leq & \cost(G,\greedy)\cdot\sum_{p\in G_{close,\tilde \calS}} \frac{5^z}{\delta^2} \cdot \cost(p, \tilde \calS) \\
&\leq & \frac{5^z}{\delta^2} \cdot \cost(G,\greedy) \cdot \cost(G_{close, \tilde \calS}, \tilde \calS)
\end{eqnarray*}
	
Furthermore, we have the following upper bound for the maximum value any of the $X_i$:
\begin{equation}
X_i\leq M:= \max_{p\in G_{close, \tilde \calS}} \frac{\cost(G,\greedy)}{\delta\cdot \cost(p,\greedy)} \cdot \cost(p, \tilde \calS) \leq \frac{5^z}{\delta}\cdot \cost(G,\greedy).
\end{equation}

Combining both bounds with Bernstein's inequality now yields
\begin{eqnarray*}
& &\mathbb{P}[|E_{close, \tilde \calS} - \mathbb{E}[E_{close, \tilde \calS}]| \leq \frac{\varepsilon}{z\log z/\varepsilon}\cdot \left(\cost(G,\greedy) + \cost(G_{close, \tilde \calS},\tilde \calS)\right)] \\
&\leq & \exp\left(-\frac{\left(\frac{\varepsilon}{z\log z/\varepsilon}\right)^2 \cdot \left(\cost(G,\greedy) + \cost(G_{close, \tilde \calS},\tilde \calS)\right)^2}{2 \sum_{i=1}^\delta Var[X_i] + \frac{1}{3} M \cdot \varepsilon\cdot \left(\cost(G,\greedy) + \cost(G_{close, \tilde \calS},\tilde \calS) \right) }\right) \\
&\leq &\exp\left(-\frac{\left(\frac{\varepsilon}{z\log z/\varepsilon}\right)^2 \cdot \delta \cdot \left(\cost(G,\greedy) + \cost(G_{close, \tilde \calS}, \tilde \calS)\right)^2}{2 4^z  \cdot \cost(G,\greedy) \cdot \cost(G_{close, \tilde \calS},\tilde \calS) + 4^z\cdot \cost(G,\greedy) \cdot \varepsilon\cdot \left(\cost(G,\greedy) + \cost(G_{close, \tilde \calS},\tilde \calS) \right) }\right) \\
&\leq & \exp\left(-2^{-O(z)}\cdot \left(\frac{\varepsilon}{z\log z/\varepsilon}\right)^2 \cdot \delta \right) 
\end{eqnarray*}

Noting that $\cost(\coreset \cap G_{close, \tilde \calS}, \tilde \calS) = \mathbb{E}[E_{close, \tilde \calS}]$, concludes: we have with probability

$1-\exp\left(-2^{-O(z)}\cdot \left(\frac{\varepsilon}{\log 1/\varepsilon}\right)^2 \cdot \delta \right)$ that: 
\begin{equation*}
|\cost(G_{close, \tilde \calS}, \tilde \calS) - \cost(\coreset \cap G_{close, \tilde \calS}, \tilde \calS)| \leq \frac{\varepsilon}{z\log z/\varepsilon}\cdot \left(\cost(G,\greedy) + \cost(G_{close, \tilde \calS}, \tilde \calS)\right)
\end{equation*}
\end{proof}

Now we turn our attention to $G_{far, \calS}$. For this, we analyse the following event  $\mathcal{E}_{far}$, similar to $\calE$:
For all cluster $C$ of solution $\greedy$ such that $C \cap G\neq \emptyset$  
\begin{equation*}
\sum_{p\in C\cap G\cap \coreset} \frac{\cost(G,\greedy)}{\delta\cdot\cost(p,\greedy)} \cost(p,\greedy) = (1\pm \varepsilon)\cdot \cost(C\cap G,\greedy)
\end{equation*}

\begin{lemma}
\label{lem:eventEFar}
Event $\calE_{far}$ happens with probability at least 
\[1 - k\exp\left(\frac{\eps^2}{6k}\cdot \delta\right).\qedhere\]
\end{lemma}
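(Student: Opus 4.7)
The proof essentially mirrors \cref{lem:ksize} applied to the sampling distribution of \texttt{SensitivitySample}. First, I would simplify the expression defining $\calE_{far}$: since the weight of a sampled point $p$ is $\frac{\cost(G,\greedy)}{\delta\cdot\cost(p,\greedy)}$, the sum in the event collapses to
\begin{equation*}
\sum_{p\in C\cap G\cap \coreset} \frac{\cost(G,\greedy)}{\delta\cdot\cost(p,\greedy)}\cdot\cost(p,\greedy) = \frac{\cost(G,\greedy)}{\delta}\cdot |C\cap G\cap \coreset|.
\end{equation*}
So $\calE_{far}$ is equivalent to the statement that for every cluster $C$ intersecting $G$,
\[ |C\cap G\cap \coreset| \;=\; (1\pm\varepsilon)\cdot \mu_C, \quad\text{where } \mu_C := \frac{\delta\cdot\cost(C\cap G,\greedy)}{\cost(G,\greedy)}. \]
Note that $\mu_C = \mathbb{E}[|C\cap G\cap \coreset|]$, since the $i$-th sample lands in $C\cap G$ with probability $\cost(C\cap G,\greedy)/\cost(G,\greedy)$.

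Second, I would lower-bound $\mu_C$ by $\delta/(2k)$ using the definition of the outer groups from \cref{sec:def-groups}. If $G = \outergroup{}_b$ for some $0<b<z\log(4z/\varepsilon)$, then every cluster $C_i$ contributing to $G$ has $\cost(C_i\cap G,\greedy) = \cost(\out(C_i),\greedy)$ lying in a window of multiplicative width $2$, so $\cost(C\cap G,\greedy)\geq \cost(G,\greedy)/(2k)$, giving $\mu_C\geq \delta/(2k)$. If $G = \outergroup{}_{\max}$, then every contributing cluster satisfies $\cost(C\cap G,\greedy)\geq (\varepsilon/4z)^z\cdot 2^{z\log(4z/\varepsilon)}\cdot \cost(\out^\greedy,\greedy)/k \geq \cost(\out^\greedy,\greedy)/k \geq \cost(G,\greedy)/k$, again giving $\mu_C\geq \delta/k \geq \delta/(2k)$.

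Third, writing $|C\cap G\cap \coreset|$ as a sum of $\delta$ i.i.d.\ indicators and applying the standard multiplicative Chernoff bound yields
\[ \Pr\bigl[\,|\,|C\cap G\cap \coreset| - \mu_C\,| > \varepsilon\,\mu_C\,\bigr] \;\leq\; 2\exp\!\left(-\frac{\varepsilon^2\mu_C}{3}\right) \;\leq\; 2\exp\!\left(-\frac{\varepsilon^2\,\delta}{6k}\right). \]
Finally, a union bound over the (at most $k$) clusters $C$ with $C\cap G\neq\emptyset$ gives the claimed probability $1 - k\exp(-\varepsilon^2\delta/(6k))$, absorbing the factor of $2$ into the constant in the exponent.

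There is no real obstacle in this proof; the only point requiring mild care is step two, namely separating the analysis of the $\outergroup{}_b$ buckets from $\outergroup{}_{\max}$ to extract the uniform lower bound $\mu_C\geq \delta/(2k)$ from the two different defining inequalities.
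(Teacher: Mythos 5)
Your proof is correct and follows essentially the same route as the paper: the crucial ingredient in both is the lower bound $\cost(C\cap G,\greedy)\geq\cost(G,\greedy)/(2k)$, obtained by the same two-case analysis ($\outergroup{}_b$ versus $\outergroup{}_{\max}$), followed by a concentration bound on the per-cluster sample mass and a union bound over clusters. Your observation that the weights cancel so the event reduces to concentration of the raw count $|C\cap G\cap\coreset|$ (letting you use a plain Chernoff bound instead of the paper's Bernstein computation of second moments) is a small but genuine simplification of the same argument.
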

\begin{proof}
We aim to use Bernstein's Inequality. Let $E_{C}= \sum_{i=1}^{\delta} X_i$, where $X_i = \frac{\cost(G,\greedy)}{\delta\cdot \cost(p,\greedy)} \cdot \cost(p,\greedy)$ if the $i$-th sampled point $p\in C$ and $X_i=0$ the $i$-th sampled point $p\notin C$. Recall that the probability that the $i$-th sampled point is $p$ is $\frac{\cost(p,\greedy)}{\cost(G,\greedy)}$.
We consider the second moment $\text{E}[X_i^2]$:
\begin{eqnarray*}
E[X_i^2]  &=& \sum_{p\in C\cap G} \left(\frac{\cost(G,\greedy)}{\delta\cdot \cost(p,\greedy)} \cdot \cost(p,\greedy)\right)^2 \cdot \mathbb{P}[p \text{ is the } i\text{-th sampled point}] \\
&=& \frac{\cost(G,\greedy)}{\delta^2} \cdot\sum_{p\in C\cap G}\cost(p,\greedy)  \\
&=& \frac{\cost(G,\greedy)}{\delta^2} \cost(C\cap G,\greedy)  \\
&\leq & \frac{2k}{\delta^2}\cdot \cost^2(C\cap G,\greedy)
\end{eqnarray*}
where the final inequality follows since every cluster has cost at least half the average. Indeed, either the group considered is $ \outergroup{}_{\max}$, and then any cluster verifies $\cost(C \cap G) \geq \frac{1}{k}\cost(\out^\greedy, \greedy) \geq \frac{1}{k}\cost(\outergroup{}_{\max},\greedy)$, or all the clusters in $\outergroup{}_{b}$ have an equal cost, up to a factor of $2$ -- hence none cost less than half of the average.{\setlength{\emergencystretch}{2.5em}\par}

Furthermore, we have by the same argument the following upper bound for the maximum value any of the $X_i$:
\begin{equation*}
X_i\leq M:= \max_{p\in C\cap G} \frac{\cost(G,\greedy)}{\delta\cdot \cost(p,\greedy)} \cdot \cost(p,\greedy) \leq \frac{2k}{ \delta}\cdot \cost(C \cap G,\greedy).
\end{equation*}

Combining both bounds with Bernstein's inequality now yields 
\begin{eqnarray*}
& &\mathbb{P}[|\cost(C\cap G\cap \coreset,\greedy) - \cost(C\cap G,\greedy)| \leq \varepsilon\cdot \cost(C\cap G,\greedy)] \\
&\leq & \exp\left(-\frac{\varepsilon^2 \cdot \cost^2(C\cap G,\greedy)}{2 \sum_{i=1}^\delta Var[X_i] + \frac{1}{3} M \cdot \varepsilon\cdot \cost(C\cap G,\greedy) }\right) \leq \exp\left(-\frac{\varepsilon^2}{6k} \cdot \delta\right)
\end{eqnarray*}

Reformulating, we now have  

\begin{equation*}
\sum_{p\in C\cap G\cap \coreset} \frac{\cost(G,\greedy)}{\delta\cdot\cost(p,\greedy)} \cost(p,\greedy) = (1\pm \varepsilon)\cdot \cost(C\cap G,\greedy)
\end{equation*}
\end{proof}

\begin{lemma}
\label{lem:outerfar}
Let $(X, \dist)$ be a metric space, $k, z$ be two positive integers. Suppose $G$ is either a group $\outergroup{}_{b}$ or $\outergroup{\greedy, \seeded}_{\max}$. Let $G_{far, \calS}\subset G$ be the set of all clients such that $\cost(p,\calS)> 4^z \cdot \cost(p,\greedy)$. Condition on event $\calE_{far}$.

Then, the set $\coreset$ of size $\delta$ constructed by \texttt{SensitivitySample} verifies the following.
  It holds for all sets $\calS$ of $k$ centers that:
\[\cost(G_{far, \calS}, \calS) + \cost(\coreset\cap G_{far, \calS},\calS) \leq \frac{2\eps}{z \log z/\eps} \cdot \cost(\calS).\qedhere\]
\end{lemma}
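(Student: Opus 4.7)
The plan is to exploit that points in $G_{far, \calS}$ are so far from $\calS$ relative to their cost in $\greedy$ that their distance to $\calS$ is essentially governed by the distance from their $\greedy$-center to $\calS$, and then perform a charging argument using the fact that only a tiny fraction of each cluster lies in the outer ring (hence in $G_{far, \calS}$), while the majority of the cluster must also pay a comparable amount in $\calS$.

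First I would establish two structural inequalities. Fix a cluster $C$ of $\greedy$ with center $c$ and let $p \in G_{far,\calS} \cap C$. Since $\cost(p,\calS) > 4^z \cost(p,\greedy)$, we have $\dist(p,\calS) > 4 \dist(p,c)$, so by the triangle inequality $\dist(c,\calS) \geq \dist(p,\calS) - \dist(p,c) \geq 3\dist(p,c)$, and conversely $\dist(p,\calS) \leq \frac{4}{3}\dist(c,\calS)$, giving $\cost(p,\calS) \leq 2^z \cost(c,\calS)$. Next, since $G \subseteq \out^\greedy$ the point $p$ lies in $\out(C)$, hence $\dist(p,c) \geq (z/\eps)^{2}\Delta_C^{1/z}$, while any $q \in C \setminus \out(C)$ satisfies $\dist(q,c) \leq (z/\eps)^{2}\Delta_C^{1/z} \leq \dist(p,c)$. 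So $\dist(c,\calS) \geq 3\dist(q,c)$ and therefore $\dist(q,\calS) \geq \frac{2}{3}\dist(c,\calS)$, i.e., $\cost(c,\calS) \leq (3/2)^z \cost(q,\calS)$.

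With these two inequalities I can directly bound $\cost(G_{far,\calS},\calS)$. For each cluster, $\cost(G_{far,\calS} \cap C,\calS) \leq 2^z |G_{far,\calS}\cap C| \cdot \cost(c,\calS) \leq 3^z \frac{|G_{far,\calS}\cap C|}{|C \setminus \out(C)|}\cost(C \setminus \out(C),\calS)$. Markov's inequality applied to the definition of $\out(C)$ yields $|\out(C)| \leq (\eps/z)^{2z}|C|$ and thus $|C\setminus\out(C)| \geq |C|/2$, so the ratio is at most $2(\eps/z)^{2z}$. Summing over $C$ gives $\cost(G_{far,\calS},\calS) \leq O(3^z (\eps/z)^{2z}) \cost(\calS)$, which is easily smaller than $\frac{\eps}{z\log(z/\eps)}\cost(\calS)$ for $\eps$ sufficiently small.

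For the coreset side, the same bound $\cost(p,\calS) \leq 2^z \cost(c,\calS)$ gives $\cost(\coreset \cap G_{far,\calS}\cap C,\calS) \leq 2^z \cost(c,\calS) \sum_{p \in \coreset \cap G_{far,\calS} \cap C} \frac{\cost(G,\greedy)}{\delta \cost(p,\greedy)}$. Using $\cost(p,\greedy) \geq (z/\eps)^{2z}\Delta_C$ for $p\in\out(C)$ to bound $1/\cost(p,\greedy)$ from above, and then invoking event $\calE_{far}$ to convert $|C\cap G\cap \coreset|\cdot \frac{\cost(G,\greedy)}{\delta}$ into $(1\pm\eps)\cost(C\cap G,\greedy) \leq (1+\eps)|C|\Delta_C$, the cluster-wise bound becomes essentially the same as in Step 3 up to a factor of $(1+\eps)$. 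Summing over clusters and adding the two bounds produces the desired $\frac{2\eps}{z\log(z/\eps)}\cost(\calS)$. The main subtlety is Step 4, where one must be careful to use the outer-ring lower bound on $\cost(p,\greedy)$ uniformly to convert the random weighted sum into a sum controlled by event $\calE_{far}$; aside from this, the argument is an orchestration of triangle inequalities and a single application of Markov.
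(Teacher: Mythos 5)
Your proposal is correct and follows essentially the same strategy as the paper's proof: lower-bound $\dist(c,\calS)$ via a far point, charge the few outer-ring points (Markov) to the many non-outer points of the cluster which each pay $\Omega(\cost(c,\calS))$, and use event $\calE_{far}$ to control the total coreset weight landing in $C\cap G$. Your execution is in fact slightly cleaner than the paper's, since working with exact triangle inequalities on distances (giving $\cost(p,\calS)\leq 2^z\cost(c,\calS)$ with no additive term) lets you skip the paper's extra step of charging the residual $\left(\frac{2z+\eps}{\eps}\right)^{z-1}\cost(G_{far,\calS}\cap C,\greedy)$ term back to $\cost(C,\calS)$.
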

\begin{proof}
Our aim will be to show that $\max\left(\cost(G_{far, \calS}, \calS), \cost(\coreset\cap G_{far, \calS},\calS)\right) \leq \frac{\eps}{z \log z/\eps} \cdot \cost(\calS)$. It is key here that we compare to the cost of the full input in $\calS$, and not simply the cost of the group $G$.{\setlength{\emergencystretch}{2.5em}\par}

First, we fix a cluster $C \in \greedy$, and show that the total contribution of points of $C\cap G_{far, \calS}$ is very cheap 
compared to $\cost(C,\calS)$, i.e. that $\cost(G_{far, \calS}\cap C,\calS) \leq \frac{\eps}{z \log z/\eps}\cdot \cost(C, \calS)$.

For this, fix a point $p \in G_{far, \calS} \cap C$, and let $c$ be the center of cluster $C$.

Let $C_{close}$ be the points of $C$ with  cost at most $\left(\frac{z}{\varepsilon}\right)^z \cdot \frac{\cost(C,\greedy)}{|C|}$. Due to Markov's inequality, most of $C$'s points are in $C_{close}$: $|C_{close}| \geq  (1-\varepsilon/z)\cdot |C|$.

Using that the point $p$ is both in the outer ring of $C$ and in $G_{far, \calS}$, we can lower bound the distance from $c$ to $\calS$ as follows. Triangle inequality and $\cost(p,\calS) > 4^z \cdot \cost(p,c)$, yield $\dist(c,\calS) \geq \dist(p,\calS)  - \dist(p,c) \geq 4 \dist(p,c) - \dist(p,c) \geq 3\dist(p,c)$. Since $p$ is from an outer group, it verifies $\cost(p, c)\geq \left(\frac{z}{\varepsilon}\right)^{2z} \cdot \frac{\cost(C,c)}{|C|}$. Combining those two observations yields:
$\cost(c,\calS) \geq 3^z \cost(p, \greedy) \geq 3^z \cdot \left(\frac{z}{\varepsilon}\right)^{2z} \cdot \frac{\cost(C,c)}{|C|}$.

 Using this and \cref{lem:weaktri}, we now have for any $q\in C_{close}$:
\begin{eqnarray}
\nonumber
\cost(c,\calS) &\leq & (1+\varepsilon/(2z))^{z-1}\cdot \cost(q,\calS) + \left(\frac{2z+\varepsilon}{\varepsilon}\right)^{z-1}\cdot \cost(q,c) \\
\nonumber
&\leq & (1+\varepsilon) \cost(q,\calS) + \left(\frac{2z+\varepsilon}{\varepsilon}\right)^{z-1} \cdot \left(\frac{z}{\varepsilon}\right)^{z} \cdot \frac{\cost(C,c)}{|C|}  \\
\nonumber
&\leq & (1+\varepsilon) \cost(q,\calS) + 3^{z-1}\cdot \left(\frac{z}{\varepsilon}\right)^{2z-1} \cdot \frac{\cost(C,c)}{|C|} \\
\nonumber
&\leq & (1+\varepsilon) \cost(q,\calS) + \frac{\eps}{3z} \cdot \cost(c,\calS) \\
\nonumber 
\Rightarrow \cost(q,\calS) & \geq &  \frac{1-\varepsilon}{1+\varepsilon} \cdot \cost(c,\calS) \\
\label{eq:outerfar2}
\Rightarrow \cost(C,\calS) &\geq & \cost(C_{close},\calS) \geq  |C_{close}|\cdot \frac{1-\varepsilon}{1+\varepsilon} \cdot cost(c,\calS).
\end{eqnarray}

Using additionally that $|C_{close}| \geq (1-\frac{\eps}{z})\cdot |C|$ and $\cost(c,\calS) \geq 3^z \cdot \left(\frac{z}{\varepsilon}\right)^{2z} \cdot \frac{\cost(C,c)}{|C|}$, we get:
\begin{equation}
\label{eq:outerfar3}
\cost(C,\calS) \geq |C_{close}| \cdot \frac{1-\varepsilon}{1+\varepsilon} \cdot 3^z \cdot \left(\frac{z}{\varepsilon}\right)^{2z}  \cdot \frac{\cost(C,\greedy)}{|C|} \geq 3^z \cdot \left(\frac{z}{\varepsilon}\right)^{2z-1}  \cdot \cost(C,\greedy).
\end{equation}

We are now equipped to show the first part of the lemma, namely $\cost(G_{far, \calS},\calS) \leq \frac{\eps}{z \log z/\eps}\cdot \cost(\calS)$.

Since $G \cap C$ contains only points from the outer ring of $C$, with distance at least $(z/\eps)^2$ times the average, Markov's inequality implies that $|G \cap C|\leq \left(\frac{\eps}{z}\right)^2 \cdot |C|$. Hence, $|G_{far, \calS}\cap C| \leq  \frac{1}{1-\eps/z} \cdot \left(\frac{\varepsilon}{z}\right)^{2} \cdot |C_{close}|$.
This yields

\begin{eqnarray}
\nonumber
& &\cost(G_{far, \calS}\cap C,\calS) = \sum_{p\in G_{far, \calS}\cap C} \cost(p,\calS) \\
\nonumber
(Lem. \ref{lem:weaktri}) &\leq & \sum_{p\in G_{far, \calS}\cap C} (1+\varepsilon/2z)^{z-1} \cost(c,\calS) + \left(\frac{2z+\varepsilon}{\varepsilon}\right)^{z-1}\cdot \cost(p,c)  \\
\nonumber
&\leq & |G_{far, \calS}\cap C| \cdot (1+\varepsilon)\cdot  \cost(c,\calS)  \\
\label{eq:outerfar4}
& & +  \left(\frac{2z+\varepsilon}{\varepsilon}\right)^{z-1}  \cdot \cost(G_{far, \calS}\cap C,\greedy) \\
\nonumber
&\leq & \frac{1+\eps}{1-\eps/z} \cdot  \left(\frac{\varepsilon}{z}\right)^{2} \cdot |C_{close}|  \cost(c,\calS)  +  \left(\frac{2z+\varepsilon}{\varepsilon}\right)^{z-1}   \cost(G_{far, \calS}\cap C,\greedy) \\
\nonumber
(Eq.~\ref{eq:outerfar2})&\leq & \frac{(1+\varepsilon)^2}{(1-\varepsilon)^2}\cdot \left(\frac{\varepsilon}{z}\right)^{2}\cdot \cost(C,\calS)  + \left(\frac{2z+\varepsilon}{\varepsilon}\right)^{z-1}  \cdot \cost(G_{far, \calS}\cap C,\greedy) \\
\nonumber
(Eq.~\ref{eq:outerfar3})&\leq & \frac{(1+\varepsilon)^2}{(1-\varepsilon)^2}\cdot \left(\frac{\varepsilon}{z}\right)^{2}\cdot \cost(C,\calS)  \\
& & +  \left(\frac{2z+\varepsilon}{\varepsilon}\right)^{z-1}  \cdot   \frac{1}{3^z}\cdot \left(\frac{\varepsilon}{z}\right)^{2z-1}  \cdot \cost(G_{far, \calS}\cap C,\calS) \\
\label{eq:outerfar5}
&\leq & \frac{\varepsilon}{z\log z/\varepsilon} \cdot \cost(C,\calS)  
\end{eqnarray}

Summing this up over all clusters $C$, we therefore have

\begin{equation}
\cost(G_{far, \calS},\calS) \leq \frac{\eps}{z \log z/\eps}\cdot \cost(\calS)
\end{equation}

What is left to show is that, in the coreset, the weighted cost of the points in $G_{far, \calS} \cap \coreset$ can be bounded similarly.
For that, we use event $\calE_{far}$ to show that $\sum_{p\in G_{far, \calS}\cap C\cap \coreset} \frac{\cost(G,\greedy_0)}{\cost(p,\greedy_0)} \approx |G_{far, \calS}\cap C|$ 

In particular, event $\calE_{far}$ implies that with probability $1-k'\cdot \exp\left(-O(1)\cdot \frac{\varepsilon^2}{k'}\cdot \delta\right)$ for all clusters $C$ induced by $\greedy$

\begin{eqnarray}
\nonumber
\sum_{p\in C\cap G\cap \coreset} \frac{\cost(G,\greedy)}{\delta \cdot \cost(p,\greedy)} \cdot \left(\frac{2z}{\varepsilon}\right)^{2z} \cdot \frac{\cost(C,\greedy)}{|C|} &\leq & \sum_{p\in C\cap G\cap \coreset} \frac{\cost(G,\greedy)}{\delta\cdot\cost(p,\greedy)} \cost(p,\greedy) \\
\nonumber
& \leq &(1+\varepsilon) \cdot \cost(C\cap G,\greedy) 
\end{eqnarray}
\begin{eqnarray}
\Rightarrow \sum_{p\in C\cap G\cap\coreset} \frac{\cost(G,\greedy)}{\delta\cdot\cost(p,\greedy)} &\leq & (1+\varepsilon)\cdot \left(\frac{\varepsilon}{2z}\right)^{2z} \cdot |C| \frac{\cost(C\cap G,\greedy)}{\cost(C,\greedy)} \\
\label{eq:outerfar7}
& \leq  & (1+\varepsilon)\cdot \left(\frac{\varepsilon}{2z}\right)^{2z} \cdot |C|
\end{eqnarray}

Therefore, we have

\begin{eqnarray}
\nonumber
& &\cost(G_{far, \calS}\cap \coreset\cap C,\calS) \\
\nonumber
& =& \sum_{p\in G_{far, \calS}\cap C} \frac{\cost(G,\greedy)}{\delta\cdot\cost(p,\greedy)}\cdot \cost(p,\calS) \\
\nonumber
(Lem.~\ref{lem:weaktri}) &\leq & \sum_{p\in G_{far, \calS}\cap \coreset\cap C } \frac{\cost(G,\greedy)}{\delta\cdot\cost(p,\greedy)}\cdot \left(\left(1+\frac{\eps}{2z}\right)^{z-1} \cost(c,\calS) \right.\\
\nonumber
& & \left. \phantom{xxxxxxxxxxxxxxxxxxxxx}+ \left(\frac{2z+\varepsilon}{\varepsilon}\right)^{z-1}\cdot \cost(p,c)\right)  \\
\nonumber
&\leq & (1+\varepsilon)\cdot  \cost(c,\calS) \cdot \sum_{p\in G_{far, \calS}\cap \coreset \cap C} \frac{\cost(G,\greedy)}{\delta\cdot\cost(p,\greedy)} \\
\nonumber
(\calE_{far})& &+  \left(\frac{2z+\varepsilon}{\varepsilon}\right)^{z-1}  \cdot (1+\varepsilon) \cdot \cost(C\cap G,\greedy) \\
\nonumber
(Eq.~\ref{eq:outerfar7})&\leq & (1+\varepsilon)^2  \cost(c,\calS) \cdot \left(\frac{\varepsilon}{2z}\right)^{2z} \cdot |C| +  \left(\frac{2z+\varepsilon}{\varepsilon}\right)^{z-1}  \cdot (1+\varepsilon) \cdot \cost(C\cap G,\greedy) \\
\label{eq:outerfar8}
&\leq & (1+\varepsilon)^2\cdot \left(\frac{\varepsilon}{2z}\right)^{2z} \cdot |C| \cdot \cost(c,\calS)  +  \left(\frac{2z+\varepsilon}{\varepsilon}\right)^{z-1}  \cdot \cost(C,\greedy) \\
\nonumber
&\leq & \frac{\varepsilon}{z\log z/\varepsilon}\cdot \cost(C,\calS)  
\end{eqnarray}

where the steps following Equation~\ref{eq:outerfar8} are identical to those used to derive Equation~\ref{eq:outerfar5} from Equation~\ref{eq:outerfar4}.
Again, summing over all clusters now yields
$$ \cost(G_{far, \calS} \cap \coreset,\calS) \leq \frac{\eps}{z \log z/\eps}\cdot \cost(\calS),$$
which yields the claim.
\end{proof}

\paragraph{Combining far and close to show \cref{lem:coreset-outer}}

The overall proof follows from those lemmas.
\begin{proof}[Proof of \cref{lem:coreset-outer}]
First, we condition on event $\calE_{far}$, and on the success of \cref{lem:outerclose} for all solution in $\cand^k$. This happens with probability 
\[1 - k\exp\left(\frac{\eps^2}{k}\cdot \delta\right) - \exp\left(k\log|\cand| - 2^{-O(z)} \left(\frac{\varepsilon}{\log 1/\varepsilon}\right)^2 \delta\right).\]

Let $\calS$ be a solution, and $\tilde \calS$ its corresponding solution in $\cand^k$. We break the cost of $\calS$ into two parts: points with $\cost(p, tilde \calS) \leq 5^z \cdot \cost(p, \greedy)$, on which we can apply \cref{lem:outerclose}, on the others, on which we will apply \cref{lem:outerfar}.

From \cref{lem:outerclose}, we directly get
\[|\cost(G_{close, \tilde \calS}, \tilde \calS) - \cost(\coreset \cap G_{close, \tilde \calS}, \tilde \calS)| \leq \frac{\varepsilon}{z\log z/\varepsilon}\cdot \left(\cost(G,\greedy) + \cost(G, \tilde \calS)\right).\]

Since any point in $G_{close, \tilde \calS}$ verifies $|\cost(p, \calS) - \cost(p, \tilde \calS)| \leq \frac{\varepsilon}{z\log z/\varepsilon}(\cost(p, \calS) + \cost(p, \greedy))$ we can relate this to $\cost(G_{close, \tilde \calS}, \calS)$ as follows. First, this implies $\cost(G_{close, \tilde \calS}, \tilde \calS) \leq (1+\eps)\cost(G_{close, \tilde \calS}, \calS) + \eps \cost(G, \greedy)$. Hence:{\setlength{\emergencystretch}{2.5em}\par}

\begin{align*}
&|\cost(G_{close, \tilde \calS}, \calS) - \cost(\coreset \cap G_{close, \tilde \calS}, \calS)| \\
\leq~&
|\cost(G_{close, \tilde \calS}, \tilde \calS) - \cost(\coreset \cap G_{close, \tilde \calS}, \tilde \calS)| \\
& \qquad + \frac{\varepsilon}{z\log z/\varepsilon} (\cost(G_{close, \tilde \calS}, \calS) + \cost(G_{close, \tilde \calS}, \greedy))\\
&\qquad + \frac{\varepsilon}{z\log z/\varepsilon}(\cost(G_{close, \tilde \calS} \cap \coreset, \calS) + \cost(G_{close, \tilde \calS} \cap \coreset, \greedy))\\
\leq ~& \frac{O(\varepsilon)}{z\log z/\varepsilon}\left(\cost(G, \greedy) + \cost(G, \calS)\right)\\
&\qquad + \frac{\varepsilon}{z\log z/\varepsilon}(\cost(G_{close, \tilde \calS} \cap \coreset, \calS) + \cost(G_{close, \tilde \calS} \cap \coreset, \greedy))\\
\end{align*}

We now deal with the other far points. For this, note that $G \setminus G_{close, \tilde \calS} \subseteq G_{far, \calS}$. Indeed, any point $p \in G \setminus G_{far, \calS}$ has its cost preserved by $\tilde \calS$, and therefore verifies 
\begin{align*}
\cost(p, \tilde \calS) &\leq (1+\frac{\varepsilon}{z\log z/\varepsilon})\cost(p, \calS) + \frac{\varepsilon}{z\log z/\varepsilon} \cost(p, \greedy)\\
&\leq (1+\eps)\cdot 4^z \cost(p,\greedy) + \eps \cost(p, \greedy) \leq 5^z \cost(p, \greedy).
\end{align*}

Consequently, $G \setminus G_{far, \calS} \subseteq G_{close, \tilde \calS}$, which implies $G \setminus G_{close, \tilde \calS} \subseteq G_{far, \calS}$. Hence, we can use \cref{lem:outerfar}:
\begin{align*}
&|\cost(G\setminus G_{close, \tilde \calS}, \calS) - \cost(\coreset \cap (G \setminus G_{close, \tilde \calS}), \calS)| \\
\leq~& \cost(G\setminus G_{close, \tilde \calS}, \calS) + \cost(\coreset \cap (G \setminus G_{close, \tilde \calS}), \calS)\\
\leq ~ &\cost(G_{far, \calS}, \calS) + \cost(\coreset \cap G_{far, \calS}, \calS)\\
\leq ~&\frac{\eps}{z \log z/\eps} \cdot \cost(\calS).
\end{align*}

Hence, adding the two inequalities gives that
\begin{align*}
&|\cost(G, \calS) - \cost(\coreset \cap G, \calS)| \\
\leq & \frac{O(\eps)}{z \log z/\eps} \cdot \cost(\calS) +\frac{\varepsilon}{z\log z/\varepsilon}(\cost(G \cap \coreset, \calS) + \cost(G \cap \coreset, \greedy)).
\end{align*}

To remove the terms depending on $\coreset$ from the right hand side, one can proceed as in the end of  \cref{lem:coreset-reduc}, applying grossly the previous inequality to get $\cost(G \cap \coreset, \calS) = O(1) \cost(\calS)$ and $\cost(G \cap \coreset, \greedy) = O(1) \cost(\greedy)$. This concludes the theorem:
\begin{align*}
|\cost(G, \calS) - \cost(\coreset \cap G, \calS)| 
\leq  \frac{O(\eps)}{z \log z/\eps} \cdot (\cost(G, \calS) + \cost(G, \greedy)).
\end{align*}
\end{proof}

This concludes the coreset construction for the outer groups.

\section{Partitioning into Well Structured Groups}\label{sec:preprocess}

In this section, we show that the outcome of the partitioning step satisfies  \cref{lem:preprocess}, that we restate for convenience.

\preprocess*

Recall that the \textit{inner ring} $\inner(C)$ (resp. \textit{outer ring} $\out(C)$) of a cluster $C$ consists of the points of $C$ with cost at most $\left(\nicefrac \eps z\right)^{2z} \Delta_C$ (resp. at least $\left(\nicefrac z \eps \right)^{2z} \Delta_C$). The \textit{main ring} $\main(C)$ consist of all the other points of $C$.

Recall also that $\discarded$ contains all points that are either in some inner ring, in some group $G_{j, min}$ or in $\outergroup{}_{min}$.
$\structured$ contains center of $\greedy$ weighted by the number of points from $\discarded$ in their clusters.

To prove \cref{lem:preprocess}, we treat separately the inner ring and the groups $G_{j, min}$ and $\outergroup{}_{min}$ in the next two lemmas. Their proof are deferred to  next sections. For all those lemmas, we fix a metric space $I$ a set of clients $P$, two positive integers $k$ and $z$, and $\eps \in \mathbb{R}^*_+$. We also fix $\greedy$, a solution to $(k, z)$-clustering on $P$ with cost $\cost(\greedy) \leq c_\greedy \cost(\opt)$.

\begin{restatable}[]{lemma}{preprocessinner}
\label{lem:preprocess-inner}
For any solution $\calS$ and any cluster $C$ with center $c$ of $\greedy$, 
\begin{eqnarray*}
& &\left|\cost(\inner(C) ,\calS) -  |\inner(C)|\cdot \cost(c,\calS)\right| \leq \eps(\cost(C, \greedy) + \cost(\inner(C), \calS)) .
\end{eqnarray*}

\end{restatable}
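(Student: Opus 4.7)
The plan is to bound $|\cost(\inner(C),\calS) - |\inner(C)|\cdot \cost(c,\calS)|$ pointwise by the triangle inequality, and then apply the power triangle inequality from Lemma~\ref{lem:weaktri} to each term. Specifically, I would start from
\[\left|\cost(\inner(C),\calS) - |\inner(C)|\cdot \cost(c,\calS)\right| \leq \sum_{p\in \inner(C)} \left|\cost(p,\calS) - \cost(c,\calS)\right|,\]
and then invoke the second inequality of Lemma~\ref{lem:weaktri} (with $a=p$, $b=c$, $S=\calS$, and slack parameter $\eps$) to obtain
\[\left|\cost(p,\calS) - \cost(c,\calS)\right| \leq \eps\cdot \cost(p,\calS) + \left(\tfrac{2z+\eps}{\eps}\right)^{z-1} \cost(p,c).\]

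Summing over $p \in \inner(C)$ immediately yields an $\eps\cdot \cost(\inner(C),\calS)$ contribution, which matches the first term of the claimed bound. The remaining work is to bound $\left(\tfrac{2z+\eps}{\eps}\right)^{z-1}\sum_{p\in\inner(C)} \cost(p,c)$ by $\eps\cdot \cost(C,\greedy)$. Here I would invoke the definition of the inner ring: every $p\in\inner(C)$ satisfies $\cost(p,c)\leq (\eps/z)^{2z}\Delta_C$, and thus
\[\sum_{p\in\inner(C)} \cost(p,c) \leq |\inner(C)|\cdot (\eps/z)^{2z}\Delta_C \leq (\eps/z)^{2z}\cdot \cost(C,\greedy),\]
using $\Delta_C = \cost(C,\greedy)/|C|$ and $|\inner(C)|\leq |C|$. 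Multiplying by $\left(\tfrac{2z+\eps}{\eps}\right)^{z-1}\leq (3z/\eps)^{z-1}$ gives a prefactor of order $3^{z-1}z^{-(z+1)}\eps^{z+1}$, which is at most $\eps$ for all $z\geq 1$ and $\eps\in(0,1)$.

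I do not anticipate any real obstacle: the only subtlety is choosing the parameter in Lemma~\ref{lem:weaktri} so that the additive error from the $((2z+\eps)/\eps)^{z-1}\cdot \cost(p,c)$ term absorbs into an $\eps\cdot \cost(C,\greedy)$ budget, and the definition of $\inner(C)$ was precisely tailored with the factor $(\eps/z)^{2z}$ so that this absorption goes through cleanly. Putting the two bounds together gives the claimed inequality.
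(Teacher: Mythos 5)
Your proposal is correct and matches the paper's own proof essentially line for line: both apply the second inequality of Lemma~\ref{lem:weaktri} pointwise with $a=p$, $b=c$, sum over $\inner(C)$, and absorb the $((2z+\eps)/\eps)^{z-1}\cost(p,c)$ term using $\cost(p,c)\leq(\eps/z)^{2z}\Delta_C$ and $|\inner(C)|\Delta_C\leq\cost(C,\greedy)$. The only cosmetic difference is that the paper bounds the additive term by $\eps\Delta_C$ per point before summing, whereas you sum first and then bound the prefactor; both computations are valid.
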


%

%
%
%
\begin{restatable}[]{lemma}{ksmallgroups}
\label{lem:ksmallgroups}
For any solution $\calS$ and any $j$, 
\begin{eqnarray*}
& &\left|\cost(G_{j, min} ,\calS) - \sum_{i=1}^{k} |C_i\cap G_{j,min}|\cdot \cost(c_i,\calS)\right| \leq  \varepsilon\cdot \cost(R_j,\calS)+ \varepsilon \cdot \cost(R_j,\greedy) .
\end{eqnarray*}

Moreover, for any solution $\calS$,
\begin{eqnarray*}
& &\left|\cost(\outergroup{\greedy, \seeded}_{min} ,\calS) - \sum_{i=1}^{k} |C_i\cap \outergroup{\greedy, \seeded}_{min}|\cdot \cost(c_i,\calS)\right| \leq  \varepsilon\cdot \cost(\calS)+ \varepsilon \cdot \cost(\greedy) .
\end{eqnarray*}

\end{restatable}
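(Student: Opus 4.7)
The plan is to apply the weak triangle inequality \cref{lem:weaktri} pointwise, then sum over clusters and use the defining cheapness of $G_{j,min}$ to absorb the error term into $\varepsilon \cdot \cost(R_j,\greedy)$. Concretely, for every $p \in C_i \cap G_{j,min}$, I would take $a = p$, $b = c_i$, $S = \calS$ in \cref{lem:weaktri} to get
\[\left| \cost(p,\calS) - \cost(c_i,\calS) \right| \leq \varepsilon \cdot \cost(p,\calS) + \left(\tfrac{2z+\varepsilon}{\varepsilon}\right)^{z-1} \cdot \cost(p,c_i).\]
Summing over $p \in C_i \cap G_{j,min}$ and then over $i \in \{1,\ldots,k\}$ yields, by the triangle inequality on the left-hand side,
\[\left| \cost(G_{j,min},\calS) - \sum_{i=1}^{k} |C_i \cap G_{j,min}| \cdot \cost(c_i,\calS) \right| \leq \varepsilon \cdot \cost(G_{j,min},\calS) + \left(\tfrac{2z+\varepsilon}{\varepsilon}\right)^{z-1} \cdot \cost(G_{j,min},\greedy).\]

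For the first term on the right, since $G_{j,min} \subseteq R_j$, we have $\cost(G_{j,min},\calS) \leq \cost(R_j,\calS)$, which directly gives the required $\varepsilon \cdot \cost(R_j,\calS)$ contribution. For the second term, I would use the defining property of $G_{j,min}$: every cluster $C_i$ contributing to $G_{j,min}$ satisfies $\cost(R_{i,j},\greedy) \leq 2 \left(\frac{\varepsilon}{4z}\right)^z \cdot \cost(R_j,\greedy)/k$, so summing over the $k$ clusters gives $\cost(G_{j,min},\greedy) \leq 2 \left(\frac{\varepsilon}{4z}\right)^z \cdot \cost(R_j,\greedy)$. Multiplying by the $\left(\frac{2z+\varepsilon}{\varepsilon}\right)^{z-1}$ prefactor and doing the routine arithmetic (the product telescopes to at most $\varepsilon/(4z) \cdot ((2z+\varepsilon)/(4z))^{z-1} \leq \varepsilon$ for $\varepsilon \leq 1$, with a $1/2$ slack obtained by plugging $\varepsilon/2$ into the triangle inequality), yields the required $\varepsilon \cdot \cost(R_j,\greedy)$ bound.

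The second statement of the lemma, dealing with $\outergroup{}_{min}$, follows by an identical argument: the only change is that the defining inequality for $\outergroup{}_b$ gives, for $b \leq 0$, $\cost(\out(C_i),\greedy) \leq 2\left(\frac{\varepsilon}{4z}\right)^z \cdot \cost(\out^\greedy,\greedy)/k$, which after summation over $i$ and multiplication by the same prefactor yields $\varepsilon \cdot \cost(\out^\greedy,\greedy) \leq \varepsilon \cdot \cost(\greedy)$, while $\cost(\outergroup{}_{min},\calS) \leq \cost(\calS)$ replaces $\cost(R_j,\calS)$.

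There is no serious obstacle here; the only non-trivial step is the constant bookkeeping to confirm that $\left(\frac{2z+\varepsilon}{\varepsilon}\right)^{z-1} \cdot \left(\frac{\varepsilon}{4z}\right)^{z}$ is bounded by $\varepsilon$ (up to a harmless factor). The structural content is precisely that in the cheapest groups, the total mass $\cost(G_{j,min},\greedy)$ is a $(\varepsilon/4z)^z$-fraction of $\cost(R_j,\greedy)$, which is small enough to kill the $\varepsilon^{-(z-1)}$ blow-up coming from the weak triangle inequality.
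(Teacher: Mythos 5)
Your proposal is correct and follows essentially the same route as the paper's proof: apply \cref{lem:weaktri} pointwise with $a=p$, $b=c_i$, sum over clusters, bound $\cost(G_{j,min},\calS)\leq\cost(R_j,\calS)$, and absorb the $(\nicefrac{2z+\eps}{\eps})^{z-1}$ blow-up using $\cost(G_{j,min},\greedy)\leq O(1)\cdot\left(\frac{\eps}{4z}\right)^z\cost(R_j,\greedy)$ from the group definition. The constant bookkeeping you describe is exactly what the paper does.
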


The proof of Lemma \ref{lem:preprocess} combines those lemmas.
\begin{proof}[Proof of Lemma \ref{lem:preprocess}]
We decompose $|\cost(\discarded, \calS) -\cost(\structured, \calS)|$ into terms corresponding to the previous lemmas:

\begin{align*}
 |\cost(\discarded, \calS) - \cost(\structured, \calS)| 
&\leq 
\sum_{i=1}^k \left\vert \cost(\inner(C_i), \calS) -  |\inner(C_i)| \cost(c_i, \calS)\right\vert \\
&\quad + \sum_{j=2z\log(\eps/z)}^{2z\log(z/\eps)} \left|\cost(G_{j, min} ,\calS) - \sum_{i=1}^k |C_i\cap G_{j, min}|\cost(c_i,\calS)\right|\\
&\quad + \left|\cost(\outergroup{}_{min} ,\calS) -\sum_{i=1}^k |C_i\cap \outergroup{}_{min}|\cost(c_i,\calS)\right|\\
&\leq \sum_{i=1}^k \eps(\cost(C_i, \greedy) + \cost(\inner(C_i), \calS))\\ 
& \quad +2\eps \cost(\calS) + 2\eps \cost(\greedy) + \eps(\cost(\calS) + \cost(\greedy))\\ 
&\leq 8\eps c_{\greedy} \cost(\calS), 
%
\end{align*}
where the second inequality uses Lemmas~\ref{lem:preprocess-inner} and ~\ref{lem:ksmallgroups}.

\end{proof}

\subsection{The Inner Ring: Proof of Lemma \ref{lem:preprocess-inner}}
\preprocessinner*
%
\begin{proof}
Let $C$ be a cluster induced by $\greedy$, and $p$ be a point in the inner ring $\inner(C)$.
We start by bounding $|\cost(p, \calS) - \cost(c, \calS)|$. Let $\calS(p)$ (resp. $\calS(c)$) be the closest point from $\calS$ to $p$ (resp. $c$).

%

Using \cref{lem:weaktri}, we get 
\[|\cost(p, \calS) - \cost(c, \calS)| \leq \eps \cdot \cost(p, \calS) +(1+2z/\eps)^{z-1} \cdot \cost(c, p).\]
Since $p$ is from the inner ring of its cluster, $\cost(c, p) \leq \left(\frac{\eps}{z}\right)^{2z} \Delta_C$, hence $(1+2z/\eps)^{z-1} \cost(c, p) \leq (2+\eps)^{z-1} \cdot (\eps/z)^{z+1} \cdot \Delta_C \leq \eps \Delta_C$, for small enough $\eps$.

 Summing this over all points of the inner ring yields
 \begin{align*}
\left\vert\cost(\inner(C), \calS)-  |\inner(C)|\cdot \cost(c,\calS)\right| 
&\leq \sum_{p\in \inner(C)} |\cost(p, \calS) - \cost(c, \calS)|\\
&\leq  \sum_{p\in \inner(C)} \eps \cost(p, \calS) + \eps \Delta_C \\
&\leq \eps \cost(\inner(C), \calS) + \eps |\inner(C)| \Delta_C\\
&\leq  \eps \cost(\inner(C), \calS) +\eps \cost(C, \greedy)
\end{align*}

This implies 
\[\left\vert\cost(\inner(C), \calS)-  |\inner(C)|\cdot \cost(c,\calS)\right| \leq \eps(\cost(C, \greedy) + \cost(\inner(C), \calS)).\]
\end{proof}

\subsection{The Cheap Groups: Proof of \cref{lem:ksmallgroups}}
\ksmallgroups*
\begin{proof}
Using Lemma~\ref{lem:weaktri}, for a point $p$ in cluster $C_i$
\[\left\vert \cost(c_i,\calS) - \cost(p, \calS)\right\vert \leq \eps \cost(p,\calS) + \left(1+\frac{2z}{\varepsilon}\right)^{z-1} \cost(p,c_i).
\]

Let $G$ be a group, either $G_{j, min}$ or $\outergroup{\greedy, \seeded}_{min}$. 
Summing for all cluster $C_i$ and all $p\in G \cap C_i$, 
we now get
\begin{eqnarray*}
& &\left\vert\sum_{i=1}^{k} |C_i\cap G|\cdot \cost(c_i,\calS) - \cost(G,\calS)\right\vert\\
 &\leq & \varepsilon\cdot \cost(G,\calS) +\sum_{i=1}^{k} \sum_{p\in G\cap C_i}\left(1+\frac{2z}{\varepsilon}\right)^{z-1} \cost(p,\greedy) \\
    &\leq & \varepsilon\cdot \cost(G,\calS) +\sum_{i=1}^{k}  \left(\frac{3z}{\varepsilon}\right)^{z-1}\cost(C_i\cap G,\greedy) \\
&\leq & \varepsilon\cdot \cost(G,\calS) + \left(\frac{3z}{\varepsilon}\right)^{z-1}\cost(G,\greedy)  \\
\end{eqnarray*}
Now, either $G = G_{j, min}$ for some $j$, and $\cost(G, \greedy) \leq \left(\frac{\varepsilon}{4z}\right)^{z} \cdot \cost(R_j,\greedy)$; or $G = \outergroup{\greedy, \seeded}_{min}$, and $\cost(G, \greedy) \leq \left(\frac{\varepsilon}{4z}\right)^{z} \cdot \cost(\out(\greedy), \greedy)\leq \left(\frac{\varepsilon}{4z}\right)^{z}\cdot \cost(\greedy)$. 

In both cases, the lemma follows.
\end{proof}

\section{Application of the Framework: New Coreset Bounds for Various Metric Spaces}\label{sec:centroid}

In this section, we apply the coreset framework to specifics metric spaces. For each of them, we show the existence of a small approximate centroid set, and apply \cref{thm:main} to prove the existence of small coresets.

We recall the definition of a centroid set (\cref{def:centroid-set}): given an instance of $(k, z)$-clustering and a set of centers $\greedy$, an $\greedy$-approximate centroid set $\cand$ is a set that satisfies the following: for every solution $\calS$, there exists $\tilde \calS \in \cand^k$ such that for all points $p$ that verifies $\cost(p, \calS) \leq \left(\frac{8z}{\eps}\right)^z \cost(p, \greedy)$ or $\cost(p, \tilde \calS) \leq \left(\frac{8z}{\eps}\right)^z \cost(p, \greedy)$, it holds $|\cost(p, \calS) - \cost(p, \tilde{\calS})| \leq \frac{\eps}{z\log(z/\eps)}\left(\cost(p, \calS) + \cost(p, \greedy)\right)$.

\cref{thm:main} states that in case there is an $\greedy$-approximate centroid set $\cand$, then there is a linear-time algorithm that constructs with probability $1-\pi$ a coreset of size 
\[O\left(\frac{2^{O(z\log z)}\cdot\log^4 1/\eps}{\min(\eps^2, \eps^z)}\left(k \log |\cand| + \log \log (1/\eps) + \log(1/\pi)\right)\right)\]

\subsection{Structural Property on Solutions}
We also show a structural property on solutions, that we will use in order to show the existence of small approximate centroid sets. Essentially, when replacing a center $s$ by a center in $\cand$ we will make an error $\eps \cost(q, \greedy)$ for some $q$ that we can choose: it is necessary to ensure this error is tiny compared to any $\cost(p, s) + \cost(p, \greedy)$.

Given a point $q$ and a center $s$, we say that a point $p$ is \emph{problematic} with respect to $q$ and $s$ when $\dist(p, \greedy) + \dist(p, s) \leq \frac{\eps^2}{8z^2} (\dist(q, \greedy) + \dist(q, s))$. In that case, we cannot bound the error $\dist(q, \greedy) + \dist(q, s)$ by some quantity depending on $\cost(p, s) + \cost(p, \greedy)$. However, we show the following:
\begin{lemma}\label{lem:noproblem}
Let $\calS$ be a solution, such that any input point $p$ verifies $\dist(p, \calS) \leq \frac{8z}{\eps}\cdot \dist(p, \greedy)$. There exists a solution $\calS ' \subseteq \calS$ such that 
\begin{itemize}
\item for all $p$, it holds that $|\cost(p, \calS) - \cost(p, \calS')| \leq \frac{\eps}{z \log z/\eps}(\cost(p, \calS) + \cost(p, \greedy))$, and
\item for any center $s \in \calS'$, let $q = \argmin_{p : \dist(p, s) \leq \frac{10z}{\eps}\dist(p, \greedy)} \dist(p, \greedy) + \dist(p, s)$. There is no problematic point with respect to $q$ and $s$.
\end{itemize}
\end{lemma}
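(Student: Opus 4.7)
The plan is to build $\calS'$ by iteratively removing ``bad'' centers from $\calS$, where a center $s \in \calS$ is \emph{bad} if there exists an input point that is problematic with respect to $q_s$ and $s$. Formally, set $\calS'_0 = \calS$ and, while $\calS'_i$ contains a bad center $s$, let $\calS'_{i+1} = \calS'_i \setminus \{s\}$. The final $\calS' \subseteq \calS$ contains no bad center, so the second bullet of the lemma holds by construction (noting that $q_s$, as defined, depends only on $s$ and $\greedy$ and hence is intrinsic).

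The first bullet is proved via the following structural observation. Fix a bad $s$ with problematic point $p^*$. Because $q_s$ is the minimizer of $\dist(\cdot, \greedy) + \dist(\cdot, s)$ over the feasibility set $\{p : \dist(p,s) \le (10z/\eps)\dist(p,\greedy)\}$, any point in that set has sum at least $\dist(q_s, \greedy) + \dist(q_s, s)$; by the definition of problematic, $p^*$ therefore lies outside the feasibility set, i.e.\ $\dist(p^*, s) > (10z/\eps)\dist(p^*, \greedy)$. Combined with the hypothesis $\dist(p^*, \calS) \le (8z/\eps)\dist(p^*, \greedy)$, the closest center $s^* \in \calS$ to $p^*$ differs from $s$ and satisfies $\dist(p^*, s^*) \le (8z/\eps)\dist(p^*, \greedy) \le (4/5)\dist(p^*, s)$. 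Hence $\dist(s, s^*) \le \dist(p^*, s) + \dist(p^*, s^*) \le 2\dist(p^*, s)$. Any point $p$ served by $s$ in $\calS$ lies in the feasibility set (since $\dist(p,s) \le (8z/\eps)\dist(p,\greedy)$), so $\dist(p,\greedy)+\dist(p,s) \ge \dist(q_s,\greedy)+\dist(q_s,s) \ge (8z^2/\eps^2)(\dist(p^*,\greedy) + \dist(p^*, s))$ by the problematic bound. Combining these yields the sharp estimate
\[
\dist(s, s^*) \;\le\; \tfrac{\eps^2}{4z^2}\bigl(\dist(p, \greedy) + \dist(p, s)\bigr).
\]

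Given this, for every point $p$ served by $s$ we apply \cref{lem:weaktri} with parameter $\eta = \eps/(4z^2\log(z/\eps))$ to obtain $\dist(p, s^*)^z \le (1+\eta)^{z-1}\dist(p,s)^z + ((1+\eta)/\eta)^{z-1}\dist(s, s^*)^z$. The first term contributes at most $\bigl(1 + \eps/(z\log(z/\eps))\bigr)\dist(p,s)^z$. The second term, after plugging in the distance bound, is of order $(\eps\log(z/\eps))^{z-1} \cdot (\eps^2/z^2)\cdot(\dist(p,\greedy)^z + \dist(p,s)^z)$, which for $\eps$ small is bounded by $\eps/(z\log(z/\eps)) \cdot (\dist(p,s)^z + \dist(p,\greedy)^z)$. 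Summing the two yields the claimed per-point increase when $s$ is replaced by a nearby center still in the current solution. Provided that at the moment of removing $s$ the replacement $s^*$ is still present, this argument extends inductively over the removal sequence.

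The main obstacle is ensuring that a usable replacement always exists: the center $s^*$ identified above may itself be bad and have been removed earlier. To address this, removals are processed in a carefully chosen order (for instance, decreasing order of $\dist(q_s, \greedy) + \dist(q_s, s)$, or equivalently by following the chain $s \to s^* \to s^{**} \to \cdots$ to the first non-bad ancestor inside $\calS$). Because each step of the chain contracts the relevant sum by a factor of $\Theta(\eps^2/z^2)$ relative to the current quantity, the aggregate shift telescopes and is dominated by $\dist(s, s^*)$ up to a constant, so the bound above is preserved. The finite size of $\calS$ rules out infinite chains, while the geometric contraction rules out cycles accumulating more than the claimed error, completing the proof of the first bullet.
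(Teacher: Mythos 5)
Your proposal is correct and follows essentially the same route as the paper's proof: iteratively delete centers admitting a problematic point, reassign their clusters to the (much closer) center serving that problematic point, and use the geometric contraction of the quantity $\dist(q_s,\greedy)+\dist(q_s,s)$ along the resulting reassignment chain so that the total displacement telescopes to $O(\eps^2/z^2)\cdot(\dist(p,\greedy)+\dist(p,s))$, after which \cref{lem:weaktri} gives the cost bound. The only cosmetic difference is that the paper maintains a relaxed assignment invariant $\dist(p,\calS(p))\le\frac{10z}{\eps}\dist(p,\greedy)$ during the iteration rather than ordering removals by the (intrinsic) value of $\dist(q_s,\greedy)+\dist(q_s,s)$, but the underlying contraction argument is identical.
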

\begin{proof}
First, we show that in case there is a problematic point $p$ with respect to some $s$ and $q$, then we can serve the whole cluster of $s$ by $\calS(p)$, the point that serves $p$ in $\calS$. 
We work in this proof with particular solutions, where points are not necessarily assigned to their closest center. This simplifies the proof, but needs particular care at some moments. In particular, we will ensure that $\dist(p, \calS(p)) \leq \frac{10z}{\eps}\cdot \dist(p, \greedy)$ is always verified.
We will then remove inductively centers with problematic points to construct $\calS'$. 

\textbf{Removing a center that has a problematic point.} 
\\ Let $s \in \calS$, and $q = \argmin_{p : \dist(p, s) \leq \frac{10z}{\eps}\dist(p, \greedy)} \dist(p, \greedy) + \dist(p, s)$ as in the statement. Let $p$ be a problematic point with respect $s$ and $q$, and $\calS(p)$ its the center serving $p$ in $\calS$. First, note that since $p$ is problematic, it must be that  $\dist(p, \calS(p)) \leq \dist(s, p)$: otherwise, $p$ would verify $\dist(p, s) \leq \frac{10z}{\eps}\dist(p, \greedy)$, and the minimality of $q$ would ensure that $p$ is not problematic.
 Thus, it holds that:
\begin{align*}
\dist(s, \calS(p)) &\leq \dist(s, p) + \dist(p, \calS(p)) \leq 2\dist(s,p)\\
&\leq 2(\dist(s, p) + \dist(p, \greedy)) \leq \frac{\eps^2}{4z^2} (\dist(q, \greedy) + \dist(q, s)).
\end{align*}

Now, let $p'$ be served by $s$. Using the triangle inequality, we immediately get
\begin{align*}
\dist(p', \calS(p)) \leq \dist(p', s) + \dist(s, \calS(p)) \leq \dist(p', s) + \frac{\eps^2}{4z^2} (\dist(q, \greedy) + \dist(q, s)).
\end{align*}

Additionally, it holds that 
\begin{align}
\notag
\min_{q' : \dist(q', \calS(p)) \leq \frac{10z}{\eps} \dist(q', \greedy)} \dist(q', \greedy) + \dist(q', \calS(p)) &\leq \dist(p, \calS(p)) + \dist(p, \greedy)\\
\notag
&\leq \dist(s, p) + \dist(p, \greedy)\\
\label{eq:noAccumulation}
&\leq \frac{\eps^2}{8z^2} (\dist(q, \greedy) + \dist(q, s))
\end{align}
Hence, if $\calS(p)$ is removed as well, the error for points served by $s$ will be an $\frac{\eps^2}{8z^2}$-fraction of the initial error. This implies that the total error will not accumulate, as we will now see.

\textbf{Constructing $\calS'$.} To construct $\calS'$, we proceed iteratively: start with $\calS' = \calS$, and as long as there exists a center $s$ that have a problematic point $p$ with respect to it, remove $s$ and reassign the whole cluster of $s$ to $\calS'(p)$, the closest point to $p$ in the current solution. This process must end, as there is no problematic point when there is a single center.

For a point $p$, let $s_1,..., s_j$ be the successive cluster it is reassigned to, with corresponding $q_1,..., q_j$. Using \cref{eq:noAccumulation}, it holds that 
$\dist(q_{i+1}, \greedy) + \dist(q_{i+1}, s_{i+1}) \leq \frac{\eps^2}{8z^2} (\dist(q_{i}, \greedy) + \dist(q_{i}, s_{i}))$. Hence, the distance increase for $p$ is geometric: using that $\dist(q_{1}, \greedy) + \dist(q_{1}, s_{1}) \leq \dist(p, \greedy) + \dist(p, \calS)$ (which holds by minimality of $q_1$), we get that at any given step $i$ it holds that 
\begin{align*}
\dist(p, s_i) &\leq \dist(p, \calS) + (\dist(p, \greedy) + \dist(p, \calS)) \sum_{j=1}^i  \left(\frac{\eps^2}{8z^2}\right)^i\\
&\leq \dist(p, \calS) + \frac{\eps^2}{4z^2}(\dist(p, \greedy) + \dist(p, \calS))\\
&\leq \frac{8z}{\eps}\cdot \dist(p, \greedy) + \frac{\eps^2}{4z^2}\cdot(1+\frac{8z}{\eps})\dist(p, \greedy)\\
&\leq \frac{10z}{\eps}\cdot \dist(p, \greedy),
\end{align*}
as promised in order to remove centers.

Last, we show that the first bullet of the lemma holds. We consider now the standard assignment:  $p$ is assigned to its closest center of $\calS'$, instead of $s_j$. First, since we only removed centers, it holds that $\cost(p, \calS) \leq \cost(p, \calS')$.
Second, using \cref{lem:weaktri}, we have for any $ \eps' = \frac{\eps}{z \log z/\eps}$:
\begin{align*}
\cost(p, \calS') &\leq \cost(p, s_j) \\
&\leq (1+ \eps') \cost(p, \calS) + \left(\frac{4z}{\eps'}\right)^{z-1} \cdot \left(\frac{\eps^2}{4z^2}\right)^z \cdot (\dist(p, \greedy) + \dist(p, \calS))^z \\
&\leq (1+\eps')\cost(p, \calS) + \left(\frac{\eps}{z}\right)^z \cdot (2\log z/\eps)^{z-1}\cdot (\cost(p, \greedy) + \cost(p, \calS))\\
&\leq (1+\eps')\cost(p, \calS) + \eps'(\cost(p, \greedy) + \cost(p, \calS))
\end{align*}

Hence, we conclude that 
\[\left|\cost(p, \calS') - \cost(p, \calS)\right| \leq  \frac{\eps}{z \log z/\eps}\left(\cost(p, \calS) + \cost(p, \greedy)\right),\]
which concludes the lemma.
\end{proof}

\subsection{In Metrics with Bounded Doubling Dimension}

We start by defining the Doubling Dimension of a metric space, and stating a key lemma.

Consider a metric space $(X,\dist)$.
For a point $p \in X$ and an integer $r \ge 0$, we let
$\beta(p,r) = \{x\in X \mid \dist(p,x) \le r \}$ be the
\emph{ball} around $p$ with radius $r$. 

\begin{definition}
The \emph{doubling dimension} of a metric is the smallest integer $d$ such that any
ball of radius $2r$ can be covered by $2^d$ balls of radius~$r$. 
\end{definition}

Notably, the Euclidean space $\mathbb{R}^d$ has doubling dimension $\theta(d)$.

A $\gamma$-\emph{net} of $V$ is a set of points $X\subseteq V$ such that for all
$v \in V$ there is an $x \in X$ such that $\dist(v, x) \leq \gamma$, and 
for all $x, y \in X$ we have $\dist(x, y) > \gamma$. A net is therefore a set 
of points not too close to each other, such that
every point of the metric is close to a net point. 
The following lemma bounds the cardinality of a net in doubling metrics.\\

\begin{lemma}[from Gupta et. al \cite{GuptaKL03}]\label{prop:doub:net}
Let $(V, \dist)$ be a metric space with doubling dimension $d$ and, diameter 
$D$, and let $X$ be a $\gamma$-net of $V$. Then $|X| \leq 2^{d \cdot 
\lceil \log_2 (D/\gamma)\rceil}$.
\end{lemma}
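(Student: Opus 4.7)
The plan is to cover $V$ by a controlled number of small balls and argue that each such ball contains at most one net point, then combine the two bounds. The doubling property is the only structural tool, and the net separation property does the counting.

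First, I would iterate the doubling property inductively: if every ball of radius $2r$ is covered by $2^d$ balls of radius $r$, then by induction every ball of radius $2^k r$ is covered by $2^{dk}$ balls of radius $r$. Next, I would fix an arbitrary anchor point $v_0 \in V$; by the diameter assumption, $V \subseteq \beta(v_0, D)$. Setting $k = \lceil \log_2(D/\gamma) \rceil$ so that $2^k \gamma \ge D$, I would apply the iterated doubling bound to $\beta(v_0, D) \subseteq \beta(v_0, 2^k \gamma)$, obtaining a cover of $V$ by at most $2^{dk}$ balls of radius $\gamma$ (and, if needed, one more halving gives balls of radius $\gamma/2$).

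The second half of the argument uses the net property. By definition, any two points $x, y \in X$ satisfy $\dist(x,y) > \gamma$. Therefore any ball of radius $\gamma/2$ can contain at most one point of $X$, since two such points would be within distance $\gamma$ of each other, contradicting the separation. Combining this with the cover from the previous step gives $|X| \le 2^{dk}$, which is the claimed bound (modulo carefully matching the exponent, see below).

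The only real subtlety is tracking constants to hit exactly the exponent $d \cdot \lceil \log_2(D/\gamma) \rceil$ rather than something off by one. A direct application of the radius-$\gamma/2$ uniqueness argument yields $d \cdot \lceil \log_2(2D/\gamma) \rceil$, which differs by a factor of $2^d$. This can be tightened in one of two ways: either one uses the equivalent reformulation of doubling dimension in terms of set \emph{diameter} (every set of diameter $D$ is coverable by $2^d$ sets of diameter $D/2$), which saves the extra halving; or one applies the doubling inequality only down to radius $\gamma$ and argues uniqueness within each radius-$\gamma$ ball by placing balls at net points of radius $\gamma/2$ and using a disjointness/packing argument against the volume of the ambient radius-$D$ ball. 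Either way the argument is routine and the main conceptual content lies in the first two steps above.
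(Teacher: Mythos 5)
The paper does not prove this lemma; it imports it verbatim from Gupta, Krauthgamer and Lee, so there is no in-paper proof to compare against. Your argument is the standard one for this fact: iterate the doubling property to cover the radius-$D$ ball containing $V$ by exponentially many balls of small radius, then use the $\gamma$-separation of the net to conclude that each ball of radius $\gamma/2$ holds at most one net point. That is correct and is essentially the textbook proof.

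You are also right to flag the exponent. With the paper's ball-based definition of doubling dimension, the clean argument needs to descend to radius $\gamma/2$ (a radius-$\gamma$ ball can contain two points at mutual distance up to $2\gamma$, so it may hold two net points), which costs $\lceil \log_2(2D/\gamma)\rceil$ halvings and yields $2^{d(\lceil \log_2(D/\gamma)\rceil+1)}$ rather than the stated $2^{d\lceil \log_2(D/\gamma)\rceil}$. Of your two proposed repairs, the first (diameter-based doubling: every set of diameter $D$ is covered by $2^{d}$ sets of diameter $D/2$) does give the stated exponent directly, since a set of diameter at most $\gamma$ contains at most one net point; but note that this is a slightly different dimension parameter, bounded only by $2d$ in terms of the ball-based one, so it too shifts constants in the exponent. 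The second repair, as sketched, does not go through: there is no volume to pack against in an abstract metric space, and a covering of the ambient ball by radius-$\gamma/4$ balls only makes the count worse. The honest conclusion is that the bound as literally stated carries a harmless $2^{O(d)}$ imprecision that is endemic in the literature; every use of the lemma in this paper only needs net sizes of the form $(z/\eps)^{O(d)}$, so the slack is immaterial. Your proof, with the extra halving made explicit, is a complete and correct proof of the bound that is actually used.
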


The goal of this section is to prove the following lemma. Combined with \cref{thm:main}, it ensures the existence of small coreset in graphs with small doubling dimension.

\begin{lemma}
\label{lem:k2unionbound2}
Let $M = (X, \dist)$ be a metric space with doubling dimension $d$, let $P\subset X$, let $k$ and $z$ be positive integers and let $\varepsilon>0$. 
Further, let $\greedy$ be a $c_{\greedy}$-approximate solution with at most $k$ centers.
There exists an $\greedy$-approximate centroid set for $P$ of size
$$|P| \cdot \left(\frac{z}{\eps}\right)^{O(d)} \qedhere$$
\end{lemma}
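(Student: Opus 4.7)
The idea is to cover, for every input point $p \in P$, the region from which any center of an interesting solution serving $p$ could possibly come, using a fine net, and take $\cand$ to be the union of these nets over $p$. Concretely, for each $p \in P$ I would apply \cref{prop:doub:net} to obtain a $\gamma_p$-net $N_p$ of the ball $\beta(p, R_p)$, where $R_p = \Theta(z/\eps)\cdot \dist(p,\greedy)$ (the largest possible distance from $p$ to any center of an interesting solution) and $\gamma_p = (\eps/z)^{c}\cdot \dist(p,\greedy)$ for a large enough constant $c$. The doubling-dimension packing bound then yields $|N_p| \leq 2^{d\cdot \lceil \log_2(R_p/\gamma_p)\rceil} \leq (z/\eps)^{O(d)}$, so $\cand := \bigcup_{p\in P} N_p$ satisfies $|\cand| \leq |P|\cdot (z/\eps)^{O(d)}$.

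To associate a $\tilde \calS \in \cand^k$ with a given $\calS$, I would first apply \cref{lem:noproblem} (restricted to points which are interesting with respect to $\calS$) to obtain $\calS' \subseteq \calS$ in which no center has a problematic point, while the cost of every interesting $p$ is preserved up to an $\frac{\eps}{z\log(z/\eps)}(\cost(p,\calS)+\cost(p,\greedy))$ additive error. For each $s \in \calS'$ I would then replace $s$ by the nearest point $\tilde s$ in $N_{q(s)}$, where $q(s)$ is the witness point supplied by \cref{lem:noproblem}; if $s$ serves no interesting point, I just keep $s$ (adding it to $\cand$ causes no blow-up). Set $\tilde \calS$ to be the collection of such $\tilde s$, padded arbitrarily to $k$ elements.

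The error analysis goes in two steps, going through $\calS'$. The first step is immediate from \cref{lem:noproblem}. For the second, let $s$ be the center of $\calS'$ closest to $p$ and $\tilde s \in \tilde \calS$ its rounded version. Applying \cref{lem:weaktri} with $\eps' = \eps/(z \log(z/\eps))$ yields $|\cost(p,s) - \cost(p,\tilde s)| \leq \eps' \cost(p,s) + (2z/\eps')^{z-1} \dist(s,\tilde s)^z$. The key observation is that, by the no-problematic-point property, $\dist(q(s), \greedy) + \dist(q(s), s) \leq (8z^2/\eps^2)(\dist(p, \greedy) + \dist(p, s))$, and since $\dist(s,\tilde s) \leq \gamma_{q(s)} = (\eps/z)^{c}\cdot \dist(q(s), \greedy)$, the residual term $(2z/\eps')^{z-1}\dist(s,\tilde s)^z$ is bounded by $\eps'(\cost(p,\greedy) + \cost(p,s))$ for $c$ chosen appropriately (a function of $z$).

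\textbf{Main obstacle.} The central technical point is the coupling between the net scale and the error analysis: the replacement error scales as $(z/\eps)^{O(z)} \cdot \dist(s,\tilde s)^z$, and without \cref{lem:noproblem} there is no way to bound $\dist(q(s),\greedy)$ in terms of the cost of individual points served by $s$, so the error term could be arbitrarily large compared to $\cost(p,\greedy)+\cost(p,\calS)$. A second subtlety is the case where $p$ is interesting with respect to $\tilde \calS$ but not with respect to $\calS$: there I would argue that $\dist(p, \tilde \calS(p)) \leq (8z/\eps)\dist(p,\greedy)$, together with the bound $\dist(s,\tilde s) \leq \gamma_{q(s)}$, forces $p$ to be "approximately" interesting with respect to $\calS'$ (after solving a simple self-referential inequality), whereupon the same argument applies.
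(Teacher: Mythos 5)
Your construction is essentially the paper's: a net $N_p$ around each input point $p$ at scale $\poly(\eps/z)\cdot\dist(p,\greedy)$ covering the ball of radius $\Theta(z/\eps)\dist(p,\greedy)$, an appeal to \cref{lem:noproblem} to obtain a witness $q(s)$ with no problematic point, replacement of each center $s$ by its nearest net point in $N_{q(s)}$, and the bound $\dist(q(s),\greedy)+\dist(q(s),s)\leq\frac{8z^2}{\eps^2}(\dist(p,\greedy)+\dist(p,s))$ to convert the rounding error into an $\eps'(\cost(p,\greedy)+\cost(p,\calS))$ term. Your two-sided argument (points interesting for $\tilde\calS$ but not for $\calS$) also matches the paper's.

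One step does not work as written: the treatment of a center $s$ admitting no witness, i.e. no $p\in P$ with $\dist(p,s)\leq\frac{10z}{\eps}\dist(p,\greedy)$. You propose to "keep $s$" and add it to $\cand$, but $\cand$ must be fixed before $\calS$ is chosen, and as $\calS$ ranges over $X^k$ such witness-less centers range over an unbounded subset of $X$; adding them all would destroy the $|P|\cdot(z/\eps)^{O(d)}$ bound (the set would depend on $X$, not on $P$). The paper's fix is to place a single designated point $s_f$ lying outside every ball $B\left(p,\frac{10z}{\eps}\dist(p,\greedy)\right)$ into $\cand$ and map every witness-less center to $s_f$: such a center is far from every input point in the relative sense, hence irrelevant to every point satisfying the interestingness condition of \cref{def:centroid-set}, and so is $s_f$. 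A second, smaller point: you allow the net-scale exponent $c$ in $\gamma_p=(\eps/z)^{c}\dist(p,\greedy)$ to be "a function of $z$". If $c$ grew with $z$, the per-point net size $2^{d\lceil\log_2(R_p/\gamma_p)\rceil}$ would be $(z/\eps)^{O(cd)}$ and the claimed $(z/\eps)^{O(d)}$ bound would fail. Your own error calculation only compares $(\eps/z)^{cz}$ against factors of the form $(z/\eps)^{O(z)}$, so an absolute constant $c$ suffices, and you should say so; the paper takes $\gamma_p=\frac{\eps}{4z}\dist(p,\greedy)$ and absorbs the remaining polynomial factors by a final rescaling of $\eps$.
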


A direct corollary of that lemma is the existence of a coreset in Doubling Metrics, as it is enough to show the mere existence of a small centroid set for applying \cref{cor:weighted}.

\begin{corollary}\label{cor:coreset-doubling}
  Let $M = (X, \dist)$ be a metric space with doubling dimension $d$, and two positive integers $k$ and $z$. 
  
  There exists an algorithm with running time $\tilde O(nk)$ that constructs an $\eps$-coreset for $(k, z)$-clustering on $P \subseteq X$ with size 
  \[O\left(\frac{\log^5 1/\eps}{2^{O(z\log z)}\min(\eps^2, \eps^z)}\left(kd + \log 1/\pi\right)\right)\]
\end{corollary}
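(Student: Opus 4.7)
The plan is to simply combine Theorem~\ref{thm:main} (actually its weighted version Corollary~\ref{cor:weighted}) with the centroid-set bound in Lemma~\ref{lem:k2unionbound2}, then remove an undesirable $\log n$ factor by iterating the construction.

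First, I would compute a constant-factor approximation $\greedy$ for $(k,z)$-clustering on $P$ in time $\tilde O(nk)$ using the Mettu--Plaxton algorithm, as the paper mentions right after Theorem~\ref{thm:main}. Next, I would invoke Lemma~\ref{lem:k2unionbound2} to obtain an $\greedy$-approximate centroid set $\cand$ of size $|P|\cdot(z/\eps)^{O(d)}$, so that
\[
\log|\cand| \;=\; O\bigl(\log|P| + d\log(z/\eps)\bigr).
\]
Plugging this into Theorem~\ref{thm:main} yields, with probability $\geq 1-\pi/2$, a coreset $\coreset_1$ of size
\[
T_1 \;=\; O\!\left(\frac{2^{O(z\log z)}\log^4(1/\eps)}{\min(\eps^2,\eps^z)}\bigl(k\log|P| + kd\log(z/\eps) + \log(1/\pi)\bigr)\right).
\]

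Second, I would eliminate the unwanted $k\log|P|$ term by using that \emph{a coreset of a coreset is a coreset}. Apply Corollary~\ref{cor:weighted} (the weighted variant) on $\coreset_1$ with precision $\eps/3$ and confidence $\pi/2$: since Lemma~\ref{lem:k2unionbound2} does not depend on $|P|$ being unweighted, the same net construction on $\coreset_1$ gives a centroid set of size $|\coreset_1|\cdot(z/\eps)^{O(d)}$, and one obtains a second coreset $\coreset_2$ of size
\[
T_2 \;=\; O\!\left(\tfrac{2^{O(z\log z)}\log^4(1/\eps)}{\min(\eps^2,\eps^z)}\bigl(k\log T_1 + kd + \log(1/\pi)\bigr)\right).
\]
Observe that $\log T_1 = O(\log k + \log d + \log\log|P| + \log(1/\eps) + \log(1/\pi))$, so the dependence on $|P|$ has collapsed to $\log\log|P|$. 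Iterating this step a constant number of times (each with $\eps$ rescaled by a constant, and a constant factor lost in the final approximation guarantee that is absorbed by choosing $\eps$ slightly smaller), the $\log\log\cdots\log|P|$ term becomes subdominant and the size stabilizes at
\[
O\!\left(\frac{\log^5(1/\eps)\cdot 2^{O(z\log z)}}{\min(\eps^2,\eps^z)}\bigl(kd + \log(1/\pi)\bigr)\right),
\]
matching the statement of the corollary (after absorbing the extra $\log(1/\eps)$ factor coming from rescaling $\eps$).

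For the running time, the Mettu--Plaxton step takes $\tilde O(nk)$, and each subsequent call to the framework is linear in the current input size by Theorem~\ref{thm:main}; since the sizes shrink geometrically after the first step, the total running time is $\tilde O(nk)$. The main obstacle in this plan is Lemma~\ref{lem:k2unionbound2} itself: one must build a centroid set of size $|P|\cdot(z/\eps)^{O(d)}$. The natural idea is, for each $p\in P$ and each relevant scale $r$ comparable to $\dist(p,\greedy)$, to place a $(\eps/z)\cdot r$-net inside the ball of radius $O(z/\eps)\cdot r$ around $p$; by Lemma~\ref{prop:doub:net} each such net has size $(z/\eps)^{O(d)}$, and by \cref{def:centroid-set} we only need to approximate centers serving points $p$ with $\cost(p,\calS)\leq(8z/\eps)^z\cost(p,\greedy)$, which lie in exactly such a ball. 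The delicate part---which is precisely what Lemma~\ref{lem:noproblem} is designed for---is to control the case where a replacement net point accidentally lands very close to some other point $p'$: Lemma~\ref{lem:noproblem} lets us prune the solution $\calS$ so that each surviving center has a well-defined ``closest interesting witness'' $q$ with no problematic points, and the $\eps/(z\log(z/\eps))$ error bound of \cref{def:centroid-set} follows.
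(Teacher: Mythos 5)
Your proposal is correct in substance but takes a different route from the paper's own proof. The paper does not iterate its own construction to kill the $k\log|P|$ term: it first runs the pre-existing doubling-metric coreset of Huang, Jiang, Li and Wu \cite{HuangJLW18} of size $\tilde O(k^3 d\,\eps^{-2})$ as a preprocessing step, so that the point set handed to the framework already has only $\poly(k,d,1/\eps)$ distinct points; then a \emph{single} application of \cref{thm:main} with \cref{lem:k2unionbound2} gives $k\log|\cand| = O(kd\log(1/\eps) + k\log(kd/\eps))$, and a short case analysis ($\log k$ versus $d$) absorbs the $k\log(kd/\eps)$ term into $kd + k\log k \le 2kd\cdot\log k/\min(d,\log k)\cdot$(something bounded by $kd$ in either case), yielding the stated bound. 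Your bootstrapping alternative (``a coreset of a coreset is a coreset'') is legitimate and is in fact the route the paper explicitly endorses in the treewidth and minor-free sections, via Theorem 3.1 of \cite{BravermanJKW21}; it has the advantage of not relying on any external coreset construction. One imprecision: you claim a \emph{constant} number of iterations suffices, but driving $\log|P|$ down to $O(kd+\log(1/\pi))$ in general requires $O(\log^* n)$ iterations, with the accompanying (harmless, but necessary to state) rescaling of $\eps$ and of the failure probability across the iterations. Also note that after the first round the input is weighted, so you must route subsequent rounds through \cref{cor:weighted}, whose stated denominator is $\min(\eps^3,\eps^z)$ rather than $\min(\eps^2,\eps^z)$; the paper's one-shot proof quietly faces the same issue since the \cite{HuangJLW18} coreset is weighted too, so this is not a defect specific to your approach, but it is worth flagging if one wants the constants in the exponent of $\eps$ to match the statement exactly.
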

\begin{proof}
We first compute a coreset of size $\tilde{O}(k^3d \varepsilon^{-2})$ \cite{HuangJLW18}. 
Then, combining \cref{thm:main} and \cref{lem:k2unionbound2} yields an algorithm constructing a coreset of size
$$O\left(\frac{\log^4 1/\eps}{2^{O(z\log z)}\min(\eps^2, \eps^z)}\left(kd \log 1/\eps + k \log kd/\varepsilon + \log 1/\pi\right)\right).$$
If $\log k > d$ then $O(\log kd) = O(\log k)$. If $d> \log k$ then $O(kd + k\log kd) = O(kd)$, hence the claimed bound follows. 
\end{proof}

\begin{proof}[Proof of \cref{lem:k2unionbound2}]
For each point $p\in P$, let $c$ be the center to which $p$ was assigned in $\greedy$. Let $B\left(p,\left(\frac{8z}{\varepsilon}\right)\dist(p,c)\right)$ be the metric ball centered around $p$ with radius $\left(\frac{8z}{\varepsilon}\right)\cdot \dist(p,c)$, and let $N_{p}$ be an $\left(\frac{\varepsilon}{4z}\right)\cdot \dist(p,\greedy)$-net of that ball. 

Due to Lemma~\ref{prop:doub:net}, $N_p$ has size $(\varepsilon/z)^{-O(d)}$. Additionally, let $s_f$ be a point not in any $B(p,\left(\frac{10z}{\varepsilon}\right)\dist(p,\greedy))$, if such a point exist.

Let $\mathcal{N}:=s_f \bigcup_{p\in Y} N_p$.
We claim that $\mathcal{N}$ is the desired approximate centroid set.

For a candidate solution $\calS$, apply first \cref{lem:noproblem}, so that we can assume that for any center $s \in \calS$, and $q = \argmin_{p : \dist(p, s) \leq \frac{10z}{\eps}\dist(p, \greedy)} \dist(p, \greedy) + \dist(p, s)$, there is no problematic point with respect to $q$ and $s$.

let $\tilde \calS$ be the solution obtained by replacing every center $s \in \calS$ by $\tilde s \in \cand$ as follows: let $q = \argmin_{p : \dist(p, s) \leq \frac{10z}{\eps}\dist(p, \greedy)} \dist(p, \greedy) + \dist(p, s)$. Pick $\tilde s$ to be the closest point to $s$ in $N_{q}$. If such a $q$ does not exist, pick $\tilde s = s_f$.

Now, let $p$ be a point such that $\cost(p,\calS) \leq \left(\frac{8z}{\varepsilon}\right)^z\cdot \cost(p,\greedy)$, let $s$ be any center in $\calS$ and $q$ defined as previously.
Then, by construction of the $\tilde \calS$, there is a center $\tilde s$ with $\dist(s, \tilde s) \leq \left(\frac{\varepsilon}{4z}\right) \dist(q,\greedy)$ and therefore, using that $p$ is no problematic:
\begin{align}
\notag
\cost(p, \tilde \calS) &\leq \cost(p, \tilde s) \leq (1+\eps)\cost(p, s) + (1+z/\eps)^{z-1}\cost(s, \tilde s)\\
\notag
&\leq (1+\eps) \cost(p, \calS) + (2z/\eps)^{z-1}\left(\frac{\varepsilon}{2z}\right)^z \cost(q,\greedy)\\
&\leq (1+\eps) \cost(p, \calS) + \eps \cost(q, \greedy)\\
\label{eq:s-to-tildes}
&\leq (1+\eps) \cost(p, \calS) + \eps \cost(p, \greedy).
\end{align}

To show the other direction, for any point in $\tilde \calS$ there is a center $s$ with $\dist(s, \tilde s) \leq \left(\frac{\varepsilon}{4z}\right) \dist(q,\greedy)$. Hence the previous equations apply as well, and we can conclude:
for a point $p$ such that $\cost(p,\calS) \leq \left(\frac{8z}{\varepsilon}\right)^z\cdot \cost(p,\greedy)$, 
\[|\cost(p, \calS) - \cost(p, \tilde \calS)|\leq \eps(\cost(p, \calS) + \cost(p, \greedy)).\]

Rescaling $\eps$ concludes the lemma: there is an $\greedy$-approximate centroid set with size 
$|P|\left(\frac{z^2\log z/\eps}{\eps}\right)^{O(d)} = |P|\left(\frac{z}{\eps}\right)^{O(d)}$.

\end{proof}

\section{Graphs with Bounded Treewidth} \label{sec:centroid-treewidth}
In this section, we show that for graphs with treewidth $t$, there exists a small approximate centroid set. Hence, the main framework provides an algorithm computing a small coreset. We first define the treewidth of a graph:

\begin{definition}
A tree decomposition of a graph $G = (V, E)$ is a tree $\calT$ where each node $b$ (call a \textit{bag}) is a subset of $V$ and the following conditions hold:
\begin{itemize}
\item The union of bags is $V$,
\item $\forall v \in V$, the nodes containing $v$ in $\calT$ form a connected subtree of $\calT$, and
\item for all edge $(u, v) \in E$, there is one bag containing $u$ and $v$.
\end{itemize}

The treewidth of a graph $G$ is the smallest integer $t$ such that their exists a tree decomposition with maximum size bag $t+1$.
\end{definition}

\begin{lemma}\label{lem:centroid-treewidth}
Let $G = (V, E)$ be a graph with treewidth $t$, $X \subseteq V$ and $k, z > 0$. Furthermore, let $\greedy$ be solution to $(k, z)$-clustering for $X$. 
Then, there exists an $\greedy$-approximate centroid set for $(k,z)$-clustering on $V$ of size $\poly(|X|)\left(\frac{z^2\log z/\eps}{\eps}\right)^{O(t)}$.
\end{lemma}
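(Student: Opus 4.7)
The plan is to use a centroid decomposition of the tree decomposition $\calT$ to obtain a hierarchical separator structure, then follow the sketch given in the overview section: discretize distances from centers to separators and enumerate tuples, together with an application of \cref{lem:noproblem} to keep replacements harmless. First, using that $\calT$ has a bag of size at most $t+1$ whose removal splits $G$ into components each intersecting $X$ in at most $|X|/2$ points, I would recurse to obtain a laminar family $\mathcal{H}$ of regions. Each region $R\in\mathcal{H}$ has a boundary $\partial R$ of size $O(t)$ such that any shortest path in $G$ from a vertex inside $R$ to a vertex outside $R$ intersects $\partial R$. Standard accounting gives $|\mathcal{H}|=O(|X|)$ and depth $O(\log|X|)$. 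I would then apply \cref{lem:noproblem} to any candidate solution $\calS$ to obtain $\calS'\subseteq\calS$ such that every $s\in\calS'$ has an anchor $q_s\in X$ that is not too close to $s$ compared to any other client, at the price of an $\eps/(z\log(z/\eps))$-error absorbed into the centroid set definition.

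Next, I would build $\cand$ as follows. For each child separator $B\subseteq R\in\mathcal{H}$, each $q\in X$, and each tuple $(\tilde d_1,\ldots,\tilde d_{|B|})$ of nonnegative integer multiples of $\eta_q:=\eps\,\dist(q,\greedy)/(2^{O(z)}z^2\log(z/\eps))$ lying in $[0,O(z/\eps)\cdot\dist(q,\greedy)]$, add one witness vertex $v\in V$ with $|\dist(v,b_i)-\tilde d_i|\leq\eta_q$ for every $b_i\in B$, provided such a vertex exists. The number of tuples per $(B,q)$ is $\bigl(z^2\log(z/\eps)/\eps\bigr)^{O(t)}$, and multiplying by $|\mathcal{H}|\cdot|X|=O(|X|^2)$ gives the claimed size $\poly(|X|)\bigl(z^2\log(z/\eps)/\eps\bigr)^{O(t)}$. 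For a given $s\in\calS'$, let $R_s\in\mathcal{H}$ be the smallest region containing both $s$ and $q_s$ and let $B_s$ be the child separator of $R_s$ that separates $s$ from $q_s$; choose $\tilde s\in\cand$ to be the witness associated with $(B_s,q_s)$ and the rounded tuple $\bigl(\mathrm{round}(\dist(s,b_i))\bigr)_{b_i\in B_s}$, which exists because $s$ itself certifies realizability (after checking $\dist(s,b_i)\le O(z/\eps)\dist(q_s,\greedy)$ using $\dist(s,q_s)\leq(8z/\eps)\dist(q_s,\greedy)$).

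For correctness, I would fix an interesting $p$, meaning $\cost(p,\calS)\leq(8z/\eps)^z\cost(p,\greedy)$. If $p$ lies on the opposite side of $B_s$ from $s$, then both $s$ and $\tilde s$ see $p$ only through $B_s$, hence $|\dist(s,p)-\dist(\tilde s,p)|\leq\max_{b\in B_s}|\dist(s,b)-\dist(\tilde s,b)|\leq\eta_{q_s}$. Plugging this into \cref{lem:weaktri} with slack $\eps'=\eps/(z\log(z/\eps))$, the additive error becomes at most $\eps'\cost(p,s)+2^{O(z)}\cdot(\eps/(z^2\log(z/\eps)))\cost(q_s,\greedy)$, and non-problematicness of $p$ with respect to $(q_s,s)$ (from \cref{lem:noproblem}) converts $\cost(q_s,\greedy)$ into $O(\cost(p,\greedy)+\cost(p,s))$, yielding the bound required by \cref{def:centroid-set}. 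The main obstacle is the remaining case: interesting $p$ that lies on the same side of $B_s$ as $s$, where shortest $s$--$p$ paths need not cross $B_s$. I would handle this by strengthening the construction so that $\tilde s$ matches $s$'s rounded distance tuple simultaneously at every relevant separator along the root-to-leaf branch of $\mathcal{H}$ containing $s$: since any $p$ not separated from $s$ by $B_s$ is eventually separated from $s$ at a deeper level of $\mathcal{H}$ (or lies in a leaf region of size $O(1)$ that we can add wholesale to $\cand$), replacing $s$ by $\tilde s$ preserves distances at whichever level separates $p$ from $s$. The delicate accounting is that matching tuples at many levels does not blow up $|\cand|$: rounding is done once per $(B,q)$ pair, and because $\mathcal{H}$ is laminar the total number of $(B,q)$ pairs is $O(|X|^2)$, which stays within the $\poly(|X|)$ budget. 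This compatibility of a single replacement $\tilde s$ with all relevant levels is the technically subtle part I expect to need the most care.
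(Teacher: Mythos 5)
There is a genuine gap at exactly the point you flag as ``technically subtle'': the case of an interesting $p$ on the same side of the separator as $s$. Your proposed fix --- requiring $\tilde s$ to match the rounded distance tuple simultaneously at every separator along the root-to-leaf branch of $\mathcal{H}$ containing $s$ --- does not survive the size accounting. To guarantee a witness for every center $s$, the set $\cand$ must contain one vertex per realizable \emph{combination} of rounded tuples across all $O(\log|X|)$ levels of the branch, and the number of such combinations is the \emph{product} of the per-level counts, i.e.\ $\left(\frac{z^2\log z/\eps}{\eps}\right)^{O(t\log|X|)}$, not the sum. Your argument that ``the total number of $(B,q)$ pairs is $O(|X|^2)$'' counts pairs rather than combinations and does not bound $|\cand|$; the resulting size is $|X|^{O(t\log(z/\eps))}$, which exceeds the claimed $\poly(|X|)\left(\frac{z^2\log z/\eps}{\eps}\right)^{O(t)}$. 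Moreover, even within a leaf region, ``adding the region wholesale to $\cand$'' does not by itself give a replacement center whose distance to a same-side client approximates $\dist(s,p)$.

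The paper avoids the problem structurally rather than hierarchically. It invokes Lemma~3.7 of Baker et al., which produces a \emph{flat} collection $\calT$ of $\poly(|X|)$ regions $A$, each with boundary $P_A$ of size $O(t)$ \emph{and}, crucially, with $|A\cap X|=O(t)$. The distance tuple of a center $s\in A$ then records rounded distances both to $P_A$ (handling clients outside $A$ via the separator, as in your opposite-side case) and \emph{directly to the $O(t)$ clients of $X$ inside $A$} (handling the same-side case with no recursion at all). A single tuple of length $O(t)$ thus covers every client, and the tuple count stays at $\left(\frac{z}{\eps}\right)^{O(t)}$ per choice of $(A,q)$. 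Your opposite-side analysis, the use of \cref{lem:noproblem}, and the anchor point $q$ all match the paper; what is missing is a decomposition in which each region contains only $O(t)$ clients, which is what makes the same-side case trivial and keeps the exponent at $O(t)$ rather than $O(t\log|X|)$.
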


Applying this lemma with $X$ yields the direct corollary:
\begin{corollary}\label{cor:coreset-treewidth}
Let $G = (V, E)$ be a graph with treewidth $t$, $X \subseteq V$, $k$ and $z > 0$.

There exists an algorithm running time $\tilde O(nk)$ that constructs an $\eps$-coreset for $(k, z)$-clustering on $X$, with size 
\[O\left(\frac{\log^5 1/\eps}{2^{O(z\log z)}\min(\eps^2, \eps^z)}\left(k \log k + kt + \log(1/\pi)\right)\right).\qedhere\]
\end{corollary}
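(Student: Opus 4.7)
The plan is to combine \cref{lem:centroid-treewidth} with the main framework \cref{thm:main} (in its weighted form, \cref{cor:weighted}), using an initial ``coreset-of-a-coreset'' bootstrapping step to reduce the instance size. The subtlety is that the size of the approximate centroid set provided by \cref{lem:centroid-treewidth} is $\poly(|X|) \cdot (z^2\log(z/\eps)/\eps)^{O(t)}$, so a naive application of \cref{thm:main} to the raw input $X$ would introduce a $k \log n$ term. We must eliminate the dependency on the number of input points before invoking the framework.

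First, I would compute a constant-factor approximation $\greedy$ for $(k,z)$-clustering on $X$ in time $\tilde O(nk)$ using the Mettu--Plaxton algorithm (as noted after \cref{thm:main} in the paper). Next, I would invoke an existing, weaker coreset construction for bounded-treewidth graphs, namely the $\tilde O(k^2 t/\eps^2)$-sized coreset of Baker et al.~\cite{baker2020coresets}, yielding a weighted set $X'$ of $N' = \poly(k,t,\eps^{-1})$ distinct points such that $\cost(X',\calS) = (1 \pm \eps/10)\cost(X,\calS)$ for all $k$-center solutions $\calS$. Since $X' \subseteq V$ and the treewidth of $G$ is intrinsic, all structural hypotheses of \cref{lem:centroid-treewidth} still apply to $X'$.

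Third, I would apply \cref{lem:centroid-treewidth} to $X'$ to obtain an $\greedy$-approximate centroid set $\cand$ of size $\poly(N') \cdot (z^2\log(z/\eps)/\eps)^{O(t)}$, so that
\[
\log|\cand| = O(\log k + \log t + t\log(z/\eps)).
\]
Then I invoke \cref{cor:weighted} on the weighted instance $X'$ with this centroid set (using the given constant factor approximation $\greedy$, which is also a constant-factor approximation for $X'$ up to $(1\pm\eps/10)$). The resulting coreset has size
\[
O\!\left(\frac{2^{O(z\log z)} \log^4(1/\eps)}{\min(\eps^3,\eps^z)}\bigl(k\log|\cand| + \log\log(1/\eps) + \log(1/\pi)\bigr)\right),
\]
and absorbing the $\log t$ and $\log(z/\eps)$ contributions into the $\polylog(1/\eps)$ factor gives exactly the claimed bound
\[
O\!\left(\frac{\log^5(1/\eps)}{2^{O(z\log z)}\min(\eps^2,\eps^z)}\bigl(k\log k + kt + \log(1/\pi)\bigr)\right).
\]
Composing this with the initial coreset of Baker et al.\ (after rescaling $\eps$ by a constant) yields an $\eps$-coreset for the original instance $X$; this is the standard ``coreset of a coreset'' property, since weighted sums are preserved for every candidate solution.

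The main obstacle is precisely the bootstrapping step: one must verify that the preliminary coreset construction produces distinct points inside the same bounded-treewidth graph so that \cref{lem:centroid-treewidth} remains applicable, and that the approximate-centroid-set property transfers correctly from $X'$ to $X$ after composition. The running time is dominated by the $\tilde O(nk)$ approximation algorithm, the Baker et al.\ construction (which runs within this budget), and the $O(N')$ final application of \cref{cor:weighted}, yielding the claimed $\tilde O(nk)$ total.
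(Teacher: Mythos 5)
Your proposal is correct and follows essentially the same route as the paper: bootstrap with the Baker et al.\ coreset to reduce to $\poly(k,t,1/\eps)$ distinct points, apply \cref{lem:centroid-treewidth} to the reduced instance to get $\log|\cand| = O(\log k + t\log(z/\eps))$, and then invoke \cref{cor:weighted}, composing via the coreset-of-a-coreset property. The only cosmetic difference is that the paper computes the constant-factor approximation on the reduced set $X_1$ rather than on $X$ (and also notes one could replace Baker et al.\ by $O(\log^* n)$ iterations of its own construction), neither of which changes the argument.
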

\begin{proof}
Let $X \subseteq V$. 
We start by computing a $(k, \eps)$-coreset $X_1$ of size $O(\poly(k, 1/\eps, t))$, using the algorithm from \cite{baker2020coresets}

We now apply our framework to $X_1$. Computing an approximation on $X_1$ takes time $\tilde O(|X_1|k)$, using the algorithm from Mettu and Plaxton~\cite{MettuP04}.

\cref{lem:centroid-treewidth} ensure the existence of an approximate centroid set for $X_1$ with size $\poly(|X_1|)\left(\frac{z}{\eps}\right)^{O(t)}$. Hence,  \cref{cor:weighted} and the framework developed in the previous sections gives an algorithm that computes an $\eps$-coreset of $X$ with size 
\[O\left(\frac{\log^4{1/\eps}}{2^{O(z\log z)}\min(\eps^2, \eps^z)}\left(k \log |X_1| + kt\log 1/\eps + \log(1/\pi)\right)\right).\]

Using that $|X_1| = O(\poly(k, \eps, t))$ yields a coreset of size 
\[O\left(\frac{\log^5 1/\eps}{2^{O(z\log z)}\min(\eps^2, \eps^z)}\left(k \log k + kt + \log(1/\pi)\right)\right).\]

Instead of using \cite{baker2020coresets}, one could apply our algorithm repeatedly as in Theorem 3.1 of \cite{BravermanJKW21}, to reduce iteratively the number of distinct point consider and to eventually get the same coreset size. 
%
The number of repetition needed to achieve that size bound is $O(\log^* n)$, where $\log^*(x)$ is the number of times $\log$ is applied to $x$ before the result is at most $1$; formally $\log^*(x) = 0$ for $x \leq 1$, and $\log^*(x) = 1 + \log^* \log x$ for $x > 1$. The complexity of this repetition is therefore $\tilde O(nk)$, and the success probability $1 - \pi$, as proven in \cite{BravermanJKW21}.
\end{proof}

For the proof of \cref{lem:centroid-treewidth}, we rely on the following structural lemma:\footnote{In the statement of \cite{baker2020coresets}, the third item is slightly different. To recover our statement from theirs, take $P_A = A$ when $|A| = O(t)$.}.

\begin{lemma}[Lemma 3.7 of \cite{baker2020coresets}]
Given a graph $G = (V, E)$ of treewidth $t$, and $X \subseteq V$, there exists a collection $\calT$ of subsets of $V$ such that:
\begin{enumerate}
\item $\cup_{A\in \calT} A = V$,
\item $|\calT| = \poly(|X|)$,
\item For each $A \in \calT$,  $|A\cap X| = O(t)$, and there exists $P_A \subseteq V$ with $|P_A| = O(t)$ 
such that there is no edge between $A\setminus P_A$ and $V \setminus (A \cup P_A)$.
\end{enumerate}
\end{lemma}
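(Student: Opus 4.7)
The plan is to construct $\calT$ from a tree decomposition of $G$, partitioning the underlying tree into $O(|X|)$ small connected subtrees whose bag-unions give the tiles $A$, and using bounded-size bag adhesions as the separators $P_A$.

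\textbf{Construction.} Fix a tree decomposition $(T,\{X_b\}_{b\in T})$ of $G$ with $|X_b|\le t+1$ for every bag $b$; if $X\neq\emptyset$, root $T$ at an arbitrary bag. For each $v\in X$, the set $T_v=\{b: v\in X_b\}$ is a connected subtree of $T$; let $b_v\in T_v$ be its root-closest node, the \emph{canonical bag} of $v$. Give each canonical $b_v$ weight $1$ and all other bags weight $0$, for total weight $|X|$. Let $T^*\subseteq T$ be the Steiner subtree spanning $\{b_v:v\in X\}$, and form a compressed tree $T'$ by contracting each maximal degree-$2$ chain of weight-$0$ bags inside $T^*$ into a super-edge; then $|T'|=O(|X|)$ since its nodes are only canonical bags or branching bags of $T^*$. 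Apply a Frederickson-style tree partition to $T'$ with target weight $t$: this produces $s=O(|X|/t+1)$ connected subtrees $T'_1,\ldots,T'_s$ of $T'$, each of total weight $\le t$ and of external degree $O(1)$ in $T'$ (at most $O(1)$ edges of $T'$ leave $T'_i$).

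\textbf{Lifting and tiles.} Expand each $T'_i$ back to a subtree $\hat T_i\subseteq T^*$ by replacing internal super-edges with their underlying paths, and for each external super-edge pick a single cut adhesion and assign the intermediate bags on the $T'_i$-side to $\hat T_i$; then attach every side-branch of $T$ (a maximal subtree disjoint from $T^*$) wholesale to the unique $\hat T_i$ that contains its attachment bag in $T^*$. Set
\[A_i=\bigcup_{b\in\hat T_i}X_b,\qquad P_{A_i}=\bigcup_{\substack{(b,b')\in E(T)\\ b\in\hat T_i,\,b'\notin\hat T_i}}X_b\cap X_{b'}.\]
Since side-branches contribute no cut edges (their one attachment edge is internal) and each external super-edge contributes exactly one cut adhesion, the external degree of $\hat T_i$ in $T$ is $O(1)$, hence $|P_{A_i}|\le O(1)\cdot(t+1)=O(t)$.

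\textbf{Verification of the three conditions.} (1) Every $v\in V$ lies in some $X_b$, and every $b\in T$ lies in some $\hat T_i$, so $\bigcup_i A_i=V$. (2) $|\calT|=s=O(|X|)\le\poly(|X|)$. (3a) For $v\in A_i\cap X$, either $b_v\in\hat T_i$ (at most $\mathrm{weight}(\hat T_i)\le t$ such $v$), or $b_v\notin\hat T_i$ but $T_v\cap\hat T_i\neq\emptyset$, in which case connectedness of $T_v$ forces $v\in X_b\cap X_{b'}$ for some cut edge $(b,b')\in\partial\hat T_i$, hence $v\in P_{A_i}$; together $|A_i\cap X|\le t+|P_{A_i}|=O(t)$. (3b) For the separator property, take any edge $(u,w)\in E(G)$ with $u\in A_i\setminus P_{A_i}$ and $w\in V\setminus(A_i\cup P_{A_i})$: by the tree-decomposition property some bag $X_b$ contains both $u$ and $w$. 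Since $u\notin P_{A_i}$, $T_u$ cannot cross $\partial\hat T_i$ (crossing would place $u$ in a cut adhesion), so $T_u\subseteq\hat T_i$ and $b\in T_u\subseteq\hat T_i$, which forces $w\in X_b\subseteq A_i$, contradicting $w\notin A_i$.

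\textbf{Main obstacle.} The delicate step is the lifting: one must verify that the external-degree-$O(1)$ property of the Frederickson partition of $T'$ survives both the uncontraction of super-edges (by choosing a unique cut point per external super-edge) and the wholesale attachment of side-branches (which is why side-branches must not be split across subtrees). Each requires a consistent tie-breaking rule, and weight-zero subtrees produced by the Frederickson partition must be merged into adjacent weight-bearing subtrees (or simply bounded by the $O(|X|)$ branching nodes of $T'$) to keep $|\calT|=\poly(|X|)$; apart from these bookkeeping details, the argument is entirely standard tree-decomposition reasoning.
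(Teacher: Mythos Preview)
The paper does not give its own proof of this lemma: it is quoted verbatim from \cite{baker2020coresets} (with a minor reformulation explained in the footnote), so there is nothing in the paper to compare your argument against.

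Your construction is the standard and correct one: take a width-$t$ tree decomposition, mark the root-closest bag of each $v\in X$, build the Steiner/compressed tree on these $O(|X|)$ marked bags, partition it into connected pieces of marked-weight $\le t$ with $O(1)$ boundary edges, and let each tile be the union of bags in a piece with $P_A$ the union of the $O(1)$ boundary adhesions. Your verifications of (1)--(3) are correct; in particular the separator argument in (3b) is exactly the usual ``if $T_u$ leaves $\hat T_i$ it must cross an adhesion'' reasoning.

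Two technical points are glossed over and deserve one line each. First, Frederickson's partition guarantees external degree $O(1)$ only on bounded-degree trees, and your compressed tree $T'$ may have high-degree branching nodes; the fix is to binarize $T'$ (replace each high-degree node by a weight-$0$ binary caterpillar), which keeps $|T'|=O(|X|)$. Second, the bound $s=O(|X|/t+1)$ is optimistic for a weight-based partition on a tree with many weight-$0$ nodes; but since $|T'|=O(|X|)$ you trivially have $s=O(|X|)$, which is all that is needed for $|\calT|=\poly(|X|)$. With these two remarks added, your sketch is a complete proof.
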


Our construction relies on the following simple observation. Let $s$ be a possible center, and $p$ be a vertex such that $\cost(p, s) \leq \left(\frac{4z}{\eps}\right)^z \cost(p, \greedy)$. Let $A \in \calT$ such that $p \in A$. Then, either $s \in A$, or the path connecting $p$ to $s$ has to go through $P_A$. 

We use this observation as follows: it would be enough to replace a center $s$ from solution $\calS$ by one that has approximately the same distance of all points of $P_A$.  The main question is : how should we round the distances to $P_A$? The goal is to classify the potential centers into few classes, such that taking one representative per class gives an approximate centroid set. The previous observation indicates that classifying the centers according to their distances to points of $P_A$ is enough. However, there are too many different classes: instead, we round those distances.

Ideally, this rounding would ensure that for any point $p$ and any center $s$, all centers in $s$'s class have same distance to $p$, up to an additive error $\eps(\cost(p, s) + \cost(p, \greedy))$. This would mean rounding the distance from $s$ to any point in $P_A$ by that amount -- for instance, rounding to the closest multiple of $\eps(\cost(p, s) + \cost(p, \greedy))$. Nonetheless, this way of rounding depends on each point $p$: a rounding according to $p$ may not be suited for another point $q$. To cope with that, we will quite naturally round distances according to the point $p$ that minimizes $\cost(p, s) + \cost(p, \greedy)$. Additionally, to ensure that the number of classes stays bounded, it is not enough to round to the closest multiple of $\eps(\cost(p, s) + \cost(p, \greedy))$: we also show that distances bigger than $\frac{1}{\eps}(\cost(p, s) + \cost(p, \greedy))$ can be trimmed down to $\frac{1}{\eps}(\cost(p, s) + \cost(p, \greedy))$. That way, for each point of $P_A$ there are only $1/\eps^2$ many possible rounded distances.

Hence, a class is defined by a certain point $p$, a part $A$ and by $|P_A| = t$ many rounded distances: in total, that makes $\poly(|X|) \eps^{-O(t)}$ many classes. The approximate centroid set contains one representative of each class: this would prove \cref{lem:centroid-treewidth}. We now make the argument formal, in particular to show that the error incurred by the trimming is affordable.

\begin{proof}[Proof of \cref{lem:centroid-treewidth}]
Given a point $s \in V$ and a set $A \in \calT$, we call a \textit{distance tuple to $A$} 
$\mathbf{d}_{A}(s) := \left(\dist(s, x) ~|~ \forall x \in X \cap A\right) +
 \left(\dist(s, x) ~|~ \forall x \in P_A\right)$. 
 Let $q \in X$: the rounded distance tuple of $s$  with respect to $q$ is $\widetilde {\mathbf{d}_{A, q}}(s)$ defined as follows:
\begin{enumerate}
\item For $x \in X \cap A$ or $x = q$, $\widetilde  d(s, x)$ is the multiple of 
$\frac{\eps}{z} \cdot \dist(x, \greedy)$ smaller than $\frac{10z}{\eps} \dist(x, \greedy)$ closest to $\dist(s, x)$. 
\item For $y \in P_A$, $\widetilde  d(s, y)$  is the multiple of 
$\frac{\eps^3}{8z^3}\cdot \dist(q, \greedy)$ smaller than $\frac{200z^3}{\eps^3}\dist(q, \greedy)$ closest to $\dist(s, y)$.
\end{enumerate}

Now, for every $A\in \calT$, $q\in X$ and every rounded distance tuple $T$ to $A$ with respect to $q$ such that $\exists s: T = \widetilde  {\mathbf{d}_A}(s)$, $\cand$ contains one point $s \in A$ having that rounded distance tuple.

\paragraph*{Bounding the size of $\cand$.}
Fix some $A \in \calT$, and $q \in X$. A rounded distance tuple to $A$ is made of $O(t)$ many distances. Each of them takes its value among $\poly(z/\eps)$ possible numbers, due to the rounding. Hence, there are at most $\left(\frac{z}{\eps}\right)^{O(t)}$ possible rounded distance tuple to $A$, and so at most that many points in $\cand$. Since there are $\poly(|X|)$ different choices for $A$ and $q$, the total size of $\cand$ is $\poly(|X|)\left(\frac{z}{\eps}\right)^{O(t)}$.

\paragraph*{Bounding the error.} 
We now bound the error induced by approximating a solution $\calS$ by a solution $\tilde \calS \subseteq \cand$.

First, by applying \cref{lem:noproblem}, we can assume that  for any center $s \in \calS$, and $q = \argmin_{p : \dist(p, s) \leq \frac{10z}{\eps}\dist(p, \greedy)} \dist(p, \greedy) + \dist(p, s)$, there is no problematic point with respect to $q$ and $s$.
{\setlength{\emergencystretch}{5em}\par}

Let $A \in \calT$ such that $s \in A$, and $q = \argmin_{\substack{p : \dist(p, s) \leq \frac{10z}{\eps}\dist(p, \greedy)}} \dist(p, \greedy) + \dist(p, s)$. $\tilde s$ is chosen to have the same rounded distance tuple to $A$ with respect to $q$ as $s$. $\tilde \calS$ is the solution made of all such $\tilde s$, for $s \in \calS$.

As in the proof of \cref{lem:k2unionbound2}, we first show that points close to $s$ have cost preserved in $\tilde s$. We will later show that points with large distance to $s$ have also large distance to $\tilde s$, to ensure that their distance to $\tilde \calS$ does not decrease.

Let $p \in X$ be an input point. By \cref{lem:noproblem}, $p$ is not problematic with respect to $s$ and $q$.
Note that $s$ is not necessarily the closest center to $p$. We aim at showing that $|\cost(p, s) - \cost(p, \tilde s)| \leq \eps(\cost(p, s) + \cost(p, \greedy))$.

 First, when $p \notin X \cap A$, we distinguish two subcases:
\begin{itemize}
\item either $\dist(p, s) \leq \frac{200z^3}{\eps^3}\dist(q, \greedy)$: in that case, let $x \in p_A$ that is on the shortest path between $p$ and $s$. We have $\dist(s, x) \leq \frac{200z^3}{\eps^3}\dist(q, \greedy)$, and so $s$ and $\tilde s$ have the same rounded distance to $x$. Hence,
\begin{align*}
\dist(p, \tilde s) &\leq \dist(p, x) + \dist(x, \tilde s) \leq \dist(p, x) + \dist(x, s) + \frac{\eps^3}{8z^3} \dist(q, \greedy)\\
&\leq \dist(p, s) + \frac{\eps^3}{8z^3} \cdot \left(\frac{8z^2}{\eps^2}\right)(\dist(p, \greedy) + \dist(p, s))\\
&\leq \left(1+\frac{\eps}{z}\right)\dist(p, s) + \frac{\eps}{z} \dist(p, \greedy),
\end{align*}
The first line implies that $\dist(p, \tilde s) \leq  \frac{200z^3}{\eps^3}\dist(q, \greedy)$ as well: we can therefore repeat the argument, choosing $x$ to be on the shortest path between $p$ and $\tilde s$ instead, to show that $\dist(p, s) \leq  \left(1+\frac{\eps}{z}\right)\dist(p, \tilde s) + \frac{\eps}{z} \dist(p, \greedy)$.
This implies, using \cref{lem:weaktri}, that $|\cost(p, \tilde s) - \cost(p, s)| \leq \frac{\eps}{z}\cdot (\cost(p, s) + \cost(p, \greedy))$.

\item Otherwise, $\dist(p, s) > \frac{200z^3}{\eps^3}\dist(q, \greedy)$. In that case, we can argue that $\dist(s, \tilde s)$ is negligible compared to $\dist(p, s)$. Recall that $\dist(q, s) \leq \frac{10z}{\eps}\dist(q, \greedy)$.

The rounding ensures that the distance to $q$ is preserved: $\dist(q, \tilde s) \leq \dist(q,s) +\frac{\eps}{z} \dist(q, \greedy)$. Hence, we get that 

%
\begin{align*}
\dist(s, \tilde s) &\leq 2\dist(q, s) + \frac{\eps}{z}\dist(q, \greedy)\\
&\leq \left(\frac{100z^2}{\eps^2} + \frac{\eps}{z}\right)\cdot \dist(q, \greedy)\\
&\leq \frac{200z^2}{\eps^2} \cdot \frac{\eps^3}{200z^3}\cdot \dist(p, s) \leq \frac{\eps}{z}\cdot \dist(p, s).
\end{align*}
Finally, using Lemma~\ref{lem:weaktri}, we conclude again that $|\cost(p, \tilde s) - \cost(p, s)| \leq \eps\cost(p, s) + \eps\cost(p, \greedy)$.
\end{itemize}

Now, in the other case where $p \in X \cap A$, if $\dist(p, s) \leq \frac{10z}{\eps}\dist(p, \greedy)$, then
the choice of $\tilde s$ ensures that 
 $|\dist(\tilde s, p) - \dist(s, p)| \leq \frac{\eps}{z} \dist(x, \greedy)$ 
 and therefore $|\cost(p, s) - \cost(p, \tilde s)| \leq \eps \cost(p, s) + (1+z/\eps)^{z-1}\cost(s, \tilde s) 
 \leq \eps\cost(p, \calS) + \eps \cost(p, \greedy)$. 
 In the last case when $\dist(p, s) > \frac{10z}{\eps}\dist(p, \greedy)$, then the rounding enforces $\dist(p, \tilde s) = \frac{10z}{\eps}\dist(p, \greedy)$.

Hence, in all possible cases, it holds that either $\dist(p, s)$ and $\dist(p, \tilde s)$ are bigger than $\frac{10z}{\eps}\dist(p, \greedy)$, or: 
\begin{equation}\label{eq:centroid-treewidth}
|\cost(p, \tilde s) - \cost(p, s)| \leq \eps\cost(p, s) + \eps\cost(p, \greedy).
\end{equation}

To extend that result to the full solutions $\calS$ and $\tilde \calS$ instead of a particular center, we note that since $p$ is interesting, $\dist(p, \calS) \leq \frac{8z}{\eps}\dist(p, \greedy)$. Hence, we can apply \cref{eq:centroid-treewidth} with $s$ being the closest point to $p$ in $\calS$:  
$\cost(p, \tilde \calS) \leq (1+\eps)\cost(p, \calS) + \eps\cost(p, \greedy)$.

In particular, this implies that $\dist(p, \tilde s) \leq \frac{10z}{\eps}\dist(p, \greedy)$.
Choose now $\tilde s$ to be the closest point to $p$ in $\tilde \calS$ and $s$ its corresponding center in $\calS$. Using \cref{eq:centroid-treewidth} therefore gives:
\begin{align*}
\cost(p, s) &\leq \cost(p, \tilde \calS) + \eps(\cost(p, \greedy) + \cost(p, s)) \\
\implies \cost(p, s) &\leq \frac{1}{1-\eps}\cost(p, \tilde \calS) + \frac{\eps}{1-\eps} \cost(p, \greedy)\\
&\leq (1+2\eps)\cost(p, \tilde \calS) + 2\eps \cost(p, \greedy)\\
\implies \cost(p, \calS) &\leq (1+2\eps)\cost(p, \tilde \calS) + 3\eps \cost(p, \greedy).
\end{align*}

Hence, combining those two inequalities yields
\[|\cost(p, \calS) - \cost(p, \tilde \calS)| \leq \eps\cost(p, \calS) +2\eps\cost(p, \tilde \calS) + 4\eps\cost(p, \greedy).\]
To remove the dependency in $\cost(p, \tilde \calS)$ from the right hand side, one can upper bound it with $\cost(p, \calS) + |\cost(p, \calS) - \cost(p, \tilde \calS)|$, which yields the following:
\begin{align*}
|\cost(p, \calS) - \cost(p, \tilde \calS)| 
&\leq \eps\cost(p, \calS) +2\eps\left(\cost(p, \calS) +|\cost(p, \calS) - \cost(p, \tilde \calS)|\right)\\
&\qquad + 4\eps\cost(p, \greedy)\\
 \iff |\cost(p, \calS) - \cost(p, \tilde \calS)| &\leq \frac{1}{1-2\eps}\left(3\eps\cost(p, \calS) +  4\eps\cost(p, \greedy)\right)\\
 &\leq 9\eps\cost(p, \calS) +  12\eps\cost(p, \greedy)
\end{align*}
Finally, rescaling $\eps$ concludes.

\end{proof}

\section{Planar Graphs}
\label{sec:planar}
The goal of this section is to prove the existence of small centroid sets for planar graph, analogously to the treewidth case. This is the following lemma:

\begin{lemma}\label{thm:planar-centroid}
  Let $G = (V, E)$ be an edge-weighted planar graph, a set $X \subseteq V$ and two positive integers $k$ and $z$. Furthermore, let $\greedy$ be a solution of $(k, z)$-clustering of $X$. 
  
  Then, there exists an $\greedy$-approximate centroid set for $(k,z)$-clustering on $V$ of size $\poly(|X|)\cdot \exp\big( O(z^3\eps^{-3} \log z/\eps)\big)$.

\end{lemma}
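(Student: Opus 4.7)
The plan is to mimic the treewidth argument of \cref{lem:centroid-treewidth}, replacing the bounded-size vertex separator $P_A$ by a small collection of shortest-path separators. Concretely, I will invoke a Thorup-style shortest-path separator theorem to obtain a family $\calT$ of subsets of $V$ with $|\calT|=\poly(|X|)$, $\bigcup_{A\in\calT} A = V$, each $A$ containing $O(1)$ vertices of $X$, together with a boundary $B_A$ made of $O(1)$ shortest paths $\pi_A^{1},\ldots,\pi_A^{r}$ enjoying the separating property that no edge joins $A\setminus B_A$ to $V\setminus(A\cup B_A)$. This plays the role of $P_A$ from the treewidth construction; what is new is that $B_A$ can have unbounded cardinality, so we approximate it by carefully chosen \emph{portals} on each $\pi_A^{j}$.

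For each region $A$ and each reference vertex $q\in X$ (used, as in the treewidth proof, to fix a scale $\dist(q,\greedy)$), I will place on every boundary path $\pi_A^{j}$ a set of $O((z/\eps)^{3})$ portals: for each of the $O(z\log(z/\eps))$ dyadic scales between $\eps\dist(q,\greedy)/z$ and $(10z/\eps)^{z}\dist(q,\greedy)$, portals are laid along $\pi_A^{j}$ at spacing $\eps\dist(q,\greedy)/z$, capped in total length at $(10z/\eps)^{z}\dist(q,\greedy)$. I then define the rounded distance tuple of a candidate center $s$ exactly as in the treewidth proof: distances from $s$ to vertices of $X\cap A$ are rounded at granularity $\eps\dist(x,\greedy)/z$ and truncated at $(10z/\eps)\dist(x,\greedy)$, while distances from $s$ to portals are rounded at granularity $\eps^{3}\dist(q,\greedy)/z^{3}$ and truncated at $(10z/\eps)^{z}\dist(q,\greedy)$. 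The centroid set $\cand$ contains one representative per region, reference point, and realizable rounded tuple.

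To bound the approximation error, I will first invoke \cref{lem:noproblem} to pass to a sub-solution with no problematic points. Given a candidate solution $\calS$, define $\tilde\calS$ by replacing each $s\in\calS$ by the representative $\tilde s$ of its class, with reference $q=\argmin_{p\,:\,\dist(p,s)\le(10z/\eps)\dist(p,\greedy)}\dist(p,\greedy)+\dist(p,s)$. For an interesting point $p$ lying in some region $A$, the shortest $s$--$p$ walk either stays inside $A$ (handled verbatim as in \cref{lem:centroid-treewidth} via the $X\cap A$ rounding) or crosses some $\pi_A^{j}$ at a vertex $y^{\ast}$. In the latter case the portal $y$ nearest $y^{\ast}$ at the appropriate scale satisfies $\dist(y^{\ast},y)\le\eps\dist(q,\greedy)/z$, and because $\widetilde{d}(s,y)=\widetilde{d}(\tilde s,y)$ by construction, the triangle inequality yields $|\dist(s,p)-\dist(\tilde s,p)|\le O(\eps/z)(\dist(p,\greedy)+\dist(p,\calS))$. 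Applying \cref{lem:weaktri} converts this into the cost bound $|\cost(p,\calS)-\cost(p,\tilde\calS)|\le\frac{\eps}{z\log(z/\eps)}(\cost(p,\calS)+\cost(p,\greedy))$ required by \cref{def:centroid-set}.

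The main obstacle I anticipate is the portal-approximation step: a single boundary path $\pi_A^{j}$ can be far longer than $\dist(q,\greedy)$, so naive uniform portaling is too coarse to give the right additive error. This is what forces the $(z/\eps)^{3}$ portal count per path: roughly a factor $(z/\eps)^{z}$ worth of length must be covered because interesting points satisfy $\dist(p,s)\le(8z/\eps)^{z}\dist(p,\greedy)$, divided by portal spacing $\eps\dist(q,\greedy)/z$, and multiplied by an extra $(z/\eps)$ from per-portal distance rounding. Counting $\poly(|X|)$ regions, $|X|$ choices of $q$, $O(1)$ boundary paths per region, and at most $(z/\eps)^{O((z/\eps)^{3})}$ realizable rounded tuples per region yields the stated size $\poly(|X|)\cdot\exp\bigl(O(z^{3}\eps^{-3}\log(z/\eps))\bigr)$. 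Turning this into a formal proof only requires checking, scale by scale, that the chosen portal is always within $\eps\dist(q,\greedy)/z$ of the true crossing whenever the crossing is relevant to an interesting point; this mirrors the cases in the treewidth proof and saves the advertised factor $1/\eps$ compared to the minor-free argument because \cref{lem:noproblem} lets us bypass an explicit guessing of $\dist(s,p)$.
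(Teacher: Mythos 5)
Your high-level plan coincides with the paper's: decompose $V$ via shortest-path separators (\cref{lem:planar-decomp}), place a small net of landmarks on each boundary path relative to a reference point $q$, round distances to $X\cap V_i$ and to the landmarks, and use \cref{lem:noproblem} together with the minimality of $q$ to control the error. However, your portal-placement step has a genuine quantitative gap. First, the bound you quote for interesting points, $\dist(p,s)\le(8z/\eps)^{z}\dist(p,\greedy)$, is the cost bound misread as a distance bound; the correct distance bound is $\dist(p,s)\le\frac{8z}{\eps}\dist(p,\greedy)$. More importantly, your portals cover a stretch of each boundary path of total length $(10z/\eps)^{z}\dist(q,\greedy)$ at spacing $\eps\dist(q,\greedy)/z$, which is $(10z/\eps)^{z}\cdot(z/\eps)$ portals per path, not the $O((z/\eps)^{3})$ you claim; with that many portals (each admitting roughly $(10z/\eps)^{z}\cdot z^{3}/\eps^{3}$ rounded values) the centroid set has size $\exp\bigl(2^{O(z\log (z/\eps))}\bigr)$ rather than $\exp(O(z^{3}\eps^{-3}\log (z/\eps)))$. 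Conversely, if you place only $O((z/\eps)^{3})$ portals, you have no argument that the crossing vertex $y^{\ast}$ of the $p$-to-$s$ shortest path with $\pi_A^{j}$ lies in the portaled region: $\dist(y^{\ast},q)\le\dist(p,s)+\dist(s,q)$, and $\dist(p,s)$ is controlled only in units of $\dist(p,\greedy)$, which can be arbitrarily larger than $\dist(q,\greedy)$.

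The missing ingredient is the paper's two-case split. The paper restricts the net to $\pi\cap B\bigl(q,O(z^{2}/\eps^{2})\dist(q,\greedy)\bigr)$ --- a portion of a shortest path, hence of length $O(z^{2}/\eps^{2})\dist(q,\greedy)$, giving a net of size $O(z^{3}/\eps^{3})$ at spacing $\frac{\eps}{z}\dist(q,\greedy)$ --- and then argues: (i) if $\dist(p,s)\le O(z^{2}/\eps^{2})\dist(q,\greedy)$, the crossing point lies inside this ball and the landmark argument applies; (ii) otherwise no landmark is needed, because the preserved rounded distance from $s$ and $\tilde s$ to $q$ itself already yields $\dist(s,\tilde s)\le 2\dist(q,s)+\frac{\eps}{z}\dist(q,\greedy)\le O(z/\eps)\dist(q,\greedy)\le\frac{\eps}{z}\dist(p,s)$, so $\tilde s$ is a good substitute by the triangle inequality alone. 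Without case (ii) you cannot truncate the portaled region, and your multi-scale (``dyadic scales'') layout does not substitute for it: if the spacing is $\eps\dist(q,\greedy)/z$ at every scale the scales are redundant, and if the spacing grows with the scale the additive error at a far crossing exceeds the $O(\eps/z)\dist(q,\greedy)$ budget your error analysis assumes. Adding this case split (and correcting the cost/distance confusion) recovers the paper's argument and the stated size bound.
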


As for treewidth, this lemma implies the following corollary:

\begin{corollary}
\label{cor:coreset-planar}
  Let $G = (V, E)$ be an edge-weighted planar graph, a set $X \subseteq V$, and two positive integers $k$ and $z$ . 
  
  There exists an algorithm with running time $\tilde O(nk)$ that constructs an $\eps$-coreset for $(k, z)$-clustering on $X$ with size 
  \[O\left(\frac{\log^5 1/\eps}{2^{O(z\log z)}\min(\eps^2, \eps^z)}\left(k \log^2 k + \frac{k \log k}{\eps^3} + \log 1/\pi\right)\right)\qedhere\]
\end{corollary}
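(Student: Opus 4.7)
The plan is to mimic the proof of \cref{cor:coreset-treewidth} almost verbatim, since the only metric-specific ingredient needed by the framework is an approximate centroid set, which \cref{thm:planar-centroid} supplies for planar graphs. First I would reduce the number of distinct input points: either by invoking the existing $\widetilde{O}(k^2/\eps^4)$ planar coreset of Braverman et al.~\cite{BravermanJKW21} as a preprocessing step, yielding a weighted set $X_1$ of polynomial size, or by iterating our own framework in the style of Theorem~3.1 of \cite{BravermanJKW21}. Either way, $|X_1| = \poly(k, 1/\eps)$, and the reduction can be done in time $\widetilde{O}(nk)$, after which a constant-factor approximation $\greedy$ for $(k,z)$-clustering on $X_1$ can be produced in time $\widetilde{O}(|X_1|\cdot k)$ via Mettu--Plaxton.

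Next I would invoke \cref{thm:planar-centroid} with $X = X_1$ and the approximation $\greedy$, obtaining an $\greedy$-approximate centroid set $\cand$ of size $\poly(|X_1|)\cdot \exp\bigl(O(z^3 \eps^{-3}\log(z/\eps))\bigr)$. Taking logarithms,
\[
\log|\cand| \;=\; O\bigl(\log |X_1|\bigr) + O\bigl(z^3 \eps^{-3}\log(z/\eps)\bigr) \;=\; O\bigl(\log k + \log(1/\eps) + z^3 \eps^{-3}\log(z/\eps)\bigr).
\]
Feeding this into \cref{cor:weighted} (applied to the weighted instance $X_1$) yields a coreset whose size is bounded by
\[
O\!\left(\frac{\log^4(1/\eps)}{2^{O(z\log z)}\min(\eps^2,\eps^z)}\Bigl(k\log|\cand| + \log\log(1/\eps) + \log(1/\pi)\Bigr)\right).
\]
Expanding $k\log|\cand|$ and absorbing the extra $\log(z/\eps)$ factor into the $\log^5(1/\eps)$ numerator (and the $z^3$ factors into $2^{O(z\log z)}$) gives the claimed term $k\log k/\eps^3$; the extra $k\log^2 k$ summand (one more $\log k$ than appears in the treewidth bound) arises from the $\log^* n$ iterations of the framework needed to bring $|X_1|$ down to polynomial in $k$ and $1/\eps$ when no off-the-shelf planar coreset is used, exactly as in the treewidth argument.

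Since a coreset of a coreset is a coreset with composed errors, the iteration can be carried out with failure probability $\pi$ by standard boosting, keeping the running time at $\widetilde{O}(nk)$. The main (and essentially only) obstacle is already resolved by \cref{thm:planar-centroid}; the proof of the corollary is then a mechanical combination of that centroid bound, \cref{cor:weighted}, and the iterative size-reduction trick from \cite{BravermanJKW21}. I expect no new technical difficulty beyond bookkeeping the $\poly\log(1/\eps)$ and $2^{O(z\log z)}$ factors.
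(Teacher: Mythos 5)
Your proposal matches the paper's intended argument: the paper gives no separate proof for this corollary and explicitly says it follows from \cref{thm:planar-centroid} "as for treewidth," i.e., exactly the template you follow (reduce to a $\poly(k,1/\eps)$-size weighted instance, apply \cref{thm:planar-centroid} to get $\log|\cand| = O(\log k + \log(1/\eps) + z^3\eps^{-3}\log(z/\eps))$, plug into \cref{cor:weighted}, and iterate $O(\log^* n)$ times à la Theorem~3.1 of \cite{BravermanJKW21} to keep the running time $\tilde O(nk)$). One small inaccuracy: your attribution of the $k\log^2 k$ summand to the $\log^* n$ iterations is spurious — the iteration converges to a fixed point of size roughly $\frac{k}{\eps^2}(\log k + \eps^{-3})\polylog(1/\eps)$ and costs only a constant-factor rescaling of $\eps$, so the direct computation already yields a bound of order $k\log k + k\eps^{-3}$ (times the outer factor), which is no larger than the stated $k\log^2 k + k\log k\,\eps^{-3}$; the corollary therefore holds with room to spare, and no extra $\log k$ needs to be accounted for.
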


The big picture is the same as for treewidth. As in the treewidth case, planar graph can be broken into $\poly(X)$ pieces, each containing at most $2$ vertices of $X$.
The main difference is in the nature of the separators: while treewidth admit small vertex separators, the region in the planar decomposition are bounded by a few number of shortest path instead. This makes the previous argument void: we cannot round distances to all vertices in the boundary of a region. We show how to bypass this, using the fact that separators are shortest paths: it is enough to round distances to a well-chosen subset of the paths, as we will argue in the proof.

Formally, the decomposition is as follows:

\begin{lemma}[Lemma 4.5 of~\cite{BravermanJKW21}, see also~\cite{EisenstatKM14}]
\mbox{}\label{lem:planar-decomp}
For every edge-weighted planar graph $G = (V, E)$ and subset $X \subseteq V$, there exists a collection of subsets of $V$ $\Pi := \{V_i\}$ with $|\Pi| = \poly(|X|)$ and $\cup V_i = V$ such that, for every $V_i \in \Pi$:
\begin{itemize}
\item $|V_i \cap X| = O(1)$, and
\item there exists a collection of shortest paths $\calP_i$ with $|\calP_i| = O(1)$ such removing the vertices of all paths of $\calP_i$ disconnects $V_i$ from $V\setminus V_i$.
\end{itemize}
\end{lemma}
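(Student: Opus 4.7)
The plan is to establish the decomposition by recursive application of a balanced shortest-path separator for planar graphs. The key tool is the Klein--Mozes--Thorup theorem: any connected edge-weighted planar graph with a specified root $r$ admits a separator consisting of three shortest paths from $r$ that is balanced with respect to any non-negative vertex weighting, i.e., each side contains at most a $2/3$ fraction of the total weight.

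First I would set up a weighted recursion: assign each vertex $v$ weight $1$ if $v\in X$ and $0$ otherwise. Starting from $G$, I would repeatedly split any region $R$ whose total weight $|R\cap X|$ exceeds some absolute constant $c$ by picking a root in $R$, computing the three-shortest-path separator inside $R$, and recursing on each resulting side. Each split at least halves the $X$-count of the heavier subregion, so the recursion has depth $O(\log|X|)$ and produces $\poly(|X|)$ leaf regions, each containing $O(1)$ vertices of $X$. This yields both the first bullet and the bound $|\Pi|=\poly(|X|)$.

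The main obstacle, and the heart of the proof, is bounding the number of shortest paths in $\calP_i$ by $O(1)$ rather than the naive $O(\log|X|)$ that accumulates from adding three paths at each recursion level. The resolution uses the laminar structure of shortest-path trees in planar embeddings. At each recursive step on a region $R$, I would root the new shortest-path separator at a vertex that already lies on an ancestor boundary path; then each newly added shortest path inside $R$ begins on an existing boundary, so that by concatenating a prefix of the ancestor path with the new path, and using subpath optimality of shortest paths, one obtains a shortest path in $G$ itself. With careful choice of roots at every level, the full boundary of any leaf region $V_i$ becomes the union of only $O(1)$ maximal shortest paths of $G$: the boundary essentially inherits the three ``rays'' of the outermost separator, all sub-separators introduced inside being absorbed as sub-paths of these rays, thanks to planarity and the laminarity of the separator system.

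The delicate step will be establishing this absorption invariant uniformly across all recursion levels; it requires tracking how each sub-separator's root is chosen, and verifying that the faces of the planar embedding cooperate so the boundary remains laminar. This is the technical core of the Eisenstat--Klein--Mozes-style decomposition~\cite{EisenstatKM14}. I would structure the argument as an induction on recursion depth, maintaining the invariant that every current region is bounded by $O(1)$ shortest paths of $G$, where the inductive step uses planarity to concatenate each newly introduced separator path to one of the $O(1)$ existing boundary rays. Once this invariant is carried through the entire recursion, both bullets of the lemma follow directly from the leaf regions of the recursion tree.
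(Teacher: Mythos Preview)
The paper does not contain a proof of this lemma at all: it is quoted verbatim as Lemma~4.5 of~\cite{BravermanJKW21} (with a pointer to~\cite{EisenstatKM14}) and used as a black box in the construction of the planar centroid set. So there is no ``paper's own proof'' to compare your proposal against.

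That said, your sketch is in the right spirit for how this decomposition is actually obtained in the cited sources: one recursively applies a Thorup-style balanced shortest-path separator with respect to the weighting that places mass on $X$, yielding $\poly(|X|)$ leaf regions each meeting $X$ in $O(1)$ points, and then argues that the boundary of each leaf can be described by $O(1)$ shortest paths of $G$ rather than the naive $O(\log|X|)$. The delicate point you correctly flag --- rooting each new separator on an existing boundary path and using subpath optimality plus planarity to keep the number of bounding shortest paths constant --- is exactly the technical content of the Eisenstat--Klein--Mozes argument. Your write-up is a reasonable outline of that argument, but be aware that the invariant you state (``all sub-separators are absorbed as sub-paths of the three outermost rays'') is a bit too strong as phrased; what one actually maintains is that each region is bounded by a constant number of shortest paths, where the constant grows only additively (not multiplicatively) at each level because of the rooting trick, and one then controls this additive growth. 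If you intend to include a self-contained proof, you should state and carry that weaker invariant explicitly rather than the absorption claim.
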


As for treewidth, we proceed as follows: given the decomposition of \cref{lem:planar-decomp}, for any center $s \in V_i$, we identify a point $q$ and round distances from $s$ to $\calP_i$ according to $\dist(q, \greedy)$. $\cand$ contains one point $\tilde s$ with the same rounded distances as $s$, and we will argue that $\tilde s$ can replace $s$. 
As mentioned, we cannot round distances to the whole shortest-paths $\calP_i$.
Instead, we show that it is enough to round distances from $s$ to points on the boundary of $V_i$ that are close to $q$: since the boundary consists of shortest path, it is possible to discretize that set.

\begin{proof}[Proof of \cref{thm:planar-centroid}]
Let $\Pi = \{V_i\}$ be the decomposition given by \cref{lem:planar-decomp}. 
For any $V_i$ and any $q \in X$, we define a set of \emph{landmarks} $\calL_{i,q}$ as follows: for any $P \in \calP_i$, let $\calL_{i, q, P}$ be a $\frac{\eps}{z}\cdot \dist(q, \greedy)$-net of $P \cap B\left(q,\frac{90z^2}{\eps^2 }\cdot \dist(q, \greedy)\right)$. Note that since $P$ is a shortest path, the total length of $P \cap B\left(q,\frac{90z^2}{\eps^2}\cdot \dist(q, \greedy)\right)$ is at most $\frac{180z^2}{\eps^2}\cdot \dist(q, \greedy)$, and so the net has size at most $\frac{180z^3}{\eps^3}$.
 We define $\calL_{i,q} = \left(V_i \cap X\right)\cup_{P \in \calP_i} \calL_{i,q,P}$. 

\textbf{Rounding the distances to $\calL_{i,q}$}
We now describe how we round distances to landmarks, and define $\cand$ such that for each possible distance tuple, $\cand$ contains a point having that distance tuple.
Formally, given a point $s \in V_i$ and a point $q \in X$, the distance tuple $\mathbf{d}_q(s)$ of $s$ is defined as $\mathbf{d}_q(s) = (\dist(s, x)~ | ~\forall~x\in X \cap V_i) + \left(\dist(s, y)~ |~ \forall y \in \calL_{i,q}, \forall i\right)$.
The \emph{rounded distance tuple} $\tilde{\mathbf{d}}_q(s)$ of $s$ is defined as follows
:
\begin{itemize}
\item For $x \in X \cap V_i$ or $x = q$, $\tilde d(s, x)$ is the multiple of 
$\frac{\eps}{z} \dist(x, \greedy)$ smaller than $\frac{10z}{\eps} \dist(x, \greedy)$ closest to $\dist(s, x)$. 
\item For $y \in \calL_{i,q}$, $\tilde d(s, y)$  is the multiple of 
$\frac{\eps}{z}\cdot \dist(q, \greedy)$ smaller than $\frac{90z^2}{\eps^2}\dist(q, \greedy)$ closest to $\dist(s, y)$. 
\end{itemize}

The set $\cand$ is constructed as follows: for every $V_i$ and every $q$, for every rounded distance tuple $\{\tilde{\mathbf{d}}_q(p)\}$, add to $\cand$ a point that realizes this rounded distance tuple (if such a point exists).

It remains to show both that $\cand$ has size $\poly(|X|)\exp\big( O(z^3\eps^{-3} \log z/\eps)\big)$, and that $\cand$ contains good approximation of each center of any given solution.

\paragraph*{Size analysis.}
For any given $V_i$ and $q$, there are $\left(\frac{90z^3}{\eps^3}\right)^{|\calL_{i,q}|}$ possible rounded distances. 
As explained previously, $|\calL_{i,q}| = O(z^3/\eps^{3})$. 

There are $|V|$ choices of $q$, and \cref{lem:planar-decomp} ensures that there are $\poly(|X|)$ choices for $V_i$.

Hence, the total size of $\cand$ is at most
$\poly(|X|)\cdot \exp\big( O(z^3\eps^{-3} \log z/\eps)\big)$.

\paragraph*{Error analysis.}
We now show that for all solution $\calS$, every center can be approximated by a point of $\cand$.
First, by applying \cref{lem:noproblem}, we can assume that  for any center $s \in \calS$, and $q = \argmin_{p : \dist(p, s) \leq \frac{10z}{\eps}\dist(p, \greedy)} \dist(p, \greedy) + \dist(p, s)$, there is no problematic point with respect to $q$ and $s$.

 Let $S$ be some cluster of $\calS$, with center $s$. As in \cref{lem:k2unionbound2} and \ref{lem:centroid-treewidth}, we aim at showing how to find $\tilde s \in \cand$ such that, for every $p \in X \cap S$ with $\dist(p, \calS) \leq \frac{10z}{\eps}\cdot \dist(p, \greedy)$, we have $|\cost(p, s) - \cost(p, \tilde s)| \leq 3\eps \left(\cost(p, s) + \cost(p, \greedy)\right)$.

For this, let $V_i$ be a part of $\Pi$ containing $s$, and $\calP_i$ be the paths given by \cref{lem:planar-decomp}.  
We let $q := \argmin_{p \in X: \dist(p, s) \leq \frac{10z}{\eps} \dist(p, \greedy)} \dist(p, s) + \dist(p, \greedy)$.
We define $\tilde s$ to be the point of $\cand$ that has the same rounded distance tuple to $\calL_{i,q}$ as $s$. Let $\tilde \calS$ be the solution constructed from $\calS$ that way.
 We show now that $\tilde \calS$ has the required properties. 



First, if $p \notin V_i$, then we show how to use that $s$ and $\tilde s$ have the same rounded distances to $\calL_{i,q}$.
\begin{itemize}
\item If $\dist(p, s) > \frac{21z^2}{\eps^2}\cdot \dist(q, \greedy)$, we argue that $d(s, \tilde s)$ is negligible. The argument is exactly alike the one from \cref{lem:centroid-treewidth}, we repeat it for completeness.

The rounding ensures that the distance to $q$ is preserved: $\dist(q, \tilde s) \leq \dist(q, s) + \frac{\eps}{z} \dist(q, \greedy)$, and therefore: 


\begin{align*}
\dist(s, \tilde s) &\leq 2\dist(q, s) + \frac{\eps}{z} \cdot \dist(q, \greedy)\\
&\leq \left(\frac{20z}{\eps} + \frac{\eps}{z}\right)\cdot \dist(q, \greedy)\\
&\leq \frac{21z}{\eps} \cdot \frac{\eps^2}{21z^2}\cdot \dist(p, s) \leq \frac{\eps}{z}\cdot \dist(p, s).
\end{align*}

Hence, using the modified triangle inequality \cref{lem:weaktri}, we can conclude:
 $|\cost(p, \tilde s) - \cost(p, s)| \leq \eps\cost(p, s) + \eps\cost(p, \greedy)$.
 
\item Otherwise, $\dist(p, s) \leq \frac{21z^2}{\eps^2}\cdot \dist(q, \greedy)$ and we can make use of the landmarks. 
Since $p \notin V_i$ the shortest-path $p \leadsto s$ and crosses $\calP_i$ at some vertex $x$. 

First, it holds that $\dist(x, q) \leq \dist(x, s) + \dist(s, q) \leq \dist(p, s) + \dist(s, q) \leq (\frac{10z}{\eps}+\frac{8z^2}{\eps^2})\dist(q, \greedy)$, hence $x$ is in $P \cap B(q, \frac{90z^2}{\eps^2} \dist(q, \greedy))$. 
By choice of landmarks, this implies that there is $\ell \in \calL_{i,q}$, with $\dist(x, \ell) \leq \frac{\epsilon}{z} \dist(q, \greedy)$. 
To show that $s$ and $\tilde s$ have the same distance to $\ell$, it is necessary to show that $s$ is not too far away from $\ell$: 
\begin{align*}
\dist(s, \ell) &\leq \dist(s, x) + \frac{\epsilon}{z} \cdot \dist(q, \greedy)  \leq \dist(p, s) + \frac{\epsilon}{z} \dist(q, \greedy) \\
&\leq \frac{21z^2}{\eps^2}\cdot \dist(q, \greedy) + \frac{\eps}{z}\cdot\dist(q, \greedy)
\end{align*}
Hence, $s$ is close enough to $\ell$ to ensure that $\tilde s$ has the same rounded distance to $\ell$ as $s$, and we get: 
\begin{align*}
\dist(p, \tilde s) & \leq \dist(p, \ell) + \dist(\ell, \tilde s) \\
&\leq \dist(p, \ell) + \dist(\ell, s) + \frac{\epsilon}{z}\cdot \dist(q, \greedy) \\
&\leq \dist(p, x) + \dist(x, s) + 2\dist(x, \ell)+\frac{\epsilon}{z}\cdot \dist(q, \greedy) \\
&= \dist(p, s) + \frac{3\epsilon}{z}\cdot\dist(q, \greedy)
\end{align*}
First, this ensures that $\dist(p, \tilde s) \leq \frac{8z^2}{\eps^2} \cdot \dist(q, \greedy)$, and so we can repeat the argument switching roles of $s$ and $\tilde s$, to get $|\dist(p, \tilde s) - \dist(p, s)| \leq \frac{3\epsilon}{z}\cdot\dist(q, \greedy)$
Using that $p$ is not problematic with respect to $q$ and $s$, we can conclude that 
\[|\dist(p, \tilde s) - \dist(p, s)| \leq\frac{3\epsilon}{z}\cdot(\dist(p, \greedy) + \dist(p, s)).\] 
In turn, using \cref{lem:weaktri}, we conclude:
\[|\cost(p, \tilde s) - \cost(p, s)| \leq \eps \cdot(\cost(p, \greedy) + \cost(p, s)).\] 
\end{itemize}

Finally, in the case where $p \in V_i$, then we get either $|\dist(p, \tilde s) - \dist(p, s)| \leq \frac{\eps}{z} \dist(p, \greedy)$ and we are done, or both $\dist(p, \tilde s)$ and $\dist(p, s)$ are bigger than $\frac{8z}{\eps} \dist(p, \greedy)$.

We can now conclude, exactly as in the treewidth case:
in all possible cases, it holds that either $\dist(p, s)$ and $\dist(p, \tilde s)$ are bigger than $\frac{10z}{\eps}\dist(p, \greedy)$, or: 
\begin{equation}\label{eq:centroid-planar}
|\cost(p, \tilde s) - \cost(p, s)| \leq \eps\cost(p, s) + \eps\cost(p, \greedy).
\end{equation}

To extend that result to the full solutions $\calS$ and $\tilde \calS$ instead of a particular center, we note that since $p$ is interesting, $\dist(p, \calS) \leq \frac{8z}{\eps}\dist(p, \greedy)$. Hence, we can apply \cref{eq:centroid-planar} with $s$ being the closest point to $p$ in $\calS$:  
$\cost(p, \tilde \calS) \leq (1+\eps)\cost(p, \calS) + \eps\cost(p, \greedy)$.

In particular, this implies that $\dist(p, \tilde s) \leq \frac{10z}{\eps}\dist(p, \greedy)$.
Chose now $\tilde s$ to be the closest point to $p$ in $\tilde \calS$ and $s$ its corresponding center in $\calS$. Using \cref{eq:centroid-planar} therefore gives:
\begin{align*}
\cost(p, \calS) &\leq (1+\eps)\cost(p, \tilde \calS) + \eps\cost(p, \greedy) \\
&\leq (1+\eps)^2 \cost(p, \calS) + \eps(2+\eps) \cost(p, \greedy)\\
&\leq (1+3\eps)\cost(p, \calS) + 3\eps \cost(p, \greedy).
\end{align*}

Rescaling $\eps$ and combining the two inequality concludes.
\end{proof}
\newcommand{\bq}{\textbf{q}}

\section{Minor-Excluded Graphs} \label{sec:centroid-minor}
A graph $H$ is a \textit{minor} of a graph $G$ if it can be obtained from $G$ by deleting edges and vertices and contracting edges. 

We are interested here in families of graph excluding a fixed minor $H$, i.e. none of the graph in the family contains $H$ as a minor. The graphs are weighted: we assume that for each edge, its value is equal to shortest-path distance between its two endpoints.

The goal of this section is to prove the following lemma, analogous to \cref{thm:planar-centroid}.

\begin{lemma}
\mbox{}\label{thm:mf-centroid}
  Let $G = (V, E)$ be an edge-weighted graph that excludes a minor of fixed size, a set $X \subseteq V$ and two positive integers $k$ and $z$. Furthermore, let $\greedy$ be a solution of $(k, z)$-clustering of $X$. 
  
  Then, there exists an $\greedy$-approximate centroid set for $(k,z)$-clustering on $V$ of size  $\exp(O(\log^2 |X| + \log |X| / \eps^4))$.
\end{lemma}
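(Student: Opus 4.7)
\textbf{Plan for the proof of \cref{thm:mf-centroid}.} The plan is to adapt the planar construction in the proof of \cref{thm:planar-centroid} by replacing the single-level decomposition from \cref{lem:planar-decomp} by a \emph{hierarchical} shortest-path separator decomposition. Such decompositions are known to exist for graphs excluding a fixed minor (Abraham--Gavoille, building on Robertson--Seymour): one can recursively remove $O(1)$ shortest paths of the current subgraph to split it into balanced pieces, producing a recursion tree of depth $O(\log |X|)$. Consequently every vertex $v$ lies in $O(\log |X|)$ nested pieces $V_i^{(1)} \supsetneq V_i^{(2)} \supsetneq \cdots$, and the shortest path between any two vertices must cross the separator paths of their least common piece. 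This is exactly the structural replacement needed for the argument of \cref{thm:planar-centroid}.

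Given this decomposition, I will mirror the planar argument at every level. For each piece $V_i$, every reference point $q \in X$, and every separator path $P$ bounding $V_i$, I will place a $\frac{\eps}{z}\cdot \dist(q,\greedy)$-net on the portion of $P$ lying in a ball $B(q, \mathrm{poly}(z/\eps)\cdot\dist(q,\greedy))$; as in the planar proof, each such net has $\mathrm{poly}(z/\eps)$ landmarks. For a given center $s$, the collection of landmarks on the $O(\log|X|)$ pieces that contain $s$ has size $O(\log|X|\cdot \mathrm{poly}(z/\eps))$. The set $\cand$ then contains one representative per tuple of rounded distances from $s$ to these landmarks, across all choices of reference $q \in X$ and nested pieces; the multiplicative $\poly(|X|)^{O(\log|X|)}$ enumeration of the tree positions accounts for the $\exp(O(\log^2|X|))$ term in the final bound.

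The main obstacle, flagged in the overview of \cref{sec:intro}, is that \cref{lem:noproblem} is not usable here: its pruning argument collapses entire centers once a ``problematic'' point is found, which fundamentally relied on small \emph{vertex} separators. Shortest-path separators do not admit the same pruning because two centers placed on distinct sides of a path may still be forced into the same region by distance constraints. The fix is to additionally \emph{guess} rounded distances $\dist(s,q)$ between $s$ and a carefully chosen family of witnesses $q \in X$ at each level of the decomposition containing $s$. I expect that $O(1/\eps^4)$ such guesses per level suffice to certify that the replacement $\tilde s$ cannot collapse too close to any interesting point $p$, playing the role that \cref{lem:noproblem} played in the planar and treewidth proofs; this is where the $\log|X|/\eps^4$ term in the exponent of the size bound comes from, and it is the step I expect to be most delicate.

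With the combinatorial structure in place, the error analysis will parallel that of \cref{thm:planar-centroid}. For each interesting $p$ served by $s$, either $p$ lies in the same piece as $s$ at some deep level and the rounding on $X \cap V_i$ suffices, or the shortest path $p\leadsto s$ crosses a separator of the least common piece at some vertex $x$ close to a landmark $\ell$; combining the landmark rounding with \cref{lem:weaktri} gives $|\dist(p,s)-\dist(p,\tilde s)| \leq \frac{\eps}{z}(\dist(p,\greedy)+\dist(p,s))$, and hence $|\cost(p,s)-\cost(p,\tilde s)|\leq \eps(\cost(p,s)+\cost(p,\greedy))$. The extra distance guesses handle the remaining far/problematic configurations. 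Multiplying the $\exp(O(\log^2|X|))$ enumeration of nested pieces and reference points by the $\exp(O(\log|X|/\eps^4))$ rounded distance tuples yields the claimed size $\exp(O(\log^2|X|+\log|X|/\eps^4))$.
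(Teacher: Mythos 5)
Your plan follows essentially the same route as the paper's proof: the decomposition you describe is exactly what \cref{lem:mf-decomp} packages (a $\poly(|X|)$-size collection of regions, each bounded by $O(\log|X|)$ paths that are shortest paths only in the nested subgraphs obtained by deleting the earlier separator paths), the landmark nets on the portions of separator paths near a reference point $q\in X$ are the paper's $\tilde d^1$ rounding, and you correctly identify both that \cref{lem:noproblem} is unusable here and that the fix is to guess distances to witnesses in $X$. The final size accounting also matches.

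The one place where your proposal is a hope rather than an argument is precisely the step you flag as delicate, and your guess about how it resolves is not quite how the paper does it. The paper does not spend $O(1/\eps^4)$ witness guesses per level; the witnesses cost only $|X|^{O(1)}$ choices per path (absorbed into the $\exp(O(\log^2|X|))$ term), and the $\log|X|/\eps^4$ in the exponent comes entirely from the net sizes and the number of rounded distance values per landmark. Concretely, for each separator path $P_i$ the paper sets $q_1^i=\argmin_{p\in X}\dist_i(p,s)$ and distinguishes two cases. Either some $q_2\in X$ satisfies $\eps\,\dist_i(q_1^i,s)\le \dist_i(q_1^i,q_2)+\dist(q_2,\greedy)\le \frac{1}{\eps}\dist_i(q_1^i,s)$, in which case $D:=\dist_i(q_1^i,q_2)+\dist(q_2,\greedy)$ is a factor-$1/\eps^2$ bracket for $\dist_i(q_1^i,s)$ and an $\eps^2 D$-net of $P_i\cap B_i(q_1^i,D/\eps^2)$ (size $O(1/\eps^4)$) carries the whole error analysis. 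Or no such $q_2$ exists, and then every interesting point $p$ has $\dist_i(p,q_1^i)+\dist(p,\greedy)$ either far below $\eps\,\dist_i(q_1^i,s)$ or far above $\frac{1}{\eps}\dist_i(q_1^i,s)$; in that regime a single boolean predicate $\tilde d^2(q_1^i,q_3^i,q_4^i)$, recording only that $\dist_i(q_1^i,\tilde s)$ lies in the same gap as $\dist_i(q_1^i,s)$, already forces $\dist(s,\tilde s)\le O(\eps)(\dist_i(p,s)+\dist(p,\greedy))$ for the ``far above'' points and forces $\dist(p,\tilde s)\ge \dist(p,\greedy)/\eps$ for the ``far below'' points. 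This dichotomy is the actual replacement for \cref{lem:noproblem}, and without it your plan does not yet certify that $\tilde s$ cannot collapse onto an interesting point; I would regard filling in exactly this case analysis as the remaining work.
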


As for the bounded treewidth and planar cases, this lemma implies the following corollary:

\begin{corollary}
\mbox{}\label{cor:coreset-minor}
  Let $G = (V, E)$ be an edge-weighted graph that excludes a fixed minor, and two positive integers $k$ and $z$ . 
  
  There exists an algorithm with running time $\tilde O(nk)$ that constructs an $\eps$-coreset for $(k, z)$-clustering on $V$ with size 
  \[O\left(\frac{\log^5 1/\eps}{2^{O(z\log z)}\min(\eps^2, \eps^z)}\left(k \log^2 k \log(1/\eps) + \frac{k \log k}{\eps^4} + \log 1/\pi\right)\right)\qedhere\]
\end{corollary}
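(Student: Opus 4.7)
The plan is to mirror the two preceding corollaries (cor:coreset-treewidth and cor:coreset-planar), combining cor:weighted with the centroid-set bound from thm:mf-centroid, and iterating the construction in the style of \cite{BravermanJKW21} to eliminate the dependence on $|X|$. First I would obtain an initial (possibly loose) weighted coreset $X_0 \subseteq V$, either by invoking the $\tilde O(k^2/\eps^4)$ coreset of Braverman et al.\ directly or, equivalently, by one application of cor:weighted to $V$ with the centroid set from thm:mf-centroid. Either way, $|X_0|$ is bounded and we now have a weighted instance on $X_0$ to which the framework can be reapplied.

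Next I would iterate. Given a weighted coreset $X_i$, compute a constant-factor approximation $\greedy_i$ on $X_i$ in time $\tilde O(k|X_i|)$ (Mettu--Plaxton), let $\cand_i$ be the $\greedy_i$-approximate centroid set guaranteed by thm:mf-centroid, so that
\[
\log |\cand_i| \;=\; O\!\left(\log^2 |X_i| + \frac{\log |X_i|}{\eps^4}\right),
\]
and apply cor:weighted to produce $X_{i+1}$ of size
\[
O\!\left(\frac{\log^5(1/\eps)}{2^{O(z\log z)}\min(\eps^2,\eps^z)}\Bigl(k\log^2 |X_i| + \frac{k\log |X_i|}{\eps^4} + \log(1/\pi)\Bigr)\right).
\]
The centroid-set definition depends only on the client set, not on weights, so the construction composes cleanly (a coreset of a weighted coreset is a coreset). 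Because $\log |X_{i+1}|$ is essentially $\log\log |X_i|$ up to additive $O(\log k + \log(1/\eps))$ terms, after $O(\log^* n)$ iterations one reaches a fixed point with $\log |X_i| = O(\log k + \log(1/\eps))$.

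At the fixed point, $\log^2|X_i| = O(\log^2 k + \log^2(1/\eps))$ and $\log|X_i|/\eps^4 = O(\log k/\eps^4 + \log(1/\eps)/\eps^4)$. Substituting these into the recurrence and absorbing the subdominant $\log^2(1/\eps)$ and $\log(1/\eps)/\eps^4$ contributions into the $\log^5(1/\eps)$ prefactor yields the stated bound
\[
O\!\left(\frac{\log^5(1/\eps)}{2^{O(z\log z)}\min(\eps^2,\eps^z)}\Bigl(k\log^2 k\,\log(1/\eps) + \frac{k\log k}{\eps^4} + \log(1/\pi)\Bigr)\right).
\]
A final rescaling of $\eps$ and $\pi$ by a factor of $\log^* n$ accounts for the union bound over the iterations, and does not affect the asymptotics. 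Total running time is dominated by the first approximation on $V$, namely $\tilde O(nk)$.

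The main obstacle is the iterative accounting: one must verify that (a) cor:weighted composes (the approximate-centroid-set condition is hereditary under restriction to the coreset), (b) the recursion on $|X_i|$ genuinely contracts at a doubly-logarithmic rate so that $\log^* n$ iterations suffice (this is exactly the argument used in Theorem~3.1 of \cite{BravermanJKW21}), and (c) the extra $\log(1/\eps)$ factors generated by each iteration can be folded into the $\log^5(1/\eps)$ factor already present in the bound. Everything else — running time, union bounds over failure probabilities, and the final rescaling of $\eps$ — is routine and parallels the treewidth and planar arguments.
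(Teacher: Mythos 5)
Your proposal is correct and follows essentially the same route as the paper: an initial coarse coreset (via Braverman et al.\ or a first pass of the framework), a constant-factor approximation via Mettu--Plaxton, the centroid set of \cref{thm:mf-centroid} plugged into \cref{cor:weighted}, and an $O(\log^* n)$-fold iteration in the style of Theorem~3.1 of \cite{BravermanJKW21} to drive $\log|X_i|$ down to $O(\log k + \log(1/\eps))$, exactly as in the paper's explicit treewidth argument (\cref{cor:coreset-treewidth}) to which the minor-free case is reduced by analogy. The bookkeeping of the $\log(1/\eps)$ factors and the final substitution into $\log|\cand| = O(\log^2|X| + \log|X|/\eps^4)$ are handled as the paper intends.
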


The big picture is the same as for planar graphs. Minor-free graphs have somewhat nice separators, that we can use to select centers. However, those separators are not shortest paths in the original graph, as described in the next structural lemma.

%
%

\begin{lemma}[Lemma 4.12 in \cite{BravermanJKW21}, from Theorem 1 in \cite{AbrahamG06}]
\mbox{}\label{lem:mf-decomp}
For every edge-weighted graph $G = (V, E)$ excluding some fixed minor, and subset $X \subseteq V$, there exists a collection of subsets of $V$ $\Upsilon := \{\Pi_i\}$ with $|\Upsilon| = \poly(|X|)$ and $\cup \Pi_i = V$ such that, for every $\Pi_i \in \Upsilon$:
\begin{itemize}
\item $|\Pi_i \cap X| = O(1)$, and
\item there exists a groups of paths $\{\calP^i_j\}$ with $|\cup \calP^i_j| = O(\log |X|)$ such that removing the vertices of all paths of $\calP_i$ disconnects $\Pi_i$ from $V\setminus \Pi_i$, and such that paths in $\calP^i_j$ are shortest-paths in the graph $G^i_j := G \setminus \left(\cup_{j' < j} \calP^i_{j'}\right)$.
\end{itemize}
\end{lemma}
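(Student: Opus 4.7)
The plan is to build the decomposition recursively, leveraging the Abraham--Gavoille shortest-path separator theorem for $H$-minor-free graphs. That theorem states that any edge-weighted $H$-minor-free graph, equipped with nonnegative vertex weights, admits a separator consisting of $O(1)$ paths $Q_1,\ldots,Q_r$ such that each $Q_\ell$ is a shortest path in the graph $G\setminus(\cup_{\ell'<\ell} Q_{\ell'})$, and removing $\cup_\ell Q_\ell$ partitions the vertex set into pieces each carrying at most a constant fraction (say $2/3$) of the total vertex weight. I would use this with unit weights on the vertices of $X$ (and zero elsewhere) so that each recursive call balances $|X|$ rather than $|V|$.

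First, I would apply the separator theorem to $G$, producing a group $\calP^{\text{root}}_1$ of $O(1)$ shortest paths whose removal breaks $G$ into connected components $G^{(1)},\ldots,G^{(m)}$, each intersecting $X$ in at most $\tfrac{2}{3}|X|$ vertices. Since minor-freeness is preserved under taking subgraphs, every $G^{(t)}$ is again $H$-minor-free, and I can recurse. In the second level, running the separator theorem inside $G^{(t)}$ produces a group of paths $\calP^{\text{root},t}_2$ that are shortest paths in the subgraph $G^{(t)}=G\setminus\calP^{\text{root}}_1$, which is exactly the nested structure demanded by the lemma. Iterating, the recursion terminates after $O(\log|X|)$ levels when $|\Pi_i\cap X|=O(1)$, and at each leaf $\Pi_i$ the accumulated separator is the union, over the $O(\log|X|)$ ancestor nodes, of $O(1)$ path-groups $\calP^i_j$, giving $|\cup_j\calP^i_j|=O(\log|X|)$ in total. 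The nested shortest-path property of the groups is inherited directly from the order of recursive calls: paths selected at deeper levels live in subgraphs obtained by deleting the shallower ones.

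To count the leaves, each internal node branches into $O(1)$ children (the number of pieces the separator creates) and has depth $O(\log|X|)$, so there are $2^{O(\log|X|)}=\poly(|X|)$ leaves, which gives $|\Upsilon|=\poly(|X|)$. The pieces $\Pi_i$ are obtained by letting $\Pi_i$ be the vertex set of a leaf component together with (optionally) the separator vertices themselves; this guarantees $\cup_i\Pi_i=V$.

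The main obstacle is the careful accounting around the nested shortest-path condition: the separator theorem of Abraham--Gavoille must be applied in a way that ensures each new group $\calP^i_j$ consists of paths that are simultaneously shortest in the residual graph $G\setminus(\cup_{j'<j}\calP^i_{j'})$, not in the original $G$. Handling this requires carrying the residual graph through the recursion and making sure the recursive invocation operates on it, rather than on the induced subgraph of $G$ alone. The second subtlety is bounding the number of distinct pieces by $\poly(|X|)$ rather than $\poly(|V|)$, which works because the balanced-separator guarantee is with respect to $|X|$ (via our choice of vertex weights), capping the recursion depth at $O(\log|X|)$ regardless of how large $|V|$ is.
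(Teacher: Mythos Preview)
The paper does not actually prove this lemma; it is imported verbatim as Lemma~4.12 of \cite{BravermanJKW21}, which in turn derives it from the Abraham--Gavoille shortest-path separator theorem~\cite{AbrahamG06}. Your recursive balanced-separator construction is exactly the standard derivation used there: apply the $O(1)$-shortest-path separator with weights concentrated on $X$, recurse on each side, and collect the $O(\log|X|)$ separator groups along each root-to-leaf path. So your proposal matches the intended argument.

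One point worth tightening: your leaf count relies on ``each internal node branches into $O(1)$ children.'' Removing a separator can in principle create many connected components, so this step needs the standard formulation of the separator theorem where $V\setminus S$ is partitioned into \emph{two} (or a constant number of) sides $A,B$ with no $A$--$B$ edges, not merely into connected components. That is how Abraham--Gavoille (and Lipton--Tarjan before it) are stated, so the bound holds, but you should invoke it explicitly rather than leave the branching factor implicit. With that in place, depth $O(\log|X|)$ and branching $O(1)$ give $|\Upsilon|=\text{poly}(|X|)$ as claimed, and the nested shortest-path condition follows because each recursive call operates in the residual graph after deleting all ancestor separators.
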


The general sketch of the proof is as follows: we consider the boundary $B$ of a region $\Pi_i$, and enumerate all possible tuple of distances from a point inside the leaf to the boundary. For each tuple, we include in $\cand$ a point realizing it. Of course, this would lead to a set $\cand$ way too big: the boundary of each leaf consists of too many points, and there are too many distances possible. For that, we show how to discretize the boundary, and how to round distances from a point to the boundary.

Discretizing the boundary is not as easy as in the planar case, as the separating paths are not shortest paths in the original graph $G$. A separating path $P \in \calP_j$, however, is a shortest path in the graph $G^i_j := G \setminus \left(\cup_{j'<j} \calP^i_{j'} \right)$.

As in the planar case, we therefore start from the point $q$ closest to $s$ in the graph $G^i_j$. Note here that we cannot infer much on the distances in the original graph $G$: for this reason, we are not able to apply \cref{lem:noproblem}, and we need to present a whole different argument. 

We will assume that we know $D = \dist_j(q, s)$, where $\dist_j$ is the distance in the graph $G^i_j$. In that case, we can simply take an $\eps D$-net of $P \cap B_j(q, D)$, where $B_j(q, D)$ is the ball centered at $q$ and of radius $D$ in $G^i_j$. This net has size $O(1/\eps^2)$, as $P$ is a shortest path in $G^i_j$. Then, if $\tilde s$ has same distances to this net as $s$, we are able to show as in the previous cases that for any point separated from $s$ by $P$, $\dist(p, \tilde s) \lessapprox \dist(p, s)$; and for any point separated from $\tilde s$ by $P$, $\dist(p, s) \lessapprox \dist(p, \tilde s)$.

To estimate $\dist_i(q, s)$, we proceed as follows: either $\dist_i(q, s) \approx \dist_i(q, q_2)$ for some $q_2 \in X$, or not. In the first case, we can pick such a $q_2$. In the second case, we will need to ensure that when $p$ is such that $\dist_i(p, q) \gg \dist_i(q, s)$, then $\tilde s$ stays close to $q$. When $p$ is such that $\dist_i(q, p) \ll \dist_i(q, s)$, then $p$ and $q$ are essentially located at the same spot, and we ensure that $\tilde s$ stays far from $q$.

\subsection{Construction of the centroid set.}
From \cref{lem:mf-decomp}, we have a decomposition into regions $\Upsilon = \{\Pi_i\}$. In this argument, we fix a region $\Pi_j \in \Upsilon$. $\Pi_j$ is bounded by $O(\log |X|)$ paths $P_1, ..., P_m$ and $P_i$ is a shortest path in some graph $G_i$, subgraph of $G$: if $P_i \in \calP^j_\ell$, then $G_i := G^j_\ell$. We change the indexing for simplicity, and let $\Pi = \Pi_j$. We let $\dist_i$ be the distances in the graph $G_i$.


We consider two ways of rounding the distances. The first starts from a point $q_1 \in X$, and is useful when there is $q_2 \in X$ such that $\eps \dist_i(q_1, s) \leq \dist_i(q_1, q_2) \leq \frac{1}{\eps}\dist_i(q_1, s)$.

Along each paths, we designate portals as follows. Consider a path $P_i$. For any pair of vertices $q_1, q_2 \in X$, let $D = \dist_i(q_1, q_2) + \dist(q_2, \greedy)$ and let $N_{i, q_1, q_2}$ be an $\eps^2 D$-net of $P_i \cap B_i(q_1, \frac{D}{\eps^2})$, where $B_i(q, \frac{D}{\eps^2})$ is the ball centered at $q$ and of radius $\frac{D}{\eps^2}$ in $G_i$.

For each possible $q_1, q_2$ and any point $s \in \Pi$, we consider the following distance tuple: $\left(\dist_i(s, n),~\forall n \in N_{i,q_1, q_2}\right) \cup \left(\dist_i(s, q_1)\right) \cup \left(\dist_i(x, s),~\forall x \in \Pi \cap X\right)$. 
We define the rounded tuple $\tilde d^1(q_1, q_2) :=  \left(\tilde d^1(s, n),~\forall n \in N_{i,q_1, q_2}\right)\cup\left(\tilde d^1(s, q_1)\right)\cup \left(\tilde d^1(x, s),~\forall x \in \Pi \cap X\right)$, where {\setlength{\emergencystretch}{2.5em}\par}
\begin{itemize}
\item $\tilde d^1(s, n)$ is the multiple of $\eps^2 D$ closest to $\min\left(\frac{3D}{\eps^2}, \dist_i(s, n)\right)$.
\item $\tilde d^1(s, q_1)$ is the multiple of $\eps D$ closest to $\dist_i(s, q_1)$ and smaller than $\frac{3D}{\eps}$.
\item for any $x \in \Pi \cap X$, $\tilde d^1(x, s)$ is the closest multiple of $\eps \dist(x, \greedy)$ to $\dist_i(x, s)$ smaller than $\frac{1}{\eps} \cdot \dist(x, \greedy)$.
\end{itemize}

We also consider another rounding, which will be helpful when for all points	, $\dist(p, \greedy) + \dist_i(q, q_1) \notin [\eps \dist_i(q_1, s), \frac{1}{\eps}\dist_i(q_1, s)]$.

For any $q_1, q_3$, and $q_4$ in $X$, 
 $\tilde d^2(q_1, q_3, q_4) = \top$ when $\frac{1}{\eps}\cdot(\dist_i(q_1, q_4) +\dist(q_4, \greedy)) < \dist_i(q_1, s) < \eps \cdot (\dist_i(q_1, q_3) + \dist(q_3, \greedy))$, and $\tilde d^2(q_1, q_3, q_4) = \bot$ otherwise.
 $q_3$ or $q_4$ may be unspecified. In that case, the corresponding part of the inequality is dropped.\footnote{When $q_3$ is unspecified,  $\tilde d^2(q_1, q_3, q_4) = \top$ when $\frac{1}{\eps}\cdot(\dist_i(q_1, q_4) +\dist(q_4, \greedy)) < \dist_i(q_1, s)$, and $\tilde d^2(q_1, q_3, q_4) = \bot$ otherwise. When $q_4$ is unspecified,  $\tilde d^2(q_1, q_3, q_4) = \top$ when $\dist_i(q_1, s) < \eps \cdot (\dist_i(q_1, q_3) + \dist(q_3, \greedy))$, and $\tilde d^2(q_1, q_3, q_4) = \bot$ otherwise.}

 To construct $\cand$, we proceed as follows: for any region $\Pi \in \Upsilon$ given by \cref{lem:mf-decomp}, and for any path $P_i$ in the boundary of $\Pi$, select a rounding $\tilde d_i^1(q^i_1, q^i_2)$ or $\tilde d_i^2(q^i_1, q^i_3, q^i_4)$. If there is any, pick one point $s$ achieving all those rounding distances, and add $s$ to $\cand$.
   
   We will show \cref{thm:mf-centroid} using this centroid set. For that, we break the proof into two parts: first, the size of $\cand$ is the desired one; then, $\cand$ is indeed an approximate centroid set.
   
\subsection{$\cand$ has Small Size}
\begin{lemma}
\mbox{}\label{lem:minorSize}
$\cand$ constructed as previously has size $\exp\left(O(\log^2 |X| + \log|X| / \eps^{-4})\right)$.
\end{lemma}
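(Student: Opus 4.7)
The plan is a direct counting argument mirroring the three-nested enumeration used to build $\cand$: (i) regions $\Pi \in \Upsilon$, (ii) boundary paths of each region, and (iii) rounding choices per path. Since the construction adds at most one point of $\cand$ per combined choice, multiplying these three contributions will yield the bound.

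First I would invoke Lemma~\ref{lem:mf-decomp} to get $|\Upsilon|=\poly(|X|)$ and at most $O(\log|X|)$ boundary paths per region. Then I would count the rounding options on a single boundary path $P_i$. For a type~1 rounding $\tilde d^1_i(q_1,q_2)$, there are $|X|^2$ choices for $(q_1,q_2)$; once fixed, with $D:=\dist_i(q_1,q_2)+\dist(q_2,\greedy)$, the rounded tuple consists of (a) one entry per net point of $N_{i,q_1,q_2}$, (b) the entry $\tilde d^1(s,q_1)$, and (c) $O(1)$ entries indexed by $\Pi\cap X$. The key quantitative observation is that because $P_i$ is a shortest path in $G_i$, its intersection $P_i\cap B_i(q_1,D/\eps^2)$ is an arc of length at most $2D/\eps^2$, so the $\eps^2 D$-net has size $O(1/\eps^4)$, and each of its entries is rounded to one of $O(1/\eps^4)$ possible values (multiples of $\eps^2 D$ bounded by $3D/\eps^2$). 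The remaining entries contribute only a $\poly(1/\eps)$ factor. This yields at most $\exp\bigl(O(\eps^{-4}\log(1/\eps))\bigr)$ type~1 tuples per $(q_1,q_2)$. A type~2 rounding $\tilde d^2_i(q_1,q_3,q_4)$ is indexed by a triple from $X^3$ and takes values in $\{\top,\bot\}$, contributing at most $2|X|^3$. Hence each path admits at most $|X|^3\cdot\exp\bigl(O(\eps^{-4}\log(1/\eps))\bigr)$ rounding choices.

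Finally, since a region selects one rounding per boundary path independently, I would combine the counts as
\[
|\cand|\ \le\ \poly(|X|)\cdot\Bigl(|X|^3\cdot\exp\bigl(O(\eps^{-4}\log(1/\eps))\bigr)\Bigr)^{O(\log|X|)}=\exp\!\Bigl(O\!\bigl(\log^2|X|+\tfrac{\log|X|}{\eps^4}\bigr)\Bigr),
\]
after absorbing $\log(1/\eps)$ into the $\eps^{-4}$ factor. The main (mild) obstacle is the net-size estimate, which crucially uses that $P_i$ is a shortest path in $G_i$ so that its intersection with any ball in $G_i$ is a single arc; beyond that, the proof is a routine product bound with no real analytic content.
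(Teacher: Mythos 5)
Your proposal is correct and follows essentially the same route as the paper: enumerate regions ($\poly(|X|)$ of them), boundary paths ($O(\log|X|)$ per region), choices of the anchor points $q_j$ ($|X|^{O(1)}$ per path), and rounded distance tuples ($(1/\eps)^{O(1/\eps^4)}$ per choice, using that each separator path is a shortest path in $G_i$ so its intersection with the ball is an arc and the net has $O(1/\eps^4)$ points), then take the product over the $O(\log|X|)$ paths of a region. Your explicit justification of the net-size bound and your handling of the residual $\log(1/\eps)$ factor match the paper's (equally informal) treatment, so there is nothing to flag.
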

\begin{proof}
Fix a region $\Pi$, a path $P_i$ on $\Pi$'s boundary, and points $q_1, q_2$. There are $O\left(1/\eps^4\right)$ points in the net $N_{i,q_1,q_2}$, and $O(1)$ in $\Pi \cap X$. For each of those points, there are at most $3/\eps^4$ many choices of distances. 

For a fixed region $\Pi$, path $P_i$ on $\Pi$'s boundary, and points $q_1, q_2, q_3$, there only $2$ possible different $\tilde d^2(q_1,q_2,q_3)$.

Now, there are $\poly(|X|)$ many regions $\Pi$, and for each of them $O(\log |X|)$ many paths $P_i$. For each path, there are at most $|X|^3$ choices of $q_j$ for it, so in total $|X|^{O(\log |X|)}$ possible choices. Each choice gives rise to $O(\log |X|) \cdot O\left(1/\eps^4\right)$ many net points, each having at most $3/\eps^4$ many choices of distances.


So, in total, there are 
\[|X|^{O(\log |X|)} \cdot (1/\eps)^{O\left(\log |X|/\eps^4\right)}\]
many choices of rounded distances tuples. That upper bounds the size of $\cand$, as there is at most one point per rounded distance tuple.
\end{proof}

\subsection{$\cand$ is an Approximate Centroid Set}
\paragraph{Construction of solution $\tilde \calS$.}
Now, for a point $s \in \calS$, we construct $\tilde s$ as follows, using the rounded distance tuples. Let $\Pi$ be a region of $\Upsilon$ that contains $s$. For each path $P_i$ in the boundary of $\Pi$, we define the tuple $\tilde d_i$ as follows. Let $q_1^i$ be the point minimizing $\dist_i(p, s)$. Now, we distinguish two cases:
\begin{itemize}
\item either there is some $q^i_2$ such that $\eps \dist_i(q^i_1, s) \leq \dist_i(q^i_1, q^i_2) + \dist(q^i_2, \greedy) \leq \frac{1}{\eps}\dist_i(q^i_1, s)$. Then $\tilde d_i$ is the tuple $\tilde d^1(q^i_1, q^i_2)$.
\item or there exists points $q$ with $\dist_i(q^i_1, q) + \dist(q, \greedy) > \frac{1}{\eps}\dist_i(q^i_1, s)$: let $q^i_3$ be such a point, with smallest $\dist_i(q^i_1, q^i_3)+ \dist(q^i_3, \greedy) $ value. If there are no such points, $q^i_3$ is unspecified.

If there are points $q$ with $\dist_i(q^i_1, q) + \dist(q, \greedy) < \eps\dist_i(q^i_1, s)$, then let $q^i_4$ be the point with largest $\dist_i(q^i_1, q^i_4) + \dist(q^i_4, \greedy) $ value. Otherwise, $q^i_4$ is unspecified. Note that since we are not in the first case, either $q^i_3$ are $q^i_4$ is specified.

Then $\tilde d_i$ is the tuple $\tilde d^2(q^i_1, q^i_3, q^i_4)$.
\end{itemize}

$\tilde s$ is chosen to be in $\cand \cap \Pi$ and to have the same rounded distance tuples as $s$, for \textit{all} the rounded tuples $\tilde d_i$. $\tilde \calS$ is the union of all those $\tilde s$ for $s \in \calS$.

\begin{lemma}
\mbox{}\label{lem:minorGood}
Let $\calS$ be a solution, and $s \in \calS$. Let $\tilde s$ defined as previously.
For any point $p \in X$, either $|\cost(p, s) - \cost(p, \tilde s)| \leq \eps(\cost(p, s) + \cost(p, \greedy))$ or both $\dist(p, s)$ and $\dist(p, \tilde s)$ are bigger than $\frac{10z\cdot \dist(p, \greedy)}{\eps}$.
\end{lemma}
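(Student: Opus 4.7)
The proof will follow the skeleton of the planar argument (\cref{thm:planar-centroid}) but with extra care because each separator path $P_i$ is a shortest path only in the residual graph $G_i$, not in $G$. Fix $p \in X$ and assume $p$ is interesting, i.e.\ $\dist(p,s) \le \tfrac{10z}{\eps}\dist(p,\greedy)$; otherwise I will verify that the distance caps built into the rounding (namely $3D/\eps^2$ for net distances, $3D/\eps$ for the $q_1^i$-distance, and $\tfrac{1}{\eps}\dist(x,\greedy)$ for $\Pi\cap X$ distances) automatically force $\dist(p,\tilde s)$ past the same threshold. The analysis then splits on whether $p$ lies in $\Pi$ or not.

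If $p \in X \cap \Pi$, every tuple (both $\tilde d^1$ and $\tilde d^2$) records the rounded value $\tilde d^1(p,s)$, which is a multiple of $\eps\,\dist(p,\greedy)$, so $|\dist(p,s) - \dist(p,\tilde s)| \le \eps\,\dist(p,\greedy)$ and \cref{lem:weaktri} closes the cost bound. If $p \notin \Pi$, some separator $P_i$ is crossed by a shortest $p$-to-$s$ path in $G$ at a vertex $x$, and I analyze the two possible forms of $\tilde d_i$. In the type-1 subcase $\tilde d_i = \tilde d^1(q_1^i,q_2^i)$, let $D = \dist_i(q_1^i,q_2^i)+\dist(q_2^i,\greedy)$; by the choice of $q_2^i$ one has $D \in [\eps\,\dist_i(q_1^i,s),\,\dist_i(q_1^i,s)/\eps]$. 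Then $\dist_i(q_1^i,x) \le \dist_i(q_1^i,s) \le D/\eps$, so $x \in P_i \cap B_i(q_1^i, D/\eps^2)$ and some net vertex $n \in N_{i,q_1^i,q_2^i}$ satisfies $\dist_i(x,n) \le \eps^2 D$. The rounding yields $|\dist_i(s,n)-\dist_i(\tilde s,n)| \le \eps^2 D$, and combining this with $\dist_G(p,s) = \dist_G(p,x)+\dist_G(x,s)$ and the symmetric estimate along a shortest $p$-to-$\tilde s$ path gives $|\dist(p,s)-\dist(p,\tilde s)| = O(\eps^2 D)$. Bounding $\eps^2 D \le \eps\,\dist_i(q_1^i,s) \le \eps(\dist(p,\greedy)+\dist(p,s))$ (using $q_1^i$ minimizer and triangle inequality) then converts this into the required cost bound via \cref{lem:weaktri}. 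In the type-2 subcase $\tilde d_i = \tilde d^2(q_1^i,q_3^i,q_4^i)$, the extremality of $q_3^i,q_4^i$ ensures every $q \in X$ has $\dist_i(q_1^i,q)+\dist(q,\greedy)$ outside $(\eps\,\dist_i(q_1^i,s),\,\dist_i(q_1^i,s)/\eps)$, and $\tilde s$ has $\dist_i(q_1^i,\tilde s)$ in the same $\tilde d^2$-preserving interval as $\dist_i(q_1^i,s)$. Splitting on whether $\dist_i(q_1^i,p)+\dist(p,\greedy)$ exceeds $\dist_i(q_1^i,s)/\eps$ or is below $\eps\,\dist_i(q_1^i,s)$, the dominant term in both $\dist(p,s)$ and $\dist(p,\tilde s)$ becomes the same (either $\dist(p,q_1^i)$ or $\dist_i(q_1^i,s) \approx \dist_i(q_1^i,\tilde s)$), giving the required $(1\pm O(\eps))$ matching after one application of \cref{lem:weaktri}.

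The main obstacle is the type-1 subcase: carefully reconciling $G_i$-distances (which the landmark construction uses) with $G$-distances (which $\cost(p,\cdot)$ requires). This relies on verifying that all four vertices $s,\tilde s,x,n$ lie in $G_i$ and that the specific sub-paths joining them realize the same length in $G$ and $G_i$, even though the global inequality $\dist_i \ge \dist_G$ is strict elsewhere. A secondary obstacle is the type-2 case, where the rounding is only a Boolean: the interval containing both $\dist_i(q_1^i,s)$ and $\dist_i(q_1^i,\tilde s)$ may be wide, so the matching relies on $p \in X$ falling strictly outside this interval by a factor of at least $1/\eps$, which is precisely the separation enforced by the choice of $q_3^i,q_4^i$.
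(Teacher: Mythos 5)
Your proposal follows the paper's overall strategy, but there is a concrete error in the type-1 case that causes you to miss an entire subcase. You claim $\dist_i(q_1^i,x) \le \dist_i(q_1^i,s) \le D/\eps$ and conclude that the crossing vertex $x$ always lies in $P_i \cap B_i(q_1^i, D/\eps^2)$, hence near a net point. The first inequality is false: $x$ lies on the shortest path from $p$ to $s$, not from $q_1^i$ to $s$, so one only gets $\dist_i(q_1^i,x) \le \dist_i(q_1^i,s) + \dist_i(s,x) \le \dist_i(q_1^i,s) + \dist(p,s)$, and $\dist(p,s)$ can vastly exceed $D/\eps^2$ when $\dist(p,\greedy)$ is large (the point is still ``interesting'' in that regime). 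The paper therefore splits type 1 into two subcases: when $\dist_i(x,q_1^i) \le D/\eps^2$ it uses the net point as you do, and when $\dist_i(x,q_1^i) > D/\eps^2$ there is no nearby landmark and one instead shows directly that $\dist(s,\tilde s) \le 2\dist_i(s,q_1^i) + \eps D \le O(\eps)\,\dist_i(x,q_1^i) \le O(\eps)(\dist(p,s_1)+\dist_i(p,s))$, using only the preserved rounded distance to $q_1^i$ itself. Without this second subcase the argument does not go through.

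Two further points. First, you flag the reconciliation of $G_i$-distances with $G$-distances as ``the main obstacle'' but do not resolve it; the paper's resolution is to choose $i$ to be the \emph{smallest} index such that $P_i$ intersects the shortest $p$-to-$s_1$ path in $G$, which guarantees that this path survives in $G_i$ and hence $\dist_i(p,s_1)=\dist(p,s_1)$ and $\dist_i(x,s_1)=\dist(x,s_1)$. Merely taking ``some separator that is crossed'' does not suffice. Second, in the type-2 case your stated conclusion (a $(1\pm O(\eps))$ multiplicative matching with ``$\dist_i(q_1^i,s)\approx\dist_i(q_1^i,\tilde s)$'') is not what holds: the Boolean rounding only places both distances in the same, possibly very wide, interval. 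In the subcase $\dist_i(p,q_1^i)+\dist(p,\greedy) < \eps\,\dist_i(q_1^i,s)$ the correct outcome is the other disjunct of the lemma, namely that both $\dist(p,s)$ and $\dist(p,\tilde s)$ exceed $\dist(p,\greedy)/\eps$, derived from the lower bound $\dist_i(q_1^i,s_1) > \frac{1}{\eps}(\dist_i(q_1^i,q_4^i)+\dist(q_4^i,\greedy)) \ge \frac{1}{\eps}(\dist_i(q_1^i,p)+\dist(p,\greedy))$; no cost matching is claimed or needed there.
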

\begin{proof}
Fix $s \in \calS \cap \Pi$, and let $\tilde s$ be its corresponding point in $\tilde \calS$. Let $s_1 \in \{s, \tilde s\}$, and $s_2$ the other choice: we will show that $\dist(p, s_1) \leq (1+\eps) \dist(p, s_2) + \eps \dist(p, \greedy)$. This implies that the costs verify the same inequality, which will allow us to conclude, switching the roles of $s_1$ and $s_2$.

First, in the case where $p \in \Pi \cap X$, then the rounding directly ensures that either $\dist(p, s) > 1/\eps \cdot \dist(x, \greedy)$, in which case it holds as well than $\dist(p, \tilde s) > 1/\eps \cdot \dist(x, \greedy)$, or $|\dist(p, s) - \dist(p, \tilde s)| \leq \eps \dist(x, \greedy))$.

Otherwise, $p$ is separated from $s_1$ by some path among $\{P_1, ..., P_m\}$.
Let $i$ be the smallest integer such that $P_i$ intersects the shortest path between $p$ and $s_1$. 
Our argument depends on the type of tuple $\tilde d_i$ chosen for $s$. Since $s$ and $\tilde s$ have the same rounded distance tuples $\tilde d_1, \tilde d_2,...$, they have in particular the same rounded distance $\tilde d_i$.
Let $q^i_1$ be the point with smallest $\dist_i(q, s)$ value (importantly, the $q^i_1, q^i_2, q^i_3$ and $q^i_4$ appearing in the proof are defined with respect to $s$, not to $s_1$).

\paragraph{If we can estimate $\dist_i(q^i_1, s)$.} In the first case, there is a $q^i_2$ such that $\eps \dist_i(q^i_1, s) \leq \dist_i(q^i_1, q^i_2) + \dist(q^i_2, \greedy) \leq \frac{1}{\eps}\dist_i(q^i_1, s)$. We let  $D := \dist_i(q^i_1, q^i_2) + \dist(q^i_2, \greedy) $ our (rough) estimate on the distance $\dist_i(q^i_1, s)$.

Then, our argument goes as follows. Let $x$ be a point in the intersection of $P_i$ and the shortest path $s_1 \leadsto p$. We have the following properties: by choice of $i$, $\dist_i(p, s_1) = \dist(p, s_1)$ and $\dist_i(x, s_1) = \dist(x,s_1)$. By choice of $x$, $\dist(p, x) + \dist(x, s_1) = \dist(p, s_1)$. Last, by choice of $q^i_1$, $\dist_i(q^i_1, s) \leq \dist_i(p, s)$, and $D \leq \frac{\dist_i(p, s)}{\eps}$.
\begin{itemize}
\item First, if $\dist_i(x, q^i_1) \leq \frac{D}{\eps^2}$. Then there is a point $n$ from $N_{i,q^i_1,q^i_2}$ with $\dist_i(n, x) \leq \eps^2 D$. Furthermore, $\dist_i(s_1, n) = \dist_i(s_2, n) \pm \eps^2 D$, as $\dist_i(s,n) \leq \dist_i(s, x) + \dist_i(x, q^i_1) + \dist_i(q^i_1, s) \leq \frac{3D}{\eps^2}$ and so $n$ has same rounded distances to $s_1$ and $s_2$. Hence, we get:
\begin{align*}
 \dist_i(p, s_2) &\leq \dist_i(p, x) + \dist_i(x, n) + \dist_i(n, s_2)\\
&\leq \dist_i(p, x) + \dist_i(x, n) + \dist_i(n, s_1) + \eps^2 D\\
&\leq \dist_i(p, x) +\dist_i(x, s_1) +2\dist_i(x, n)  + \eps^2 D\\
&\leq \dist_i(p, s_1) + 3\eps^2 D\\
&\leq \dist(p, s_1) + 3\eps \dist_i(p, s).
\end{align*}
Now, two cases: either $s = s_1$ and $\dist_i(p, s) = \dist(p, s)$, and then we get $\dist(p, s_2) \leq (1+3\eps) \dist(p, s_1)$. Or $s = s_2$, and we have $(1-3\eps) \dist(p, s) \leq \dist(p, \tilde s)$ which implies $\dist(p, s_2) \leq (1+6\eps) \dist(p, s_1)$.

\item Otherwise, $\dist_i(x, q^i_1) > \frac{D}{\eps^2}$: we first show that $\dist(s, \tilde s) \leq 3\eps (\dist_i(p, s) + \dist(p,\greedy))$, which will allow to conclude. It holds that $\dist_i(q^i_1, s) = \dist_i(q^i_1, \tilde s) \pm \eps D$, as by definition of $D$, $\dist_i(q^i_1, s) \leq D/\eps$. Hence, 
\begin{align*}
\dist_i(s, \tilde s) &\leq \dist_i(s, q^i_1) + \dist_i(\tilde s, q^i_1) \leq 2\dist_i(s, q^i_1) + \eps D\\
&\leq \frac{2+\eps^2}{\eps} D \leq 3\eps \dist_i(x, q^i_1)\\
&\leq 3\eps(\dist_i(x, s_1) + \dist_i(s_1, s_2) + \dist_i(s, q^i_1)) \\
\Rightarrow  \dist(s, \tilde s) &\leq 9\eps (\dist(p, s_1) + \dist_i(p,s)).
\end{align*}
Hence,
\begin{align*}
\dist_i(p, s_2) &\leq \dist_i(p, s_1) + \dist_i(s, \tilde s)\\
&\leq \dist(p, s_1) + 9\eps (\dist(p, s_1) + \dist_i(p,s))
\end{align*}
Similarly as in the previous case, either $s_1 = s$ and the right hand side is $(1+18\eps) \dist(p, s_1)$, or $s_2 = s$ and we infer $\dist(p, s_2) \leq (1+27\eps) \dist(p, s_1)$.
\end{itemize}

\paragraph{When we can only overestimate or underestimate $\dist_i(q^i_1, s)$ }
In the second case, $q^i_3$ is such that $\dist(q^i_3, \greedy) + \dist_i(q^i_1, q^i_3) > \frac{1}{\eps}\dist_i(q^i_1, s)$, and has minimal $\dist_i(q^i_1, q^i_3)+ \dist(q^i_3, \greedy)$ value among those. Similarly,
$q^i_4$ is the point with largest $\dist_i(q^i_1, q^i_4) + \dist(q^i_4, \greedy)$ value among those verifying $\dist_i(q^i_1, q) + \dist(q, \greedy)< \eps\dist_i(q^i_1, s)$.

By choice of $q^i_3$ and $q^i_4$, it must be that 
\[\frac{1}{\eps}\cdot(\dist_i(q^i_1, q^i_4) +\dist(q^i_4, \greedy)) < \dist_i(q^i_1, s) < \eps \cdot (\dist_i(q^i_1, q^i_3) + \dist(q^i_3, \greedy)).\]

Hence, $\tilde d^2(q^i_1, q^i_3, q^i_4) = \top$, and $\tilde s$ is chosen such that 
\[\frac{1}{\eps}\cdot(\dist_i(q^i_1, q^i_4) +\dist(q^i_4, \greedy)) < \dist_i(q^i_1, \tilde s) < \eps \cdot (\dist_i(q^i_1, q^i_3) + \dist(q^i_3, \greedy)).\]

Since we are not in the first case where we can estimate $\dist_i(q^i_1, s)$, $p$ verifies either $\dist(p, \greedy) + \dist_i(p, q^i_1) < \eps \dist_i(q_1^i, s)$ or $\dist(p, \greedy) + \dist_i(p, q^i_1) > \frac{1}{\eps} \dist_i(q_1^i, s)$. 

First, if $\dist_i(p, q^i_1) + \dist(p, \greedy) > \frac{1}{\eps} \dist_i(q_1^i, s)$. Then we have, by choice of $q^i_3$:
\begin{align*}
\dist_i(s, \tilde s) &\leq \dist_i(s, q^i_1) + \dist_i(\tilde s, q^i_1)\\
&\leq 2\eps (\dist_i(q^i_1, q^i_3) + \dist(q^i_3, \greedy))\\
&\leq 2\eps (\dist_i(q^i_1, p) + \dist(p, \greedy))\\
&\leq 2\eps (\dist_i(p, s) + \dist_i(q^i_1, s) + \dist(p, \greedy))\\
&\leq 4\eps (\dist_i(p, s) + \dist(p, \greedy))\\
\end{align*}
and therefore, we can conclude just as before (distinguishing whether $s = s_1$ or $s=s_2$) that
\[\dist(p, s_2) \leq (1+12\eps) \dist(p, s_1) + 12\eps \dist(p, \greedy).\]

Lastly, in the case where  $\dist_i(p, q^i_1)+ \dist(p, \greedy)  < \eps \dist_i(q_1^i, s)$, we use that 
$\dist_i(q^i_1, s_1) > \frac{1}{\eps}\cdot(\dist_i(q^i_1, q^i_4) +\dist(q^i_4, \greedy))$ (as both $s$ and $\tilde s$ verifies this) to get:
\begin{align*}
\dist(p, s_1) &= \dist_i(p, s_1) \geq \dist_i(q^i_1, s_1) - \dist_i(p, q^i_1)\\
&\geq \frac{1}{\eps}\cdot (\dist_i(q^i_1, q^i_4) + \dist(q^i_4, \greedy)) - \dist_i(p, q^i_1)\\
&\geq \frac{1}{\eps}\cdot (\dist_i(q^i_1, p) + \dist(p, \greedy)) - \dist_i(p, q^i_1)\\
&\geq \frac{\dist(p, \greedy)}{\eps}.
\end{align*}

\paragraph{Conclusion.}
Rescaling $\eps$ by $1/27z$, the previous inequalities gives us that either $\dist(p, s_1) \geq \frac{27z \cdot \dist(p, \greedy)}{\eps}$, or $\dist(p, s_2) \leq (1+\eps/z) \dist(p, s) + \eps/z\cdot \dist(p, \greedy)$. The second inequality combined with \cref{lem:weaktri} implies that $\cost(p, s_2) \leq (1+\eps) \cost(p, s_1) + \eps\cdot \cost(p, \greedy)$. Therefore, using this result
with $s_1 = s, s_2 = \tilde s$ and then $s_1 = \tilde s, s_2 = s$  shows that:
\begin{itemize}
\item either $\dist(p, s) \geq \frac{27z \cdot \dist(p, \greedy)}{\eps}$, or $\cost(p, \tilde s) \leq (1+\eps) \cost(p, s) + \eps\cdot \cost(p, \greedy)$
\item either $\dist(p, \tilde s) \geq \frac{27 z\cdot \dist(p, \greedy)}{\eps}$, or $\cost(p, s) \leq (1+\eps) \cost(p, \tilde s) + \eps\cdot \cost(p, \greedy)$.
\end{itemize}

Therefore, if $p$ is such that $\dist(p, s) \leq \frac{5z\cdot \dist(p, \greedy)}{\eps}$, then $\dist(p, \tilde s) \leq  \frac{10z\cdot \dist(p, \greedy)}{\eps}$, and reciprocally when $\dist(p, \tilde s) \leq \frac{5z\cdot \dist(p, \greedy)}{\eps}$, then $\dist(p, s) \leq \frac{10z\cdot \dist(p, \greedy)}{\eps}$.

Thus we conclude: either both $\dist(p, \tilde s)$ and $\dist(p, s)$ are bigger than $\frac{10z\cdot \dist(p, \greedy)}{\eps}$, and we are done. Or both are smaller than $\frac{20z\cdot \dist(p, \greedy)}{\eps}$, and then using the previous inequalities we get:
\[|\cost(p, s) - \cost(p, \tilde s)| \leq 2\eps(\cost(p, s) +\cost(p, \tilde s) + \cost(p, \greedy))\]
Which, using $\cost(p, \tilde s) \leq \cost(p, s) + |\cost(p, s) - \cost(p, \tilde s)|$, yields
\[|\cost(p, s) - \cost(p, \tilde s)| \leq 7\eps(\cost(p, s) + \cost(p, \greedy)).\]
\end{proof}

\cref{lem:minorGood} gives exactly the same guarantee as \cref{eq:centroid-treewidth}: hence, as in the proof for treewidth, we can conclude from that inequality that for any solution $\calS$ and any interesting point $p$, $|\cost(p, \calS) - \cost(p, \tilde \calS)| \leq \eps(\cost(p, \calS) + \cost(p, \greedy))$.

Combining the guarantees from \cref{lem:minorGood} and \cref{lem:minorSize} concludes the proof of \cref{thm:mf-centroid}.
\section{A Note on Euclidean Spaces}\label{sec:dimension}

Lastly, we briefly want to survey the state of the art results for eliminating the dependency on the dimension in Euclidean spaces.

In a nutshell, the frameworks by both Feldman and Langberg~\cite{FeldmanL11} and us only yield coresets of size $O(k d \text{poly}(\log k,\varepsilon^{-1}))$.
To eliminate the dependency on the dimension, we typically have to use some form of dimension reduction.

In a landmark paper,~\cite{FeldmanSS20} showed that one can replace the dependency on $d$ with a dependency on $k/\varepsilon^{2}$ for the $k$-means problem, see also ~\cite{CEMMP15} for further improvements on this idea. Subsequently, Sohler and Woodruff~\cite{SohlerW18} gave a construction for arbitrary $k$-clustering objectives which lead to the first existence proof of dimension independent coresets for these problems. Unfortunately, there were a few caveats; most notably a running time exponential in both $k$. 
Huang and Vishnoi~\cite{huang2020coresets} showed that the mere existence of the Sohler-Woodruff construction was enough to compute coresets of size $\text{poly}(k/\varepsilon)$.
Recently, the Sohler-Woodruff result was made constructive in the work of Feng, Kacham and Woodruff~\cite{FKW19}.

Having obtained a $\text{poly}(k/\varepsilon)$-sized coreset, one can now use a terminal embedding to replace the dependency on $d$ by a dependency $\varepsilon^{-2} \log k/\varepsilon$. Terminal embeddings are defined as follows:
\begin{definition}[Terminal Embeddings]
Let $\varepsilon\in (0,1)$ and let $A\subset \mathbb{R}^d$ be arbitrary with $|A|$ having size $n>1$. Define the Euclidean norm of a $d$-dimensional vector $\|x\| = \sqrt{\sum_{i=1}^d x_i^2}$. Then a mapping $f:\mathbb{R}^d\rightarrow\mathbb{R}^m$ is a \emph{terminal embedding} if
\[\forall x\in A,~\forall y\in \mathbb{R}^d,~(1-\varepsilon)\cdot\|x-y\| \leq \|f(x)-f(y)\| \leq (1+\varepsilon)\cdot \|x-y\|.\]
\end{definition}

Terminal embeddings were studied by~\cite{ElkinFN17,MahabadiMMR18,NaN18}, with Narayanan and Nelson \cite{NaN18} achieving an optimal target dimension of $O(\varepsilon^{-2}\log n)$, where $n$ is the number of points\footnote{See the paper by Larsen and Nelson for a matching lower bound~\cite{LarsenN17}}.

It was first observed by Becchetti et al.~\cite{BecchettiBC0S19} how terminal embeddings can be combined with the Feldman-Langberg~\cite{FeldmanL11} (or indeed our) framework. Specifically, given the existence of a $\text{poly}(k/\varepsilon)$-sized coreset, applying a terminal embedding with $n$ being the number of distinct points in the coreset now allows us to further reduce the dimension. At the time, the only problem with such a coreset bound was $k$-means. The generalization to arbitrary $k$-clustering objectives is now immediate following the results by Huang and Vishnoi~\cite{huang2020coresets} and Feng et al.~\cite{FKW19}.

It should be noted that more conventional Johnson-Lindenstrauss type embeddings proposed in \cite{BecchettiBC0S19,CEMMP15,MakarychevMR19} do not (obviously) imply the same guarantee as terminal embeddings. We appended a short proof showing that terminal embeddings are sufficient at the end of this section. For a more in-depth discussion as to why normal Johnson-Lindenstrauss transforms may not be sufficient, we refer to Huang and Vishnoi~\cite{huang2020coresets}.

Combining our $O(k (d+\log k)\cdot \varepsilon^{-\max(2,z)})$ bound for general Euclidean spaces with either the Huang and Vishnoi~\cite{HuangJV19} or the Feng et al.~\cite{FKW19} constructions and terminal embeddings now immediately imply the following corollary.

\begin{corollary}\label{cor:coreset-euclidean}
There exists a coreset of size $$O\left(k \log k \cdot  \left(\varepsilon^{-2-\max(2,z)}\right)\cdot 2^{O(z\log z)}\cdot \mathrm{polylog}(\varepsilon^{-1})\right)$$ for $(k,z)$-clustering in Euclidean spaces.
\end{corollary}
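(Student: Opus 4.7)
The plan is to compose three ingredients in sequence: (i) a dimension-independent preliminary coreset, (ii) a terminal embedding to reduce the ambient dimension to something near $\varepsilon^{-2}\log k$, and (iii) our doubling-space bound (\cref{cor:coreset-doubling}) applied in the low-dimensional embedded instance. The final coreset size is the product of the ``doubling'' cost with the reduced dimension, which will match the claim.

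First, I would invoke the construction of Feng, Kacham and Woodruff~\cite{FKW19} (or equivalently Huang and Vishnoi~\cite{huang2020coresets}) on the input $P\subset \R^d$ with accuracy $\varepsilon/3$, producing an $(\varepsilon/3)$-coreset $\coreset_1$ for $(k,z)$-clustering with $|\coreset_1| = N \leq \poly(k/\varepsilon)\cdot 2^{O(z)}$ distinct points, independent of $d$. This step is just used to bound the number of distinct points that the terminal embedding must handle.

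Second, I would apply the Narayanan--Nelson terminal embedding~\cite{NaN18} with distortion $\varepsilon/3$ to the set of points of $\coreset_1$. This produces a map $f : \R^d \to \R^m$ with $m = O(\varepsilon^{-2}\log N) = O(\varepsilon^{-2}\log(k/\varepsilon))$ such that, for every $x\in \coreset_1$ and every $y\in\R^d$, $(1-\varepsilon/3)\|x-y\|\leq \|f(x)-f(y)\|\leq (1+\varepsilon/3)\|x-y\|$. In particular, raising distances to the $z$-th power and summing, $f(\coreset_1)$ (with the induced weights) and $\coreset_1$ have costs that differ multiplicatively by $(1\pm\varepsilon)$ for every set of $k$ centers in $\R^d$ (query centers may be mapped through $f$ or replaced by an arbitrary point of $\R^m$, since the terminal property holds against all of $\R^d$).

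Third, I would apply \cref{cor:coreset-doubling} to the $m$-dimensional instance $f(\coreset_1)$ (viewed as a weighted input in $\R^m$, whose doubling dimension is $O(m)$) using \cref{cor:weighted} to handle weights, with accuracy $\varepsilon/3$. This produces a coreset $\coreset_2\subseteq f(\coreset_1)$ of size
\[
O\!\left(\frac{\log^5(1/\varepsilon)}{2^{O(z\log z)}\min(\varepsilon^2,\varepsilon^z)}\bigl(km + k\log k\bigr)\right)
\;=\; O\!\left(k\log k\cdot \varepsilon^{-2-\max(2,z)}\cdot 2^{O(z\log z)}\cdot \polylog(1/\varepsilon)\right),
\]
absorbing the $\log(k/\varepsilon)$ factor from $m$ into the $k\log k$ term. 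Pulling $\coreset_2$ back to the preimages in $\coreset_1\subseteq\R^d$ gives the claimed coreset: by the terminal property and the coreset-of-a-coreset principle (three layers, each with accuracy $\varepsilon/3$), it is an $\varepsilon$-coreset for the original instance $P$ against all $k$-center solutions in $\R^d$.

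The main obstacle, in my view, is verifying cleanly that the three layers compose with only a constant multiplicative blow-up in the error, because the middle layer (the terminal embedding) only preserves distances from $\coreset_1$ to $\R^d$, not from $\R^m$ to $\R^m$. Concretely, when we later evaluate $\cost(\coreset_2,\calS)$ for $\calS\subset\R^d$, we must argue via the terminal property that $\|f^{-1}(q)-s\|\approx \|q-f(s)\|$ for every $q\in\coreset_2$ and $s\in\calS$, then use the $k$-center structure to chain into $\cost(P,\calS)$. Once that bookkeeping is set, the size bound falls out by plugging $m = O(\varepsilon^{-2}\log(k/\varepsilon))$ into \cref{cor:coreset-doubling}.
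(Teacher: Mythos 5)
Your proposal follows essentially the same route as the paper: a $\poly(k/\varepsilon)$-size preliminary coreset via Huang--Vishnoi or Feng et al., a Narayanan--Nelson terminal embedding into dimension $O(\varepsilon^{-2}\log(k/\varepsilon))$, and then the doubling/Euclidean bound of \cref{cor:coreset-doubling} in the embedded space; the composition issue you flag is exactly what the paper resolves in \cref{prop:coresetpreserve}. No substantive differences.
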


Huang and Vishnoi further considered clustering in $\ell_p$ metrics for $p\in[1,2)$, i.e. non-Euclidean spaces. For this they reduced constructing a coreset for $(k,z)$ clustering in an $\ell_p$ space to constructing a constructing a coreset for $(k,2z)$ clustering in Euclidean space. Plugging in our framework into their reduction then yields the following corollary:

\begin{corollary}\label{cor:coreset-lp}
There exists a coreset of size $$O\left(k \log k \cdot  \left(\varepsilon^{-2-2z}\right)\cdot 2^{O(z\log z)}\cdot \mathrm{polylog}(\varepsilon^{-1})\right)$$ for $(k,z)$-clustering in any $\ell_p$ space for $p\in [1,2)$.
\end{corollary}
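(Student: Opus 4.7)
The proof is essentially a black-box reduction, so the plan is short and mostly bookkeeping. First, I would invoke the reduction of Huang and Vishnoi~\cite{huang2020coresets} which establishes that an $\varepsilon$-coreset for $(k,z)$-clustering in an $\ell_p$ space with $p \in [1,2)$ can be obtained from an $\varepsilon$-coreset for $(k,2z)$-clustering in Euclidean space. The mechanism of this reduction is that $|t|^p$ for $p\in[1,2)$ can be represented (via Kahane-type inequalities or an integral representation) as a convex combination/expectation of squared-Euclidean-type quantities, which allows a coreset that preserves sums of $2z$-th powers of Euclidean distances to preserve sums of $z$-th powers of $\ell_p$ distances up to a $(1\pm\varepsilon)$ factor. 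Crucially, this reduction is oblivious to which Euclidean coreset construction is used: it only needs the existence of a small coreset for the Euclidean $(k,2z)$ instance.

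Second, I would instantiate Corollary~\ref{cor:coreset-euclidean} with the parameter $z$ replaced by $2z$, yielding a Euclidean $(k,2z)$-coreset of size
\[
O\!\left(k \log k \cdot \varepsilon^{-2-\max(2,2z)} \cdot 2^{O(2z\log 2z)} \cdot \mathrm{polylog}(\varepsilon^{-1})\right).
\]
Since $z \geq 1$ we have $\max(2, 2z) = 2z$, and $2^{O(2z \log 2z)} = 2^{O(z \log z)}$. Pushing this bound through the Huang--Vishnoi reduction gives exactly the target size
\[
O\!\left(k \log k \cdot \varepsilon^{-2-2z} \cdot 2^{O(z\log z)} \cdot \mathrm{polylog}(\varepsilon^{-1})\right),
\]
which proves the corollary.

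The only potential subtlety—and hence the only real ``obstacle''—is verifying that the reduction is truly black-box for our construction. Specifically, one must check (i) that the Euclidean instance produced by the reduction on an $\ell_p$ input of $n$ points has weighted points to which Corollary~\ref{cor:coreset-euclidean} (and its underlying weighted-input version via Corollary~\ref{cor:weighted}) applies, and (ii) that the approximation guarantee transfers with only a constant-factor loss absorbed into the $\mathrm{polylog}(1/\varepsilon)$ and $2^{O(z\log z)}$ terms. Both hold: the reduction in~\cite{huang2020coresets} was stated precisely so that any improvement in Euclidean coreset size for $(k,2z)$ clustering yields a corresponding improvement for $\ell_p$ $(k,z)$ clustering, and our Corollary~\ref{cor:coreset-euclidean} strictly improves the Euclidean bound used in their original statement. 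Hence no new technical ingredients are required beyond the combination of these two results.
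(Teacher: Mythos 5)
Your proposal matches the paper's argument exactly: the paper likewise proves this corollary by citing the Huang--Vishnoi reduction from $(k,z)$-clustering in $\ell_p$ ($p\in[1,2)$) to $(k,2z)$-clustering in Euclidean space and plugging in \cref{cor:coreset-euclidean} with $z$ replaced by $2z$, and your bookkeeping ($\max(2,2z)=2z$ and $2^{O(2z\log 2z)}=2^{O(z\log z)}$) is correct. No gaps.
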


\begin{proposition}
\label{prop:coresetpreserve}
Suppose we have a (possibly weighted) point set $A$ in $\mathbb{R}^d$. Let $f:\mathbb{R}^d\rightarrow \mathbb{R}^{m}$ with $m\in O(\varepsilon^{-2}\cdot z^2 \log n)$ be a terminal embedding for $A$ and let $f(A)$ be the projected point set. Then if $f(P)\subset f(A)$ is an $\varepsilon$-coreset for $f(A)$, $P\subset A$ is an $O(\varepsilon)$-coreset for $A$.
Conversely, if $P\subset A$ is an $\varepsilon$-coreset for $A$, then $f(P)\subset f(A)$ is an $O(\varepsilon)$-coreset for $f(A)$
\end{proposition}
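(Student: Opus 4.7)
\textbf{Proof Plan for Proposition~\ref{prop:coresetpreserve}.}

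The plan is to reduce both directions to a single bookkeeping step that transfers the cost of a fixed $k$-center solution between the two spaces using the terminal embedding, and then invokes the coreset hypothesis on that side. The preliminary observation I will use throughout is that from $(1-\varepsilon)\|x-y\| \leq \|f(x)-f(y)\| \leq (1+\varepsilon)\|x-y\|$ for $x \in A$ and $y \in \mathbb{R}^d$, raising to the $z$-th power (and using $\varepsilon < 1/2z$) yields $\|f(x)-f(y)\|^z = (1 \pm O(z\varepsilon))\|x-y\|^z$. Since powers of distances are what define $\cost$, this pointwise relation immediately extends to clusterings: for any finite (weighted) set $T \subseteq A$ and any $\mathcal{S} \subseteq \mathbb{R}^d$,
\[\cost(f(T), f(\mathcal{S})) = (1 \pm O(z\varepsilon))\, \cost(T, \mathcal{S}),\]
by taking the minimum over $\mathcal{S}$ pointwise and then summing.

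First, I would tackle the direction ``$f(P)$ coreset $\Rightarrow P$ coreset'': fix $\mathcal{S} \subseteq \mathbb{R}^d$ with $|\mathcal{S}|=k$, form $f(\mathcal{S}) \subseteq \mathbb{R}^m$, apply the displayed identity once to $A$ and once to $P$, and sandwich with the hypothesis $\cost(f(A), f(\mathcal{S})) = (1\pm\varepsilon)\cost(f(P), f(\mathcal{S}))$. Chaining the three $(1 \pm O(z\varepsilon))$ approximations yields $|\cost(A,\mathcal{S}) - \cost(P,\mathcal{S})| \leq O(z\varepsilon)\cdot \cost(A,\mathcal{S})$, which is the claim after rescaling $\varepsilon$.

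The main obstacle is the converse direction ``$P$ coreset $\Rightarrow f(P)$ coreset'', because a solution $\mathcal{S}' \subseteq \mathbb{R}^m$ need not equal $f(\mathcal{S})$ for any $\mathcal{S} \subseteq \mathbb{R}^d$ -- the terminal embedding property alone controls distances only when one endpoint is $f(y)$ for some $y \in \mathbb{R}^d$. My plan here is a lifting argument: for each center $s'_i \in \mathcal{S}'$, define $s_i \in \mathbb{R}^d$ as a (near-)minimizer of $\|f(s) - s'_i\|$ over $s \in \mathbb{R}^d$. For every $a \in A \cup P$, the triangle inequality gives $\bigl|\|f(a)-s'_i\| - \|f(a)-f(s_i)\|\bigr| \leq \|s'_i - f(s_i)\| \leq \|s'_i - f(a)\|$, and the last quantity is comparable (up to the terminal-embedding distortion) to $\|s_i - a\|$. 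A short calculation using \Cref{lem:weaktri} then shows $\cost(f(A), \mathcal{S}') = (1 \pm O(z\varepsilon))\cost(A, \mathcal{S})$ and the same identity with $P$ in place of $A$. The coreset property of $P$ in $\mathbb{R}^d$ closes the loop, yielding $\cost(f(A),\mathcal{S}') = (1\pm O(z\varepsilon))\cost(f(P),\mathcal{S}')$.

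The delicate point in the converse is justifying the lifting step: one must verify that a near-minimizer $s_i$ exists (existence follows from restricting to a bounded region since far-away $s_i$ cannot minimize the cost contribution of $\mathcal{S}'$) and that the residual $\|s'_i - f(s_i)\|$ is indeed small compared to the relevant distances $\|f(a)-s'_i\|$. This is the step I would spend the most care on, as it implicitly leverages the fact that $f(\mathbb{R}^d)$ is ``dense enough'' in $\mathbb{R}^m$ relative to the point set $f(A)$; the clean way to do it is to observe that an optimal center for a $(k,z)$-clustering objective lies in the convex hull of the (weighted) point set, hence in the affine hull of $f(A) \cup f(P)$, which we can relate back to $f(\mathbb{R}^d)$.
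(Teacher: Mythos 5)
Your first direction is correct and is essentially the paper's own argument: absorb the $z$-th power into the distortion so that $\cost(p,\calS)=(1\pm O(z\eps))\cost(f(p),f(\calS))$ for $p\in A$ and $\calS\subset\R^d$, apply the coreset hypothesis for $f(P)$ to the particular solutions $f(\calS)\subset\R^m$, and chain the three approximations. (The paper only writes out this direction and dismisses the converse as ``analogous,'' so you are right to notice that the converse is where the real content lies: an arbitrary $\calS'\subset\R^m$ need not be $f(\calS)$ for any $\calS\subset\R^d$, and the terminal-embedding guarantee says nothing about such centers.)

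However, your lifting argument for the converse has a genuine gap. Setting $s_i$ to be a near-minimizer of $\|f(s)-s_i'\|$ over $s\in\R^d$ only gives the bound $\rho_i:=\|s_i'-f(s_i)\|\leq\min_{a\in A}\|f(a)-s_i'\|$. For the point $a$ closest to $s_i'$ this residual can be of the same order as $\|f(a)-s_i'\|$ itself, so the triangle inequality only yields $\|f(a)-f(s_i)\|\leq 2\|f(a)-s_i'\|$ (and similarly in the other direction) --- a constant-factor relation, not $(1\pm O(\eps))$. Feeding this into \cref{lem:weaktri} produces an additive term of order $\eps^{-(z-1)}\rho_i^z$, which can be $\eps^{-(z-1)}\cost(f(a),\calS')$, i.e.\ far larger than the target error; so the claimed identity $\cost(f(A),\calS')=(1\pm O(z\eps))\cost(A,\calS)$ does not follow. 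The proposed repair via convex/affine hulls does not close this gap either: the coreset definition quantifies over \emph{all} $k$-center solutions in $\R^m$, not only (near-)optimal ones whose centers lie in the convex hull of the data, and since a terminal embedding $f$ is nonlinear, the affine hull of $f(A)$ is not contained in $f(\R^d)$, so membership in that hull gives no control on $\dist(s_i',f(\R^d))$. Handling centers at distance comparable to $\rho_i/\eps$ or less from their residual requires a separate argument (this is exactly the difficulty the paper alludes to when explaining why ordinary Johnson--Lindenstrauss maps do not suffice), and your proposal as written does not supply it.
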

\begin{proof}
We prove the result for the first direction, the other direction is analogous.
Consider an arbitrary solution $S$ in $\mathbb{R}^d$.
We first notice that for any point $p\in A$, we have
\begin{equation*}
(1-\varepsilon/2z)^z \cdot \cost(f(p),f(\calS)) \leq (1-\varepsilon) \cdot \cost(f(p),f(\calS)) 
\end{equation*}
and  
\begin{equation*}
(1+\varepsilon/2z)^z \cdot \cost(f(p),f(\calS)) \geq (1+\varepsilon) \cdot \cost(f(p),f(\calS)) 
\end{equation*}
Therefore,
\begin{equation}
\label{eq:termin1}
  (1-\varepsilon) \cdot \cost(f(p),f(\calS)) \leq \cost(p,\calS) \leq (1+\varepsilon)\cdot \cost(f(p),f(\calS)).
\end{equation} 
Now suppose $f(P)$ is a coreset for $f(A)$, which means for any set of $k$ points $f(S)\subset \mathbb{R}^m$ 
\begin{eqnarray}
\left\vert \sum_{p\in f(A)} w_p \cdot \cost(p,f(\calS)) - \sum_{q\in f(P)}w'_q \cdot \cost(q,f(\calS))\right\vert \label{eq:termin2}
\leq \varepsilon\cdot \sum_{p\in f(A)} w_p \cdot \cost(p,f(\calS)),
\end{eqnarray}
where $w$ and $w'$ are the weights assigned to points in $f(A)$ and $f(P)$, respectively.
Let us now consider a solution $S$ in the original $d$-dimensional space. Since $P$ is a subset of $A$, we have by combining Equations~\ref{eq:termin1} and \ref{eq:termin2}
\begin{eqnarray*}
& &\left\vert \sum_{p\in A} w_p \cdot \cost(p,\calS) - \sum_{q\in P}w'_q \cdot \cost(p,\calS)\right\vert \\
&\leq & \varepsilon\cdot \sum_{p\in A} w_p \cdot \cost(f(p),f(\calS)) +  \varepsilon\cdot \sum_{q\in P}w'_q \cdot \cost(f(q),f(\calS)) \\
& & + \left\vert \sum_{p\in A} w_p \cdot \cost(f(p),f(\calS)) - \sum_{q\in P}w'_q \cdot \cost(f(q),f(\calS))\right\vert \\
&\leq & 2\varepsilon\cdot \sum_{p\in A}w_p \cdot \cost(f(p),f(\calS)) + \varepsilon\cdot \sum_{q\in P}w'_q \cdot \cost(f(q),f(\calS)) \\
&\leq & (3 + \varepsilon)\varepsilon\cdot \sum_{p\in A} w_p \cdot \cost(f(p),f(\calS)) \\
&\leq &  (3 + 3\varepsilon)\varepsilon\cdot \sum_{p\in A} w_p \cdot \cost(p,\calS),
\end{eqnarray*}
where the second inequality uses Equation \ref{eq:termin2} and the triangle inequality and the last inequality uses Equation \ref{eq:termin1}. 
\end{proof}

\ifstoc
\bibliographystyle{alpha}
\bibliography{references}
\else
{\small
\bibliographystyle{alpha}
\bibliography{references}
}
\fi

\appendix
\section{Missing Proof}
\weaktri*
\begin{proof}
The proof of the first inequality is appears in \cite{MakarychevMR19}, Corollary A.2.

For the second part, let $S(a), S(b)$ be the closest point to $a$ and $b$ from $S$, and assume that $d(b, S) \leq d(a, S)$. Then:
\begin{align*}
d(a, S)^z &\leq d(a, S(b))^z\\
&\leq \left(1+\frac{\eps}{2z}\right)^{z-1} \cdot d(b, S(b))^z +  \left(1+\frac{2z}{\eps}\right)^{z-1} \cdot d(a, b)^z\\
&\leq (1+\eps)\cdot  d(b, S(b))^z +  \left(1+\frac{2z}{\eps}\right)^{z-1} \cdot d(a, b)^z\\
&\leq d(b, S)^z + \eps \cdot d(a, S(a))^z + \left(1+\frac{2z}{\eps}\right)^{z-1}\cdot  d(a, b)^z,
\end{align*}
and so 
\[\left\vert d(a,S)^z - d(b, S)^z\right\vert = d(a,S)^z - d(b, S)^z \leq \varepsilon \cdot d(a,S)^z + \left(\frac{2z+\varepsilon}{\varepsilon}\right)^{z-1} d(a, b)^z.\]

In the other case, when $d(a, S) \leq d(b, S)$:
\begin{align*}
d(b, S)^z &\leq d(b, S(a))^z\\
&\leq \left(1+\frac{\eps}{2z}\right)^{z-1} \cdot d(a, S(a))^z +  \left(1+\frac{2z}{\eps}\right)^{z-1} \cdot d(a, b)^z\\
&\leq (1+\eps) \cdot d(a, S)^z +  \left(1+\frac{2z}{\eps}\right)^{z-1} \cdot d(a, b)^z,
\end{align*}
and so
\[\left\vert d(a,S)^z - d(b, S)^z\right\vert = d(b,S)^z - d(a, S)^z \leq \varepsilon \cdot d(a,S)^z + \left(\frac{2z+\varepsilon}{\varepsilon}\right)^{z-1} d(a, b)^z.\]

\end{proof}

\section{A Coreset of Size $k^2\eps^{-2}$} \label{sec:ksquare}
In this section, we show how to trade a factor $\eps^{-z}$ for a factor $k$ in the coreset size. 

\begin{lemma}\label{lem:k2sampling}
Let $(X, \dist)$ be a metric space, $P$ be a set of points, $k, z$ two positive integers and $\greedy$ a set of $O(k)$ centers such that each for each cluster with center $c$ induced by $\greedy$, all points of the cluster are at distance between $\left(\frac{\eps}{z}\right)^2\Delta_C$ and $\left(\frac{z}{\eps}\right)^{2} \Delta_C$, for some $\Delta_C$.

Suppose there exists an $\greedy$-approximate centroid set $\cand$ for $P$. 

Then, there exists an algorithm running in time $O(|P|)$ that constructs a set $\coreset$ of size $O(k\cdot 2^{O(z)} \frac{\log^3(1/\eps)}{\eps^2}\left(k \log k + k \log |\cand| + \log(1/\pi)\right)$ such that, with probability $1 - 1/\pi$, for any set $\calS$ of $k$ centers,
\[\left\vert \cost(\calS) - \cost(\coreset, \calS)\right\vert = O(\eps)\cost(\calS).\]
\end{lemma}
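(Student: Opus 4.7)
I will use the classical Feldman–Langberg sensitivity sampling scheme, sampling $\delta$ points independently, each with probability proportional to $s(p) := \cost(p,\greedy)/\cost(G,\greedy) + 1/(k|C_p|)$ (where $C_p$ is $p$'s cluster in $\greedy$), and giving the sampled point weight $w(p) = T/(\delta\,s(p))$ with $T = \sum_q s(q) \leq 2$. The total sensitivity is $O(1)$, and the $1/(k|C_p|)$ term is the standard ``cluster insurance'' that guarantees each cluster is represented in the sample — it is precisely this term that converts the $\eps^{-z}$ factor of \cref{lem:coreset-reduc} into a $k$ factor, since it contributes $w(p) \leq O(k|C_p|\cost(G,\greedy)/\delta)$ rather than a potentially much larger weight.

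For a fixed $\tilde\calS \in \cand^k$, I will apply Bernstein's inequality to $E_{\tilde\calS} = \sum_{p\in\coreset} w(p)\cost(p,\tilde\calS)$, splitting the points of $G$ according to whether $\cost(p,\tilde\calS) \leq (8z/\eps)^z\cost(p,\greedy)$ (\emph{interesting}) or $>$ (\emph{huge}), exactly as in \cref{sec:sample}. For huge points, I will reuse \cref{lem:khuge} verbatim: a single huge point in a cluster $C$ forces every point of $C$ to be huge as well, because the structural assumption bounds intra-cluster distances by a factor $(z/\eps)^4$ which, combined with \cref{lem:weaktri}, yields $\cost(q,\tilde\calS) = (1\pm 7\eps)\cost(p,\tilde\calS)$ for all $q \in C$. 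Conditioning on an event $\calE$ that preserves every $|C|$ in the coreset (which, thanks to the $1/(k|C_p|)$ component of $s(p)$, holds w.h.p.\ via Chernoff exactly as in \cref{lem:ksize}), the cost of huge clusters is preserved multiplicatively.

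For interesting points, I need to bound the second moment
\[
\sum_{i=1}^{\delta}\E[X_i^2] \leq \frac{T}{\delta}\sum_{p}\frac{\cost(p,\tilde\calS)^2}{s(p)}
\]
and the almost-sure maximum $M = \max_p w(p)\cost(p,\tilde\calS)$. Using $\cost(p,\tilde\calS)^2 \leq (8z/\eps)^z\cost(p,\greedy)\cost(p,\tilde\calS)$ for interesting points, together with $s(p) \geq \cost(p,\greedy)/\cost(G,\greedy)$ \emph{or} $s(p) \geq 1/(k|C_p|)$ depending on which term dominates, the second moment telescopes to $O(2^{O(z)} k \cost(G,\greedy)\cost(G,\tilde\calS)/\delta)$. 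Bernstein then yields $|E_{\tilde\calS} - \cost(G,\tilde\calS)| \leq O(\eps)(\cost(G,\greedy) + \cost(G,\tilde\calS))$ provided $\delta = \Omega\big(2^{O(z)}k\log^{O(1)}(1/\eps)(k\log|\cand| + \log(1/\pi))/\eps^2\big)$. Taking a union bound over $\cand^k$ and reducing an arbitrary $\calS$ to its approximator $\tilde\calS \in \cand^k$ via \cref{def:centroid-set} (paying an additive $\eps(\cost(\calS)+\cost(\greedy))$ on interesting points, while huge points for $\calS$ remain huge for $\tilde\calS$ and are handled above) yields the claimed coreset. Finally, $\cost(\greedy) = O(\cost(\calS))$ since $\greedy$ is a constant-factor approximation, turning the two-sided error into the one-sided $O(\eps)\cost(\calS)$.

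\textbf{Main obstacle.} The delicate point is the variance bound for interesting points: naively, the $(8z/\eps)^z$ ratio in $\cost(p,\tilde\calS)/\cost(p,\greedy)$ would introduce a $\eps^{-z}$ factor in the sample complexity, as it did in \cref{sec:interesting-subopt}. Avoiding this requires carefully splitting the sum according to which term of $s(p)$ dominates and using the structural assumption that within each cluster all points have $\cost(p,\greedy)$ comparable up to a factor $(z/\eps)^{4z}$ — so that the contribution of each cluster to $\sum_p \cost(p,\tilde\calS)^2/s(p)$ is approximately $|C_p|\cdot (\text{average } \cost(p,\tilde\calS))^2 \leq \cost(C_p,\tilde\calS)^2/|C_p|$, which summed over clusters and using Cauchy–Schwarz gives the desired $\cost(G,\greedy)\cost(G,\tilde\calS)$ product with only a $2^{O(z)}$ polynomial-in-$z$ loss, not $\eps^{-z}$. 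Handling the interplay between the two terms of $s(p)$ cleanly while tracking only $2^{O(z)}$ factors is the main technical hurdle.
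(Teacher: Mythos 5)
Your proposal takes a genuinely different route from the paper (a single global Feldman--Langberg sensitivity sample instead of the paper's $\delta$ uniform samples drawn from each of the $O(kz\log(z/\eps))$ rings $R_{i,j}$ separately), but it has a gap that I do not see how to close: the variance bound for interesting points does not hold with only a $2^{O(z)}$ loss. Concretely, take a cluster $C$ with $|C|=n_C$ points all at distance exactly $\Delta_C$ from $c$ (so the structural hypothesis is vacuous), and a solution $\tilde\calS$ whose nearest center to $C$ is at distance $D=(8z/\eps)\Delta_C$; every $p\in C$ is then interesting with $\cost(p,\tilde\calS)\approx D^z=(8z/\eps)^z\cost(p,\greedy)$. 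Both branches of $s(p)$ give
\[
\sum_{p\in C}\frac{\cost(p,\tilde\calS)^2}{s(p)}\;\gtrsim\;\min\Bigl(k\,n_C^2D^{2z},\ n_C D^{2z}\cdot\tfrac{\cost(P,\greedy)}{\Delta_C^z}\Bigr)\;\gtrsim\;\Bigl(\tfrac{8z}{\eps}\Bigr)^{z}\cdot k\cdot\cost(C,\greedy)\cost(C,\tilde\calS),
\]
so the second moment carries an unavoidable $\eps^{-z}$ factor and Bernstein forces $\delta=\Omega(k\eps^{-2-z})$ --- exactly the loss of \cref{sec:interesting-subopt}, which is what this lemma is supposed to avoid. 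Your proposed fix (``the contribution of each cluster is $\approx|C_p|\cdot(\text{average cost})^2$'') implicitly needs all $\cost(p,\tilde\calS)$ within a cluster to agree up to $2^{O(z)}$, but the structural assumption only pins down $\cost(p,\greedy)$ within a factor $(z/\eps)^{4z}$ across a cluster and says nothing about $\cost(p,\tilde\calS)$; a $(z/\eps)^{4z}$ spread is precisely what cannot be absorbed. The same spread breaks your huge-point step: reusing \cref{lem:khuge} at the \emph{cluster} level requires $\cost(p,q)\leq(3\eps/4z)^z\cost(p,\tilde\calS)$ for all $q$ in the cluster, but with $d(p,c)=(\eps/z)^2\Delta_C$ and $d(q,c)=(z/\eps)^2\Delta_C$ one only gets $\cost(p,q)\leq(z/\eps)^{O(z)}\cost(p,\greedy)$, so a huge point does \emph{not} force its whole cluster to be huge.

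The paper's proof circumvents both problems by working ring by ring: it draws $\delta$ points uniformly from each ring $R_{i,j}$ (where costs in $\greedy$ agree up to a factor $2$), so that (i) one point of a ring being far from $\calS$ forces the whole ring to be far (\cref{lem:k2huge}), and (ii) the per-ring estimator's second moment is bounded by $16^z\cost(I_{i,j,\ell},\calS)\bigl(\cost(R_{i,j},\calS)+\cost(R_{i,j},\greedy)\bigr)/\delta$ via the identity $|R_{i,j}|(\eps 2^\ell)^z\leq 4^z(\cost(R_{i,j},\calS)+\cost(R_{i,j},\greedy))$, which has no $\eps^{-z}$. The extra factor $k$ in the coreset size then comes from the \emph{number of rings}, each contributing $\delta$ points, not from inflating $\delta$ itself. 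If you want to keep a single i.i.d. sample, you would at minimum need to replace $1/(k|C_p|)$ by a per-ring term such as $1/(r|R_p|)$ and redo the concentration ring by ring conditioned on the per-ring sample counts --- at which point you have essentially reconstructed the paper's argument.
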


Suppose we initially computed a set of $k'$ centers $\greedy$. 
Our aim is to define a sampling distribution that approximates the cost of any solution $\calS$ with high probability.
While the basic idea is related to importance sampling (i.e. sampling proportionate to $\cost(p,\greedy)$), we add a few modifications that are crucial.

Compared to the framework described in the main body, we change slightly the definition of ring.
For every cluster $\mathcal{C}_i$ of $\greedy$, we partition the points of $\mathcal{C}_i$ into rings $R_{i,j}$ from between distances $[\left(\frac{\eps}{z}\right)^2  \Delta_C \cdot 2^j, \left(\frac{\eps}{z}\right)^2 \Delta_C  \cdot 2^{j+1}]$, for $j\in\{1,\ldots 4z\log (z/\eps)\}$.

The algorithm is as follows: from every $R_{i,j}$, sample $\delta$ points uniformly at random (if $|R_{i,j}|\leq \delta$, simply add the whole $R_{i,j}$).

The analysis of this algorithm follows the same line as the main one. Rings are divided into tiny, interesting and huge types; tiny and huge are dealt with as in Lemmas~\ref{lem:kepstiny} and~\ref{lem:khuge}, and interesting points slightly differently.

From the definition of $R_{i,j}$, we immediately get the following observation.
\begin{fact}
\label{fact:k2ringcount}
For every cluster we have at most $O(z\cdot \log z/\eps)$ non-empty rings in total.
\end{fact}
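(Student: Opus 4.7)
The plan is to observe that this is a purely combinatorial counting statement: the number of non-empty $R_{i,j}$ in a fixed cluster $\mathcal{C}_i$ is bounded by the number of indices $j$ for which the geometric distance window $[(\eps/z)^2\Delta_C\cdot 2^j,\,(\eps/z)^2\Delta_C\cdot 2^{j+1}]$ can contain a point of $\mathcal{C}_i$. By the structural assumption on $\greedy$ stated at the top of \cref{lem:k2sampling}, every point $p \in \mathcal{C}_i$ satisfies $(\eps/z)^2\Delta_C \le \dist(p,c) \le (z/\eps)^2\Delta_C$. So the only relevant ring indices are those $j$ for which this window intersects $[(\eps/z)^2\Delta_C,\,(z/\eps)^2\Delta_C]$.

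First I would note that the left endpoint of the window is an increasing function of $j$: a non-empty $R_{i,j}$ requires the window's left endpoint to be no larger than the largest possible distance in the cluster. This gives the inequality $(\eps/z)^2\Delta_C\cdot 2^j \le (z/\eps)^2\Delta_C$, i.e.\ $2^j \le (z/\eps)^4$, hence $j \le 4\log_2(z/\eps)$. Combined with the definition $j\in\{1,\dots,4z\log(z/\eps)\}$ (which gives $j\ge 1$), the number of non-empty rings is at most $4\log_2(z/\eps) = O(\log(z/\eps))$, which is certainly $O(z\log(z/\eps))$ as claimed. Equivalently, one can simply invoke that $j$ ranges over a set of cardinality $4z\log(z/\eps)$ by definition, and therefore at most that many rings can ever be non-empty.

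Because the argument is a one-line interval count, there is no real obstacle: the main care needed is to make explicit that the structural hypothesis of \cref{lem:k2sampling} controls the diameter of distances within a cluster up to a factor $(z/\eps)^4$, so that only $O(\log(z/\eps))$ dyadic scales can be populated. I would present it as a single short paragraph followed by the concluding bound $O(z\log(z/\eps))$, matching the form used in \cref{fact:krings} and \cref{fact:kgroups} in the main framework.
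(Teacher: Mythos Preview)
Your proposal is correct and matches the paper's approach: the paper simply states that the fact follows immediately from the definition of $R_{i,j}$, since $j$ ranges over $\{1,\ldots,4z\log(z/\eps)\}$. Your first argument even gives a sharper $O(\log(z/\eps))$ bound via the distance constraint in \cref{lem:k2sampling}, but this extra precision is not needed for the claimed $O(z\log(z/\eps))$.
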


Given a solution $\calS$, we consider the groups $I_{i,j,\ell}\subset \mathcal{C}_i$ consisting of the points of $R_{i,j}$ served in $\calS$ by a center at distance $[\eps \cdot 2^{\ell},\eps\cdot 2^{\ell+1}]$.
As before, we let
$\cost(I_{i,j,\ell},\calS) = \sum_{p\in I_{i,j,\ell}} \cost(p,\calS)$ and $\cost(I_{j,\ell},\calS) = \sum_{i=1}^{k'} \cost(I_{i,j,\ell},\calS)$.

Our analysis will distinguish between three cases:
\begin{enumerate}
\item $\ell \leq j+ \log \eps $, in which case we say that $I_{i,j,\ell}$ is \emph{tiny}.
\item  $j\cdot \log \eps \leq \ell \leq j + \log (4z/\eps)$, in which case we say $I_{i,j,\ell}$ is \emph{interesting}. 
\item $\ell \geq  j +\log (16z/\eps)$, in which case we say $I_{i,j,\ell}$ is \emph{huge}.
\end{enumerate}

We first consider the huge case.
For this, we show that the weight of every ring is preserved with high probability, which implies that the huge groups are well approximated.
\begin{lemma}
\label{lem:k2huge}
It holds that, for any $R_{i,j}$  and for all solutions $S$ with at least one non-empty huge group $I_{i,j,\ell}$
$$\left\vert\cost(R_{i,j},\calS) - \sum_{p\in \Omega \cap R_{i,j}} \frac{|R_{i,j}|}{\delta} \cdot \cost(p,\calS) \right\vert \leq  3\eps\cdot \cost(R_{i,j},\calS).$$
\end{lemma}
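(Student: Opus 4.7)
The plan is to show that the claimed bound actually holds deterministically (no concentration needed), by observing that when $R_{i,j}$ meets a huge group every point of the ring has essentially the same cost in $\calS$. Once this is in place, both $\cost(R_{i,j},\calS)$ and the uniform estimator $\frac{|R_{i,j}|}{\delta}\sum_{p\in\Omega\cap R_{i,j}}\cost(p,\calS)$ collapse to $|R_{i,j}|$ times a common value, up to a multiplicative $(1\pm O(\eps))$ factor, and subtracting the two gives exactly the desired $O(\eps)\cdot\cost(R_{i,j},\calS)$ error. The case $|R_{i,j}|\leq\delta$ is trivial since the whole ring is then retained and the identity holds exactly, so we may assume $|R_{i,j}|>\delta$.

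\textbf{Reducing all ring costs to a single witness.} Fix a witness $p^{\star}\in R_{i,j}$ lying in a huge group $I_{i,j,\ell}$, i.e., with $\ell\geq j+\log(16z/\eps)$. By the definition of $R_{i,j}$, any two points of the ring have their distance to the $\greedy$-center within a factor $2$ of each other, so for every $q\in R_{i,j}$ the triangle inequality gives $\dist(p^{\star},q)\leq 3\,\dist(p^{\star},\greedy)$. The huge condition translates, via the ratio between the $2^{\ell}$ and $2^{j}$ scales, into $\dist(p^{\star},\calS)\geq (16z/\eps)\,\dist(p^{\star},\greedy)$, so $\dist(p^{\star},q)\leq (\eps/(4z))\,\dist(p^{\star},\calS)$, and therefore $\cost(p^{\star},q)\leq (\eps/(4z))^{z}\cost(p^{\star},\calS)$. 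Applying \cref{lem:weaktri} with $a=q$, $b=p^{\star}$ and parameter $\eps/z$ yields
\[
|\cost(q,\calS)-\cost(p^{\star},\calS)|\leq \tfrac{\eps}{z}\cost(q,\calS)+\bigl(\tfrac{2z+\eps/z}{\eps/z}\bigr)^{z-1}\cost(p^{\star},q)\leq O(\eps)\cost(p^{\star},\calS),
\]
so $\cost(q,\calS)=(1\pm O(\eps))\cost(p^{\star},\calS)$ uniformly for every $q\in R_{i,j}$.

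\textbf{Wrap-up.} Summing over $R_{i,j}$ gives $\cost(R_{i,j},\calS)=(1\pm O(\eps))\,|R_{i,j}|\cdot\cost(p^{\star},\calS)$. Since the same pointwise approximation applies to each of the $\delta$ sampled points, $\sum_{p\in\Omega\cap R_{i,j}}\frac{|R_{i,j}|}{\delta}\cost(p,\calS)=(1\pm O(\eps))\,|R_{i,j}|\cdot\cost(p^{\star},\calS)$ as well. Subtracting the two expressions, the difference is at most $O(\eps)\cdot|R_{i,j}|\cdot\cost(p^{\star},\calS)=O(\eps)\cdot\cost(R_{i,j},\calS)$, which after rescaling the implicit constant gives the stated $3\eps$ bound.

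\textbf{Main obstacle.} The argument is essentially a one-shot triangle-inequality calculation and needs no probabilistic ingredient whatsoever, because uniform sampling from a set of near-equal values is automatically accurate. The only delicate point will be bookkeeping: one must check that the $2^{j}$ scaling in the definition of $R_{i,j}$ and the $2^{\ell}$ scaling in the definition of $I_{i,j,\ell}$ combine so that ``huge,'' i.e.\ $\ell\geq j+\log(16z/\eps)$, really translates into a cost ratio of at least $(O(z/\eps))^{z}$ between $\cost(p^{\star},\calS)$ and $\cost(p^{\star},\greedy)$. Once that conversion is pinned down the rest of the proof is direct.
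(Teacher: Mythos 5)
Your proof is correct and is essentially the paper's own argument: the paper likewise observes that the sampled weights sum deterministically to $|R_{i,j}|$ and that the huge condition forces $\cost(p^\star,q)\leq(\eps/(4z))^z\cost(p^\star,\calS)$ for all $q\in R_{i,j}$, so every ring point (and hence every sampled point) has cost $(1\pm O(\eps))\cost(p^\star,\calS)$ and no concentration is needed. The only nit is your choice of parameter $\eps/z$ in \cref{lem:weaktri}, which makes the additive coefficient $(3z^2/\eps)^{z-1}(\eps/(4z))^z$ grow with $z$; taking the parameter as in the paper (so the coefficient is $(1+2z/\eps)^{z-1}$) makes the error genuinely at most $\eps\cost(p^\star,\calS)$, exactly the bookkeeping you flagged as remaining.
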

\begin{proof}
Fix a ring $R_{i,j}$ and let $I_{i,j,\ell}$ be a huge group. 
First, the weight of  $R_{i,j}$ is preserved in $\coreset$: since $\delta$ points are sampled from $R_{i,j}$, it holds that 
$$ \sum_{p\in \Omega\cap R_{i,j}} \frac{|R_{i,j}|}{\delta} = |R_{i,j}|$$

Now, let $\calS$ be a solution, and $p\in I_{i,j,\ell}$ with $I_{i,j,\ell}$ being huge. This implies, for any $q \in R_{i,j}$: 
$\cost(p,q) \leq (2\cdot \eps\cdot 2^{j+1})^z\leq 4^z \cdot \eps^z\cdot 2^{(\ell - \log (16z/\eps))z} \leq \left(\frac{\eps}{4z}\right)^{z}\cdot \cost(p,\calS)$. 
By Lemma~\ref{lem:weaktri}, we have therefore for any point $q\in R_{i,j}$
\begin{eqnarray*}
\cost(p,\calS) &\leq & \left(1+\eps/2z \right)^{z-1} \cost(q,\calS) + \left(1+2z/\eps\right)^{z-1}\cost(p,q) \\
&\leq & \left(1+\eps \right) \cost(q,\calS) + \varepsilon\cdot \cost(p,\calS) \\
\Rightarrow \cost(q,\calS) &\geq & \frac{1-\varepsilon}{1+\eps}\cost(p,S) \geq  (1-2\eps) \cost(p, \calS)
\end{eqnarray*}

Moreover, by a similar calculation, we can also derive an upper bound of $\cost(q,\calS)\leq \cost(p,\calS)\cdot (1+2\eps)$. Hence, combined with $\sum_{p\in \Omega\cap R_{i,j}} \frac{|R_{i,j}|}{\delta} = |R_{i,j}|$, this is sufficient to approximate $\cost(R_{i,j}, \calS)$. 

Therefore, the cost of $R_{i,j}$ is well approximated for any solution $\calS$ such that there is a non-empty huge group $I_{i,j,\ell}$. 
\end{proof}

Next, we consider the interesting cases. The main observation here is that there are only $O(\log 1/\eps)$ many rings per cluster, hence a coarser estimation using Bernstein's inequality is actually sufficient to bound the cost.

\begin{lemma}%
\label{lem:k2interesting}
Consider an $R_{i,j}$ and any solution $\calS$ such that all huge $I_{i,j,\ell}$ are empty. 
It holds with probability at least
$1- \log(z/\eps)\exp(-\frac{\eps^2}{2\cdot 16^z\log^2 z/\eps} \cdot \delta)$ that, for all interesting $I_{i, j, \ell}$:
\begin{eqnarray*}
& &\left\vert\cost(I_{i, j, \ell},\calS) - \sum_{p\in I_{i, j, \ell}\cap \Omega} \cost(p,\calS)\cdot\frac{|R_{i,j}|}{\delta} \right\vert \leq  \frac{\eps}{\log(z/\eps)} \cdot \left(\cost(R_{i,j},\greedy) + \cost(R_{i,j},\calS)\right) .
\end{eqnarray*}
\end{lemma}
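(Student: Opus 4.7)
Fix an interesting group $I_{i,j,\ell}$. The plan is to express the estimator as a sum of i.i.d.\ random variables, apply Bernstein's inequality (\cref{thm:bernstein}), and take a union bound over the $O(\log z/\eps)$ interesting values of $\ell$ supplied by \cref{fact:k2ringcount}. Concretely, for the $t$-th uniform sample $p_t$ drawn from $R_{i,j}$, let $X_t = \frac{|R_{i,j}|}{\delta}\cdot \cost(p_t, \calS)$ when $p_t \in I_{i,j,\ell}$ and $X_t = 0$ otherwise. Since the sampling is uniform over $R_{i,j}$, each $p \in I_{i,j,\ell}$ is picked with probability $1/|R_{i,j}|$, so the estimator is unbiased: $\E[\sum_t X_t] = \cost(I_{i,j,\ell}, \calS)$.

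The key structural facts I will use are: (i) within any ring $R_{i,j}$, all points have cost to $\greedy$ within a multiplicative factor $2^z$, so $\cost(p,\greedy) \leq 2^z\cdot \cost(R_{i,j}, \greedy)/|R_{i,j}|$; (ii) since $I_{i,j,\ell}$ is interesting (i.e.\ not huge), every $p\in I_{i,j,\ell}$ satisfies $\cost(p,\calS) \leq (16z/\eps)^z \cost(p, \greedy)$ (the contrapositive of how hugeness is defined), and (iii) within $I_{i,j,\ell}$ itself the costs to $\calS$ are within a factor $2^z$. Fact (i) combined with (ii) immediately produces an almost-sure bound $X_t \le M := 2^{O(z)}\cdot \cost(R_{i,j},\greedy)/\delta$. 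Fact (iii) gives $\cost(p,\calS) \leq 2^z\cdot \cost(I_{i,j,\ell},\calS)/|I_{i,j,\ell}|$, which together with the sampling probability $|I_{i,j,\ell}|/|R_{i,j}|$ yields the second-moment bound
\[
\sum_{t=1}^{\delta}\E[X_t^2] \;\leq\; 2^{O(z)}\cdot \frac{|R_{i,j}|\cdot \cost(I_{i,j,\ell},\calS)^2}{\delta\cdot |I_{i,j,\ell}|}.
\]
To massage this into a usable form, I rewrite $\tfrac{|R_{i,j}|}{|I_{i,j,\ell}|}\cost(I_{i,j,\ell},\calS) = |R_{i,j}|\cdot \overline{\cost(\calS)}$, where the average runs over $I_{i,j,\ell}$, and bound $\overline{\cost(\calS)} \le (16z/\eps)^z\cdot \overline{\cost(\greedy)}$ by (ii), followed by $\overline{\cost(\greedy)} \le 2^z\cdot \cost(R_{i,j},\greedy)/|R_{i,j}|$ from (i). Using $\cost(I_{i,j,\ell},\calS)\le \cost(R_{i,j},\calS)$ for the remaining factor gives $\sum_t \Var[X_t] \leq 16^z\cdot \cost(R_{i,j},\greedy)\cdot \cost(R_{i,j},\calS)/\delta$, up to $2^{O(z)}(z/\eps)^{O(z)}$ factors that get absorbed into the stated $16^z$ constant.

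Plugging these bounds into Bernstein's inequality with deviation parameter $t = \tfrac{\eps}{\log(z/\eps)}(\cost(R_{i,j},\greedy) + \cost(R_{i,j},\calS))$ and using $(a+b)^2 \ge 2ab$ in the denominator, the variance term $2\sum\Var[X_t]$ and the linear term $\tfrac{2}{3}Mt$ are both dominated by a constant times $\tfrac{16^z}{\eps}\log(z/\eps)\cdot t\cdot (\cost(R_{i,j},\greedy)+\cost(R_{i,j},\calS))/\delta$, so the exponent simplifies to $-\Omega\!\left(\tfrac{\eps^2\delta}{16^z\log^2(z/\eps)}\right)$. A union bound over the $O(\log z/\eps)$ interesting values of $\ell$ then yields the claimed probability.

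The main obstacle, as in the analogous proof of \cref{lem:kepsmagic}, is the correct bookkeeping that translates $\tfrac{|R_{i,j}|}{|I_{i,j,\ell}|}\cost(I_{i,j,\ell},\calS)^2$ into a product $\cost(R_{i,j},\greedy)\cdot \cost(R_{i,j},\calS)$: the first factor is obtained by exchanging a cost in $\calS$ for a cost in $\greedy$ via the interesting-ness bound (ii), while the second simply uses $I_{i,j,\ell}\subseteq R_{i,j}$. This split is what produces the symmetric error term $\cost(R_{i,j},\greedy)+\cost(R_{i,j},\calS)$ on the right-hand side; once it is in place the Bernstein application is routine.
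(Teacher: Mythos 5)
Your overall architecture (i.i.d.\ decomposition, Bernstein, union bound over the $O(\log z/\eps)$ interesting $\ell$) matches the paper, and your second-moment reduction to $2^{O(z)}\frac{|R_{i,j}|}{\delta\,|I_{i,j,\ell}|}\cost(I_{i,j,\ell},\calS)^2$ is fine. The gap is in how you bound the factor $\frac{|R_{i,j}|}{|I_{i,j,\ell}|}\cost(I_{i,j,\ell},\calS)\approx |R_{i,j}|\cdot(\eps 2^{\ell})^z$. You route it through the interesting-ness bound $\cost(p,\calS)\le (16z/\eps)^z\cost(p,\greedy)$ and then the ring-average bound on $\cost(p,\greedy)$, which produces a multiplicative factor $(z/\eps)^{O(z)}$ in both the variance bound and in $M$. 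That factor is \emph{not} absorbable into the stated $16^z$: it is $\eps^{-O(z)}$, so Bernstein then only yields an exponent of order $-\eps^{2+O(z)}\delta/\log^2(z/\eps)$ rather than the claimed $-\eps^{2}\delta/(2\cdot 16^z\log^2(z/\eps))$. This reproduces the suboptimal $\eps^{-2-z}$-type analysis of Section~\ref{sec:interesting-subopt} and defeats the entire purpose of Section~\ref{sec:ksquare}, which is to trade the $\eps^{-z}$ dependence for a factor of $k$.

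The missing idea is the paper's Equation~(\ref{eq:k2interesting_1})-style two-case bound on $|R_{i,j}|\cdot(\eps 2^{\ell})^z$ directly, with only a $2^{O(z)}$ loss. If $\ell\ge j+3$ and $I_{i,j,\ell}$ is non-empty, then because the diameter of the ring $R_{i,j}$ is $O(\eps 2^{j+1})\ll \eps 2^{\ell}$, \emph{every} point $q\in R_{i,j}$ (not just those in $I_{i,j,\ell}$) satisfies $\dist(q,\calS)\ge \eps 2^{\ell}-\eps 2^{j+2}\ge \eps 2^{\ell-1}$, whence $|R_{i,j}|(\eps 2^{\ell})^z\le 2^z\cost(R_{i,j},\calS)$. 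If instead $\ell\le j+2$, then $(\eps 2^{\ell})^z\le 4^z(\eps 2^{j})^z$ is comparable to the cost in $\greedy$ of any ring point, whence $|R_{i,j}|(\eps 2^{\ell})^z\le 4^z\cost(R_{i,j},\greedy)$. Either way $|R_{i,j}|(\eps 2^{\ell})^z\le 4^z(\cost(R_{i,j},\calS)+\cost(R_{i,j},\greedy))$, and this is what produces the symmetric error term with a $2^{O(z)}$ constant; the same bound also gives the correct $M\le 8^z(\cost(R_{i,j},\calS)+\cost(R_{i,j},\greedy))/\delta$. Your proof needs this substitution to establish the probability claimed in the lemma.
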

\begin{proof}

We start by bounding $|R_{i,j}|\cdot (\eps \cdot 2^{\ell})^z$ in terms of $\cost(R_{i,j},\calS)+\cost(R_{i,j},\greedy)$. 

If $I_{i,j,\ell}$ for some $\ell\geq j+3$ is non-empty, then  $\eps \cdot 2^{\ell}-\eps\cdot 2^{j+2} \leq d(q,\calS)$, for any point $q$. Hence, $|R_{i,j}|\cdot (\eps \cdot 2^{\ell})^z \leq \cost(R_{i,j},\calS)\cdot 2^z$. 
If $\ell\leq j+2$, then $|R_{i,j}|\cdot(\eps\cdot 2^{\ell})^z \leq |R_{i,j}|\cdot(\eps\cdot 2^{j+2})^z \leq \cost(R_{i,j},\greedy)\cdot 4^z$. Putting both bounds together, we have
\begin{equation}
\label{eq:k2interesting_1}
|R_{i,j}|\cdot (\eps \cdot 2^{\ell})^z \leq 4^z (\cost(R_{i,j},\calS)+\cost(R_{i,j},\greedy))
\end{equation}

Since we aim to apply Bernstein's inequality, we now require a bound on the second moment of our cost estimator. 
We have for a single randomly chosen point $P$:
$$\mathbb{E}\left[\sum_{p\in I_{i,j,\ell}\cap P}  \cost(p,\calS)\cdot |R_{i,j}|\right] = \cost(I_{i,j,\ell},\calS)$$ 

and
\begin{eqnarray}
\nonumber
\mathbb{E}\left[\left(\sum_{p\in I_{i,j,\ell}\cap P} 
    \cost(p,\calS)\cdot |R_{i,j}|\right)^2\right] &=& \mathbb{E}\left[\sum_{p\in I_{i,j,\ell}\cap P} 
   \cost(p,\calS)^2\cdot |R_{i,j}|^2 \right]  
   \text{ since } |P| = 1\\
\nonumber
&=& \sum_{p\in I_{i,j,\ell}\cap P} \cost(p,\calS)^2\cdot|R_{i,j}| \leq  |R_{i,j}|\cdot |I_{i,j,\ell}|\cdot (\eps 2^{\ell})^{2z} 4^z \\
\label{eq:k2intersting_2}
& \leq & \cost(I_{i,j,\ell},\calS) \cdot
   (\cost(R_{i,j},\calS) + \cost(R_{i,j},\greedy) ) \cdot 16^z
\end{eqnarray}
where the final equation follows from by lower bounding the cost in $\calS$ of any point in $I_{i,j,\ell}$ with $(\eps\cdot 2^{\ell})^z$ and using Equation~\ref{eq:k2interesting_1}.

Furthermore, by the same reasoning and again using Equation~\ref{eq:k2interesting_1},  we have the upper bound $M$ on the (weighted) cost in $\calS$ of every sampled point in every ring:
\begin{equation}
\label{eq:k2interesting_3}
M\leq (\eps\cdot 2^{\ell+1})^z \cdot |R_{i,j}| \leq (\cost(R_{i,j},\calS) + \cost(R_{i,j},\greedy))\cdot 8^z
\end{equation}

Applying Bernstein's inequality and Equations~\ref{eq:k2intersting_2} and \ref{eq:k2interesting_3}, we now have

\begin{eqnarray}
\label{eq:k2interesting_5}
\notag
& &\mathbb{P}\left[\left\vert \delta\cdot \cost(I_{i,j,\ell},\calS) - \sum_{p\in I_{i,j,\ell}\cap \Omega} 
              \cost(p,\calS)\cdot |R_{i,j}| \right\vert > \frac{\eps\cdot \delta}{r}\cdot (\cost(R_{i,j},\calS) + \cost(R_{i,j},\greedy))\right] \\
\nonumber
&\leq & \exp\left(-\frac{\frac{\eps^2\cdot \delta}{r^2}\cdot (\cost(R_{i,j},\calS) + \cost(R_{i,j},\greedy))}
                        {\cost(I_{i,j,\ell},\calS) \cdot 16^z + 
                           \frac{4\eps}{3r}\cdot (\cost(R_{i,j},\calS)+\cost(R_{i,j,\ell}))\cdot 8^z }\right) \leq  \exp\left(-\frac{\eps^2\cdot \delta}{2r^2 16^z}\right) ,
\end{eqnarray}
where the last line uses $\cost(I_{i,j, \ell}, \calS) \leq \cost(R_{i,j}, \calS)$.
Applying a union bound over all $r$ interesting sets $I_{i,j,\ell}$, we obtain the above guarantee for all $I_{i,j,\ell}$ simultaneously with probability 
\begin{equation*}
1-r\cdot \exp\left(-\frac{\eps^2\cdot \delta}{2r^2 16^z}\right).
\end{equation*}

%
\end{proof}

Finally, we conclude:

\begin{proof}[Proof of \cref{lem:k2sampling}]
As in the proof of \cref{lem:coreset-reduc}, we decompose $|\cost(\calS) - \cost(\coreset,\calS)|$ into terms corresponding to points of tiny, interesting or huge groups. We only sketch the proof here, the details are the same as for \cref{lem:coreset-reduc}.
We condition on event $\calE$ happening. Let $\calS$ be a set of $k$ points, and $\tilde \calS \in \cand^k$ that approximates best $\calS$, as given by the definition of $\cand$ (see \cref{def:centroid-set}). This ensures that 
for all points $p$ with $\dist(p, \calS) \leq \frac{8z}{\eps}\cdot \dist(p, \greedy)$ or $\dist(p, \tilde \calS) \leq \frac{8z}{\eps}\cdot \dist(p, \greedy)$ , we have $|\cost(p, \calS) - \cost(p, \tilde \calS)|\leq \eps(\cost(p, \calS) + \cost(p, \greedy))$.

Our first step is to deal with points that have $\dist(p, \calS) > \frac{8z}{\eps}\cdot \dist(p, \greedy)$, using \cref{lem:k2huge}. All other points have distance well approximated by $\tilde \calS$.
Then, we can apply \cref{lem:kepstiny} and \cref{lem:k2interesting} to $L_{\tilde \calS}$, since all points in $L_{\tilde \calS}$ have $\dist(p, \tilde \calS) \leq \frac{4z}{\eps}\cdot \dist(p, \greedy)$, and so $\dist(p, \calS) \leq \frac{8z}{\eps}\cdot \dist(p, \greedy)$ and were not removed by the previous step.
Remaining points are those which have $\dist(p, \tilde \calS) > \frac{4z}{\eps}\cdot \dist(p, \greedy)$ and $\dist(p, \calS) \leq \frac{8z}{\eps}\cdot \dist(p, \greedy)$, i.e., their distance is preserved in $\tilde \calS$ and they are huge with respect to $\tilde \calS$. We apply \cref{lem:khuge} to them as well.
\end{proof}

Combining this lemma and \cref{lem:preprocess} gives an analogous to \cref{thm:main}. Now, using this lemma instead of \cref{thm:main} in all proofs of section \cref{sec:centroid} gives bound with a factor $k$ instead of a $\eps^{-z}$.

\end{document}